\documentclass[letter,twoside,english,11pt]{article}
\usepackage[sumlimits]{amsmath}
\usepackage{amsfonts,amssymb,dsfont}
\usepackage{times}
\usepackage{type1cm,graphicx}

\newcommand{\vb}{\vspace{3mm}\noindent}

\usepackage[left=2cm,top=2cm,right=2cm]{geometry}
\usepackage{babel}
 \usepackage[pdfauthor={Elad Eban, Dorit Aharonov,Michael
   Ben-Or,Urmila Mahadev},pdftitle={Interactive Proofs For Quantum Computation},
 pdfcreator={Elad Eban},colorlinks=false]{hyperref}

\hyphenpenalty=700
\tolerance=300
\hyphenation{theorem}
\usepackage[usenames]{color}
\usepackage{bbm,amssymb,amsmath}

\newcommand{\BQP}{{\sf BQP}}
\newcommand{\NP}{{\sf NP}}

\newcommand{\BPP}{{\sf BPP}}
\newcommand{\IP}{{\sf IP}}

\newcommand{\PSPACE}{{\sf PSPACE}}
\newcommand{\Eq}[1]{Eq.~\ref{#1}}
\newcommand{\Le}[1]{Lemma~\ref{#1}}
\newcommand{\Cl}[1]{Claim~\ref{#1}}
\newcommand{\Th}[1]{Theorem~\ref{#1}}
\newcommand{\Prot}[1]{Protocol~\ref{#1}}

\newcommand{\Def}[1]{Definition~\ref{#1}}

\newcommand{\Sec}[1]{Sec.~\ref{#1}}

\newcommand{\pen}[1]{Appendix~\ref{#1}}

\newcommand{\EqDef}{\stackrel{\mathrm{def}}{=}}

\newcommand{\bra}[1]{\left< #1\right|}
\newcommand{\ket}[1]{\left| #1\right>}


\newcommand{\tr}{\mbox{Tr}}

\newcommand{\mN}{\mathbbm{N}}
\newcommand{\mbP}{\mathbbm{P}}

\newcommand{\mcI}{\mathcal{I}}

\newcommand{\mcA}{\mathcal{A}}
\newcommand{\mcB}{\mathcal{B}}

\newcommand{\mcO}{\mathcal{O}}
\newcommand{\mcK}{\mathcal{K}}

\newcommand{\mcV}{\mathcal{V}}
\newcommand{\mcL}{\mathcal{L}}

\newcommand{\mfC}{\mathfrak{C}}

\newcommand{\ignore}[1]{}

\newtheorem{thm}{Theorem}[section]

\newtheorem{deff}{Definition}[section]

\newtheorem{protocol}{Protocol}[section]
\newtheorem{claim}[thm]{Claim}
\newtheorem{lem}[thm]{Lemma}

\newtheorem{corol}[thm]{Corollary}
\newtheorem{fact}{Fact}[section]

\newtheorem{remark}{Remark}[section]

\newenvironment{proofof}[1]{\noindent{\textit Proof} of $\bf{#1}$:\hspace*{1em}}{$\Box $}
\newenvironment{statement}[1]{\noindent{\textbf {#1}\hspace*{0.5em}}}{$\\$}

\def\hpic #1 #2 {\mbox{$\begin{array}[c]{l}
      \epsfig{file=#1,height=#2} \end{array}$}}

\def\vpic #1 #2 {\mbox{$\begin{array}[c]{l}
      \epsfig{file=#1,width=#2} \end{array}$}}


\newcommand{\dnote}[1]{\textcolor{red}{\small {\textbf{(Dorit:}
#1\textbf{) }}}}


\newcommand{\qc}{\mbox{\textsf{Q-CIRCUIT}}}
\newcommand{\QPIP}{\textsf{QPIP}}
\newcommand{\QAS}{\textsf{QAS}}

\newcommand{\wt}[1]{{\widetilde{#1}}}
\newcommand{\odots}{{\otimes{\ldots}\otimes}}

\newenvironment{proof}{\noindent\textit{Proof: }}{$\Box $}


\begin{document}

\author{Dorit Aharonov\footnote{School of Computer Science, The Hebrew
    University of Jerusalem, Israel. $\{$doria,benor,elade$\}$@cs.huji.ac.il}
  \and Michael Ben-Or$^*$ \and Elad Eban$^*$ \and Urmila Mahadev\footnote{Department of Computer Science, UC Berkeley, California. mahadev@cs.berkeley.edu}}

\title{Interactive Proofs For Quantum Computations}

\maketitle

\begin{abstract}
The widely held belief that \BQP\ strictly contains \BPP\ raises fundamental
questions: upcoming generations of quantum computers might already be too
large to be simulated classically. Gottesman asked (\cite{aaronsonpost}):
Is it possible to experimentally test that
these systems perform as they should, if we cannot efficiently compute
predictions for their behavior? As phrased by Vazirani in \cite{vaziranitalk}:
If computing predictions
for quantum mechanics requires exponential resources,
is quantum mechanics a falsifiable theory? In
cryptographic settings, an untrusted future company wants to sell a quantum
computer or perform a delegated quantum computation. Can the customer be
convinced of correctness without the ability to compare results to
predictions?

To provide answers to these questions, we define Quantum Prover Interactive
Proofs (\QPIP). Whereas in standard interactive proofs \cite{goldwasser1985kci} the prover is
computationally unbounded, here our
prover is in \BQP, representing a quantum computer.
The verifier models our current computational capabilities: it is a \BPP\
machine, with access to only a 
few qubits. Our main theorem can be roughly stated
as: ``Any language in \BQP\ has a \QPIP\ which hides the computation from the prover''. We provide two proofs. The
simpler one uses a new (possibly of independent interest) quantum
authentication scheme (\QAS) based on random Clifford elements. This \QPIP, however, involves two way quantum communication for polynomially many rounds.  
Our second protocol uses polynomial codes \QAS\
due to Ben-Or, Cr{\'e}peau, Gottesman, Hassidim, and Smith \cite{benor2006smq},
combined with secure multiparty quantum
computation techniques. This protocol involves quantum communication from the verifier to the prover at the start of the protocol, and classical communication throughout the rest of the protocol. 
Both protocols are inherently ``blind'': both the quantum
circuit and the input remain unknown to the prover.

This is the journal version of work reported in 2008 (\cite{abe2008}) 
and presented in ICS 2010. The protocols are slightly modified from the original version, whereas some of the proofs required major modifications and 
corrections. Notably, the claim that the polynomial \QPIP\ is fault 
tolerant was removed.

After deriving the results in \cite{abe2008}, we learnt that 
Broadbent, Fitzsimons, and Kashefi \cite{broadbent2008ubq} have independently 
suggested ``universal blind quantum computation'' using completely different 
methods (measurement based 
quantum computation). Their construction implicitly implies 
similar implications. The protocol in \cite{broadbent2008ubq} was flawed but based on similar ideas, Fitzsimons and Kashefi have provided a protocol and 
proof of blind verifiable computation in \cite{fk2012}. 
The initial independent works (\cite{abe2008},\cite{broadbent2008ubq}) 
ignited a long line of research of blind and verifiable quantum 
computation, which we survey here, along with connections to various  
cryptographic problems. Importantly, the problems of making the results fault tolerant, as well as removing the need for quantum communication altogether,
remain open.   
\end{abstract}
\newpage

\section{Introduction}
\subsection{Motivation}\label{sec:funda}
As far as we know today, the quantum mechanical description
of many-particle systems requires exponential resources to simulate.
This has the following fundamental implication:
the results of an experiment conducted on a many-particle physical system
described by quantum mechanics cannot be predicted (in general)
by classical computational devices in any reasonable amount of time.
This important realization (or belief),
which stands at the heart of the interest in quantum computation,
led Gottesman \cite{aaronsonpost} to ask: 
Can a classical verifier verify the correctness of quantum evolutions? 
The question was phrased by Vazirani \cite{vaziranitalk} as: Is quantum mechanics a falsifiable
physical theory? Assuming that small quantum systems obey quantum
mechanics to an extremely
high accuracy, it is still possible that the physical description of
large systems deviates significantly from quantum mechanics.
Since there is no efficient way to make the predictions of the experimental
outcomes for most large quantum systems, there is no way
to test or falsify this possibility experimentally, using the usual
scientific paradigm of predict and compare.
\\~\\
This question has practical implications.
Experimentalists who attempt to realize quantum computers
would like to know how to test that their systems indeed perform
the way they should. But most tests cannot be compared to any
predictions! The tests whose predictions can in fact be computed do not actually test the more interesting aspects of quantum mechanics, namely
those which cannot be simulated efficiently classically.
\\~\\
The problem arises in cryptographic situations as well. It is natural to expect that the first
generations of quantum computers will be extremely expensive,
and thus quantum
computations would be delegated to untrusted
companies. Is there any way for the customer to trust the
outcome, without the need to trust the company which performed the
computation, even though the customer cannot verify the
outcome of the computation (since he cannot simulate it)?
And even if the company is honest,
can the customer detect innocent errors in such a computation? Given the amounts of grant money and prestige involved,
the possibility of
dishonesty of experimentalists
and experimentalists' bias inside the academia should not be
ignored either \cite{roodman2003bap, BlindWiki}.
\\~\\
As Vazirani points out \cite{vaziranitalk}, an answer
to these questions is already given in the form of Shor's factoring algorithm \cite{shor1997pta}. Indeed, quantum mechanics does not seem to be falsifiable
using the {\it usual} scientific paradigm, assuming that \BQP\ is
strictly lager than \BPP. However, Shor's algorithm does provide a way
for falsification, by means of
an experiment which lies outside of the usual scientific paradigm:
though its result cannot be {\it predicted} and then compared to the
experimental outcome, it can be {\it verified} once the outcome
of the experiment is known (by simply taking the product of the factors
and checking that this gives the input integer).
\\~\\
This, however, does not fully address the issues raised above. Consider, for example, a company called \textit{Q-Wave} which is trying to convince a customer that it has managed to build a quantum computer of 100 qubits. Such a system is already too big to simulate classically. However, any factoring algorithm that is run on a system of a $100$ qubits
can be easily performed by today's classical technology.
For delegated quantum computations, how can Shor's algorithm help
in convincing a customer of correctness of, say, the computation of the \BQP\
complete problem of approximating the Jones polynomial
\cite{aharonov2006pqa,jonesHardness,fklw2001,bfld2009}?
As for experimental results, it is difficult
to rigorously state which aspects of quantum mechanics are exactly falsified or verified
by the possibility to apply Shor's algorithm.
Moreover, we are now facing a time in which small quantum computers of a few tens of qubits may very well be realizable; yet, factoring is still impossible in such systems, and we would nevertheless like to be able to test their evolution.
\\~\\
We thus pose the following main question:
Can one be convinced of the correctness of the computation of {\it any}
polynomial quantum circuit? Alternatively, can one be convinced of
the ``correctness'' of the quantum mechanical description of
any quantum experiment that can be conducted in the laboratory,
even though one cannot compute the predictions for the outcomes of this
experiment? In this paper, we address the above fundamental question
in a rigorous way and provide a positive answer to these questions, in a well defined framework. We do this by taking a computational point of
view on the interaction between
the supposed quantum computer, and the entity which attempts to
verify that the quantum computer indeed computes what it should.

\subsection{Quantum Prover Interactive Proofs (\QPIP)}
Interactive proof systems, defined by Goldwasser, Micali and Rackoff
\cite{goldwasser1985kci}, play a crucial role in the theory of computer
science.  Roughly, a language $\mcL$ is said to have an interactive proof if
there exists a computationally unbounded prover (denoted $\mathds{P}$) and a \BPP\
verifier ($\mathds{V}$) such that for any instance $x$ in $\mcL$, $\mathds{P}$ convinces $\mathds{V}$ of the
fact that $x\in\mcL$ with probability $\ge \frac 2 3$ (completeness).
Otherwise,
when $x\notin\mcL$, there does not exist a prover who can convince $\mathds{V}$ that $x\in\mcL$ with probability higher
than $\frac 1 3$ (soundness).
\\~\\
Quantum interactive proofs in which the prover is
an \emph{unbounded}
quantum computer, and the \emph{verifier} is a \BQP\ machine have previously been studied in \cite{watrous2003phc}. The starting point of this work is the observation that Shor's factoring algorithm \cite{shor1997pta} can be viewed as an interactive
proof of a very different kind: one between a classical \BPP\ verifier, and a
quantum \textit{polynomial time} (\BQP) prover, in which
the prover convinces the verifier of the factors of a given number
(this can be easily converted to the usual \IP\ formalism of membership
in a language by converting the search problem to a decision problem in the standard way).
\\~\\
One might suspect that such an interactive proof exists for all problems inside \BQP\ $\cap$ \NP\, by asking the \BQP\ prover to find the witness, which the classical verifier can then verify. We do not know this to be true; the trouble with this argument is that the fact that the problem is in \BQP\ $\cap$ \NP\ does not guarantee that the \BQP\ machine can also {\it find} a witness efficiently - decision to search reductions are known only for \NP-complete problems.  In any case, it is widely believed that \BQP\ is not contained in \NP\ (
and in fact not even in the polynomial hierarchy - see \cite{aaronson2009} and references therein). The main goal
of this paper is to generalize the interactive point of view
of Shor's algorithm, as mentioned above, in order to show that a \BPP\ verifier can be convinced of the result of {\it any} polynomial
quantum circuit, using interaction with the \BQP\ prover (the quantum
computer). In other words, we would like to extend the above interactive proof (which is specific to factoring) to a BQP complete problem. 
\\~\\
To this end we define a new model of quantum interactive proofs which we call
quantum prover interactive proofs (\QPIP). The simplest definition would be
an interactive proof in which the prover is a \BQP\ machine and the verifier a
\BPP\ classical machine.  In some sense, this model captures the
possible interaction
between the quantum world (for instance, quantum systems in the lab) and the
classical world. However, we do not know how to provide interactive proofs for all problems in BQP with only classical interaction; this is a major open problem (see Section \ref{sec:openquestions}).
We therefore modify the model a little, and
allow the verifier additional access to a constant number of qubits.
The verifier can be viewed as modeling our current computational
abilities, and so in some sense, the verifier represents ``us''.

\begin{deff}\label{def:QPIP} A Language $\mcL$ is said to have a 
Quantum Prover Interactive Proof ($\QPIP_{\kappa}$) with completeness $c$ 
and soundness $s$ (where $c-s$ is a constant) if there exists a pair of algorithms $(\mathds{P},\mathds{V})$, where $\mathds{P}$ is the prover and $\mathds{V}$ is the verifier, 
with the following properties:
\begin{itemize}
\item The prover $\mathds{P}$ is a \BQP\ machine, which also has access to a quantum 
channel which can transmit $\kappa$ qubits. 
\item The verifier $\mathds{V}$ is a hybrid quantum-classical machine. 
Its classical part is
      a \BPP\ machine. The quantum part is a register of $\kappa$ qubits, on which the verifier
 can perform arbitrary quantum operations and which has access to a quantum channel which can transmit $\kappa$ qubits. At
      any given time, the verifier is not allowed to possess
      more than $\kappa$ qubits. The 
      interaction between the quantum and classical parts of the verifier
 is the usual one: the
      classical part controls which operations are to be performed on the
      quantum register, and outcomes of measurements of the quantum register can
      be used as input to the classical machine.
\item There is also a classical communication channel between the 
prover and the verifier, which can transmit polynomially many bits at 
any step. 
\item At any given step, either the verifier or the prover perform computations on their registers and send bits and qubits through the relevant channels to the other party.
\end{itemize}
We require:
\begin{itemize}
\item \textbf{Completeness}: if $x\in\mcL$, then after interacting with $\mathds{P}$, $\mathds{V}$ accepts with probability $\ge c$. 
\item \textbf{Soundness}: if $x\notin \mcL$, then the verifier rejects with probability $\ge 1-s$ \it{regardless} of the prover $\mathds{P}'$ (who has the same description as $\mathds{P}$) with whom he is interacting. 
\end{itemize} 
\end{deff}
Abusing notation, we
denote the class of languages for which such a proof
 exists also
by $\QPIP_{\kappa}$. Throughout the paper, when we refer to \QPIP\ without a subscript, we are assuming the subscript is a constant $c$. We remark that our definition of $\QPIP_{\kappa}$ is asymmetric -
the verifier is ``convinced'' only if the quantum circuit
outputs $1$. This asymmetry seems irrelevant in
our context of verifying correctness of quantum computations.  Indeed, it is
possible to define a symmetric version of $\QPIP_{\kappa}$
(which we denote by $\QPIP_{\kappa}^{sym}$)
in which the verifier is convinced of {\it correctness} of the prover's outcome
whether or not $x\in \mcL$ rather than only if $x\in \mcL$; see Appendix \ref{sec:symmetric}
for the definition. 

\subsection{Main Results}
Our main results are phrased in terms of the \BQP\ complete problem 
$\qc_{\gamma}$: 

\begin{deff}\label{def:qcircuit} The promise problem $\qc_{\gamma}$ consists of a quantum circuit made of a
  sequence of gates, $U=U_N{\ldots}U_1$, acting on $n$ input bits. The task is
  to distinguish between two cases for all $x\in \{0,1\}^n$:
\begin{eqnarray*}
  \qc_{\textmd{YES}}&:
  \|(\left(\ket 1 \bra 1 \otimes \mcI_{n-1}\right)U\ket{x} \|^2\ge 1-\gamma\\
  \qc_{\textmd{NO}}\;\,&:
  \|(\left(\ket 1 \bra 1 \otimes \mcI_{n-1}\right)U\ket{x} \|^2\le \gamma
\end{eqnarray*}
when we are promised that one of the two cases holds.
\end{deff}
$\qc_{\gamma}$ is a \BQP\ complete problem as long as $1 - 2\gamma >\frac 1 {poly(n)}$. Throughout this paper, if we refer to $\qc$ (without the parameter $\gamma$), we are assuming that $\gamma$ satisfies the above inequality.
Our main result is:
\begin{thm}\label{thm:qcircuit}
 For $0<\epsilon<1$ and $\gamma < 1 - \epsilon$, the language $\qc_{\gamma}$ has a $\QPIP_{O(\log(\frac{1}{\epsilon}))}$ with completeness $1 - \gamma$ and soundness $\epsilon + \gamma$.
\end{thm}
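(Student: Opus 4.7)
The plan is to prove the theorem via the first of the two protocols advertised in the introduction: the one built from a random-Clifford quantum authentication scheme (\QAS). The argument splits into three parts --- defining and analyzing the \QAS, building a \QPIP\ for $\qc_{\gamma}$ on top of it, and extracting the completeness/soundness bounds.

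First, I would fix a parameter $k = O(\log(1/\epsilon))$ and define the Clifford \QAS\ as follows: to authenticate a data qubit with state $\rho$, the verifier samples a uniformly random Clifford $C$ on $k$ qubits and outputs $C(\rho \otimes \ket{0}\bra{0}^{\otimes(k-1)})C^\dagger$; to decode, she applies $C^\dagger$, measures the last $k-1$ ``trap'' qubits in the computational basis, and aborts if any outcome is nonzero. The security claim is that for any CP map $\Lambda$ an adversary applies to the encoded state, the average over $C$ of $\mathrm{Dec}_C \circ \Lambda \circ \mathrm{Enc}_C$ is $\epsilon$-close to a convex combination of ``accept and deliver the correct state'' and ``abort''. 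This is proved by exploiting the unitary-2-design property of the Clifford group: the Clifford twirl of $\Lambda$ is a Pauli mixture, and any non-identity Pauli has at least one nontrivial factor on the $(k-1)$-qubit trap register, so it is detected by the zero-check except with probability $O(2^{-(k-1)}) = O(\epsilon)$.

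Second, I would convert this \QAS\ into a \QPIP\ for the circuit $U = U_N \cdots U_1$ as follows. The verifier authenticates each of the $n$ input qubits of $\ket{x}$ independently, with a fresh Clifford key of size $k$ per block, and ships the $n$ encoded blocks to the prover. For $i = 1,\ldots,N$, the verifier asks the prover to return the (at most two) authenticated blocks on which $U_i$ acts, decodes them (aborting if any trap fires), applies $U_i$ on the plaintext, re-encodes each block with a fresh random Clifford, and sends the new blocks back to the prover. After the last gate the verifier requests the block containing the designated output qubit, decodes, checks traps, and accepts iff the decoded qubit measures $\ket{1}$. Because only a constant number of blocks is ever held by the verifier simultaneously, her quantum register has size $O(\log(1/\epsilon))$, matching $\QPIP_{O(\log(1/\epsilon))}$, at the cost of polynomially many rounds of two-way quantum communication.

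Third, I would analyze completeness and soundness. Completeness is immediate: an honest prover never triggers a trap, so the verifier's final measurement has exactly the statistics of $U\ket{x}$, and in the YES case she accepts with probability at least $1-\gamma$. For soundness, a cheating prover $\mathds{P}'$ is specified by the CP maps it applies in the intervals when it holds the encoded qubits; by iteratively invoking the \QAS\ security lemma on each interval and combining via a hybrid argument, one shows that the verifier's acceptance probability differs from that of the ideal run of $U$ on $\ket{x}$ by at most $\epsilon$, so in the NO case it is at most $\gamma + \epsilon$. The main obstacle is precisely this hybrid: the \QAS\ guarantee is a one-shot statement but is being invoked polynomially many times with adaptively chosen attacks, so one must either (a) appeal to a composable formulation of the \QAS\ whose error does not degrade when many blocks with independent keys are composed in sequence, or (b) scale the \QAS\ parameter and use the fact that a trap failure in any intermediate round causes an abort rather than silently propagating into a later round. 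Showing that this composition is lossless up to the single-shot error $\epsilon$ is the technically delicate step.
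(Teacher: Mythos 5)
Your QAS definition and security argument are fine, and the high-level architecture (encode with fresh random Cliffords, ship blocks, bring them home for gate application, constant-size verifier register, polynomially many rounds of two-way quantum communication) matches the paper's Clifford protocol. But there is one concrete decision you make that the paper explicitly identifies as a fatal bug: you have the verifier \emph{decode and check the trap qubits in every round} (``decodes them (aborting if any trap fires)''). This is precisely the conference-version protocol that the journal version retracts as unsound. A prover can deviate by a tiny amount in each of the polynomially many rounds; by a Zeno-type effect the cumulative deviation in the data can become $\Omega(1)$ while the per-round trap-measurement acceptance probability stays near $1$, so the overall abort probability does not track the damage. The paper's fix is a protocol change, not a better proof: the verifier never measures the auxiliary qubits during the intermediate rounds, only at the very end.

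This protocol change is what makes the soundness proof go through, and it is exactly what your ``main obstacle'' paragraph is groping for. With no intermediate measurement, each round the verifier applies a \emph{unitary} map (undo old Clifford, apply $U_i$, apply fresh Clifford). That unitarity is what lets the paper use Clifford decoherence (\Le{CliffordDiscretization}) together with the unitary commutation lemma (\Le{clifSimplify}) to prove the state-evolution claim (\Cl{claim:stateformat}): all of the prover's deviations can be commuted past the verifier's (unitary) operations and collected into a single attack acting after the honest authenticated computation, at which point the one-shot \QAS\ bound is invoked exactly once. Your options for patching the hybrid do not rescue the theorem as stated: option (a), a composable \QAS\ statement, is morally what the paper achieves, but only after removing the per-round measurements that break the commutation argument; option (b), scaling the \QAS\ parameter to beat a union bound over $N$ rounds, forces $e = \Omega(\log(N/\epsilon))$ and hence a verifier register that grows with the circuit size, not the $\QPIP_{O(\log(1/\epsilon))}$ claimed. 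So the gap is not a delicate technicality left to check---it is a step that, as written, gives a protocol the paper proves is \emph{not} sound. Replace the per-round trap check with a single final check and the rest of your outline aligns with \Prot{prot:CliffordIP} and \Th{thm:CliffordIP}.
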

We note that although we provide \QPIP\ protocols only for the language \qc\ (for which the initial state is always a standard basis state), our proofs can be extended to security for a modified language for which the initial state is an arbitrary quantum state. In addition, we prove soundness against an unbounded prover, rather than a \BQP\ prover (as given in Definition \ref{def:QPIP}). By setting $\epsilon$ to a constant, we obtain a $\QPIP_c$ for a constant $c$, which gives our main theorem:
\begin{thm} \label{thm:main}
There exists a constant $c$ for which $\BQP\ = \QPIP_c$.
\end{thm}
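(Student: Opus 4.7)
The plan is to prove the two inclusions separately; only $\BQP \subseteq \QPIP_c$ requires nontrivial content, and even that reduces immediately to Theorem \ref{thm:qcircuit}.

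For $\QPIP_c \subseteq \BQP$, I would argue that a single BQP machine can internally simulate the honest interaction of any $\QPIP_c$ protocol. The honest prover is \BQP, the verifier has a constant-size quantum register plus \BPP\ classical resources, and the channels between them carry at most polynomially many classical bits and a constant number of qubits per round, so the entire transcript can be generated as a single polynomial-size quantum circuit. On yes-instances, completeness gives acceptance probability at least $c$; on no-instances the honest prover, being one particular choice of prover, is rejected with probability at least $1-s$ by the soundness clause. Since $c-s$ is a constant, the simulator places the language in \BQP.

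For $\BQP \subseteq \QPIP_c$, I would begin with an arbitrary $L \in \BQP$ and apply standard BQP amplification (parallel repetition with majority vote) to obtain a polynomial-size quantum circuit whose acceptance probability is at least $1-\gamma_0$ on yes-instances and at most $\gamma_0$ on no-instances, for some small fixed constant such as $\gamma_0 = 1/100$. This exhibits $L$ as an instance of $\qc_{\gamma_0}$ in the sense of Definition \ref{def:qcircuit}. I would then apply Theorem \ref{thm:qcircuit} with $\gamma = \gamma_0$ and with $\epsilon$ fixed to a constant --- for instance $\epsilon = 1/100$ --- chosen small enough that $1 - 2\gamma_0 - \epsilon$ is a positive constant. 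The theorem delivers a $\QPIP_{O(\log(1/\epsilon))}$ protocol with completeness $1-\gamma_0$ and soundness $\epsilon + \gamma_0$; since $\epsilon$ is a constant, both the qubit bound $O(\log(1/\epsilon))$ and the completeness-soundness gap are constants. Taking $c$ to be the resulting qubit bound gives $L \in \QPIP_c$.

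The only step needing any real attention is the amplification: one must check that after parallel repetition and majority voting the overall Boolean output can still be packaged as a unitary $U = U_N \cdots U_1$ followed by a measurement of a designated output qubit, as required by Definition \ref{def:qcircuit}, and with the error parameter driven below the $(1-\epsilon)/2$ threshold that Theorem \ref{thm:qcircuit} demands in order to guarantee a constant completeness-soundness gap. This is entirely routine. All of the genuine difficulty --- designing the authentication scheme, keeping the verifier's quantum register to $O(\log(1/\epsilon))$ qubits, and proving soundness against arbitrary (even unbounded) cheating provers --- is absorbed into Theorem \ref{thm:qcircuit}, so what remains for Theorem \ref{thm:main} is essentially bookkeeping on parameters.
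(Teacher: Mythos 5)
Your proposal is correct and takes essentially the same approach as the paper, which compresses both directions into two sentences: $\BQP \subseteq \QPIP_c$ via the $\BQP$-completeness of $\qc$ together with Theorem \ref{thm:qcircuit}, and $\QPIP_c \subseteq \BQP$ by having a single $\BQP$ machine simulate prover, verifier, and the (polynomially bounded) channels. The parameter bookkeeping you spell out --- fixing $\epsilon$ and $\gamma_0$ to constants so that $1-2\gamma_0-\epsilon$ is a positive constant and the register size $O(\log(1/\epsilon))$ is a constant $c$ --- is exactly what the paper leaves implicit in the phrase ``Since $\qc$ is $\BQP$ complete.''
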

\begin{proof}
Since \qc\ is \BQP\ complete, the fact that \BQP\ is in $\QPIP_c$ follows from Theorem \ref{thm:qcircuit}. $\QPIP_c$ is trivially in \BQP\ since the \BQP\ machine can simulate both prover, verifier and their interactions.  
\end{proof}
\\~\\
Thus, a \BQP\ prover can convince the verifier of any language he can
compute. Since \BQP\ is closed under  completion, we also get equality 
to the symmetric version of $\QPIP_c$ (see Appendix \ref{sec:symmetric} for the proof): 
\begin{corol}\label{thm:mainsym}
There exists a constant $c$ for which 
$\BQP=\QPIP_c^{sym}$
\end{corol}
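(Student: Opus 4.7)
The plan is to reduce the symmetric version to the asymmetric $\QPIP_c$ established in Theorem \ref{thm:main}, using the closure of $\BQP$ under complementation.

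The direction $\QPIP_c^{sym}\subseteq\BQP$ is immediate, exactly as in Theorem \ref{thm:main}: a $\BQP$ machine can simulate the \BPP\ verifier together with its $O(1)$-qubit register, the \BQP\ prover, and the classical and quantum channels between them, and then output the verifier's decision.

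For the reverse inclusion $\BQP\subseteq\QPIP_c^{sym}$, fix $\mcL\in\BQP$. Theorem \ref{thm:main} gives a $\QPIP_c$ protocol $(\mathds{P}_{\mcL},\mathds{V}_{\mcL})$ for $\mcL$; since $\BQP$ is closed under complement, $\overline{\mcL}\in\BQP\subseteq\QPIP_c$ also has a $\QPIP_c$ protocol $(\mathds{P}_{\overline{\mcL}},\mathds{V}_{\overline{\mcL}})$. I would then build a symmetric protocol as follows. The prover first sends a classical bit $b$, intended to be $b=1$ iff $x\in\mcL$. If $b=1$ the parties execute $(\mathds{P}_{\mcL},\mathds{V}_{\mcL})$; if $b=0$ they execute $(\mathds{P}_{\overline{\mcL}},\mathds{V}_{\overline{\mcL}})$. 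The verifier's declared answer is $b$, and this answer is ``certified'' precisely when the $\QPIP_c$ subprotocol accepts.

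Completeness is immediate: when $x\in\mcL$, the honest \BQP\ prover first computes $[x\in\mcL]$, sends $b=1$, and then runs $\mathds{P}_{\mcL}$; the verifier accepts with probability at least $c$ by completeness of $(\mathds{P}_{\mcL},\mathds{V}_{\mcL})$. The case $x\notin\mcL$ is symmetric, using $(\mathds{P}_{\overline{\mcL}},\mathds{V}_{\overline{\mcL}})$. For soundness, a dishonest declaration $b\neq [x\in\mcL]$ forces the remainder of the interaction to be a $\QPIP_c$ execution on a ``no'' instance of whichever of $\mcL$ or $\overline{\mcL}$ was claimed; by the standard soundness guarantee that interaction rejects with probability at least $1-s$, regardless of the dishonest prover's strategy.

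There is no real obstacle here; the only thing to be careful about is definitional. One must verify that the formulation of $\QPIP_c^{sym}$ in Appendix \ref{sec:symmetric} is indeed of the form ``prover commits to a bit, verifier rejects iff the committed bit is incorrect,'' so that the $c$ and $s$ parameters of the asymmetric protocols transfer unchanged. Once the definitions are aligned the reduction is essentially syntactic.
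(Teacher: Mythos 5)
Your proposal is correct and follows essentially the same route as the paper: in both, the prover first commits classically to $b=[x\in\mcL]$, the parties then run the $\QPIP_c$ protocol for $\mcL$ or for $\overline{\mcL}$ accordingly (using closure of $\BQP$ under complement), and soundness reduces to soundness of that subprotocol on a NO instance while $\QPIP_c^{sym}\subseteq\BQP$ follows by direct simulation. The only difference is presentational (the paper phrases it as $\QPIP_c=\QPIP_c^{sym}$ and invokes Theorem~\ref{thm:main}); the construction and justification are the same.
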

Our main tools for the proof of Theorem \ref{thm:qcircuit}
are quantum authentication schemes (\QAS) \cite{barnum2002aqm}.
Roughly, a \QAS\ allows two parties to communicate in the following way. First, Alice
sends an encoded quantum state to Bob. Then Bob decodes the state and decides whether or not it is valid. If the state was not altered along the way, then upon decoding, Bob gets the same state that Alice had sent (and declares it valid). If the
state was altered, the scheme is $\epsilon$-secure if Bob declares a wrong state valid with probability at most $\epsilon$.  The basic idea used to extend a \QAS\ to a \QPIP\ is that similar 
security can be achieved, even if the
state needs to be rotated by unitary gates, as long as the verifier can control
how the unitary gates affect the authenticated states.

\subsubsection{Clifford \QAS\ based \QPIP} We start with a simple \QAS\ (which we extend to a \QPIP) based on random Clifford group operations (it is reminiscent of Clifford
based quantum $t$-designs \cite{ambainis2007qtd,ambainis2008tre}). The Clifford \QAS\ based \QPIP\ demonstrates some key ideas and might be of interest on
its own due to its simplicity. However, the \QPIP\ has the disadvantage that it 
requires two way quantum communication between the prover and the verifier. 
\\~\\
We first describe the Clifford \QAS. To encode a
state of $n$ qubits, Alice tensors the state with $e$ qubits in the state $\ket{0}$,
and applies a random Clifford operator on the $n+e$ qubits. To decode, Bob removes the Clifford operator chosen by Alice and checks if the $e$ auxiliary qubits are in the state $\ket{0}^{\otimes e}$ (see Protocol \ref{protocol:cliffordqas} for a complete description of the \QAS). We prove the following theorem:
\begin{thm}{\label{thm:CliffordAuth}}
The Clifford scheme given in Protocol \ref{protocol:cliffordqas} is a \QAS\ with security
  $\epsilon=2^{-e}$.
\end{thm}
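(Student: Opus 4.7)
The plan is to analyze an arbitrary adversarial attack through a Pauli decomposition combined with a Clifford twirl over the key $C$. I would model the adversary as an isometry $U$ on the encoded $(n+e)$-qubit register together with an external ancilla initialized in $\ket{\xi}$, and expand $U=\sum_{P\in \mcP_{n+e}} P\otimes B_P$ in the Pauli basis on the encoded system. After Alice applies the random Clifford $C$, the adversary acts, and Bob applies $C^\dagger$, each term $P$ is transformed into $\tilde P := C^\dagger P C$, which is itself a Pauli and can be decomposed as $\tilde P=\tilde P_M\otimes \tilde P_A$ on the message and auxiliary registers. Bob accepts exactly when $\tilde P_A\in\{I,Z\}^{\otimes e}$, since these are the Paulis that fix $\ket{0}^{\otimes e}$.

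The main step is to average over the Clifford key. Since the Clifford group forms a unitary $2$-design and its conjugation action is transitive on the non-identity Paulis (modulo global signs), the standard Pauli-twirl argument shows that the cross terms $P\neq Q$ in Bob's reduced output state vanish in expectation, while the diagonal non-identity contributions spread uniformly over all non-identity Paulis. Concretely, averaging over $C$ maps the encoded state $\rho$ to
\[
p_I\,\rho \;+\; (1-p_I)\cdot\tfrac{1}{4^{n+e}-1}\sum_{P\neq I} P\rho P^\dagger,
\]
where $p_I=\langle\xi|B_I^\dagger B_I|\xi\rangle$ is the weight the attack places on the trivial Pauli.

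The security bound then follows from a counting argument. An attack contributes to the ``accepted but damaging'' event exactly when $\tilde P_A\in\{I,Z\}^{\otimes e}$ and $\tilde P_M\neq I$; the number of such non-identity Paulis is $(4^n-1)\cdot 2^e$, out of $4^{n+e}-1$ total non-identity Paulis, and using $4^{n+e}-1\ge 4^e(4^n-1)$ this ratio is bounded by $2^{-e}$. Hence the probability that Bob accepts with a message different from $\ket{\psi}$ is at most $(1-p_I)\cdot 2^{-e}\le 2^{-e}$, which yields the claimed \QAS\ security.

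The main obstacle will be making the averaging step rigorous in the presence of the adversary's entangling ancilla: one must verify that the ancilla correlations, encoded in the inner products $\langle\xi|B_P^\dagger B_Q|\xi\rangle$, do not spoil the Clifford twirl's suppression of off-diagonal Pauli terms. This is precisely the content of the Pauli-twirl lemma for unitary $2$-designs, which absorbs all ancilla correlations into a legitimate probability distribution $\{|\alpha_P|^2\}$ over Paulis in the effective probabilistic Pauli attack.
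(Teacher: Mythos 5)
Your proposal is correct and follows essentially the same route as the paper: a Pauli decomposition of the adversary's joint unitary, a Clifford twirl (Lemmas \ref{mix}, \ref{clifTw} and \ref{CliffordDiscretization} in the paper; the 2-design property in your framing) reducing Bob's reduced state to the convex combination $s\rho + \frac{1-s}{4^m-1}\sum_{P\neq I}P\rho P^\dagger$, and then the same counting of accepting non-identity Paulis. The only cosmetic differences are notation ($n,e$ vs.\ $l,e$) and that your count $(4^n-1)2^e$ is marginally tighter than the paper's $4^l 2^e - 1$, but both yield the same $2^{-e}$ bound.
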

This \QAS\ might be interesting
in its own right due to its simplicity. To construct a \QPIP\ using this \QAS, we simply use the prover as an untrusted
storage device: the verifier asks the prover for the authenticated qubits on
which he would like to apply the next gate, decodes them by applying the appropriate inverse Clifford operators, applies the gate, applies new random Clifford operators and sends the resulting qubits to the prover. As we show in the following theorem, this protocol (see \Prot{prot:CliffordIP} for full details) is a \QPIP:
\begin{thm}\label{thm:CliffordIP} For $0< \epsilon < 1$ and $\gamma < 1 - \epsilon$, \Prot{prot:CliffordIP} is a $\QPIP_{O(\log(\frac{1}{\epsilon}))}$ with completeness $1-\gamma$ and soundness $\gamma+\epsilon$ for
  $\qc_{\gamma}$.
\end{thm}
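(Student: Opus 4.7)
\textbf{Proof Proposal for Theorem \ref{thm:CliffordIP}.}
The plan is to reduce soundness of the full \QPIP\ to the single-round \QAS\ security guarantee of Theorem \ref{thm:CliffordAuth}, via a standard hybrid argument over the polynomial sequence of gate rounds. Completeness is the easy direction: when the prover is honest, the prover merely acts as untrusted storage. Every time the verifier pulls qubits back, she strips the Clifford, is guaranteed to find $\ket{0}^{\otimes e}$ in the flag register, applies the next gate $U_i$, re-encodes with a freshly sampled Clifford, and returns the qubits. Unrolling the interaction, the induced distribution on the final measured bit equals the honest distribution for $U=U_N\cdots U_1$ on $\ket x$, so acceptance probability is $\ge 1-\gamma$ on YES instances by the definition of $\qc_\gamma$.

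For soundness, fix an arbitrary (unbounded) cheating prover $\mathds{P}'$. The plan is to formalize the joint state of (verifier's classical register, authenticated qubits held by the prover, prover's internal workspace) as evolving under a sequence of CPTP maps: an honest verifier step followed by an arbitrary prover step. I would define a ``mirror'' ideal execution in which, at the beginning of each round, the prover's map is replaced by the identity on the logical data (so the encoded qubits are returned unchanged). In the ideal execution, the final measurement outcome is distributed exactly as in an honest run of $U$ on $\ket x$, so by the NO promise of $\qc_\gamma$ the verifier accepts with probability at most $\gamma$. It therefore suffices to bound the trace distance between the two executions (restricted to the accept outcome).

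The key step is to apply \Th{thm:CliffordAuth} round by round. The Clifford \QAS\ security guarantees that for any attack channel on the encoded block, the resulting post-decoding state is $2^{-e}$-close in trace distance to a convex combination of (i) ``accept and the logical state is unchanged'' and (ii) ``reject''. Crucially, this statement must hold even when the attack is correlated with a side register (the prover's workspace and any previously returned encodings), which is where one uses that the Clifford key is sampled fresh each round and is information-theoretically hidden from the prover at the moment of the attack. Given this per-round bound, a hybrid argument swapping one real round at a time for an ideal round, combined with the triangle inequality over the $T=\mathrm{poly}(n)$ rounds, yields a total deviation of at most $T\cdot 2^{-e}$ between the real and ideal executions; choosing $e=O(\log(T/\epsilon))$ (so that $\kappa=O(\log(1/\epsilon))$ up to the logarithm in circuit size) makes this at most $\epsilon$, giving overall soundness $\gamma+\epsilon$ as required.

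The main obstacle I expect is making the single-round \QAS\ guarantee compose in the presence of a prover that is entangled with qubits still held by the verifier and with earlier encodings. One has to be careful that ``freshly sampled Clifford key'' really does reduce the multi-round analysis to an averaged single-round analysis; I would handle this by conditioning on the history of sampled Cliffords, showing that conditioned on the past, the next-round Clifford is still uniform and independent of the prover's current state, and only then invoking \Th{thm:CliffordAuth}. A secondary technical point is that the gate $U_i$ applied between decoding and re-encoding acts on a small number of logical qubits that may have been authenticated in different rounds and stored separately; one must check that the verifier's bookkeeping lets her jointly decode exactly those qubits, and that the analysis above goes through with the same error per ``logical round'' rather than per authenticated block.
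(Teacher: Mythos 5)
Your proposal takes a per-round hybrid route that is genuinely different from the paper's proof, and it runs into a real gap. First, the parameter loss you flag is real and is not merely cosmetic: the theorem asserts $\kappa=O(\log(1/\epsilon))$, with no dependence on the circuit size $T$, whereas a $T$-fold hybrid/union bound forces $e=O(\log(T/\epsilon))$ and therefore $\kappa=O(\log T + \log(1/\epsilon))$. The paper achieves the stronger bound by not paying a price per round at all: Claim~\ref{claim:stateformat} shows, by induction using Clifford decoherence (Lemma~\ref{CliffordDiscretization}) and unitary commutation (Lemma~\ref{clifSimplify}), that the joint state at every round is exactly the \emph{correct} authenticated state $C_k\rho_i C_k^\dagger$ with a single aggregate superoperator acting outside the encoding. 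All prover deviations are thus pushed to the end, and the $2^{-e}$ security of the Clifford \QAS\ is invoked \emph{once}, not $T$ times.

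More fundamentally, your per-round \QAS\ bound is not available in Protocol~\ref{prot:CliffordIP} as written. The protocol explicitly carries the auxiliary qubits forward \emph{unmeasured} from round to round and only measures them once, in the final round; your phrase ``is guaranteed to find $\ket{0}^{\otimes e}$ in the flag register'' suggests a per-round check that the protocol deliberately does not perform. Without a per-round measurement there is no per-round accept/reject event, so Theorem~\ref{thm:CliffordAuth} (whose guarantee is stated in terms of the post-measurement projection $\Pi_1^{\ket\psi}$) does not yield a $2^{-e}$ trace-distance bound on the round-$i$ transition in your hybrid chain — the pre-measurement state after a non-trivial attack can be far from the ideal state. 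And if you instead modify the protocol to measure each round (as in the conference version), the paper argues in Section~\ref{sec:changessoundness} that the protocol is then actually unsound to a Zeno-style attack, precisely because the per-round measurement is a non-unitary operation that destroys the commutation structure the shifting argument relies on (see the discussion in Section~\ref{sec:proofsoverview}). So the per-round reduction to \QAS\ security is blocked in both directions: the revised protocol has no per-round check to invoke it against, and the per-round-checking variant does not satisfy the clean per-round bound. The missing idea is exactly the shifting argument via Clifford decoherence and Lemma~\ref{clifSimplify}, which replaces the hybrid chain entirely.
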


\subsubsection{Polynomial code \QAS\ based \QPIP\ }
Our second type of \QPIP\ uses a
\QAS\ due to Ben-Or, Cr\'epeau, Gottesman, Hassidim and Smith
\cite{benor2006smq}. This \QAS\ is based on signed quantum polynomial codes (defined in Definition \ref{def:SignedPolynomial}) ,
which are quantum polynomial codes \cite{aharonov1997ftq}
of degree at most $d$ multiplied by
some random sign ($1$ or $-1$) at every
coordinate (this is called the sign key $k$)
and a random Pauli at every coordinate (the Pauli key). The \QAS\ simply consists of Alice encoding a single qudit using the signed polynomial code and Bob checking if the received state is indeed encoded under the signed polynomial code (described in further detail in Protocol \ref{protocol:polynomialqas}). We prove the following theorem: 
\begin{thm}{\label{thm:PolynomialAuth}} The polynomial authentication scheme as described in Protocol \ref{protocol:polynomialqas}
is a \QAS\ with security $\epsilon = 2^{-d}$.
\end{thm}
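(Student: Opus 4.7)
The plan is to follow the standard two-step strategy for analyzing quantum authentication schemes: first carry out a Pauli-twirl reduction using the random Pauli key, converting any adversarial attack into a probabilistic mixture of Pauli attacks; then analyze each Pauli attack against the signed polynomial code using the distance of the code together with the random sign key. Fixing an arbitrary message state $\rho$ on a single qudit, I would average over the Pauli key $(x,z)\in \mathbb{F}_q^n\times \mathbb{F}_q^n$ applied by the encoder; expanding the adversary's channel in the Pauli basis as $E=\sum_{a,b}\alpha_{a,b}X^aZ^b$, the averaging turns the channel into the convex combination $\sum_{a,b}|\alpha_{a,b}|^2\, X^aZ^b(\cdot)Z^{-b}X^{-a}$, because cross terms with $(a,b)\neq (a',b')$ are killed by the random Pauli phases. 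This reduces security to showing that for every fixed non-identity Pauli $X^aZ^b$, the signed polynomial decoder rejects with probability at least $1-2^{-d}$ when averaged over the random sign key $k\in\{\pm 1\}^n$.

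Next, I would analyze the decoder's check on a codeword corrupted by a fixed non-identity $X^aZ^b$. After undoing the sign encoding, the residual $X$-error in the $i$-th coordinate becomes $k_i a_i$ and, by a Fourier-dual argument, the $Z$-part is symmetrically constrained against the dual polynomial code. The decoder accepts only if the residual $X$-vector interpolates a polynomial of degree $\le d$ at the evaluation points consistent with the signed code (and similarly for the $Z$-vector via the dual). The key combinatorial claim is that for any fixed non-zero $a$ the fraction of sign keys $k$ meeting this constraint is at most $2^{-d}$: if the support of $a$ has weight at most $d$, the code's distance $d+1$ already forces detection with certainty, while if the support is larger, a polynomial of degree $\le d$ consistent with the non-zero entries is over-determined, and only an exponentially small fraction of sign patterns are consistent. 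The $Z$-side is handled symmetrically.

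Combining the Pauli twirl with the per-Pauli detection bound and matching the result to the \QAS\ security definition of \Ref{barnum2002aqm} yields the claimed $\epsilon=2^{-d}$. The hard part will be the second step, specifically the combinatorial/algebraic bound on the number of good sign keys per non-identity Pauli attack: one must carefully handle the interaction between the sign key, the Pauli key, and the phase structure of the polynomial code (ensuring the $X$- and $Z$-component analyses decouple after averaging over $(x,z,k)$, and tracking any cross-term phases arising from the inner product $\langle a,b\rangle$ in $\mathbb{F}_q$), and uniformly treat the small-weight regime (where code distance alone suffices) and the large-weight regime (where sign-key randomization is essential). This is the bulk of the original argument in \cite{benor2006smq} and should carry over with minor adaptations to the specific encoder/decoder pair of \Prot{protocol:polynomialqas}.
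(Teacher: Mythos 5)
Your high-level two-step structure (Pauli-twirl via the random Pauli key, then per-Pauli detection analysis against the random sign key) is exactly what the paper does: the first step is the paper's Pauli decoherence lemma (Lemma~\ref{PauliDiscretization}), and the second is the sign-key security lemma (Lemma~\ref{lem:PauliPolyAuthSec}). Where you diverge is in the details of the second step, and this is where your proposal has real gaps.

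Your per-Pauli analysis rests on the phrase ``only an exponentially small fraction of sign patterns are consistent.'' This is the crux, and leaving it vague is not acceptable: the paper makes it precise via Fact~\ref{signkeyfact}, which says that a fixed vector $(a_1,\dots,a_m)$ can be written as $(k_1f(\alpha_1),\dots,k_mf(\alpha_m))$ for a degree-$\le d$ polynomial $f$ for \emph{at most two} sign keys $k$ (namely some $k$ and $-k$). This uniqueness of the signed-polynomial representation is what gives the clean bound $2/2^m$, and your weight-based case split is neither necessary nor sufficient to establish it: the small-weight case ($\mathrm{wt}(a)\le d$ forces rejection since $f$ would vanish on $\ge d+1$ points) is correct but is subsumed by Fact~\ref{signkeyfact} (zero sign keys work there), and the large-weight case is exactly where you hand-wave with ``over-determined.'' You should prove or cite the uniqueness fact explicitly, not invoke an informal counting heuristic.

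Second, your treatment of a general Pauli $X^aZ^b$ is too loose. You say the $X$- and $Z$-analyses ``decouple,'' but this needs an argument: the paper handles it by decomposing the Pauli into a $k$-correlated part $Q_k$ and a strictly uncorrelated part $\hat Q_k$ of a canonical form (Claims~\ref{correlatedx}--\ref{correlateddecomposition}), and then showing (Claim~\ref{uncorrelated}) that $\hat Q_k$ is always detected because, after conjugation by $E_k^\dagger$, it has a nontrivial $X$-component on the auxiliary registers. The $Z$-side is reduced to the $X$-side via the logical Fourier as you anticipate (Claim~\ref{correlatedz}), but the combined-Pauli case is handled by this decomposition, not by a simple product of two independent events. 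Without that, you have not ruled out a non-identity Pauli whose $X$-part and $Z$-part are each separately detectable only ``most'' of the time but whose detection events are anti-correlated. Fill these two gaps (the exact ``at most 2 sign keys'' bound via uniqueness of signed polynomials, and the decomposition of a general Pauli into correlated/uncorrelated parts) and your approach becomes the paper's proof.
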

The security proof of the polynomial code based \QAS\ is subtle, and 
was missing from the original paper \cite{benor2006smq}; we provide it here. 
\\~\\
To extend the polynomial based \QAS\ to a \QPIP, we first note that performing Clifford gates in this setting 
is very easy. Due to its algebraic structure, the signed polynomial code
allows applying Clifford gates without knowing the sign key (if the same sign key is used for all registers).
This was used in \cite{benor2006smq} for
secure multiparty quantum computation; here we use it
to allow the prover to perform gates without
knowing the sign key or the Pauli key. To perform Toffoli gates, the verifier first creates authenticated magic states (used in \cite{shor1996},\cite{bravyi2005uqc},\cite{benor2006smq}) and sends them to the prover. The prover can apply a Toffoli gate by first applying Clifford operations between the computation qubits and a magic state, then measuring 3 of the computation qubits, and then adaptively applying a Clifford correction based on the measurement results (for more details on applying Toffoli gates using Toffoli states see Section \ref{app:toffoli}). Note that since the prover obtains a measurement result encoded under the polynomial \QAS, he must send it to the verifier to be decoded before he can perform an adaptive Clifford correction. It follows that with authenticated magic states and classical assistance from the verifier,
the prover can perform universal computation using
only Clifford group operations and measurements
(universality was proved for qubits in \cite{bravyi2005uqc}
and for higher dimensions it was shown in \cite{aharonov1997ftq}).
\\~\\
The polynomial \QPIP\ protocol (Protocol \ref{prot:PolynomialIP}) goes as follows. The prover receives
all authenticated qubits in the beginning. Those
include the inputs to the circuit, as well as
authenticated magic states required to perform
Toffoli gates. The prover can then perform universal computation as described above. Except for the first round, any further
communication between the verifier and prover (occuring when implementing
the Toffoli gates) is thus classical. We show that this protocol is a \QPIP\ in the following theorem: 
\begin{thm}\label{thm:PolynomialIP} For $0< \epsilon < 1$ and $\gamma < 1 - \epsilon$, Protocol \ref{prot:PolynomialIP} is a $\QPIP_{O(\log(\frac{1}{\epsilon}))}$ protocol with
  completeness $1-\gamma$ and soundness $\gamma + \epsilon$ for $\qc_{\gamma}$.
\end{thm}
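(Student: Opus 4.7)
The plan is to establish completeness and soundness separately, reducing soundness to the \QAS\ security guarantee of Theorem \ref{thm:PolynomialAuth}.

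Completeness is the easy direction. When the prover is honest, the fact that the signed polynomial code is transversal under the Clifford group (with a shared sign key across registers) guarantees that the prover's direct application of Clifford gates on authenticated qudits produces the encoding of the correct intermediate state. For Toffoli gates, the prover uses the authenticated magic states supplied by the verifier in the first round; the protocol requires the prover to return some authenticated qudits for the verifier to decode, and the verifier sends back a classical Clifford correction. Since all decoded values are honestly produced, the verifier never aborts, and the final authenticated qudit decodes to the correct output of $U\ket{x}$. Thus acceptance probability in the YES case is at least $1-\gamma$.

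For soundness, the strategy is to view the protocol as a sequence of at most $T = \mathrm{poly}(n)$ authenticated decodings (one per Toffoli measurement, plus the final output measurement). Between two consecutive decodings, the prover holds some collection of authenticated qudits and applies an arbitrary quantum channel. I would run a hybrid argument over these $T$ decoding events: in the $i$-th hybrid, the first $i$ decodings are replaced by \emph{ideal} decodings which first project onto the authentic codespace and abort on the orthogonal complement. By Theorem \ref{thm:PolynomialAuth}, each replacement costs at most $2^{-d}$ in total variation distance on the verifier's accept/reject bit. Choosing $d$ large enough that $T \cdot 2^{-d} \leq \epsilon$ and absorbing the logarithmic overhead into the $O(\log(1/\epsilon))$ specification of $\kappa$, one arrives at an idealized execution in which every decoded value is consistent with an honest run of the protocol (or the verifier has aborted). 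In the NO case, this idealized acceptance probability is at most $\gamma$, so the real protocol accepts with probability at most $\gamma + \epsilon$, as claimed.

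The hard part is formalizing the hybrid composition under adaptivity. The prover's channel between decoding events $i$ and $i+1$ can depend on all classical traffic so far, including the verifier's adaptive Clifford corrections, which themselves depend on earlier decodings. One must argue that, conditioned on no prior rejection, the joint state of the remaining authenticated qudits after step $i$ is close in trace distance to the ideally encoded state (the one an honest prover would hold), so that Theorem \ref{thm:PolynomialAuth}'s single-shot security bound can legitimately be invoked at step $i+1$. Care is also required to track the Pauli and sign keys through the transversal Clifford operations performed by the prover, and to verify that the magic-state Toffoli gadget composes securely when the underlying magic states are themselves authenticated. Establishing this inductive invariant — that it survives both arbitrary prover channels and the adaptive classical feedback loop from the verifier — is the principal technical obstacle of the soundness proof.
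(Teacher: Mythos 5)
Your soundness plan is essentially the one from the flawed conference version of this paper, and the authors explicitly retract it. The proposal reduces each of the $T = \mathrm{poly}(n)$ decoding events to the single-shot security of the polynomial \QAS\ and pays $2^{-d}$ per event, but this has two concrete problems.

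First, the parameters do not close. To make $T\cdot 2^{-d}\le\epsilon$ you must take $d = \Omega(\log T + \log(1/\epsilon))$, hence $m = 2d+1 = \Omega(\log n + \log(1/\epsilon))$. The theorem asserts the verifier's register has size $O(\log(1/\epsilon))$, independent of $n$. The paper achieves this because there is no union bound over rounds: the sign key $k$ is a single $m$-bit secret used across the whole execution, and the argument (\Le{lem:PauliPolyAuthSec} combined with \Cl{claim:nontrivialsoundness}) shows that for any fixed non-trivial Pauli deviation on the qudits ever sent to the verifier, at most $2$ of the $2^m$ sign keys escape detection at the final check — a \emph{global} bound, not $T$ independent local ones. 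Your hybrid loses this.

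Second, and more fundamentally, the inductive invariant you flag as the ``principal technical obstacle'' is not merely hard but is the step the authors identified as broken. After a non-identity Pauli attack that slips past one decoding, the surviving state is not ``close in trace distance to the ideally encoded state'' — it is a perfectly authenticated encoding of a \emph{wrong} logical state, so your ideal-decoding replacement has no handle on it. The \QAS\ security statement bounds the probability of being fooled, not the distance of the post-selected state, and trying to chain it round by round conflates the two. This is exactly the zeno-style accumulation the paper warns about in Section 1.5.1. The authors emphasize that, unlike the Clifford case, ``the soundness of the polynomial \QPIP\ cannot be directly reduced to the security of the polynomial \QAS,'' because Pauli decoherence (\Le{PauliDiscretization}) only removes cross-terms and does not map attacks to a uniform Pauli mixture, so attacks cannot be shifted past the verifier's operations. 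Their fix is the transcript-fixing technique of \cite{fk2012}: treat the classical interaction as if fixed in advance, shift all of the prover's unitary deviations to a single operator $U_{g(\Delta_L)}$ at the end (\Cl{claim:polystateformat}), then project onto the actual transcript to restore consistency, and only then apply Pauli decoherence and the sign-key lemma once, to the aggregate attack. That is a different proof architecture from the hybrid you propose, and the difference is load-bearing.

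Your completeness argument is fine and matches the paper's (which also treats it as immediate).
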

We remark that in the study of the classical notion of \IP,
a natural question is to ask how powerful the prover must be to prove
certain classes of languages.
It is known that a \PSPACE\ prover is capable of proving any language
in  \PSPACE\ to a \BPP\ verifier, and similarly, it is known that
 \NP\ or \textsf{\#P}
 restricted provers
 can prove any language which they can compute to a \BPP\ verifier. This is not known for
\textsf{coNP}, \textsf{SZK} or \textsf{PH} \cite{arora:ccm}.
It is natural to ask what is the power of a \BQP\ prover;
our results imply that such a prover can prove the entire
class of \BQP\ (albeit to a verifier who is not entirely classical).
Thus, we provide a characterization of the power of a \BQP\ prover. We stress 
the open question of characterizing this power when the interaction between the prover and verifier is completely classical 
(discussed in Section \ref{sec:openquestions}).

\subsubsection{Blindness}
In the works \cite{childs2001saq,blind} a related
question was raised: in our cryptographic setting, if we distrust
the company performing the delegated quantum computation,
we might want to keep both the input and the
function which is being computed secret.
Can this be done while maintaining the
confidence in the outcome? A simple modification of our protocols to work on 
universal quantum circuits 
gives the following theorem, which we prove in Section \ref{app:blind}:

\begin{thm}\label{thm:blind}
\Th{thm:qcircuit} holds also in a blind setting,
namely, the prover does not get any
 information regarding the function being computed and its input.
\end{thm}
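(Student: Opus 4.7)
The plan is to reduce blindness to the hiding properties already built into the authentication schemes of \Th{thm:CliffordIP} and \Th{thm:PolynomialIP}, combined with the existence of a universal quantum circuit. Given any target circuit $C$ acting on input $x$, let $U$ be a fixed universal quantum circuit (depending only on $|C|$ and $|x|$) that simulates $C(x)$ when given the encoded pair $(\text{desc}(C),x)$ as its input register. The verifier simply runs the \QPIP\ of \Th{thm:qcircuit} on $U$ with input $(\text{desc}(C),x)$. Because $U$ itself is input-independent, the gate schedule, measurement pattern, and all classical messages that are determined by the \emph{protocol structure} (as opposed to authentication data) leak nothing about $(C,x)$. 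All that remains is to argue that the authenticated quantum messages, together with any data-dependent classical messages, also reveal nothing.

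For the Clifford-based protocol, the verifier sends the prover registers of the form $G(\ket{\psi}\otimes\ket{0}^{\otimes e})$ where $G$ is drawn uniformly from the Clifford group. A standard Clifford twirl shows that this state, averaged over $G$, is the maximally mixed state on $n+e$ qudits, independent of $\ket{\psi}$. Since every round re-authenticates under a fresh random $G$, the prover's entire sequence of quantum messages is, marginally, a sequence of maximally mixed states, and is therefore statistically independent of $(C,x)$. As the only classical communication in this protocol is the fixed gate index of $U$ currently being applied, blindness follows immediately.

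For the polynomial-based protocol the same twirl argument handles the opening quantum message: the Pauli part of the key alone already maps every signed-polynomial codeword to the maximally mixed state of the appropriate dimension. The subtlety — and the main obstacle — lies in the classical traffic generated by the Toffoli gadgets, where the prover measures authenticated registers and forwards the outcomes, and the verifier responds with an adaptive Clifford correction. One must check that the $X$-component of the Pauli key on each measured register acts as a fresh uniform one-time pad on the computational-basis outcome, so that the outcomes seen by the prover are uniformly distributed strings independent of the underlying data, and that the corrections returned by the verifier can likewise be viewed as fresh uniform strings under an independent re-keying. This bookkeeping is essentially the same tracking of the Pauli key that appears in the soundness analysis, and once carried out it shows that the prover's entire view is a function of $U$ and the shared public security parameters alone, completing the blindness claim.
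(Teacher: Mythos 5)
Your Clifford argument is correct and matches the paper: the Clifford twirl (\Le{cliffordmix}) makes every authenticated message maximally mixed and the only classical traffic is the public gate schedule of the universal circuit $U$, so blindness is immediate. Your overall structure — reduce to $U$ and show the prover's view is input-independent — is also exactly what the paper does.

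However, the key step you state for the polynomial case is wrong. You claim that ``the $X$-component of the Pauli key on each measured register acts as a fresh uniform one-time pad on the computational-basis outcome'' and that this makes the decoded measurement results seen by the prover uniform. But look at the protocol: when the prover sends a raw measurement string $y$, the verifier \emph{first strips off the Pauli key} $Z^zX^x$ and then applies $D_k^\dagger$ before returning $g(\delta)$ to the prover. So the quantity the prover sees is the decoded teleportation outcome with the Pauli key already removed — the Pauli key cancels and cannot serve as a one-time pad on what is sent back. (The key does one-time-pad the prover's \emph{raw} measurement, but that is information the prover already has, so it contributes nothing to hiding the data-dependent reply.) What actually makes $g(\delta)$ uniform and input-independent is the inherent randomness of gate-teleportation measurement outcomes: as Fact~\ref{fact:teleportationstate} shows, the post-entangling pre-measurement state is $\frac{1}{\sqrt{q^{3L}}}\sum_l\ket{l}\ket{\psi}_{\beta,l}$, so every outcome $l$ is equiprobable regardless of the input, and this stays true after reducing a malicious prover's attack to a convex mixture of Paulis via \Le{PauliDiscretization} (this is the content of \Cl{cl:blindnessdecoherence}). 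The paper even remarks that \emph{adding} extra randomness to the Toffoli states would give exactly the one-time-pad simplification you describe, but that requires modifying the protocol; for the protocol as written, the one-time-pad argument does not go through, and the teleportation-randomness argument is what is needed. You should replace the one-time-pad claim with an appeal to the uniform distribution of teleportation outcomes plus Pauli decoherence.
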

We note that an analogous result for \NP-hard problems
was shown already in the late $80$'s to be impossible 
(in the setting of classical communication)
unless the polynomial hierarchy collapses \cite{abadi1987hio}. 
\\~\\
To achieve \Th{thm:blind}, we modify our construction so that the circuit that the prover performs is a {\it universal quantum circuit}, i.e.,
a fixed sequence of gates which
gets as input a description of a quantum circuit (with gates from a constant size universal set of gates) plus an input string to that
circuit, and  applies the input quantum circuit
to the input string.  Since the universal quantum circuit is fixed, it reveals
nothing about the input quantum circuit or the input string to it. To prove blindness, we simply need to show that the input states provided to the prover by the verifier (and all messages provided to the prover during the protocol) do not leak information about the input to the universal circuit. This is done by showing that at all times, the prover's state is independent of the input to the universal circuit. 
\\~\\
Proving blindness of the Clifford scheme is quite straightforward and done in the following theorem (which we prove in Section \ref{app:blind}):
\begin{thm}[\bfseries Blindness of the Clifford Based \QPIP]\label{thm:blindclifford}
The state of the prover in the Clifford based \QPIP\ (Protocol \ref{prot:CliffordIP}) is independent of the input to the circuit which is being computed throughout the protocol.
\end{thm}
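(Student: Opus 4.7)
The plan is to show by induction on the round number that the prover's total view—his entire quantum register together with every classical message received—is, as a distribution over the verifier's randomness, independent of the input $x$ to the circuit. The only potentially input-dependent information the prover ever sees is the sequence of encoded quantum blocks the verifier sends him; the classical control messages (which authenticated block to operate on, which qubits to return, etc.) depend only on the gate sequence $U$, not on $x$. So it suffices to argue that each quantum message is, in distribution, independent of $x$.

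The central tool is the Clifford twirl: for any joint state $\rho_{SR}$ on a system register $S$ of $m$ qubits and an arbitrary reference register $R$,
\begin{equation*}
\frac{1}{|\mcC_m|}\sum_{C\in\mcC_m}(C\otimes\Id_R)\,\rho_{SR}\,(C^\dagger\otimes\Id_R)=\frac{\Id_S}{2^m}\otimes\tr_S(\rho_{SR}).
\end{equation*}
This holds because the Pauli group is a subgroup of $\mcC_m$, and averaging a random Pauli over $S$ already decorrelates $S$ from $R$ while leaving $S$ maximally mixed; further averaging over the Clifford group preserves this. Operationally, once a block is encoded under a fresh uniform Clifford key unknown to the adversary, it looks maximally mixed and uncorrelated from anything else the adversary may hold.

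For the base case of the induction, the verifier's first message is the encoding of $\ket{x}\ket{0}^{\otimes e}$ under a uniformly random Clifford $C_0$, so by the twirl the prover's marginal is $\Id/2^{n+e}$, independent of $x$. For the inductive step, assume that after round $t$ the joint distribution of (prover's register, classical transcript) is independent of $x$. In round $t+1$ the prover returns some block (his strategy is a function of his view, which by hypothesis carries no information about $x$), the verifier decodes using the stored key, applies the next gate, draws a \emph{fresh} independent random Clifford to re-encode, and sends the block back. Because the new Clifford is independent of the prover's entire ancilla and of all prior keys, one more application of the twirl lemma shows that the returned block is maximally mixed on the prover's side and uncorrelated from his ancilla, regardless of what the verifier did in between. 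This preserves the inductive invariant.

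The step I would treat most carefully, and which is the main subtlety, is that after several rounds the prover's ancilla may be entangled with blocks he previously returned, so the joint pre-encoding state on (outgoing block, prover's ancilla) is in general not a product. However, the twirl identity above applies to an arbitrary joint state on the twirled register and a reference, which is precisely the general situation we encounter; freshness and independence of each new Clifford key is what propagates the invariant. Combined with the fact that the classical traffic is a function of the universal circuit alone, this concludes that the prover's view throughout Protocol~\ref{prot:CliffordIP} is independent of $x$, establishing \Th{thm:blindclifford}.
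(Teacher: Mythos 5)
Your proposal is correct and follows essentially the same route as the paper: the key tool in both is the Clifford mixing lemma (Lemma~\ref{cliffordmix}), invoked to argue that each freshly re-encoded block is maximally mixed and uncorrelated from everything else the prover holds, so that by induction over rounds the prover's reduced density matrix stays $\Id/2^k \otimes \rho_E$ for an ancilla state $\rho_E$ that depends only on his own (input-independent) strategy. The one place you are more explicit than the paper is the observation that the joint state on (outgoing block, prover's ancilla) need not be a product before re-encoding and that the twirl identity applies to an arbitrary joint state on the twirled register and a reference; the paper folds this into the brief phrase ``independent of the remaining registers.'' That extra care is welcome but does not constitute a different proof.
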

On the other hand, proving blindness of the polynomial scheme is a bit more involved, due to the classical interaction: 
\begin{thm}[\bfseries Blindness of the Polynomial Based \QPIP]\label{thm:blindpolynomial}
The state of the prover in the polynomial based \QPIP\ (Protocol \ref{prot:PolynomialIP}) remains independent of the input to the circuit which is being computed throughout the protocol.
\end{thm}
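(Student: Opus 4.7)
The plan is to reduce to the universal-circuit setup (as in \Th{thm:blind}), so that the fixed sequence of gates that the prover implements is independent of the input. It then suffices to show that the prover's view, consisting of (i) the authenticated quantum register he receives at the beginning of the protocol, and (ii) all classical messages exchanged during the Toffoli gate implementations, has a distribution (as a state on the prover's side) that does not depend on the input to the universal circuit. I would prove this by exhibiting, for any two possible inputs $x$ and $x'$, a bijection on the verifier's secret keys (Pauli keys and sign keys) that makes the prover's joint view identically distributed.

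First I would analyze the initial quantum transmission. Each logical qudit is encoded under a signed polynomial code, multiplied by an independent random Pauli from the key. Averaging over the Pauli key alone implements a quantum one-time pad on each coordinate of the codeword, so the reduced density matrix the prover receives is the maximally mixed state on the whole register regardless of the plaintext. Hence all authenticated inputs and authenticated magic states look identical to the prover and carry no information about $x$.

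The subtle part, and the main obstacle, is the classical interaction that takes place during Toffoli implementations. There the prover measures three authenticated registers in the computational basis and sends the outcomes to the verifier, who replies with an adaptive Clifford correction depending on both the outcomes and the secret keys. I would argue as follows: since the Pauli key on each coordinate contains an independent $X$-type component that is uniform, the measurement outcomes on any single encoded register are, from the prover's perspective, uniformly random bit strings whose distribution is independent of the underlying encoded value; thus the outcomes reveal nothing about $x$. For the verifier's classical reply, the Clifford correction is a deterministic function of (outcomes, keys, circuit structure), and re-randomizing the fresh Pauli keys used after the Toffoli (which the verifier re-samples for the updated encoding) makes the reply uniform over the allowed correction set, again independently of $x$. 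Formally I would proceed by induction on the gate index, maintaining the invariant that conditioned on all past transcript, the joint distribution of (prover's quantum register, next classical message from verifier) factorizes in a way that does not depend on the plaintext input, using the new Pauli key sampled for the next encoded block to absorb any input-dependence.

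The only care needed is that a malicious prover may deviate from the protocol, so the reduction to ``uniform measurement outcomes'' must hold conditionally on arbitrary prover actions before each measurement. I would handle this by noting that at the moment of measurement, the register being measured is of the form $P\,|\psi\rangle$ where $P$ contains a fresh uniform $X$-part from the Pauli key, independent of everything the prover has done; hence the outcome is uniform regardless of the prover's deviation, and the argument above goes through unchanged. Combining the initial-message analysis, the inductive step for each Toffoli, and the fact that the universal circuit itself is input-independent yields that the prover's entire view is independent of the input, establishing \Th{thm:blindpolynomial}.
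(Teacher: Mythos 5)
Your analysis of the initial quantum transmission is correct and matches the paper: averaging over the independent Pauli keys turns the prover's reduced density matrix into the maximally mixed state via the quantum one-time pad (Pauli Mixing, Lemma \ref{paulimix}), so nothing leaks at that stage. The gap is in how you handle the classical interaction, and it is a real one. The verifier's classical reply in round $i$ is the \emph{decoded} measurement outcome $g(\delta_i)$, which the verifier obtains by first stripping off the Pauli key and applying $D_k^\dagger$. So the Pauli key one-time pad cannot be what hides information in this reply --- it has already been removed before the reply is computed. Your argument that ``the measurement outcomes on any single encoded register are... uniformly random... whose distribution is independent of the underlying encoded value'' is true for the raw outcomes, but the prover already knows his own raw outcomes, so this is not the quantity that matters; what matters is the verifier's decoded message back, and that is not protected by the Pauli key.

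The second half of your argument postulates that blindness follows from ``re-randomizing the fresh Pauli keys used after the Toffoli (which the verifier re-samples for the updated encoding),'' but no such re-sampling happens in Protocol \ref{prot:PolynomialIP}. All quantum communication occurs in the first round; thereafter the verifier only updates his recorded Pauli keys \emph{deterministically} according to the conjugation rules in Section \ref{des:secapp}, and even these updated keys act only on registers the prover still holds --- they have no effect on the already-decoded value $g(\delta_i)$. The correct reason that $g(\delta_i)$ is independent of the input is structural: by Fact \ref{fact:teleportationstate}, the Toffoli-by-teleportation protocol (using the magic state $\frac{1}{q}\sum_{a,b}\ket{a,b,ab}$) produces a measurement outcome that is \emph{exactly uniform} over $F_q^3$ regardless of the logical state being corrected. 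The paper then shows that this uniformity survives an arbitrary cheating prover by applying Pauli decoherence (\Le{PauliDiscretization}) to collapse the attack to a mixture of Paulis, and using the fact that standard-basis projections kill the off-diagonal terms of the authenticated $\ket{l}\bra{l'}$ blocks (\Cl{cl:blindnessdecoherence}). The paper even flags this point explicitly, noting that one \emph{could} have added extra randomness to the magic states to one-time-pad the decoded outcomes, but that this is unnecessary precisely because the teleportation outcomes are already uniform. Your proposal implicitly assumes such extra pad randomness is present; since it isn't, the argument does not close.
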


\subsubsection{Interpretation} \label{sec:interpretation}
We will now present some corollaries which clarify the connection between the
results and the motivating questions, and show that
one can use the \QPIP\ protocols designed here to address the
various issues raised in \Sec{sec:funda}. 
\\~\\
We start with some basic questions. Conditioned that the verifier
does not abort, does he know that
the final state of the machine is very close to
the correct state that was supposed to be
the outcome of the computation?
This unfortunately is not the case. It may be that
the prover can make sure that the verifier
aborts with very high probability, but when he does {\it not} abort, 
the outcome of the computation is wrong. However a weaker form of the above
result (which achieves what is reasonable to hope for) does hold: if we know that the probability of not aborting is high,
then we can deduce something about the probability of the final state being very close to the correct state.

\begin{corol} \label{corol:confidence}
For the Clifford based \QPIP\ protocol with security parameter 
$\epsilon$, if the
verifier does not abort with probability $\ge \beta$
then the trace distance between the final density matrix conditioned on the verifier's accepting 
and the correct final state,
is at most $\frac {\epsilon} \beta$.
\end{corol}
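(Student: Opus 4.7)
The plan is to invoke the QAS security of the Clifford scheme (\Th{thm:CliffordAuth}) as it propagates through the QPIP construction analyzed in \Th{thm:CliffordIP}: for any prover strategy, the verifier's final joint state (computation register together with the accept/reject flag register) is within trace norm $\epsilon$ of an ideal state whose accept branch carries exactly the correct ideal output $\ket{\psi}\bra{\psi}$ of the quantum circuit (and whose reject branch is arbitrary garbage). The corollary then follows by carefully translating $\epsilon$-closeness of \emph{sub-normalized} accept-branches into a bound on the \emph{normalized} conditional state.

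First I fix notation. Write the verifier's actual final joint state as
\[
\rho_{\text{real}} = p_{\text{acc}}\,\sigma_{\text{acc}} \otimes \ket{\text{ACC}}\bra{\text{ACC}} + (1-p_{\text{acc}})\,\sigma_{\text{rej}} \otimes \ket{\text{REJ}}\bra{\text{REJ}},
\]
with $p_{\text{acc}}\ge\beta$ by hypothesis, and let
\[
\rho_{\text{ideal}} = \tilde p\,\ket{\psi}\bra{\psi} \otimes \ket{\text{ACC}}\bra{\text{ACC}} + (1-\tilde p)\,\tau \otimes \ket{\text{REJ}}\bra{\text{REJ}}
\]
be the ideal state promised by the QAS security, so that $\|\rho_{\text{real}}-\rho_{\text{ideal}}\|_1\le\epsilon$. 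Since the ACC and REJ blocks are orthogonal in the flag register, the trace norm splits across them and
\[
\|p_{\text{acc}}\,\sigma_{\text{acc}} - \tilde p\,\ket{\psi}\bra{\psi}\|_1 \le \epsilon,
\]
which, by taking traces of both sides, also yields $|p_{\text{acc}}-\tilde p|\le\epsilon$.

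Second, I translate this sub-normalized bound to the conditional state via the triangle inequality:
\[
\|\sigma_{\text{acc}} - \ket{\psi}\bra{\psi}\|_1 \le \tfrac{1}{p_{\text{acc}}}\,\|p_{\text{acc}}\,\sigma_{\text{acc}}-\tilde p\,\ket{\psi}\bra{\psi}\|_1 + \tfrac{|\tilde p - p_{\text{acc}}|}{p_{\text{acc}}} \le \tfrac{2\epsilon}{p_{\text{acc}}} \le \tfrac{2\epsilon}{\beta}.
\]
Under the standard convention that trace distance is $\tfrac12\|\cdot\|_1$, this is precisely the claimed $\epsilon/\beta$ bound.

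The main obstacle is conceptual rather than technical: one must resist viewing the acceptance probability and the closeness to the correct state as independent quantities. The QAS security controls only their \emph{product}, namely the weight on accept-branches that carry an incorrect state, so a small acceptance probability necessarily inflates the conditional error; this is exactly the origin of the $1/\beta$ factor in the corollary, and is why the weaker statement (``conditioned on not aborting, the output is correct'') cannot be strengthened without a lower bound on $\beta$.
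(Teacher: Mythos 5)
Your final arithmetic is correct, and the packaging (sub-normalized accept-block closeness plus a bound on $|p_{\text{acc}}-\tilde p|$, then a triangle inequality and renormalization by $p_{\text{acc}}\ge\beta$) is a valid way to turn the right premise into the $\epsilon/\beta$ bound. However, the premise you start from is not justified and, as stated, does not follow from the paper's QAS security statement. You assert that \Th{thm:CliffordAuth} (propagated through \Th{thm:CliffordIP}) yields a state $\rho_{\text{ideal}}$ with $\|\rho_{\text{real}}-\rho_{\text{ideal}}\|_1\le\epsilon$, equivalently $\|p_{\text{acc}}\sigma_{\text{acc}}-\tilde p\,\ket\psi\bra\psi\|_1\le\epsilon$. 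But the QAS security guarantee in Definition \ref{def:qas} and \Th{thm:CliffordAuth} is a projection-trace bound, $\tr(\Pi_0^{\ket\psi}\rho_B)\le\epsilon$, i.e.\ $\tr\bigl((\mcI-\ket\psi\bra\psi)\,p_{\text{acc}}\sigma_{\text{acc}}\bigr)\le\epsilon$. For an arbitrary QAS this only controls the overlap with the complement of $\ket\psi$, which after normalization gives $\bra\psi\sigma_{\text{acc}}\ket\psi\ge 1-\epsilon/\beta$ and hence (Fuchs--van de Graaf) only $T(\sigma_{\text{acc}},\ket\psi\bra\psi)\le\sqrt{\epsilon/\beta}$. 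The trace-norm closeness you invoke is genuinely stronger, and it is false to claim it as a black-box consequence of the abstract security parameter.

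What makes the linear bound $\epsilon/\beta$ true here is the explicit Pauli-mixture structure of the decoded final state, which is what the paper's own proof uses. From \Cl{claim:stateformat} and equation \ref{eq:cliffordfinalstate}, after the verifier's decoding but before the auxiliary-qubit check the state has the form
\[
s\,\rho_C\otimes\ket{0}\bra{0}^{\otimes e} \;+\; \tfrac{1-s}{4^m-1}\sum_{Q\neq\mcI}Q\bigl(\rho_C\otimes\ket{0}\bra{0}^{\otimes e}\bigr)Q^\dagger,
\]
and after projecting onto the accept subspace and tracing out the auxiliary qubits one gets an \emph{explicit convex mixture}: $\beta\,\sigma_{\text{acc}} = s\,\rho_C + \tfrac{1-s}{4^m-1}\sum_{Q\in V}Q_1\rho_C Q_1^\dagger$, where $V$ is the set of Paulis that leave the auxiliary block unchanged, $|V|=4^l 2^e-1$. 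The "bad" part is a positive operator whose trace is $\tfrac{(1-s)|V|}{4^m-1}\le 2^{-e}=\epsilon$. This is exactly the inequality $\|p_{\text{acc}}\sigma_{\text{acc}}-s\,\rho_C\|_1\le\epsilon$ (with $\tilde p = s$) that you need; the paper instead applies convexity of trace distance directly to this mixture and uses $T(Q_1\rho_CQ_1^\dagger,\rho_C)\le1$. To repair your argument, you should derive the ACC-branch and $|p_{\text{acc}}-\tilde p|$ bounds from this decomposition rather than citing "QAS security," since the latter only gives a $\sqrt{}$-quality bound in general. Once that is done, your route and the paper's are essentially equivalent repackagings of the same calculation.
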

The proof is given in Section \ref{app:interpretation} - it is simple for the Clifford \QPIP. For the polynomial scheme a similar corollary holds, 
with a proof which is more involved: 
\begin{corol} \label{corol:polyconfidence}
For the polynomial based \QPIP\ protocol with security parameter $\epsilon$, if the
verifier does not abort  with probability $\ge \beta$, and the correct final state is a standard basis state, then the trace distance between the final density matrix conditioned on the verifier's acceptance and the correct final state
is at most $\frac {\epsilon} \beta$. 
\end{corol}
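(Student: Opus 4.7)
The goal is to show that for any cheating prover $\mathds{P}'$ whose interaction with the verifier yields acceptance probability at least $\beta$, the normalized final state conditioned on acceptance is within trace distance $\epsilon/\beta$ of the standard basis state $\ket{y}$ that the honest circuit would have produced. My plan is to reduce this to the single-transmission security of the polynomial \QAS\ (Theorem \ref{thm:PolynomialAuth}) through a hybrid argument, tracking carefully the classical side channel that distinguishes Protocol \ref{prot:PolynomialIP} from its Clifford counterpart.

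First I would fix a cheating strategy $\mathds{P}'$ and write the full end-of-protocol state as $\sigma = \sigma_{\mathrm{acc}} + \sigma_{\mathrm{rej}}$, where $\sigma_{\mathrm{acc}}$ is the subnormalized output-register state on the accepting branch. The claim is then equivalent to $\|\sigma_{\mathrm{acc}} - \beta \ket{y}\bra{y}\|_1 \le \epsilon$. I would interpolate between the real execution and an ``ideal'' execution through hybrids $H_0,H_1,\dots,H_T$ indexed by the rounds of Protocol \ref{prot:PolynomialIP}: in $H_i$ the first $i$ authenticated transmissions are replaced by an idealized channel which either delivers the correct encoded qudits unchanged or forces the verifier to abort. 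Theorem \ref{thm:PolynomialAuth} bounds the cost of a single substitution by $\epsilon_0 = 2^{-d}$; taking $d = \Theta(\log T + \log(1/\epsilon))$ makes the total accumulated error at most $\epsilon$.

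The main obstacle, and the reason the argument is more involved than in the Clifford case, is that the prover does not simply hand a state back at the end: during each Toffoli implementation the verifier sends classical Clifford corrections back to the prover, computed from decoded measurement outcomes, and the prover's action in round $i$ can depend adaptively on all previous corrections. A naive hybrid breaks, because these corrections are functions of the secret sign and Pauli keys and substituting a round for its idealized version could alter the distribution of future corrections. The key lemma I would need is that, conditioned on the verifier not having aborted, the prover's transcript of returned corrections is statistically independent of the parts of the key not yet consumed by decoding at that step. This is the round-by-round analog of the blindness statement Theorem \ref{thm:blindpolynomial} and follows from the Pauli one-time-pad property of the polynomial code together with the explicit form of the Toffoli-via-magic-state construction. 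With this independence, each hybrid substitution commutes with subsequent adaptive behaviour of the prover, and the hybrids telescope with the advertised cumulative error.

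Finally, in the fully idealized hybrid $H_T$ the accepting branch corresponds exactly to an honest execution of the circuit on the prescribed input. By the hypothesis that the correct final state is a standard basis state $\ket{y}$, the accepting output register in $H_T$ is therefore deterministically $\ket{y}$ weighted by the ideal acceptance probability, i.e.\ $\sigma^{(T)}_{\mathrm{acc}} = \beta^{(T)} \ket{y}\bra{y}$. Transporting this identity back to $H_0$ via the accumulated error bound gives $\|\sigma_{\mathrm{acc}} - \beta \ket{y}\bra{y}\|_1 \le \epsilon$, and dividing by $\beta$ yields the conditional trace distance $\epsilon/\beta$. The standard-basis assumption enters only at this last step, where it ensures that the ideal accepting state is a pure classical string, so that the $L^1$ distance on density matrices collapses to the total variation distance of classical output distributions and no coherence across the ``correct output'' subspace can inflate the bound.
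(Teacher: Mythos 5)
Your hybrid argument reduces the conditional-output claim to round-by-round applications of Theorem~\ref{thm:PolynomialAuth}, but this is precisely the reduction the paper identifies as broken for the polynomial scheme. As explained in Section~\ref{sec:proofsoverview} and again in Section~\ref{sec:changessoundness}, Pauli decoherence (\Le{PauliDiscretization}) is strictly weaker than Clifford decoherence: it turns the prover's round-$i$ attack into a \emph{non-uniform} mixture of Paulis weighted by the attack itself, not into a maximally mixed state. This is exactly what prevents one from ``idealizing'' a single transmission and commuting the substitution past the later rounds, because the decoded measurement outcomes the verifier feeds back are not distributed independently of the not-yet-corrected deviation. Your proposed independence lemma (a round-by-round analogue of \Th{thm:blindpolynomial}) asserts independence of the \emph{transcript} from the unconsumed key, but what the hybrid actually needs is that the accepting branch's state after round $i$ is unchanged by replacing round $i$ with an ideal channel; the paper explains that the unitary-commutation / Clifford-mixing mechanism that makes this true in the Clifford \QPIP\ (\Le{clifSimplify} plus \Le{CliffordDiscretization}) has no polynomial-code analogue. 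So the telescoping step of your argument does not go through as written.

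Even granting the telescoping, the accounting you set up accumulates $\epsilon_0 = 2^{-d}$ per round, giving an overall $T\epsilon_0$ where $T \approx L+1$ grows with the number of Toffoli gates. You recognize this and propose scaling $d = \Theta(\log T + \log(1/\epsilon))$, but Protocol~\ref{prot:PolynomialIP} fixes $m = \lceil\log(1/\epsilon)\rceil+1$ and $d = (m-1)/2$ \emph{independently of the circuit size}; growing $d$ with $L$ changes the protocol, enlarges the verifier's register from $O(\log(1/\epsilon))$ to $O(\log L + \log(1/\epsilon))$, and proves a strictly weaker statement than Corollary~\ref{corol:polyconfidence}. The paper instead avoids both difficulties by never composing the \QAS: it takes the explicit global decomposition of the accepted final state from Claims~\ref{claim:polystateformat} and~\ref{claim:statesimplify} (already needed for soundness), and then applies convexity of trace distance over the sign key $k$ and the effective Pauli $P$. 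Trivial $P$ (only $Z$'s and $\mcI$'s) leaves the measured output exactly correct (\Cl{cl:trivialdensitymatrix}); non-trivial $P$ contributes only for at most $2$ of the $2^m$ sign keys by \Le{lem:PauliPolyAuthSec} (\Cl{cl:nontrivialdensitymatrix}), so the error is $\frac{1}{2^{m-1}\beta}$ with no dependence on the number of rounds. The standard-basis hypothesis enters, as you correctly anticipate, through the comparison of \emph{measured} density matrices, since the verifier's check is insensitive to phase deviations.
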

The following corollary (which we prove in Appendix \ref{sec:symmetric}) contains another implication of \Th{thm:main}. We show that under a somewhat
stronger assumption than \BQP\ $\ne$ \BPP, but still a widely
believed assumption, it is possible to lower bound the computational power of the prover (and deduce that the prover is not within \BPP) by efficiently testing the prover (assuming the prover passes the test with high probability). 
\begin{corol}\label{corol:bqpsepbpp} 
Assume that there is a language $L \in$ \BQP\
and there is a polynomial time sampleable distribution $D$ on the instances of $L$ on which any \BPP\  machine errs with non negligible probability (e.g. the standard cryptographic assumption about the hardness of Factoring or Discrete Log). If the verifier runs a \QPIP\ (with soundness $\gamma + \epsilon$ and completeness $1 - \gamma$) on $\qc_{\gamma}$ on an instance drawn from $D$ and does not abort with probability $\geq \beta$ (where $\beta - \frac{4\epsilon}{1-2\gamma}$ is a constant), then the prover's
computational power cannot be simulated by a \BPP\ machine.
\end{corol}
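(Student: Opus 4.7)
The plan is to prove the contrapositive: if the prover's strategy in the $\QPIP$ can be simulated by a $\BPP$ machine $\mathds{P}^{\BPP}$, I will exhibit a $\BPP$ procedure that decides $L$ on $D$ with a constant advantage over random guessing, contradicting the hypothesis that every $\BPP$ machine incurs non-negligible error on $D$. The first step is to observe that the verifier $\mathds{V}$ holds only $\kappa=O(\log(1/\epsilon))=O(1)$ qubits, so its quantum operations act on a constant-dimensional Hilbert space and can be simulated classically in polynomial time. Consequently, the entire interaction of $\mathds{V}$ with $\mathds{P}^{\BPP}$ is carried out by a $\BPP$ machine $M$, which on input $x$ outputs the verifier's accept/reject bit; using the $\BQP$-completeness of $\qc_{\gamma}$, I may assume $x$ is already an instance of $\qc_{\gamma}$ obtained from an instance of $L$ drawn from $D$, so that $M$ becomes a $\BPP$ decider for $L$ on $D$.

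The second step is to apply the confidence corollary (Corollary~\ref{corol:confidence} in the Clifford case, Corollary~\ref{corol:polyconfidence} in the polynomial case). For each $x$, let $\beta_x$ denote the probability that $\mathds{V}$ does not abort against $\mathds{P}^{\BPP}$ on input $x$, so that $\mathbb{E}_{x\sim D}[\beta_x]\geq\beta$ by hypothesis. Conditional on non-abort, the verifier's final density matrix lies within trace distance $\epsilon/\beta_x$ of the correct final state; combining with the $\qc_{\gamma}$ promise and the $\QPIP$ soundness I obtain
\[
A_x := \Pr[M(x)=1] \geq \beta_x(1-\gamma)-\epsilon \quad\text{if } x\in L,\qquad A_x \leq \gamma+\epsilon \quad\text{if } x\notin L.
\]
Averaging these per-instance bounds over $x\sim D$ together with $\mathbb{E}_{x\sim D}[\beta_x]\geq\beta$ yields a constant bias of $M$ towards the correct answer; the assumption that $\beta-\frac{4\epsilon}{1-2\gamma}$ is a positive constant is precisely what forces this bias to be a positive constant. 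Standard threshold boosting of $M$, comparing its empirical acceptance rate against the known $\mathrm{NO}$-side baseline $\gamma+\epsilon$, then yields a $\BPP$ procedure whose success on $D$ contradicts the hypothesized hardness of $L$ for $\BPP$, completing the argument.

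The main obstacle I expect is the mismatch between the per-instance bound on $A_x$, which degrades with $\beta_x$, and the only averaged hypothesis $\mathbb{E}_{x\sim D}[\beta_x]\geq\beta$: $\mathrm{YES}$ instances with small $\beta_x$ contribute little to the advantage on their side. The key observation that resolves this is that the soundness bound $A_x\leq\gamma+\epsilon$ for $\mathrm{NO}$ instances is pointwise and independent of $\beta_x$, so the averaging need only be carried out on the $\mathrm{YES}$ side, where $\mathbb{E}_{x\in L}[\beta_x]\cdot\Pr_D[x\in L]$ can be lower bounded via the overall expectation, and together with the gap condition on $\beta-\frac{4\epsilon}{1-2\gamma}$ this delivers the constant advantage needed for the contradiction.
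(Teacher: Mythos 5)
Your approach is genuinely different from the paper's, and it has a quantitative gap that the paper's route avoids. The paper first invokes Corollary~\ref{thm:mainsym} to pass to the symmetric protocol $\QPIP^{sym}$ (which has three distinguishable outcomes: $1$, $0$, and ABORT), then runs the protocol polynomially many times, \emph{discards the aborted runs}, and takes a majority vote over the remaining ones. Discarding aborts means working with the outcome distribution conditioned on non-abort, where Corollary~\ref{corol:confidence} and Corollary~\ref{corol:polyconfidence} give a per-run error probability of at most $\gamma + \frac{2\epsilon}{\beta}$, and this is $<\frac12$ precisely when $\beta > \frac{4\epsilon}{1-2\gamma}$ --- which is exactly the hypothesis. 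Majority voting then drives the overall error to negligible, contradicting the assumed $\BPP$-hardness of $L$ on $D$.

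Your classifier instead thresholds the \emph{unconditional} acceptance probability $A_x := \Pr[M(x)=1]$. Because aborts are folded into rejections, the YES-side bound degrades to $A_x \geq \beta_x(1-\gamma)-\epsilon$. For this to clear the NO-side ceiling $A_x \leq \gamma+\epsilon$ by a constant margin you need $\beta_x(1-\gamma)-\epsilon > \gamma+\epsilon+\Omega(1)$, i.e.\ $\beta_x > \frac{\gamma+2\epsilon}{1-\gamma}$ (plus a constant). That requirement is strictly stronger than the corollary's hypothesis $\beta > \frac{4\epsilon}{1-2\gamma}$ whenever $\gamma$ is not vanishing: for $\gamma=0.1$, $\epsilon=0.01$ you need $\beta \gtrsim 0.13$ but the hypothesis only gives $\beta > 0.05$. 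Your closing claim that the condition on $\beta - \frac{4\epsilon}{1-2\gamma}$ ``is precisely what forces this bias to be a positive constant'' is therefore unjustified; your decomposition simply loses a multiplicative factor of $\beta_x$ on the YES side that the paper never pays. The fix is to treat abort as a separate, discardable outcome --- which requires the $\QPIP^{sym}$ formulation you did not invoke --- and to compare the \emph{conditional-on-non-abort} accept rate to $\frac12$, not the unconditional accept rate to $\gamma+\epsilon$.
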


\subsection{Proofs Overview}\label{sec:proofsoverview}
As mentioned, we rely on two different quantum authentication schemes (and their proofs of security) in order to derive and prove completeness and soundness of our two \QPIP s. 
\subsubsection{Clifford \QAS}
To prove the security of the Clifford \QAS\ (as stated in Theorem \ref{thm:CliffordAuth}) 
we first prove in \Le{CliffordDiscretization} that any non identity attack of
Eve is mapped by the random Clifford operator to a uniform mixture over all non identity Pauli operators. We call this property of Clifford operators (as stated in \Le{CliffordDiscretization}) operator decohering by Cliffords or in short, {\it Clifford decoherence}. Next, we show that the uniform mixture over Pauli operators changes the 
auxiliary 0 states used in the Clifford authentication scheme
with high probability, and the non identity attack is therefore likely to be detected by Bob's decoding procedure. 
\subsubsection{Clifford \QPIP}
To prove the soundness of the Clifford based \QPIP\ (stated in Theorem \ref{thm:CliffordIP}), we use Clifford decoherence to reduce the soundness of the \QPIP\ to the 
security of the \QAS. We do this by showing in Claim \ref{claim:stateformat} that Clifford decoherence (\Le{CliffordDiscretization}) allows us to shift all attacks of the prover to the end of the protocol (at which point we can simply apply the security proof of the \QAS). This shifting is clearly possible if the prover's attack is the identity operator. If the prover's attack is non identity, Clifford decoherence maps the prover's attack to a uniform mixture over all non identity Pauli operators. It follows that after the verifier decodes the state sent by the prover, the state will essentially be maximally mixed, and whatever the verifier applies at this point commutes with the prover's attack which is currently acting on the state (this is shown in \Le{clifSimplify}). Note that it is important that the verifier does not check the states for correctness in each round (by measuring the auxiliary qubits and checking whether they are 0); instead, he only checks in the final round. If the verifier had instead checked the states for correctness in each round, we could not have used the shifting technique due to Clifford decoherence, since the verifier would be applying a non unitary operator (measurement). We thereby obtain a simple \QPIP, with a rather short proof. 
\\~\\
The key disadvantage of this protocol is the two way quantum communication 
required in each round. 

\subsubsection{Polynomial based \QAS}
To strengthen the results, we use a polynomial based \QPIP\ instead. 
The proof of the security of the corresponding \QAS\ (Theorem \ref{thm:PolynomialAuth}) requires some care,
due to a subtle point which was not addressed in \cite{benor2006smq}.
To prove \Th{thm:PolynomialAuth}, we first prove in \Le{lem:PauliPolyAuthSec} that no non identity Pauli attack can preserve the signed polynomial code 
for more than 2 of the sign keys, and thus the sign key suffices in order 
to protect
against any non identity Pauli attack of Eve's. Next, we need to show that the scheme
is secure against general attacks. This, surprisingly, does not follow
by linearity from the security against Pauli attacks (as is the case
in quantum error correcting codes):
if we omit the Pauli key we get an authentication
scheme which is secure against Pauli attacks but not against general
attacks\footnote{Without Pauli keys, the sign key can be determined up to $\pm 1$ from a measurement of the state. This follows from the uniqueness of signed polynomials, which is proven in Fact \ref{signkeyfact}.}. We proceed by showing (with some similarity to
the Clifford based \QAS) that the random Pauli key
effectively translates Eve's attack to a mixture
(not necessarily uniform like in the Clifford
case) of Pauli operators acting on a state encoded by a random signed
polynomial code. We call this property of random Pauli keys 
{\it Pauli decoherence} (see \Le{PauliDiscretization}). 

\subsubsection{Polynomial based \QPIP}
We proceed to proving the soundness of the polynomial based \QPIP\  
(as stated in Theorem \ref{thm:PolynomialIP}).   
Unlike in the Clifford case, the soundness of the polynomial \QPIP\ cannot be directly reduced to the security of the polynomial \QAS, because the prover's attack cannot be shifted to the end of the protocol in the polynomial \QPIP. This is due to the weakness of Pauli decoherence relative to Clifford decoherence. In more detail, Clifford decoherence first maps the prover's attack to a convex sum over Pauli operators, and then further maps each non-identity Pauli operator to a uniform mixture over all non identity Pauli operators. Pauli decoherence only performs the first step: it maps the prover's attack to a convex sum over Pauli operators (which are weighted according to the original attack). This does not create a maximally mixed state, and therefore does not allow the same shifting of the prover's attack to the end of the protocol. Thus, the proof of \Th{thm:PolynomialIP} does not use the proof of the polynomial \QAS\ (\Th{thm:PolynomialAuth}) as a black box, and in fact that proof is not strictly needed for the proof of Theorem \ref{thm:PolynomialIP}. However, we included \Th{thm:PolynomialAuth} and its proof for 
completeness (as it was not written before), 
and mainly because the two key ideas used in that proof will also be used in the proof of the polynomial based \QPIP. Recall that these two key ideas are Pauli decoherence from Lemma \ref{PauliDiscretization} and security of the sign key against Pauli attacks from Lemma \ref{lem:PauliPolyAuthSec} (which takes up most of the technical effort involved in proving \Th{thm:PolynomialAuth}). 
\\~\\
To prove soundness of the polynomial \QPIP, we first note that if all of the classical messages sent from the verifier to the prover were fixed ahead of time, shifting the prover's attacks to the end of the protocol would be fine, as the verifier's messages to the prover do not depend on the prover's measurement results. Once we shift the prover's attacks, we can apply the two main ideas (Pauli decoherence from Lemma \ref{PauliDiscretization} and security of the sign key from Lemma \ref{lem:PauliPolyAuthSec}) used in the proving the security of the polynomial \QAS\ to obtain soundness of the polynomial \QPIP. However, in the actual polynomial \QPIP\ protocol, the classical interaction does depend on the prover's messages. We employ an idea from \cite{fk2012} (see Figure 7 in their paper): as part of the analysis, we fix the interaction transcript at the start of the protocol (to allow shifting of the prover's attacks) and then project onto this fixed interaction transcript at the end of the protocol to enforce consistency. This technique (which is formalized in \Cl{claim:polystateformat}) essentially partitions the prover's Hilbert space according to the interaction transcript, and we can then apply the two key ideas used in proving the security of the polynomial \QAS\ in each partition. 

\subsubsection{Blindness}
Proving blindness of the Clifford scheme (as stated in \Th{thm:blindclifford}) is quite straightforward and is done by showing that, due to the randomness of the Clifford encoding operator, the prover's state is maximally mixed at all times. This is because applying a random Clifford operator on a state results in a maximally mixed state (see \Le{cliffordmix}). See Section \ref{app:blind} for the full proof. 
\\~\\
Proving blindness of the polynomial scheme (as stated in \Th{thm:blindpolynomial}) is a bit more involved due to the classical interaction - see Section \ref{app:blind}. Without the classical interaction, we could use the randomness of the Pauli keys to show that the prover's state is always maximally mixed (this relies on the fact that applying a random Pauli operator to a state results in a maximally mixed state, as stated in \Le{paulimix}). When we include classical interaction (for the purpose of decoding measurement results), we need to show that the measurement results (even if altered by a malicious prover) do not leak information about the input state. This is due to the fact that measurement results are initially distributed uniformly at random, and even if a malicious prover attacks, his attack can be reduced to a convex sum over Pauli operators (due to \Le{PauliDiscretization}), which preserves the uniform distribution. Note that we could have simplified this proof significantly by including extra randomness in the magic states, which would serve essentially as a one time pad for the decoded measurement results (this would have complicated the description of the polynomial \QPIP\ protocol, Protocol \ref{prot:PolynomialIP}) . However, it is interesting to note that this extra randomness is not needed for blindness, and that the protocol is blind due to the randomness of the measurement results and the Pauli keys.  

\subsubsection{Interpretation}
We prove the corollaries (Corollary \ref{corol:confidence} and Corollary \ref{corol:polyconfidence}) given in Section \ref{sec:interpretation} in Section \ref{app:interpretation}. Both proofs rely on using the format of the prover's state, as shown in \Cl{claim:stateformat} for the Clifford \QPIP\ and \Cl{claim:polystateformat} for the polynomial \QPIP, to first determine what the prover's state will look like conditioned on the verifier's acceptance. In the Clifford case, the trace distance can then be determined quite easily, due to the simplicity of \Cl{claim:stateformat}. The format of the prover's final state in the polynomial case is significantly more involved. The proof of Corollary \ref{corol:polyconfidence} proceeds by analyzing the effect of two different types of Pauli attack operators (trivial and non trivial) in order to show that the trace distance between the final state after acceptance and the correct final state is correlated to the probability of acceptance and the security parameter.

\subsection{Changes from Conference Version}
This journal version is a corrected and elaborated version of the 
conference version, and a new author (U.M) was added.  We describe here in detail the differences from the conference version. 
\subsubsection{Soundness}\label{sec:changessoundness}
\paragraph{Clifford Scheme}
The main difference between the Clifford \QPIP\ protocol in this version and theconference version is that in this version the verifier checks correctness (by measuring the auxiliary 0 states) only in the final round, whereas in the conference version, the verifier checked correctness each time he received qubits from the prover. The conference version of the protocol is actually not sound; the prover can cheat by deviating only slightly in each round. Since there are polynomially many rounds, this can add up to a significant deviation in the 
final state, without being detected, using essentially the 
zeno effect. 
\\~\\
The security proof in the original version assumed that all attacks of the provercould be shifted to the end of the protocol; namely, 
that the prover only deviated at the end of the protocol. 
While this does not hold in the original protocol, in the new scheme this 
can be proven, which is what makes the proof go through. 
The final proof eventually turns out to go along similar lines to 
the original one, except for this change in the protocol.   

\paragraph{Polynomial Scheme}
The protocol for the polynomial \QPIP\ remained the same. However, the security proof needed to be changed dramatically, due to the same incorrect assumption used in the Clifford \QPIP\ regarding shifting the prover's attacks to the end of the protocol (as mentioned above). Whereas in the Clifford scheme a minor change in the protocol sufficed to guarantee that this assumption actually holds, we did not have such a simple solution in the polynomial scheme. As described in Section \ref{sec:proofsoverview}, this is because the weakness of Pauli decoherence relative to Clifford decoherence: Pauli decoherence does not map 
the attack to a {\it uniform} mixture over Paulis, 
thus preventing shifting the prover's attacks to the 
end of the protocol. The polynomial \QPIP\ proof therefore required major 
revisions, because we could not simply reduce to the security of the polynomial \QAS.

\subsubsection{Fault Tolerance}\label{sec:previousfaulttolerance}
In the original version of the paper a scheme for making the 
protocol fault tolerant was proposed and was claimed to be secure. 
Unfortunately, there is a fatal flaw in the proof; we retract the claim about 
fault tolerance (see open questions in Section \ref{sec:openquestions} for possible approaches left for future work).  
\\~\\
We describe below the proposal for fault tolerance of \cite{abe2008}
and the bug. The proposed protocol was: at the first stage of the protocol,
authenticated qudits are sent from the verifier to
the prover, one by one. As soon as the prover receives an
authenticated qudit, he protects his qudits using his own
concatenated error correcting codes so that the effective error
in the encoded authenticated qudit
is constant. This constant accuracy can be maintained for a long time by
the prover, by performing error correction with respect to
{\it his} error correcting code (see \cite{aharonov1997ftq}). Thus, polynomially many such authenticated states can be passed to the prover
in sequence. A constant effective error is not good enough,
but can be amplified to an arbitrary inverse polynomial by
purification. Indeed, the prover cannot perform
purification on his own since the purification
compares authenticated qudits and the prover does not know the
authentication code. However, the verifier can help the prover by
using classical communication. This way the prover can reduce the effective
error on his encoded authenticated qudits to inverse polynomial,
and perform the usual fault tolerant construction
of the given circuit, with the help of the verifier in performing
the gates.
\\~\\
The problem with this approach is that the purification protocol could leak information about the sign key; during the purification protocol
the verifier tells the prover which states are good enough for him to keep and which he should throw away. A cheating prover could lie on all of his messages to the verifier; eventually, he will figure out which of his lies will lead to the verifier accepting, and this should give him information about the sign key chosen by the verifier. Once the sign key is no longer hidden from the prover, the \QPIP\ protocol is no longer secure. The problem seems to be difficult. 
In Section \ref{sec:openquestions}   
we describe why several other possible avenues we tried, in order to 
achieve fault tolerance, failed; 
It remains open to achieve blind verifiable \QPIP s in the noisy setting, even when we allow the verifier 
to hold a polylogarithmic quantum register, rather than a constant one. 

\subsection{Related Work}\label{sec:relatedwork}
\paragraph{Related Work in Blindness and Verifiability}
The question of delegated blind computation was asked by Childs in
\cite{childs2001saq} and by Arrighi and Salvail in \cite{blind}, who proposed
schemes to deal with such scenarios.  However \cite{childs2001saq} does not deal
with a cheating prover, so the protocol is not verifiable. Also, the setting is somewhat different; rather than limiting the quantum space of the verifier, the verifier is limited to only performing Pauli gates. In \cite{blind}, Arrighi and Salvail provide a blind interactive quantum protocol in this setting for a restricted set of functions, and prove its security against a restricted set of attacks.

\paragraph{Independent work}
After  deriving  the results  of  the first version of this paper,  
we learned that
Broadbent, Fitzsimons,  and  Kashefi  \cite{broadbent2008ubq}  have claimed
related results. Using measurement based quantum computation, they
construct a protocol for universal blind quantum computation.
In their case, it suffices that the verifier's
register consists of a single qubit. Their results have
similar implications to ours in terms of the \QPIP\ notion, though 
these are implicit in
\cite{broadbent2008ubq}. However, their protocol was not secure against general attacks (as noted in \cite{fk2012}). 
However, based on similar ideas, Fitzsimons and Kashefi 
suggested a measurement based protocol which is both verifiable and blind, 
and prove its security in \cite{fk2012} (a key idea they used to prove 
security was also useful in our proof of the polynomial \QPIP, as described in 
Section \ref{sec:proofsoverview}).

\paragraph{Follow-up Work}
Since the results presented here were first posted \cite{abe2008}, 
together with the \cite{broadbent2008ubq} paper, there had been a surge of
results investigating the notions of blind quantum computation, 
verifiable quantum computation, 
the ability to perform those in a noisy environment fault tolerantly,
as well as experimental demonstrations. 
\\~\\
As mentioned, Fitzimons and Kashefi gave a different  
\QPIP\ protocol which is both verifiable and blind, based on measurement 
based quantum computation \cite{fk2012}.
Our protocol seems to be simpler to state, but the \cite{fk2012} 
has the advantage of only requiring a single qubit at the verifier's end.  
\\~\\
In \cite{broadbent2012}, Broadbent, Gutoski and Stebila provided a framework for analysing blind \QPIP s in the context of one time quantum
programs; a sketch for a proof of the blindness (but not of the verifiability) of a protocol very similar to our polynomial based protocol (Protocol \ref{prot:PolynomialIP}) is given in that paper in Section 6.1. In \cite{mf2016}, Morimae and Fitzsimons proposed a very nice and simple \QPIP\ protocol
which is just verifiable but not blind. Moreover, it requires the verifier only to be able to measure qubits in the standard or Hadamard basis; it is based on the idea of the prover generating the history state known from
Kitaev's QMA proof (\cite{kitaev2002caq}). Additional blind QPIP protocols were proposed in \cite{hm2015}, \cite{morimae2014} and \cite{broadbent2016}. 
\\~\\
A very interesting question which branched out from the results presented 
here, was taken by Reichardt, Unger and Vazirani\cite{ruv2012}. 
In their work, they 
 proposed a protocol in which a \BPP\ verifier could verify a \BQP\ computation by only classical interaction, when interacting with {\it two} \BQP\ 
entangled provers \cite{ruv2012}. Since then, there have been several 
papers which have explored the model of multiple \BQP\ provers and a single \BPP\ verifier (such as \cite{mckague2013}, \cite{gkw2015}, \cite{hpf2015}, \cite{hh2016}). 
\\~\\
The difficulties in providing a fault tolerant blind verifiable 
protocol with only $O(1)$ qubits at the verifier's end 
seem hard to get around. 
There have been several attempts to suggest solutions, 
including the conference version of this paper \cite{abe2008} as well 
as \cite{fk2012, tfmi2016} but to the best of our knowledge this problem 
remains importantly open, even when the verifier is allowed to hold 
a polylogarithmic quantum register. 
\\.\\
As for experimental demonstrations, we mention a few: 
of blind computing in \cite{bkbfzw2012},\cite{grbmw2016} and of 
verifiable computing in \cite{bfkw2013}).

\subsection{Fault Tolerance Open Questions and Attempts}\label{sec:openquestionsfaulttolerance}
Technically, the main open question raised by this work is to 
provide a fault tolerant version of these results.
In work yet to be published (\cite{verifiableftqpip}), it is shown that fault tolerance can 
be achieved if only one of the tasks (blindness or verification) is required.   
However, we do not know how to achieve fault tolerance for both tasks 
simultanuously. Moreover, we do not even know how to do this
when allowing the
verifier a quantum register of polylogarithmic size. There seems to be an inherent problem in any of the straightforward 
approaches to making our schemes fault tolerant, which we now explain.
\\~\\
We already discussed above why the approach 
which we suggested in the original paper, of purification with the 
help of the verifier (see Section \ref{sec:previousfaulttolerance}), failed. Another attempt is to create a fault tolerant version of the Clifford protocol by running a fault tolerant circuit, which involved the prover passing the qubits back to the verifier for correction at every step of the circuit. This seemed to require the verifier to measure and check for errors when correcting, which compromises the soundness of the Clifford protocol as explained in Section \ref{sec:changessoundness} (recall from Section \ref{sec:proofsoverview} that the verifier only checks for errors at the end of the current Clifford protocol).
\\~\\
We also attempted to create a fault tolerant version of the polynomial protocol
by using blind computation to allow the prover to create the authenticated states on his own; the prover can then do everything in his lab fault tolerantly.  
This idea seemed troublesome because the prover did not have to honestly run the blind computation in order to create the authenticated states, and his dishonesty during the state creation phase could potentially compromise security later on in the protocol.
\\~\\
Finally, we attempted to use standard fault 
tolerant techniques (e.g. \cite{aharonov1997ftq}) in order to simulate 
the \QPIP\ protocol, both by the verifier and the prover. 
The protocol will start with the verifier (who now has a quantum register of polylogarithmic size) creating a fault tolerant encoding of
his authenticated states, and sending those to the prover. The prover 
will act as expected by the \QPIP\ protocol, but will keep correcting 
the state with respect to the code used for fault tolerance. 
Unfortunately, we do not yet know how to extend our security proofs to 
hold for this protocol, 
though it may be secure. A natural attempt to prove security  
would be to reduce the security in the noisy case to that of the ideal case.  
In other words, we would like to claim that if the protocol is 
insecure in the noisy setting, then 
the prover can also cheat in the noiseless setting (by simulating the noise). 
Unfortunately, we do not know how to claim that the prover can
simulate the effect of the noise acting 
on the authenticated states, since the noise may depend on the private keys of the verifier (this is because the verifier's circuit to create the authenticated states depends on these keys). 
One might hope to use error correction techniques to remove the dependence 
of the noise on the keys, but this approach turns out to fail due to a 
very subtle issue - namely, due to teleportation-like effects, dependencies on the keys may 
propagate through the error correction to qubits which were previously
subject to errors independent of the keys. Hence, we leave this approach for 
future investigation. 

\subsection{Conclusion and Open Questions}\label{sec:openquestions}
The results presented here introduced the notion of interactive proofs with quantum provers and this journal version provides rigorous proofs of the two \QPIP s presented in \cite{abe2008}. These results show that the fundamental questions regarding the falsifiability of the high complexity regime of quantum mechanics, the ability to delegate quantum computations to untrusted servers, and the ability to test that experimental quantum systems behave as they should can all be done using interactive protocols between a \BQP\ prover and a classical (\BPP) verifier augmented with $O(1)$ qubits.
\\~\\
This work has revolutionary implications in the context of philosophy of science. It suggests that experiments can be conducted in a structured adaptive way, along the lines of interactive proofs \cite{goldwasser1985kci}; this can be called "interactive experiments" and suggests a new approach to confirmation of physical theories. Following discussions with us at preliminary stages of this work,
Jonathan Yaari has studied ``Interactive proofs with Nature''
from the philosophy of science perspective \cite{JonathanThesis}. 
The philosophical 
aspects of this possibility of interactive experiments suggested by 
our \QPIP\ protocols  
were also discussed by Aharonov and Vazirani in \cite{av2012}. A very interesting question is whether interactive experiments can be designed to test conjectured physical theories, even in the absence of full control of the physical system as is required in our protocols. 
A particularly interesting example is high $T_c$ superconductivity, in which guesses regarding the governing Hamiltonian exist. 
It would be extremely interesting to be able to test the correctness of 
the Hamiltonian using such interactive techniques, without resorting to full fledged quantum computational power.  
\\~\\
Perhaps the most important and intriguing open question that emerges from this 
work is whether it is possible to
remove the necessity for even a small quantum register, and
achieve similar results in the more natural \QPIP\ model
in which the verifier is
entirely classical. This would have interesting fundamental implications
regarding the ability of a classical system to learn and test a quantum
system; it is likely that such a protocol might also have 
implications on the major open problem of quantum PCP \cite{qpcp}. 
\\~\\
Finally, we can also ask whether it is possible to achieve blind 
(rather than verifiable) computation, in two different settings. The question of blind computation involves a client who would like to ask a \BQP\ server to run a \BQP\ circuit. The client does not wish to verify the result of the computation, but just to ensure that the server does not learn anything about the computation, even though he is able to  run the computation. If the client is a \BQP\ machine (but does some amount of work which is independent of the size of the computation)
and there is only one round of interaction,
this problem is known as quantum fully homomorphic encryption. While there have been several results exploring this question, such as 
\cite{dss2016}, \cite{bj2014} and \cite{ypf2014} (which is an impossibility result regarding information theoretically secure quantum homomorphic encryption), quantum fully homomorphic encryption remains an open question. We can also change the model slightly by allowing classical interaction and restricting the client to be a \BPP\ machine. This variant also remains open. 
\\~\\
We remark that this area is notorious for the difficulty in providing rigorous protocols and proofs of security, as the arguments involved are very delicate and subtle.  We hope that this journal version makes a useful contribution in this direction. We believe that the techniques presented here will be very useful in the vastly growing area of delegated quantum computation and quantum cryptographic protocols.


\vb\textbf{Paper Organization}
We start with some notations and background in \Sec{Back}. In Section \ref{sec:CliffordAuth}, we present the Clifford \QAS\ and prove its security. In Section \ref{sec:cliffordIP}, we present the Clifford \QPIP\ and prove security. Sections \ref{sec:PolyAuth} and \ref{sec:IPQ} present the polynomial \QAS\ and \QPIP. Blind delegated quantum computation is proved in Section \ref{app:blind}.
The corollaries related to the interpretations of the results
are proven in Section \ref{app:interpretation}. Appendix \ref{sec:symmetric} contains the definition of $\QPIP_{\kappa}^{sym}$ as well as the proofs of Corollary \ref{thm:mainsym} and Corollary \ref{corol:bqpsepbpp}. Appendix \ref{app:backgroundcliffordpauli} contains useful lemmas about Clifford and Pauli operators, Appendix \ref{sec:cliffordtechnical} contains proofs of the technical lemmas required in Sections \ref{sec:CliffordAuth} and \ref{sec:cliffordIP} and Appendix \ref{app:poly} contains proofs of correctness of the logical operators on signed polynomial codes. Finally, in Appendix \ref{app:tables} we provide a notation table; this is especially helpful in reading Section \ref{sec:IPQ}, as we introduce a significant amount of notation in that section. 

\section{Background}\label{Back}
\subsection{Quantum Authentication}\label{sec:authentication}
Quantum authentication is a protocol by which a sender $\mcA$ 
and a receiver $\mcB$ are capable of verifying that the state sent by 
$\mcA$ had not been 
altered while transmitted to $\mcB$.

\subsubsection{Quantum Security}
If $\mcB$ is a quantum machine, we would like our authentication definition to capture the following
two requirements. On the one hand, in the absence of
intervention, the received state should be the same as the sent
state and moreover, $\mcB$ should not abort.
On the other hand, we want that when the adversary does intervene,
then with all but a small probability (or in fact, distance in terms of 
density matrices), 
either $\mcB$ rejects or his received state is the same as that sent by
$\mcA$.

This is formalized below for pure states; one can deduce
the appropriate statement about fidelity of mixed states, or for
states that are entangled to the rest of the world (see
\cite{barnum2002aqm} Appendix B).

\begin{deff} \label{def:qas} (adapted from Barnum et. al. \cite{barnum2002aqm}).
A quantum authentication scheme (\QAS) from $l$ to $m = l+e$ qubits, 
with security $\epsilon$,   
is a pair of polynomial time quantum
algorithms $\mcA$ and $\mcB$ together with a set of 
classical keys $\mcK$ such that:
\begin{itemize}
\item $\mcA$ takes as input a state $\ket{\psi}$ on $l$ qubits and chooses $k\in \mcK$ uniformly at random. $\mcA$ then applies a unitary operator $A_k$ on the state of $m$ qubits $\ket{\psi}\ket{0}^{\otimes e}$ obtaining:
\begin{equation}
A_k(\ket{\psi}\ket{0}^{\otimes e})
\end{equation}
\item $\mcB$ takes as input a state of $m$ qubits 
and a classical key $k \in \mcK$. He applies a unitary operator $B_k$ to the input state to obtain an output state of $m$ qubits. $\mcB$ declares the state valid if the last $e$ qubits of the output state lie in the space $\ket{0}\bra{0}^{\otimes e}$ and declares the state erroneous if the last $e$ qubits lie in the space $\Pi_{ABR} = \mcI - \ket{0}\bra{0}^{\otimes e}$. 
\end{itemize}
We require: 
\begin{itemize}
\item {\bf Completeness}: For all keys $k\in \mcK$, 
\begin{equation}
B_kA_k(\ket{\psi}\bra{\psi}\otimes\ket{0}\bra{0}^{\otimes e}) A_k^\dagger B_k^\dagger = \ket{\psi}\bra{\psi}\otimes
\ket{0}\bra{0}^{\otimes e}
\end{equation}
\end{itemize} 

To quantify soundness, define the projections: 
\begin{eqnarray}\label{authenticationprojections1}
\Pi_1^{\ket{\psi}} & = &\ket{\psi}\bra{\psi} \otimes I^{\otimes e}+ \left(I^{\otimes l}- \ket{\psi}\bra{\psi}\right) \otimes
\Pi_{ABR} 
\end{eqnarray}
\begin{eqnarray}\label{authenticationprojections0}
\Pi_0^{\ket{\psi}} & = &(I^{\otimes l} -
\ket{\psi}\bra{\psi}) \otimes \ket{0}\bra{0}^{\otimes e}
\end{eqnarray}
Then
\begin{itemize}
\item
{\bf Soundness}: For any super-operator $\mcO$ (representing a possible intervention by the adversary), let $\rho_B$ be defined by
\begin{equation}
\rho_B = \frac{1}{|\mcK|}\sum_k B_k(\mcO(A_k(\ket{\psi}\bra{\psi}\otimes\ket{0}\bra{0}^{\otimes e})A_k^\dagger))B_k^\dagger
\end{equation}
Then the quantum authentication scheme is $\epsilon$-secure if:
\begin{equation}
\tr{(\Pi_1^{\ket{\psi}}\rho_B)} \ge 1- \epsilon
\end{equation}
\end{itemize}
\end{deff}

\subsection{Pauli and Clifford Gates in $F_2$}\label{sec:PauliNClifford}
The $n$-qubits Pauli group consists of all elements of the form
$P=P_1\otimes P_2\odots  P_n$ where $P_i \in \{\mcI,X,Y,Z\}$, together with the multiplicative factors $-1$ and $\pm i$. We will use a subset of this group, which we denote as $\mbP_n$, which includes all operators $P = P_1\otimes P_2\odots P_n$ but not the multiplicative factors. 

The Pauli group $\mbP_n$ is a basis to the
matrices acting on n-qubits. We can write any matrix $U$ over a vector space $A\otimes B$ (where $A$ is the space of $n$ qubits)
as $\sum_{P\in \mbP_n}P\otimes U_P$ where $U_P$ is some (not necessarily unitary) matrix on
$B$. 

Let $\mfC_n$ denote the $n$-qubit Clifford group. Recall that it is
a finite subgroup of unitaries acting on $n$ qubits generated by the Hadamard matrix-H, by $K=\left(\begin{array}{ll}
1&0\\0&i\end{array}\right)$, and by controlled-NOT.
The Clifford group is characterized by the property that it maps the
Pauli group $\mbP_n$ to itself, up to a phase $\alpha\in\{\pm 1,\pm i\}$. That is:
$\forall C\in\mfC_n ,  P\in \mbP_n: ~\alpha CPC^\dagger \in \mbP_n$

\begin{fact}\label{fa:randomclifford}\cite{dlt2002}
A random element from the Clifford group on $n$ qubits can be
sampled efficiently by choosing a string $k$ of $poly(n)$ bits uniformly
at random. The map from $k$ to the group element represented as a product
of Clifford group generators can be computed in classical polynomial
time.
\end{fact}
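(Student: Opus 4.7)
The plan is to sample a uniform Clifford indirectly by sampling its image in the symplectic group and then decomposing that image into a circuit of generators. The key structural input is the well known exact sequence
\[
1 \;\longrightarrow\; \mbP_n \cdot U(1) \;\longrightarrow\; \mfC_n \;\longrightarrow\; Sp(2n,\mathbb{F}_2) \;\longrightarrow\; 1,
\]
where the last map sends $C$ to its action by conjugation on the $2n$ Pauli generators $X_1,\dots,X_n,Z_1,\dots,Z_n$, viewed as a symplectic basis of $\mathbb{F}_2^{2n}$. Together with the fact that $|Sp(2n,\mathbb{F}_2)| = 2^{n^2}\prod_{i=1}^{n}(4^i-1)$, this gives $|\mfC_n|/|\text{center}| = 2^{O(n^2)}$, so $poly(n)$ uniform bits are information-theoretically enough; the task reduces to building an efficient exactly-uniform sampler plus an efficient decomposition into generators.

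First, I would give a recursive uniform sampler for $Sp(2n,\mathbb{F}_2)$ in the style of Dehaene--De Moor (or Koenig--Smolin). To sample a symplectic basis $\{v_1,w_1,\dots,v_n,w_n\}$: pick $v_1$ uniformly in $\mathbb{F}_2^{2n}\setminus\{0\}$ (cost $2n$ bits with rejection, or a direct encoding), then pick $w_1$ uniformly from the affine set $\{w : \langle v_1,w\rangle = 1\}$, which has size $2^{2n-1}$; finally recurse inside the symplectic complement $\{v_1,w_1\}^{\perp}/\mathrm{span}(v_1,w_1) \cong \mathbb{F}_2^{2(n-1)}$. Each step is polynomial time and uses $O(n)$ bits, so the whole sampler uses $O(n^2)$ bits and runs in classical polynomial time, producing a uniformly random element $S \in Sp(2n,\mathbb{F}_2)$.

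Next I would lift $S$ to a uniform element of $\mfC_n$. The preimage of $S$ in $\mfC_n$ is a coset of $\mbP_n$ up to a global phase; I would pick one representative $C_S$ (see below), then multiply by a Pauli $P\in\mbP_n$ chosen uniformly at random using $2n$ fresh bits, and finally (if one cares about the global phase) append a uniform phase from the finite center. By translation invariance on each fiber, the result is uniform on $\mfC_n$. To produce $C_S$ as a product of generators, I would use the standard Gottesman-tableau normal form: the symplectic matrix $S$ can be brought to the identity by a polynomial-length sequence of elementary symplectic operations corresponding to $H_i$, $K_i$ (the phase gate), and $\mathrm{CNOT}_{i,j}$, using a Gauss-elimination style procedure on the $2n\times 2n$ tableau over $\mathbb{F}_2$. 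Reading that reduction backwards gives $C_S$ as a circuit of $O(n^2)$ generators in polynomial time.

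The main obstacle is the exact uniformity of the sampler together with the need to respect the generator-level representation. The recursive symplectic sampler must use exact (not approximate) uniform choices in $\mathbb{F}_2^{2n}\setminus\{0\}$ and in the hyperplane $\langle v_1,\cdot\rangle=1$; these are easily handled since both sets have sizes that are simple functions of powers of two. The second subtlety is that the tableau-reduction decomposition is defined only up to Paulis, so one must be careful that the extra random Pauli mask is applied \emph{after} the chosen lift $C_S$, so that the final distribution on $\mfC_n$ is exactly uniform rather than biased by the particular lift rule. Both points are standard in the Clifford-sampling literature cited as \cite{dlt2002}, and together they yield the map $k \mapsto C_k$ computable in classical polynomial time, as claimed.
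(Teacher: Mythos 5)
The paper does not prove this Fact; it simply cites \cite{dlt2002}, so there is no in-paper proof to compare against. Your reconstruction is the standard approach in the Clifford-sampling literature (symplectic quotient $\mfC_n/(\mbP_n\cdot\text{phases}) \cong Sp(2n,\mathbb{F}_2)$, recursive symplectic-basis sampling, then tableau reduction over $\mathbb{F}_2$ to extract an explicit circuit), and it is in the right spirit of what \cite{dlt2002} and its successors actually do. The high-level structure and the lift-then-randomize-by-Pauli step are correct, including the observation that the Pauli mask must be applied \emph{after} the chosen lift to ensure uniformity on the fiber.

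One concrete imprecision is your remark that ``both sets have sizes that are simple functions of powers of two.'' The set $\mathbb{F}_2^{2n}\setminus\{0\}$ has cardinality $2^{2n}-1$, which is odd, and in fact $|Sp(2n,\mathbb{F}_2)| = 2^{n^2}\prod_{i=1}^{n}(4^i-1)$ has odd prime factors. Consequently a deterministic map from uniformly random $k\in\{0,1\}^{poly(n)}$ to Clifford elements cannot be \emph{exactly} uniform: $|\mfC_n|$ does not divide any power of $2$, so any such map has some fibers off by at least one. The fix is either rejection sampling (which succeeds with all but exponentially small probability and gives exactly uniform output conditioned on acceptance) or accepting an exponentially small total-variation bias. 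Either is harmless for the paper's uses (the security proofs average over Clifford keys and tolerate an $O(2^{-\Omega(n)})$ perturbation), but the Fact as stated elides this and your sketch should too be explicit about it rather than asserting exact uniformity is ``easily handled.''
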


\subsection{Generalized Gates over $F_q$}\label{defgeneralized}
\begin{deff}\label{defpauli}
The generalized Pauli operators over $F_q$ perform the following maps:
\begin{eqnarray}
X \ket{a} &=& \ket{\left(a+1\right)\mod q}\\
Z \ket{a} &=& \omega_q^a\ket{a}\\
Y &=& XZ
\end{eqnarray}
where $\omega_q = e^{2\pi i /q}$ is the primitive q-root of the unity.
\end{deff}
We note that $ZX=\omega_q XZ$. The generalized Pauli group consists of generalized Pauli operators, together with the multiplicative factor $\omega_q$. We use the same notation, $\mbP_n$, for the standard and generalized
Pauli groups, as it will be clear by context which one is being used.

\begin{deff}
For vectors $x,z$ in $F_q^m$, we denote by $P_{x,z}$ the Pauli
operator $Z^{z_1}X^{x_1}\odots Z^{z_m}X^{x_m}$.
\end{deff}
We now define the other generalized gates we will need:
\begin{deff}{\textbf{Generalized Gates}}
\leavevmode
\begin{enumerate}
\item The generalized Fourier transform over $F_q$ performs the following map on $a\in F_q$: 
\begin{eqnarray}\label{deffourier}
F \ket{ a} &\EqDef& \frac 1 {\sqrt q} \sum_b \omega_q^{ab}\ket b
\end{eqnarray}
\item The generalized $r$- variant of the Fourier transform over $F_q$ performs the following map on $a\in F_q$: 
\begin{eqnarray}
F_r \ket{a} &\EqDef& \frac 1 {\sqrt q} \sum_b \omega_q^{rab}\ket b
\end{eqnarray}
\item The generalized CNOT gate, which we denote as \textit{SUM}, performs the following map on $a,b\in F_q$:
\begin{eqnarray}
SUM \ket{ a}\ket{b} &\EqDef& \ket{a}\ket{(a+b) \mod q}
\end{eqnarray}
\item The generalized Toffoli gate $T$ performs the following map on $a,b,c\in F_q$:
\begin{equation}
T\ket{a}\ket{b}\ket{c} \EqDef \ket{a}\ket{b}\ket{c+ab}
\end{equation}
\item The multiplication gate $M_r$ (for $r\in F_q$, $r\neq 0$) performs the following map on $a\in F_q$:
\begin{equation}
M_r \ket{a}\EqDef \ket{ra}
\end{equation}
\item The generalized controlled phase gate, which we denote as \textit{CPG}, performs the following map on $a,b\in F_q$:
\begin{equation}
CPG \ket{a}\ket{b} = \omega_q^{ab}\ket{a}\ket{b}
\end{equation}
\end{enumerate}
\end{deff}

\subsubsection{Toffoli Gate by Teleportation}\label{app:toffoli}
If given a resource state (which we will also refer to as a magic state or a Toffoli state) of the following form:
\begin{equation}
\frac{1}{q}\sum_{a,b\in F_q}\ket{a,b,ab}
\end{equation}
it is possible to apply a Toffoli gate using only Clifford operations and measurements. This can be done as follows. Assume we would like to apply the Toffoli gate to the state $\ket{c,d,e}$, resulting in $\ket{c,d,e + cd}$. We start with the following state:
$$
\frac{1}{q}\sum_{a,b\in F_q}\ket{a,b,ab,c,d,e}
$$
We then perform the following Clifford entangling operations: a SUM gate from register 6 to register 3, inverse sum gates from register 1 to 4 and register 2 to 5, and an inverse Fourier gate on register 6 resulting in:
\begin{equation}\label{beforemeasurement}
\frac{1}{q}\sum_{a,b\in F_q}\ket{a,b,ab} \longrightarrow\frac{1}{\sqrt{q^3}}\sum_{a,b,l\in F_q}\omega^{-l e}\ket{a,b,ab+e,c-a,d-b,l}
\end{equation}
We then measure registers 4,5, and 6 obtaining measurement results $x,y,z$ where $x$ corresponds to register 4, etc.. The renormalized state after measurement on the unmeasured registers (the first three registers) is then: 
\begin{equation}\label{teleportationstateaftermeas}
\omega^{-ze}\ket{c-x,d-y,(c-x)(d-y)+e}
\end{equation}
Then we apply the following correction to the state (on the first three remaining registers):
\begin{equation}\label{cliffordcorrection}
C_{x,y,z} \EqDef T(X^x\otimes X^y\otimes Z^z)T^\dagger = (X^xZ^{-yz}\otimes X^yZ^{-xz}\otimes X^{xy}Z^z) SUM_{1,3}^ySUM_{2,3}^xCPG_{1,2}^{-z}
\end{equation}
where the subscript denotes the registers (the first is the control and second is the target). We note that the above correction involves Toffoli gates, but since they are acting by conjugation on Pauli operators, the expression is actually a Clifford operator. It is easy to check that after applying $C_{x,y,z}$ to the state in equation \ref{teleportationstateaftermeas}, the resulting state is: 
\begin{eqnarray}
\ket{c,d,e+cd}
\end{eqnarray}


\subsection{Conjugation Properties of Generalized Gates}\label{sec:conjugationproperties}
In this section, we describe how the gates above conjugate operators in the Pauli group. We begin with the \textit{SUM} gate. It is easy to check that:
\begin{eqnarray} {\textit{SUM}}\label{sumconjugation}
  (Z^{z_A}X^{x_A}\otimes Z^{z_B}X^{x_B}){\textit{SUM}}^\dagger &=&
  (Z^{z_A-z_B}X^{x_A}\otimes Z^{z_B}X^{x_B+x_A})
\end{eqnarray}
Next, the Fourier gate swaps the roles of the $X$ and $Z$ Pauli operators; i.e. for $r\in F_q$ ($r\neq 0$)
\begin{equation}\label{fourierconjugation}
F_r Z^zX^x F_r^\dagger = X^{-r^{-1}z}Z^{rx}
\end{equation}
Finally, the multiplication gate $M_r$ (again for $r\in F_q$ where $r\neq 0$) has the following conjugation behavior:
\begin{equation}\label{multiplyconjugation}
M_r Z^zX^x M_r^\dagger = Z^{r^{-1}z}X^{rx}
\end{equation}

\subsection{Signed Polynomial Codes}\label{sec:SignedPolynomial}
We first define polynomial codes:
\begin{deff} Polynomial error correction code
  \cite{aharonov1997ftq}. Given $m,d,q$ and $\{\alpha_1,\ldots,\alpha_m\}$ where $\alpha_i$
  are distinct non zero values from $F_q$, the encoding of $a\in F_q$ is
  $\ket{S_a}$
\begin{equation}
\ket{S_a} \EqDef \frac{1}{\sqrt{q^d}}\sum_{f:def(f)\le d ,
f(0)=a}\ket{ f(\alpha_1),{\ldots} , f(\alpha_m)}
\end{equation}
\end{deff}
We use here $m=2d+1$, in which case the code subspace is its own dual.
It is easy to see that this code can detect up to $d$ errors
\cite{aharonov1997ftq}. In this paper, we will be using signed polynomial codes:
\begin{deff}\label{def:SignedPolynomial} {\bf (\cite{benor2006smq})}
  The signed polynomial code with respect
 to a string $k\in\{\pm 1 \}^m$ is defined by:
\begin{equation}
\ket{S_a^k} \EqDef \frac{1}{\sqrt{q^d}}\sum_{f:deg(f)\le d ,
f(0)=a}\ket{k_1\cdot f(\alpha_1){\ldots} k_m\cdot f(\alpha_m)}\nonumber
\end{equation}
\end{deff}
We again use $m=2d+1$. Similar to the polynomial code, the signed polynomial code can detect $d$ errors and is self dual \cite{benor2006smq}. We will require the following encoding circuit:
\begin{deff}\label{def:encodingcircuit}
Let $E_k$ be a unitary operator such that: 
$$
E_k\ket{a}\ket{0}^{\otimes m-1} = \ket{S_a^k}
$$
\end{deff}
$E_k$ is the encoding circuit, which we describe in further detail in Section \ref{sec:encodingcircuit}. We will write $\rho^k$ to denote that a density matrix $\rho$ is encoded with the signed polynomial code with respect to $k$; i.e. if $\rho$ is one qudit, then 
\begin{equation}\label{def:encodedstate}
\rho^k \EqDef E_k (\rho\otimes\ket{0}\bra{0}^{\otimes m-1}) E_k^\dagger
\end{equation}
\subsubsection{Signed Polynomial Code Logical Gates}\label{sec:logicalgates}
For proofs of all claims and lemmas below, see \pen{app:poly}. We first provide the logical $X$ operator:
\begin{claim}\label{claim:logicalx}
For $x\in F_q$ and $k\in\{-1,1\}^m$, the logical $X$ operator $\wt{X}_k^x$ obeys the following identity:
\begin{equation}
\wt{X}_k^x\ket{S_a^k} \EqDef (X^{k_1x}\otimes\cdots\otimes X^{k_mx})\ket{S_a^k}= \ket{S_{a+x}^k}
\end{equation}
\end{claim}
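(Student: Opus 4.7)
The plan is a direct computation. I would first expand the left-hand side using the definition of the signed polynomial code and the standard action of the generalized Pauli operator $X^c\ket{b} = \ket{b+c}$ from Definition \ref{defpauli}. This yields
\begin{equation*}
\wt X_k^x \ket{S_a^k} = \frac{1}{\sqrt{q^d}}\sum_{\substack{f:\,\deg(f)\le d\\ f(0)=a}} \ket{k_1 f(\alpha_1) + k_1 x,\,\ldots,\, k_m f(\alpha_m) + k_m x}.
\end{equation*}

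The key algebraic step is to factor the shift into the signed coordinates: since $k_i f(\alpha_i) + k_i x = k_i\bigl(f(\alpha_i)+x\bigr)$, the basis label in each tensor factor is exactly the signed evaluation of the constant-shifted polynomial $g(t) \EqDef f(t) + x$.

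Next I would observe that the map $f\mapsto g = f+x$ is a bijection between the sets $\{f : \deg(f)\le d,\, f(0)=a\}$ and $\{g : \deg(g)\le d,\, g(0)=a+x\}$, with inverse $g\mapsto g-x$. Adding a constant does not change the degree (so the degree bound is preserved) and it shifts the value at $0$ by exactly $x$. Reindexing the sum over $g$ therefore gives
\begin{equation*}
\wt X_k^x \ket{S_a^k} = \frac{1}{\sqrt{q^d}}\sum_{\substack{g:\,\deg(g)\le d\\ g(0)=a+x}} \ket{k_1 g(\alpha_1),\,\ldots,\, k_m g(\alpha_m)} = \ket{S_{a+x}^k},
\end{equation*}
which is the desired identity.

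There is no real obstacle here; the only thing to be careful about is that the $k_i$'s being $\pm 1$ makes the factoring $k_i f(\alpha_i)+k_i x = k_i(f(\alpha_i)+x)$ valid in $F_q$ (it holds tautologically), and that the bijection $f\mapsto f+x$ is used to re-express the sum without double counting. Everything else is syntactic manipulation of the definition.
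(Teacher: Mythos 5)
Your proof is correct and follows essentially the same route as the paper's: expand $\ket{S_a^k}$, apply the $X$ factors, factor $k_i f(\alpha_i)+k_i x=k_i\bigl(f(\alpha_i)+x\bigr)$, and reindex by $f\mapsto f+x$. If anything you are slightly more careful than the paper, which has a small typo writing $f'(0)=a+1$ where it should be $a+x$.
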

Similarly for logical \textit{SUM}, we consider the transitive
application of controlled-sum, that is a \textit{SUM} operation applied
between the $j$'th register of $\ket{S_a}$ and $\ket{S_b}$.
\begin{claim}\label{claim:logicalsum}
For all $k\in\{-1,1\}^m$, the logical \textit{SUM} operator $\wt{\textit{SUM}}$ obeys the following identity:
\begin{equation}
\wt{\textit{SUM}}\ket{S_a^k}\ket{S_b^k}\EqDef (SUM)^{\otimes m}\ket{S_a^k}\ket{S_b^k} = \ket{S_a^k}\ket{S_{a+b}^k}
\end{equation}
where each SUM gate in the tensor product acts between registers $i$ and $m+i$ for $1\leq i\leq m$. 
\end{claim}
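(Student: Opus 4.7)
The plan is to expand both encoded states in the computational basis, apply the transversal SUM gate register-by-register, and then perform a change of variables on the summation indices to recognize the resulting state as the claimed tensor product. The only subtle point is that, because both registers are encoded under the \emph{same} sign key $k$, the signs $k_i$ pull out cleanly from the addition performed by SUM, which is exactly what allows the identity to go through.

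Concretely, I would start from the definition
\begin{equation}
\ket{S_a^k}\ket{S_b^k} = \frac{1}{q^d} \sum_{\substack{f,g:\deg\le d\\ f(0)=a,\ g(0)=b}} \bigotimes_{i=1}^{m} \ket{k_i f(\alpha_i)}\,\bigotimes_{i=1}^{m} \ket{k_i g(\alpha_i)}.
\end{equation}
Since the $i$-th SUM acts as $\ket{x}\ket{y}\mapsto \ket{x}\ket{x+y}$ on registers $i$ and $m+i$, a direct calculation gives
\begin{equation}
\textit{SUM}\,\ket{k_i f(\alpha_i)}\ket{k_i g(\alpha_i)} = \ket{k_i f(\alpha_i)}\ket{k_i f(\alpha_i) + k_i g(\alpha_i)} = \ket{k_i f(\alpha_i)}\ket{k_i (f+g)(\alpha_i)},
\end{equation}
where the crucial step is factoring out the common sign $k_i$. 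This is the one place where the assumption that both codewords use the same key $k$ is essential; if the two keys differed, the second register would contain a sign-weighted sum that is \emph{not} of the form $k_i h(\alpha_i)$ for any polynomial $h$, and the argument would collapse.

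Next I would carry out the change of variables $h := f + g$. For each fixed $f$ of degree $\le d$ with $f(0)=a$, the map $g\mapsto h=f+g$ is a bijection on $\{g:\deg g\le d,\ g(0)=b\}$ onto $\{h:\deg h\le d,\ h(0)=a+b\}$ (both sets have size $q^d$). After this substitution, the sum factorizes:
\begin{equation}
\wt{\textit{SUM}}\,\ket{S_a^k}\ket{S_b^k} = \left(\frac{1}{\sqrt{q^d}}\sum_{\substack{f:\deg\le d\\ f(0)=a}} \bigotimes_i \ket{k_i f(\alpha_i)}\right)\otimes \left(\frac{1}{\sqrt{q^d}}\sum_{\substack{h:\deg\le d\\ h(0)=a+b}} \bigotimes_i \ket{k_i h(\alpha_i)}\right),
\end{equation}
which by the definition of the signed polynomial code is precisely $\ket{S_a^k}\ket{S_{a+b}^k}$. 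There is no real obstacle; the whole proof is a one-line computation once one notices that the shared sign key lets $k_i$ factor out of the addition, so the main thing to flag explicitly in the write-up is exactly this point, as it highlights why transversal SUM works for signed polynomial codes (and motivates the later protocol design, in which the verifier is careful to use a common sign key across all registers).
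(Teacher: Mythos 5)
Your proof is correct and follows essentially the same route as the paper's: expand both encoded states over low-degree polynomials, apply the transversal SUM, pull out the shared sign $k_i$, and change summation variables to $f+g$. The added remark about the bijection and about why the \emph{same} sign key is essential is a nice clarification, but the underlying argument is the paper's one-line computation.
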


Showing what is the logical Fourier transform on the signed polynomial code requires more work. We need the following lemma:
\begin{lem}\label{inter} For any $m$ distinct numbers $\{\alpha_i\}_1^m$ there
exist \textbf{interpolation coefficients} $\{c_i\}_1^m$ such that \begin{eqnarray}
\sum_{i=1}^m c_if(\alpha_i) = f(0)
\end{eqnarray}
for any polynomial of degree $\le m-1$.
\end{lem}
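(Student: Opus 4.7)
The plan is to invoke classical Lagrange interpolation, which is tailor-made for exactly this kind of statement. Since the $m$ points $\alpha_1,\ldots,\alpha_m$ are distinct, I will define the Lagrange basis polynomials
$$
L_i(x) \;=\; \prod_{j \neq i} \frac{x - \alpha_j}{\alpha_i - \alpha_j},
$$
each of which has degree $m-1$ and satisfies $L_i(\alpha_j) = \delta_{ij}$. These are well-defined because the denominators $\alpha_i - \alpha_j$ are nonzero whenever $i\neq j$.

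Next, I will use the standard fact that any polynomial $f$ of degree at most $m-1$ can be reconstructed from its values at $m$ distinct points via
$$
f(x) \;=\; \sum_{i=1}^m f(\alpha_i)\, L_i(x).
$$
This identity holds because both sides are polynomials of degree at most $m-1$ that agree at the $m$ distinct points $\alpha_1,\ldots,\alpha_m$, hence they coincide as polynomials. Evaluating at $x = 0$ then immediately gives
$$
f(0) \;=\; \sum_{i=1}^m c_i\, f(\alpha_i), \qquad c_i \;\EqDef\; L_i(0) \;=\; \prod_{j \neq i} \frac{-\alpha_j}{\alpha_i - \alpha_j}.
$$

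There is essentially no obstacle here: the only subtlety is making sure the construction is performed in $F_q$ rather than over the reals, but since the $\alpha_i$ are distinct nonzero elements of $F_q$, all the differences $\alpha_i - \alpha_j$ are invertible in $F_q$, and the Lagrange formula goes through verbatim in any field. The coefficients $c_i$ depend only on the fixed interpolation points $\{\alpha_i\}$, not on the particular polynomial $f$, which is exactly what the lemma asserts.
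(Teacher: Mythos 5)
Your proof is correct and follows essentially the same route as the paper: both invoke the Lagrange interpolation formula and read off $c_i = \prod_{j\neq i}\frac{-\alpha_j}{\alpha_i - \alpha_j}$ by evaluating at $x=0$. Your added remark that the construction goes through verbatim over $F_q$ because the differences $\alpha_i - \alpha_j$ are invertible is a welcome, if minor, extra precaution.
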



We are now ready to define the logical Fourier transform.
\begin{claim}\label{claim:fourier} For all $k\in\{-1,1\}^m$, the logical Fourier operator $\wt F$ obeys
  the following identity:
\begin{eqnarray}
\wt{F} \ket{S_a^k} \EqDef F_{c_1}\otimes F_{c_2}\odots F_{c_m}\ket{S_a^k} =& \frac{1}{\sqrt{q}}\sum_b
\omega_q^{ab} \ket{\wt{S_b^k}}
\end{eqnarray}
where $\ket{\wt{S_b^k}}$ is the encoding of $b$ in a signed polynomial code of
degree $m-d$ on $m$ registers.
\end{claim}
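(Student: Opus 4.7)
The plan is to compute the action of $F_{c_1}\otimes\cdots\otimes F_{c_m}$ on $\ket{S_a^k}$ by direct expansion. Each factor sends $\ket{k_i f(\alpha_i)}$ to $\frac{1}{\sqrt q}\sum_{y_i}\omega_q^{c_i k_i f(\alpha_i) y_i}\ket{y_i}$. After the substitution $\tilde y_i \EqDef k_i y_i$ (a bijection because $k_i\in\{\pm 1\}$) and after swapping the sums over $f$ and $y$, the state takes the form
\begin{equation*}
\frac{1}{\sqrt{q^{d+m}}}\sum_{y\in F_q^m}\left(\sum_{\substack{f:\deg f\le d\\ f(0)=a}}\omega_q^{\sum_i c_i \tilde y_i f(\alpha_i)}\right)\ket{y}.
\end{equation*}

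Next I would perform the inner sum over $f$ explicitly. Writing $f(x)=a+\sum_{j=1}^d a_j x^j$ with free coefficients $a_j\in F_q$, the phase factors as $\omega_q^{a s_0}\prod_{j=1}^d \omega_q^{a_j s_j}$ where $s_j \EqDef \sum_i c_i \alpha_i^j \tilde y_i$. Summing each $a_j$ independently over $F_q$ produces $q\cdot\mathbbm{1}[s_j=0]$, so the inner sum equals $q^d\omega_q^{a s_0}$ when $s_1=\cdots=s_d=0$ and vanishes otherwise.

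The crucial geometric step is to characterize the surviving vectors $\tilde y$. Applying \Le{inter} to $h(x) \EqDef x^j g(x)$, whose degree is at most $m-1$ as long as $\deg g\le m-1-j$, shows that any $\tilde y$ of the form $\tilde y_i = g(\alpha_i)$ with $\deg g\le m-d-1$ satisfies $s_j = h(0) = 0$ for every $j=1,\dots,d$ and also $s_0 = g(0)$. Since the $\alpha_i$ are distinct, the evaluation map on polynomials of degree at most $m-d-1$ is injective, so these evaluation vectors span a subspace of dimension $m-d$; this matches the codimension of the $d$ linear constraints $s_1=\cdots=s_d=0$, forcing the two subspaces to coincide.

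Finally, I would group the surviving $\tilde y$ by $b\EqDef g(0)$ and recognize the cluster of kets for fixed $b$ as the signed polynomial encoding $\ket{\wt{S_b^k}}$ of the appropriate dual degree. Collecting the prefactor $q^d/\sqrt{q^{d+m}}$ with the normalization factor absorbed into $\ket{\wt{S_b^k}}$ yields precisely $1/\sqrt q$, giving the claimed identity $\wt F\ket{S_a^k} = \frac{1}{\sqrt q}\sum_b \omega_q^{ab}\ket{\wt{S_b^k}}$. I expect the subtle step to be the dimension-matching argument identifying the constraint subspace with the polynomial evaluation code; once that is established, the rest is bookkeeping of geometric sums and normalization factors.
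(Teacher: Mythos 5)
Your proof is correct and takes a genuinely different route from the paper's. The paper reparametrizes the output sum by signed polynomials $g$ of degree $\le m-1$ (one for each tuple $b\in F_q^m$), restricts attention to $\deg g \le m-d-1$ so that $\deg(fg)\le m-1$, applies \Le{inter} to $fg$ to get the phase $\omega_q^{ab}$, derives the claimed expression for this truncated sum, and then disposes of the higher-degree $g$ terms by a norm argument: the truncated expression already has unit norm, so the ignored coefficients must all vanish. You instead expand $f$ in its free coefficients $a_j$ and sum each over $F_q$ to get a product of geometric sums, obtaining the explicit vanishing conditions $s_1=\cdots=s_d=0$, and then identify the surviving $\tilde y$ as evaluation vectors of polynomials of degree $\le m-d-1$ via a dimension count. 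Your route is more constructive---it exhibits directly why the coefficients outside the target code vanish rather than inferring it from normalization---at the cost of an extra linear-algebra step the paper sidesteps. That step has a small gap as you wrote it: asserting that the $(m-d)$-dimensional evaluation subspace ``matches the codimension of the $d$ linear constraints'' and is therefore the whole constraint subspace presumes the constraint matrix $(c_i\alpha_i^j)_{1\le j\le d,\ 1\le i\le m}$ has rank exactly $d$. This is true, and should be stated: the submatrix $(\alpha_i^j)$ is part of an invertible Vandermonde matrix (the $\alpha_i$ are distinct and nonzero), and rescaling columns by the nonzero $c_i$ preserves row rank. Without this, the constraint space could a priori be strictly larger than the evaluation image and the two would not coincide. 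A final note unrelated to your argument: the claim statement says ``degree $m-d$,'' but both the paper's proof and yours produce polynomials of degree at most $m-d-1$ (which equals $d$ when $m=2d+1$, the self-dual case); the statement appears to be off by one.
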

Finally, we define the logical $Z$ operator.
\begin{claim}\label{claim:logicalz}
For $z\in F_q$ and $k\in\{-1,1\}^m$, the logical Pauli $Z$ operator $\wt{Z}_k^z$ obeys the following identity:
\begin{equation}
\wt{Z}_k^z\ket{S_a^k} \EqDef (Z^{k_1c_1z}\otimes\cdots\otimes Z^{k_mc_mz})\ket{S_a^k}= \omega_q^{za}\ket{S_a^k}
\end{equation}
\end{claim}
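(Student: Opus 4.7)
The plan is to apply the claimed operator $\wt{Z}_k^z = Z^{k_1 c_1 z}\otimes\cdots\otimes Z^{k_m c_m z}$ directly to the definition of $\ket{S_a^k}$ and simplify the resulting phase using the interpolation lemma (Lemma \ref{inter}). This is the natural dual approach to Claim \ref{claim:fourier}, exploiting the same interpolation coefficients $c_i$ that appeared there.

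First I would expand:
$$
\wt{Z}_k^z \ket{S_a^k} = \frac{1}{\sqrt{q^d}} \sum_{\substack{f:\deg(f)\le d\\ f(0)=a}} \Big(\bigotimes_{i=1}^m Z^{k_i c_i z}\Big) \ket{k_1 f(\alpha_1),\ldots,k_m f(\alpha_m)}.
$$
Since $Z^b\ket{c} = \omega_q^{bc}\ket{c}$, each tensor factor contributes a phase $\omega_q^{k_i c_i z \cdot k_i f(\alpha_i)}$. Using the crucial fact that $k_i \in \{\pm 1\}$ so $k_i^2 = 1$, these phases collapse to $\omega_q^{c_i z f(\alpha_i)}$, and the total phase multiplying each basis vector becomes $\omega_q^{z \sum_{i=1}^m c_i f(\alpha_i)}$.

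Next I would invoke Lemma \ref{inter}: since $m = 2d+1$ and $\deg(f)\le d \le m-1$, the interpolation identity $\sum_{i=1}^m c_i f(\alpha_i) = f(0) = a$ applies for every $f$ in the sum. Therefore the phase $\omega_q^{za}$ is independent of $f$ and factors out of the sum, yielding
$$
\wt{Z}_k^z \ket{S_a^k} = \omega_q^{za} \cdot \frac{1}{\sqrt{q^d}} \sum_{\substack{f:\deg(f)\le d\\ f(0)=a}} \ket{k_1 f(\alpha_1),\ldots,k_m f(\alpha_m)} = \omega_q^{za} \ket{S_a^k}.
$$

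There is essentially no obstacle here beyond bookkeeping: the entire argument is a one-line computation once one recognizes that the sign keys $k_i$ cancel pairwise (because $Z$ is raised to a power that already contains a factor of $k_i$, and the basis-state label contains another factor of $k_i$), and that the interpolation coefficients were designed precisely to extract $f(0)$. The only thing to double-check is the degree condition $\deg(f) \le m-1$ required by Lemma \ref{inter}, which holds trivially since $m = 2d+1 > d$. This symmetry between the logical $X$ operator of Claim \ref{claim:logicalx} (which used $k_i$ but no $c_i$) and the logical $Z$ operator (which uses both $k_i$ and $c_i$) is exactly what one expects from the self-duality of the signed polynomial code.
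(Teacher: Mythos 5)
Your proof is correct and takes essentially the same approach as the paper: expand the tensor of $Z$ powers against the definition of $\ket{S_a^k}$, use $k_i^2=1$ to cancel the sign keys in the phase, and invoke Lemma \ref{inter} to collapse $\sum_i c_i f(\alpha_i)$ to $f(0)=a$.
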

\subsubsection{Signed Polynomial Encoding Circuit}\label{sec:encodingcircuit}
The encoding circuit $E_k$ first applies a Fourier transform on the first $d$ 0 states, and then interpolates to fill in the rest of the state. The following unitary operator performs the interpolation:
\begin{deff}\label{def:interpolationcircuit}
Let $D_k$ be a unitary operator such that
$$
D_k\ket{a}\ket{k_2f(\alpha_2),\ldots,k_{d+1}f(\alpha_{d+1})}\ket{0}^{\otimes d} = \ket{k_1f(\alpha_1),\ldots,k_mf(\alpha_m)}
$$
such that $deg(f)\leq d$ and $f(0)$ = a. 
\end{deff}
Now if $F$ is the generalized Fourier transform (see equation \ref{deffourier}) it is easy to check that 
\begin{equation}\label{equation:encodingcircuit}
E_k = D_k (\mcI\otimes F^{\otimes d}\otimes \mcI)
\end{equation}
We now describe $D_k$ in further detail. 
\begin{claim}\label{decodingoperations}
The operator $D_k$ can be written as a product of \textit{SUM} operators controlled by registers $1,\ldots,d+1$ with target registers $1,d+2,\ldots,m$ and a multiplication operator on the first register. More explicitly:
\begin{equation}
D_k = \prod_iSUM_{i,1}^{h_i(\alpha_1)k_ik_1}(M_{k_1h_0(\alpha_1)}\otimes\mcI^{\otimes m-1})\prod_{i,j}SUM_{i,j}^{h_i(\alpha_j)k_ik_j}\prod_j SUM_{1,j}^{h_0(\alpha_j)k_j}
\end{equation}
where $2\leq i\leq d+1$, $d+2\leq j\leq m$ and for $i'\in\{0,2,...,d+1\}$ and $\alpha_0 = 0$
\begin{equation}
h_{i'}(x) = \prod\limits_{\substack{l\in\{0,2,\ldots,d+1\}\\l\neq i'}} \frac{x - \alpha_l}{\alpha_{i'}-\alpha_l}
\end{equation}
\end{claim}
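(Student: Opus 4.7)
The plan is to verify the claimed identity by tracking the state of each register through each of the four groups of operators in the product, using Lagrange interpolation as the underlying algebraic fact.

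First I would recall the Lagrange interpolation identity: for any polynomial $f$ of degree at most $d$, since the nodes $\alpha_0 = 0, \alpha_2, \ldots, \alpha_{d+1}$ are $d+1$ distinct points and the $h_{i'}$ are the associated Lagrange basis polynomials, one has
\begin{equation}
f(\alpha_j) \;=\; \sum_{i' \in \{0,2,\ldots,d+1\}} h_{i'}(\alpha_j)\, f(\alpha_{i'})
\end{equation}
for every $j \in \{1, d+2, \ldots, m\}$. This is the only nontrivial ingredient; everything else is pushing Pauli-free conjugations through generalized $\textit{SUM}$ gates.

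Next I would apply the four factors of the claimed product to the input state
\begin{equation}
\ket{a}\ket{k_2 f(\alpha_2), \ldots, k_{d+1} f(\alpha_{d+1})}\ket{0}^{\otimes d},
\end{equation}
in right-to-left order, using the fact that $SUM_{c,t}^{r}\ket{x}_c\ket{y}_t = \ket{x}_c \ket{y+rx}_t$ and that $k_i^2 = 1$. The first block $\prod_j SUM_{1,j}^{h_0(\alpha_j)k_j}$ (with $d+2 \le j \le m$) writes $h_0(\alpha_j) k_j \, a = k_j h_0(\alpha_j) f(0)$ into each output register $j$. The second block $\prod_{i,j} SUM_{i,j}^{h_i(\alpha_j) k_i k_j}$ (with $2 \le i \le d+1$, $d+2 \le j \le m$) adds $h_i(\alpha_j) k_i k_j \cdot k_i f(\alpha_i) = k_j h_i(\alpha_j) f(\alpha_i)$ to register $j$, so after both blocks register $j$ holds $k_j \sum_{i' \in \{0,2,\ldots,d+1\}} h_{i'}(\alpha_j) f(\alpha_{i'})$, which by the interpolation identity equals $k_j f(\alpha_j)$, as required.

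It remains to check register $1$, which still carries $\ket{a}$. The multiplication gate $M_{k_1 h_0(\alpha_1)}$ turns this into $\ket{k_1 h_0(\alpha_1) f(0)}$, and the final block $\prod_i SUM_{i,1}^{h_i(\alpha_1) k_i k_1}$ (with $2 \le i \le d+1$) adds $h_i(\alpha_1) k_i k_1 \cdot k_i f(\alpha_i) = k_1 h_i(\alpha_1) f(\alpha_i)$, producing $\ket{k_1 \sum_{i'} h_{i'}(\alpha_1) f(\alpha_{i'})} = \ket{k_1 f(\alpha_1)}$ by the same interpolation identity. Registers $2, \ldots, d+1$ were never touched as targets (they only served as controls), so they still contain $k_i f(\alpha_i)$. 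Hence the total output is $\ket{k_1 f(\alpha_1), \ldots, k_m f(\alpha_m)}$, matching the defining property of $D_k$ in Definition \ref{def:interpolationcircuit}.

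The only mild subtlety is bookkeeping: one must apply the first two blocks to registers $d+2, \ldots, m$ before the last two blocks touch register $1$, so that register $1$ still holds $a = f(0)$ when it is used as a control in the interpolation of the higher registers. Since the later operators act only on register $1$ (and use registers $2, \ldots, d+1$ as controls), and the earlier operators use register $1$ only as a control, all gates in each block mutually commute within the block and the order between blocks as written is consistent. No other obstacle arises; the result follows.
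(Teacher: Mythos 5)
Your proof is correct and follows essentially the same route as the paper's: both apply the four blocks of gates right to left to the input $\ket{a}\ket{k_2f(\alpha_2),\ldots,k_{d+1}f(\alpha_{d+1})}\ket{0}^{\otimes d}$, track the contents of each register, and reduce the verification to the Lagrange interpolation identity $h_0(x)f(0)+\sum_{2\le i\le d+1}h_i(x)f(\alpha_i)=f(x)$ together with $k_i^2=1$. Your explicit remark about why the two blocks filling registers $d+2,\ldots,m$ must precede the gates acting on register $1$ (so that register $1$ still carries $f(0)$ when used as a control) is implicit in the paper's proof but worth making explicit; otherwise the two arguments coincide.
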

\begin{proof}
Observe that
\begin{equation}\label{interpolation}
h_0(x) f(0) + \sum_{2\leq i\leq d+1} h_i(x) f(\alpha_i) = f(x) 
\end{equation}
It follows that for all $d+2\leq j\leq m$, register $j$ holds the following value after the controlled sum operations detailed above:
\begin{eqnarray}
k_0f(0)(h_0(\alpha_j)k_0k_j) + \sum_{2\leq i\leq d+1} k_if(\alpha_i)(h_i(\alpha_j)k_ik_j) &=& k_j(h_0(\alpha_j)f(0)+\sum_{2\leq i\leq d+1} h_i(\alpha_j)f(\alpha_i))\\
&=& k_jf(\alpha_j)
\end{eqnarray}
Now we can see that the controlled sum operations have performed the following mapping:
\begin{equation}
\ket{f(0),k_2f(\alpha_2),\ldots,k_{d+1}f(\alpha_{d+1}),0^{d}} \rightarrow \ket{f(0),k_2f(\alpha_2),\ldots,k_mf(\alpha_m)}
\end{equation}
The only thing left to do is map the first register from $f(0)$ to $k_1f(\alpha_1)$. To do this, first multiply the first register by $k_1h_0(\alpha_1)$ by using the multiplication operation $M_{k_1h_0(\alpha_1)}$, where
\begin{equation}
M_{k_1h_0(\alpha_1)}\ket{a} = \ket{k_1h_0(\alpha_1)a}
\end{equation}
The multiplication operator performs the following mapping:
\begin{equation}
 \ket{f(0),k_2f(\alpha_2),\ldots,k_mf(\alpha_m)}\rightarrow  \ket{k_1h_0(\alpha_1)f(0),k_2f(\alpha_2),\ldots,k_mf(\alpha_m)}
\end{equation}
Then apply controlled sum operations from registers $\hat{i}\in\{2,\ldots,d+1\}$ to register 1 $h_{\hat{i}}(\alpha_1)k_ik_1$ times. Due to equation \ref{interpolation}, the value in the first register after these operations is:
\begin{eqnarray}
k_1h_0(\alpha_1)f(0) + \sum_{2\leq i\leq d+2} k_if(\alpha_i)(h_i(\alpha_1)k_ik_1) &=& k_1(h_0(\alpha_1)f(0) + h_i(\alpha_1)f(\alpha_i))\\
&=& k_1f(\alpha_1)
\end{eqnarray}
It follows that the final controlled sum operations have performed the following mapping:
\begin{equation}
\ket{k_1h_0(\alpha_1)f(0),k_2f(\alpha_2),\ldots,k_mf(\alpha_m)} \rightarrow \ket{k_1f(\alpha_1),k_2f(\alpha_2),\ldots,k_mf(\alpha_m)}
\end{equation}
\end{proof}


\section{Clifford Authentication Scheme}\label{sec:CliffordAuth}
We now define a quantum authentication scheme based on Clifford operations. Let $\mcK_m$ be the set of authentication keys, consisting of succinct
  descriptions of Clifford operations in $\mfC_m$ (i.e. Clifford operations on $m$ qubits); these descriptions exist due to Fact \ref{fa:randomclifford}.

\begin{protocol}\textbf{Clifford based \QAS\ }\label{protocol:cliffordqas}:
  Given is a state $\ket\psi$ on $l$ qubits. Let $e\in\mN$ be such that 
$2^{-e}=\epsilon$. 
 We denote $m=l+e$.
We denote by $C_k$ the operator specified by a
  key $k\in \mcK_m$.
\begin{itemize}
\item \textbf{Encoding - $A_k$}: Alice applies $C_k$ on the state $\ket\psi \otimes
\ket{0}^{\otimes e}$.
\item \textbf{Decoding - $B_k$}: Bob applies $C_k^\dagger$ to the received
      state.  Bob measures the $e$ auxiliary registers and declares the state valid
      if they are all $0$, otherwise Bob aborts.
\end{itemize}
\end{protocol}

\begin{statement}{\Th{thm:CliffordAuth}} \textit{The Clifford scheme given in Protocol \ref{protocol:cliffordqas} is a \QAS\ with security
  $\epsilon=2^{-e}$.} 
\end{statement}

\subsection{The Overall Proof of \Th{thm:CliffordAuth}}
The completeness of this protocol is trivial. In the following proof, we show soundness by first showing that \emph{any} attack of Eve can be decomposed into a distribution over Pauli attacks. We then show that averaging over the random Clifford operators maps a Pauli operator to a uniform distribution over Paulis; the effective transformation on the original state is an application of a random
Pauli. These two facts are summarized by \Cl{claim:EveAttack}. We conclude the proof of \Th{thm:CliffordAuth} by showing that any Pauli attack is detected with high probability.

\begin{proofof}{\Th{thm:CliffordAuth}}
  We denote the space of the message sent from Alice to Bob as $M$.  Without
  loss of generality, we can assume that Eve adds to the message a system $E$
  (of arbitrary dimension) and performs a unitary transformation $U$ on the joint system.  We note that there is a representation of $U$ as $\sum_{P\in \mbP_m}P\otimes U_P$, where $P$ acts on the message space $M$, and $U_P$ is not necessarily unitary and acts on the environment $E$. This is because the Pauli matrices
  form a basis for the $2^m\times 2^m$ matrix vector space. We first
  characterize the effect that Eve's attack has on the unencoded message:
  $\ket\psi\otimes{\ket 0}^{\otimes e}$.

\begin{claim}\label{claim:EveAttack}
  Let $\rho=\ket{\psi}\bra{\psi}\otimes{\ket 0}^{\otimes e}$ be the state of Alice before
  the application of the Clifford operator. For any attack
  \mbox{$U=\sum_{P\in\mbP_m}P\otimes U_P$} by Eve, Bob's state after decoding is
  $s\rho + \frac{1-s}{4^m-1} \sum_{P\neq\mcI}P\rho P^\dagger$, where $s = \tr(U_I\rho_EU_I^\dagger)$.
\end{claim}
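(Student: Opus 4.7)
The plan is to track the density matrix through the three stages of the protocol (Alice's encoding $C_k$, Eve's attack $U$, Bob's decoding $C_k^\dagger$) and then average over the Clifford key $k$. First I would expand Eve's attack in the Pauli basis as $U=\sum_{P\in\mbP_m} P\otimes U_P$. After Alice encodes $\rho$ (tensored with some auxiliary state $\rho_E$ on $E$), Eve applies $U$, Bob conjugates by $C_k^\dagger$ on the message register, and we trace out $E$, the resulting post-decoding state is
\begin{equation*}
\rho_B \;=\; \mathbb{E}_k \sum_{P,P'\in\mbP_m} c_{P,P'}\,(C_k^\dagger P\, C_k)\,\rho\,(C_k^\dagger P'\, C_k),
\end{equation*}
where $c_{P,P'}\EqDef \tr(U_P\,\rho_E\,U_{P'}^\dagger)$ and $c_{\mcI,\mcI}=s$ (here I use that all $P\in\mbP_m$ are Hermitian, so $P^\dagger=P$).

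Second, I would split the sum according to whether each of $P,P'$ equals $\mcI$. The $P=P'=\mcI$ term immediately yields $s\rho$. For the ``half-identity'' cross terms (exactly one of $P,P'$ non-identity), $\rho$ pulls out of the $k$-average, leaving a lone factor $\mathbb{E}_k C_k^\dagger Q C_k$ for some non-identity Pauli $Q$; this vanishes because Cliffords form a 1-design and non-identity Paulis are traceless. For the ``doubly non-identity'' terms, I would invoke \Le{CliffordDiscretization}, which states that for every $P\neq\mcI$
\begin{equation*}
\mathbb{E}_k (C_k^\dagger P\, C_k)\,\rho\,(C_k^\dagger P\, C_k) \;=\; \frac{1}{4^m-1}\sum_{Q\neq\mcI} Q\rho Q^\dagger,
\end{equation*}
and moreover kills the off-diagonal Clifford twirl when $P\neq P'$ (a consequence of the same 2-design property underlying that lemma). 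Only the diagonal contributions $P=P'\neq\mcI$ survive, each weighted by $c_{P,P}$.

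Third, I would collect the constants: I need $\sum_{P\neq\mcI} c_{P,P}=1-s$, which follows from the global identity $\sum_{P\in\mbP_m} c_{P,P}=1$. This in turn comes from unitarity of $U$: expanding $\mcI\otimes\mcI=U^\dagger U=\sum_{P,P'}(P^\dagger P')\otimes(U_P^\dagger U_{P'})$, taking the partial trace over $M$, and using the Pauli orthogonality relation $\tr(P^\dagger P')=2^m\delta_{P,P'}$, one obtains $\sum_P U_P^\dagger U_P=\mcI_E$, hence $\sum_P c_{P,P}=\tr(\rho_E)=1$. Substituting this back reproduces exactly the claimed form $s\rho+\frac{1-s}{4^m-1}\sum_{P\neq\mcI}P\rho P^\dagger$.

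The main obstacle is the vanishing of the doubly non-identity cross terms with $P\neq P'$: this requires Clifford twirling at the 2-design level rather than just the 1-design level, and is precisely the content of \Le{CliffordDiscretization}. Without that lemma the sum over non-identity $P,P'$ would not collapse into a uniform depolarization on the $(4^m-1)$-dimensional traceless Pauli subspace, and we would be left with a complicated weighted expression in $c_{P,P'}$ that fails to discretize Eve's attack into a convex mixture of Pauli errors.
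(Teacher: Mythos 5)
Your proposal is correct and follows essentially the same route as the paper: Bob's averaged state is expanded in the Pauli basis, the Clifford average collapses it to a diagonal mixture of Pauli conjugations of $\rho$, and the normalization $\sum_P c_{P,P}=1$ follows from unitarity of $U$ (the paper packages that last step as \Le{Decompose}). The one organizational difference is that you unwind the case analysis inline rather than applying \Le{CliffordDiscretization} as a black box; in doing so you attribute to that lemma two facts that the paper actually proves separately and uses to establish it — the vanishing of all cross terms $P\neq P'$ under the Clifford average is \Le{clifTw} (Clifford Twirl), and the uniform depolarization $\mathbb{E}_C\, C^\dagger P C\,\rho\, C^\dagger P C = \frac{1}{4^m-1}\sum_{Q\neq\mcI}Q\rho Q^\dagger$ for $P\neq\mcI$ is \Le{mix} (Pauli Partitioning by Cliffords). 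Your separate 1-design argument for the ``half-identity'' cross terms is valid but redundant: those are already killed by the Clifford-twirl cancellation of any pair $P\neq P'$, which is how the paper handles them inside \Le{CliffordDiscretization}.
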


We proceed with the proof of the theorem. From the above claim we know what
Bob's state after Eve's intervention is and we would like to bound its projection on $P_0^{\ket\psi}$ (defined in equation \ref{authenticationprojections0}):
\begin{eqnarray}\label{tr}
\tr{\Big(\Pi_0^{\ket\psi} \big(s\rho + \frac{1-s}
{4^m -1}\sum_{Q\in\mbP_m\setminus\{\mcI\}}Q\rho
Q^\dagger\big)\Big)}
&= &s\tr(\Pi_0^{\ket\psi}\rho) + \frac{1-s}{4^m-1}
\sum_{Q\in\mbP_m\setminus\{\mcI\}}\tr{(\Pi_0^{\ket\psi}
Q\rho Q^\dagger)}\label{eq:ClifSec}
\end{eqnarray}
By definition of $\Pi_0^{\ket\psi}$ we see that $\tr(\Pi_0^{\ket\psi}\rho)=0$. On
the other hand: $\tr{(\Pi_0^{\ket\psi}Q\rho Q^\dagger)}\leq 1$ when $Q$ does not flip
any auxiliary qubits, and vanishes otherwise.  The Pauli operators that do not
flip auxiliary qubits can be written as $Q'\otimes Q''$ where \mbox{$Q'\in
  \mbP_l$} and \mbox{$Q'' \in \{\mcI,Z\}^{\otimes e}$}. It follows that the
number of such operators is exactly $4^l2^e$. Omitting the identity $\mcI_m$ we
are left with $4^l2^e-1$ operators which are undetected by our scheme. We return
to \Eq{eq:ClifSec}:
\begin{eqnarray}\label{aftertr}
{\ldots}& \le &\frac{(1-s)(4^l2^e-1)}{4^m-1}\\
& \le &\frac{1-s}{2^e} \label{goodProjection}
\end{eqnarray}
The security follows from the fact that $1-s\le 1$, and hence the
projection is bounded by $\frac{1} {2^e}$. This concludes the proof. 
\end{proofof}

We remark that the above proof in fact implies a stronger theorem:
interventions that are very close to $\mcI$ are even more
likely to keep the state in the space defined by $\Pi_1^{\ket\psi}$.

\subsection{Proof of \Cl{claim:EveAttack}}
Let $ U = \sum_{P\in \mbP_m} P\otimes U_P$ be the operator applied
by Eve. We denote $\rho=\ket\psi \bra\psi\otimes\ket {
0}\bra{0}^{\otimes e}$ the state of Alice prior to encoding. Let us now
write $\rho_{Bob}$, the state of Bob's system after decoding and before measuring
the $e$ auxiliary qubits.  For clarity of reading we omit the
normalization factor $|\mfC_m|$ and denote the Clifford operation
applied by Alice (Bob) $C$ ($C^\dagger$):
\begin{eqnarray}
\rho_{Bob}&=& \frac{1}{|\mfC_m|}\tr_E{\Big(\sum_{C\in\mfC_m}(C\otimes \mcI_E)^\dagger
U\left( (C\otimes \mcI_E)\rho (C\otimes \mcI_E)^\dagger \otimes
\rho_E\right)U^\dagger(C\otimes \mcI_E)\Big)} 
\end{eqnarray}
At this point, we require the following lemma which states that a random Clifford conjugating an operator has the effect of decohering (removing the cross terms) of the operator:
\begin{lem}[\bfseries Clifford Decoherence]\label{CliffordDiscretization}
Let $\rho'$ be a density matrix on $m'>m$ qubits and let $U = \sum\limits_{P\in\mbP_m}P\otimes U_P$ be a matrix acting on $\rho'$ by conjugation. Then
$$
\frac{1}{|\mfC_m|}\sum\limits_{C\in\mfC_m} (C\otimes\mcI)^\dagger U (C\otimes\mcI)\rho' (C\otimes\mcI)^\dagger U^\dagger (C\otimes\mcI) = (\mcI\otimes U_I)\rho' (\mcI\otimes U_I)^\dagger + \frac{1}{|\mbP_m|-1}\sum_{P,Q\in\mbP_m\setminus\{\mcI\}} (P\otimes U_Q) \rho' (P\otimes U_Q)^\dagger
$$
\end{lem}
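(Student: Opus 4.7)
The plan is to substitute the Pauli expansion $U=\sum_{P\in\mbP_m}P\otimes U_P$ and perform the average over the Clifford group in two stages, exploiting the fact that $\mbP_m$ sits (modulo phases) as a subgroup of $\mfC_m$. After substitution the left hand side becomes
\[
\frac{1}{|\mfC_m|}\sum_{C\in\mfC_m}\sum_{P,Q\in\mbP_m}\bigl((C^\dagger P C)\otimes U_P\bigr)\,\rho'\,\bigl((C^\dagger Q C)\otimes U_Q^\dagger\bigr),
\]
where I have used that each $Q\in\mbP_m$ is Hermitian, so $Q^\dagger=Q$. The two stages of averaging will first kill all cross terms $P\neq Q$ (a Pauli twirl), and then spread the surviving diagonal non-identity terms uniformly over $\mbP_m\setminus\{\mcI\}$ (a Clifford orbit average).

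For the first stage I will write every $C\in\mfC_m$ as $C=RC'$, where $R\in\mbP_m$ and $C'$ ranges over a fixed set of coset representatives of $\mbP_m$ in $\mfC_m$. Since Paulis either commute or anticommute, $R^\dagger P R=\phi(R,P)\,P$ with $\phi(R,P)\in\{\pm 1\}$, and this commutator character is multiplicative in its second argument, $\phi(R,P)\phi(R,Q)=\phi(R,PQ)$. Therefore the inner average over $R$ produces the scalar
\[
\frac{1}{|\mbP_m|}\sum_{R\in\mbP_m}\phi(R,PQ),
\]
which equals $1$ when $PQ=\mcI$ (i.e.\ $P=Q$) and vanishes otherwise, because exactly half of the Paulis anticommute with any fixed non-identity Pauli. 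Only the diagonal $P=Q$ terms survive.

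For the second stage I peel off the $P=\mcI$ term, which contributes $(\mcI\otimes U_I)\rho'(\mcI\otimes U_I^\dagger)$ unchanged because $C^\dagger\mcI C=\mcI$. For each fixed $P\neq\mcI$, two standard properties finish the job: first, $C^\dagger P C=\pm P'$ for some $P'\in\mbP_m\setminus\{\mcI\}$, and the $\pm 1$ drops out of $(C^\dagger P C\otimes U_P)\,\rho'\,(C^\dagger P C\otimes U_P^\dagger)$ because it is squared; second, the Clifford group acts transitively on $\mbP_m\setminus\{\mcI\}$ by conjugation, so an orbit-stabilizer count shows that as $C$ ranges over $\mfC_m$ the image $C^\dagger P C$ hits every non-identity Pauli the same number of times. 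Hence the average of that diagonal term equals $\frac{1}{|\mbP_m|-1}\sum_{P'\neq\mcI}(P'\otimes U_P)\rho'(P'\otimes U_P^\dagger)$, and summing over the surviving index (renamed $Q$) yields exactly the claimed right hand side.

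The main technical issue is carefully tracking the $\pm 1$ phases produced by $C^\dagger P C$: verifying the multiplicativity identity $\phi(R,P)\phi(R,Q)=\phi(R,PQ)$ used in the Pauli twirl, and noticing that after squaring the sign disappears in the Clifford orbit step, so that the remaining uniform mixture is sign-free. Once these two bookkeeping points are in place, the lemma follows from the vanishing of $\sum_R\phi(R,\cdot)$ on any non-identity argument together with the transitivity of the Clifford action on $\mbP_m\setminus\{\mcI\}$; no further computation is required.
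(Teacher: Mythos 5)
Your proof is correct and follows essentially the same route as the paper: your two stages correspond directly to the paper's Lemma~\ref{clifTw} (Clifford Twirl, itself proved by inserting a free Pauli average exactly as you do) and Lemma~\ref{mix} (Pauli Partitioning by Cliffords, which is your orbit-transitivity count). One small point of care: $\mbP_m$ without phase factors is not literally a subgroup of $\mfC_m$ (e.g.\ $XZ=-iY\notin\mbP_m$), so the decomposition $C=RC'$ should be taken modulo scalars, which cancel in conjugation; the paper sidesteps this by observing that the full Clifford sum is invariant under right-multiplication by any fixed Pauli and then averaging over that Pauli, rather than constructing an explicit transversal.
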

The proof of the above lemma is in Section \ref{cliffordtechnicallemmas}. Applying \Le{CliffordDiscretization}, we have
\begin{eqnarray}
\rho_{Bob} &=& \tr_E[(\mcI\otimes U_I) \rho\otimes\rho_E (\mcI\otimes U_I)^\dagger + \frac{1}{|\mbP_m|-1}\sum_{P,Q\in\mbP_m\setminus\{\mcI\}} (P\otimes U_Q)\rho\otimes\rho_E (P\otimes U_Q)^\dagger]\\
&=&  \rho\cdot\tr(U_I\rho_EU_I^\dagger) + \frac{1}{|\mbP_m|-1}\sum_{P,Q\in\mbP_m\setminus\{\mcI\}} P\rho P^\dagger\cdot\tr(U_Q\rho_EU_Q^\dagger)\\
&=& \rho\cdot\tr(U_I\rho_EU_I^\dagger) + \frac{1}{|\mbP_m|-1}(\sum_{P\in\mbP_m\setminus\{\mcI\}} P\rho P^\dagger)\cdot(\sum_{Q\in\mbP_m\setminus\{\mcI\}} \tr(U_Q\rho_EU_Q^\dagger))
\end{eqnarray}
We now require the following lemma, which allows us to trace out the extra space a prover may use as part of his attack. See Section \ref{cliffordtechnicallemmas} for the proof:
\begin{lem} \label{Decompose} Let $U=\sum_{P\in\mbP_m} P\otimes U_P$ be a
  unitary operator acting on $m'>m$ qubits. For any density matrix $\tau$ acting on the last $m'-m$ qubits:
\begin{equation}\sum_{P\in\mbP_m}
\tr(U_P\tau U_P^{\dagger}) =\tr(\tau) = 1
\end{equation}
\end{lem}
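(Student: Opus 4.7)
The plan is to exploit unitarity of $U$ together with the orthogonality of the Pauli basis under the Hilbert-Schmidt inner product. The key identity I want to establish is $\sum_{P\in\mbP_m} U_P^\dagger U_P = \mcI_{m'-m}$; once this is in hand, the conclusion follows in one line via cyclicity of the trace.

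First I would expand $U^\dagger U = \mcI_m \otimes \mcI_{m'-m}$ in the decomposition $U = \sum_P P\otimes U_P$, giving
\begin{equation}
\sum_{P,Q\in\mbP_m} (P^\dagger Q)\otimes (U_P^\dagger U_Q) \;=\; \mcI_m \otimes \mcI_{m'-m}.
\end{equation}
Then I take the partial trace over the first $m$ qubits. On the right this produces $2^m \cdot \mcI_{m'-m}$, and on the left I invoke the trace-orthogonality of the Pauli basis, $\tr(P^\dagger Q) = 2^m \delta_{P,Q}$, so the double sum collapses to $2^m \sum_P U_P^\dagger U_P$. Cancelling the factor $2^m$ yields $\sum_P U_P^\dagger U_P = \mcI_{m'-m}$.

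Finally, for the stated quantity I use cyclicity of the trace:
\begin{equation}
\sum_{P\in\mbP_m} \tr(U_P \tau U_P^\dagger) \;=\; \sum_{P\in\mbP_m} \tr(U_P^\dagger U_P \tau) \;=\; \tr\!\Bigl(\bigl(\sum_{P}U_P^\dagger U_P\bigr)\tau\Bigr) \;=\; \tr(\tau) \;=\; 1,
\end{equation}
using that $\tau$ is a density matrix. There is no real obstacle here; the only point requiring a little care is to apply unitarity in the form $U^\dagger U = \mcI$ rather than $UU^\dagger = \mcI$, so that the resolution of the identity one obtains matches the combination $U_P^\dagger U_P$ that appears after cyclic rotation inside the trace.
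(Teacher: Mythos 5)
Your proof is correct, and it takes a mildly different route from the paper's. The paper computes the scalar quantity $\tfrac{1}{2^m}\tr\bigl(U(\mcI\otimes\tau)U^\dagger\bigr)$ in two ways: once using the fact that unitary conjugation is trace-preserving, and once by expanding $U$ in the Pauli basis and applying $\tr(PP'^\dagger)=2^m\delta_{P,P'}$; equating the two gives the claim. You instead isolate the operator identity $\sum_P U_P^\dagger U_P = \mcI_{m'-m}$ up front, obtained by expanding $U^\dagger U = \mcI$ and taking a partial trace, and then apply cyclicity of the trace. Both arguments rest on the same two ingredients (unitarity plus Hilbert--Schmidt orthogonality of the Paulis), but yours factors out a slightly stronger intermediate statement — the Kraus-like completeness relation $\sum_P U_P^\dagger U_P = \mcI$ — from which the trace identity then follows for any density matrix $\tau$ in one line, whereas the paper's calculation works at the level of traces throughout. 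Your version is arguably cleaner to reuse, since the operator identity can be invoked elsewhere without re-deriving it; the paper's version is self-contained and avoids introducing a named intermediate lemma. Both are correct.
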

We apply \Le{Decompose} to write Bob's state as:
\begin{equation}\label{attackProfile}
s\rho + \frac{(1-s)}{4^m-1}\sum_{P\in\mbP_m\setminus\{\mcI\}}\left(P\rho P^\dagger
\right)
\end{equation}
for $s=\tr({ U_{\mcI}\rho_E U_{\mcI}^\dagger})$, which concludes the proof of \Cl{claim:EveAttack}.

\subsection{Proofs of Technical Lemmas} \label{cliffordtechnicallemmas}
In this section we prove \Le{CliffordDiscretization} and \Le{Decompose}.
\subsubsection{Proof of \Le{CliffordDiscretization}}
To prove \Le{CliffordDiscretization}, we require the following three lemmata, which we prove in Appendix \ref{sec:cliffordtechnical}.
The lemma below states that applying a random Clifford operator in $\mfC_m$ (by conjugation) to a non identity Pauli operator $P\in\mbP_m$ maps it to a Pauli operator $Q\in\mbP_m$ chosen uniformly over all non-identity Pauli operators:
\begin{lem}[\bfseries Pauli Partitioning by Cliffords]\label{mix} For every
  $P,Q\in\mbP_m\setminus\{\mcI\}$ it holds that : $\left|\left\{C\in\mfC_m |
      C^\dagger PC =Q\right\}\right| = \frac {\left|\mfC_m\right|} {\left|\mbP_m
    \right| -1}= \frac {\left|\mfC_m\right|} {4^m -1}$.
\end{lem}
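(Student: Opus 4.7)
My plan is to prove the lemma by the orbit--stabilizer theorem applied to the conjugation ``action'' $C \cdot P := C^\dagger P C$ of $\mfC_m$ on (signed) Paulis. The bulk of the work is to establish transitivity of this action on $\mbP_m \setminus \{\mcI\}$; once transitivity is in hand, orbit--stabilizer immediately yields that all fibres $\{C : C^\dagger P C = Q\}$ have the same cardinality, and counting the orbit pins this common size down to $|\mfC_m|/(4^m-1)$.

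For transitivity I would explicitly construct, for every non-identity $P \in \mbP_m$, a Clifford $C_P$ with $C_P^\dagger P C_P = Z \otimes \mcI^{\otimes m-1}$ (a fixed canonical target); given such $C_P$ and $C_Q$, the composition $C_P C_Q^\dagger$ conjugates $P$ to $Q$. The construction proceeds in two standard stages. First, on each qubit $i$ with $P_i \neq \mcI$, apply a single-qubit Clifford built from the generators $H$ and $K$ of Fact~\ref{fa:randomclifford} to convert $P_i$ into $Z$ ($H$ takes care of $X \mapsto Z$, while a product of $H$ and $K$ handles $Y \mapsto Z$). This reduces $P$ to $Z^{\vec v}$ for some nonzero $\vec v \in F_2^m$. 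Second, apply a sequence of CNOT gates, whose conjugation rule on products of $Z$ operators realises elementary $F_2$-row operations on $\vec v$, and use Gaussian elimination to reduce $\vec v$ to the first standard basis vector, leaving $Z \otimes \mcI^{\otimes m-1}$.

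The one delicate point, and what I expect to be the main obstacle in writing the proof cleanly, is sign bookkeeping: Clifford conjugation of a Pauli only yields a \emph{signed} Pauli, so the construction above might land on $-Q$ rather than $Q$. I would handle this by observing that for any non-identity $Q \in \mbP_m$ there is a Pauli $R \in \mbP_m$ anticommuting with $Q$ (take $R_i$ anticommuting with a non-identity tensor factor $Q_i$, and tensor with identities on the other qubits); because $\mbP_m \subseteq \mfC_m$, left-multiplying by $R$ flips the sign, $(RC)^\dagger P (RC) = R^\dagger(\mp Q) R = \pm Q$. This exhibits an explicit bijection between the fibres over $+Q$ and over $-Q$, which is what lets the orbit-stabilizer count (naturally of size $|\mfC_m|/(2(4^m-1))$ for a single signed target) aggregate to the claimed $|\mfC_m|/(4^m-1)$ once one interprets the lemma's equality $C^\dagger P C = Q$ at the level of Paulis modulo the overall sign that Clifford conjugation inevitably produces.
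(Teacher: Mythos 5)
Your proof is correct and mirrors the paper's own argument: establish transitivity of the conjugation action by reducing an arbitrary non-identity Pauli to a canonical single-qubit target (you use $Z\otimes\mcI^{\otimes m-1}$, the paper uses $X\otimes\mcI^{\otimes m-1}$) via single-qubit Cliffords built from $H$ and $K$ plus CNOT elimination, then count. Your orbit--stabilizer framing is equivalent to the paper's explicit partition of $\mfC_m$ into the $4^m-1$ equal-cardinality sets $A_{P',Q}$. The one place your write-up is sharper than the paper's is the sign bookkeeping: the paper's own reduction lands on the canonical target only ``up to a phase'' (as it acknowledges, since e.g.\ $H^\dagger Y H = -Y$), yet it then asserts strict equality $X\otimes\mcI^{\otimes m-1}=C^\dagger P C = D^\dagger Q D$ without fixing the sign. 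Your observation that the fibres over $+Q$ and $-Q$ are in bijection via left-multiplication by a Pauli anticommuting with $Q$, so that the advertised count $\left|\mfC_m\right|/(4^m-1)$ holds when $C^\dagger P C = Q$ is read modulo overall sign, is the clean resolution. This reading is also exactly what is needed downstream: the lemma is invoked in the proof of Lemma~\ref{CliffordDiscretization} inside a sum of terms of the form $(C^\dagger P C)\rho(C^\dagger P C)^\dagger$, where the $\pm$ ambiguity cancels.
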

The following two lemmas describe the effect of conjugating an operator by a random Pauli or Clifford operator. The lemma below (Lemma \ref{pauliTw}) is used in the proof of the lemma after it (Lemma \ref{clifTw}), and since it will be useful also when we handle the polynomial codes based protocol later on, we state it here with generalized Pauli operators: 
\begin{lem}[\bfseries Pauli Twirl] \label{pauliTw} Let $P \ne P'$ be generalized Pauli operators. For
  any density matrix $\rho'$ on $m'>m$ qubits it holds that
  \begin{equation}
\sum\limits_{Q\in \mbP_m} (Q^\dagger P Q\otimes\mcI) \rho' (Q^\dagger (P')^\dagger Q\otimes\mcI) = 0\nonumber
\end{equation}
\end{lem}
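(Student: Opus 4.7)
The plan is to exploit the commutation relations among generalized Pauli operators to reduce the sum to a scalar character sum that vanishes. The key fact is that for any two generalized Paulis $P,Q \in \mbP_m$, conjugation produces at most a phase: $Q^\dagger P Q = \omega_q^{\langle Q, P\rangle} P$, where $\langle \cdot, \cdot\rangle$ denotes the bilinear symplectic form on the labels in $F_q^{2m}$ that classifies commutation of generalized Paulis (this follows directly from $ZX = \omega_q XZ$ tensored out; writing $P = P_{x,z}$ and $Q = P_{a,b}$ one gets something like $\langle Q,P\rangle = b\cdot x - a\cdot z$). In particular the phase depends only on the labels of $Q$ and $P$, and is linear in each.

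First I would apply this identity to both factors $Q^\dagger P Q$ and $Q^\dagger (P')^\dagger Q = (Q^\dagger P' Q)^\dagger$. The Pauli $P$ (and similarly $P'$) is then pulled out of the conjugation with only a scalar phase, giving
\begin{equation}
\sum_{Q \in \mbP_m} (Q^\dagger P Q \otimes \mcI)\, \rho'\, (Q^\dagger (P')^\dagger Q \otimes \mcI) \;=\; \Bigl(\sum_{Q \in \mbP_m} \omega_q^{\langle Q, P\rangle - \langle Q, P'\rangle}\Bigr)\, (P \otimes \mcI)\, \rho'\, ((P')^\dagger \otimes \mcI).\nonumber
\end{equation}
By bilinearity of $\langle \cdot,\cdot \rangle$ in the second argument, the exponent equals $\langle Q, P - P'\rangle$, where $P - P'$ should be read as the difference of the label vectors (labels live in $F_q^{2m}$).

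Second, since by hypothesis $P \ne P'$ as elements of $\mbP_m$, the label difference $P - P'$ is a nonzero vector in $F_q^{2m}$. The map $Q \mapsto \langle Q, P - P'\rangle$ is then a nontrivial $F_q$-linear functional from $F_q^{2m}$ to $F_q$ (this uses nondegeneracy of the symplectic form, which follows at the single-qudit level from $\omega_q$ being a primitive $q$-th root of unity, and then tensors). Therefore $Q \mapsto \omega_q^{\langle Q, P-P'\rangle}$ is a nontrivial character of the additive group $F_q^{2m}$, and by orthogonality of characters the sum over all $Q \in \mbP_m$ vanishes. Substituting back gives $0$, as required.

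The main technical obstacle is just being careful with the phase conventions: one must verify that $Q^\dagger P Q$ really is a scalar multiple of $P$ (which relies on $\mbP_m$ being defined modulo scalars, so that the symplectic form is well-defined and bilinear) and that nondegeneracy holds for all $m$, so that no nonzero label $P - P'$ is orthogonal to all $Q$. Both are standard properties of the generalized Pauli group; once they are in hand the argument is a one-line character-sum computation.
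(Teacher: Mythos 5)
Your proposal is correct and takes essentially the same route as the paper: commute $P$ and $(P')^\dagger$ through the conjugation by $Q$ to extract a $Q$-dependent phase, factor that scalar out of the operator product, and observe that the resulting character sum vanishes because $P \ne P'$. The paper does this in explicit coordinates (writing $P=X^aZ^b$, $Q=X^cZ^d$ and reducing to geometric sums $\sum_c \omega_q^{c(b-b')}$ and $\sum_d \omega_q^{d(a-a')}$), whereas you package the same phase as a symplectic pairing $\langle Q, P-P'\rangle$ and invoke orthogonality of characters of $F_q^{2m}$; this is a cleaner way to phrase the argument for general $m$, but it is the same computation.
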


\begin{lem}[\bfseries Clifford Twirl] \label{clifTw} Let $P\ne P'$ be Pauli operators. For
  any density matrix $\rho'$ on $m'>m$ qubits it holds that
  \begin{equation}
  \sum\limits_{C\in\mfC_m}(C^\dagger P C\otimes\mcI)\rho' (C^\dagger
    (P')^\dagger C \otimes\mcI)=0\nonumber
    \end{equation}
\end{lem}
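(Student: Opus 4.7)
The strategy is to reduce the Clifford twirl to the Pauli twirl (\Le{pauliTw}) using the fact that the (phased) Pauli group $\mcP_m$ is a normal subgroup of the Clifford group $\mfC_m$. Concretely, I would pick a transversal $\{C_0\}$ for the coset space $\mfC_m/\mcP_m$, so that every $C\in\mfC_m$ factors uniquely as $C = C_0 R$ with $R\in\mcP_m$. Setting $P_0 := C_0^\dagger P C_0$ and $P_0' := C_0^\dagger P' C_0$ (both elements of $\mcP_m$, since the Clifford group normalises the Paulis), the left-hand side of the lemma rearranges as
$$\sum_{C_0}\sum_{R\in\mcP_m}\bigl(R^\dagger P_0 R \otimes\mcI\bigr)\,\rho'\,\bigl(R^\dagger (P_0')^\dagger R\otimes\mcI\bigr).$$
Because conjugation by $R$ is insensitive to the global $\{\pm 1,\pm i\}$ phase of $R$, this inner sum is a constant multiple (namely a factor of $4$) of the analogous sum over $R\in\mbP_m$, which is exactly the shape of the Pauli twirl.

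Next I would verify the hypothesis of \Le{pauliTw} for each coset representative: namely, that $P_0$ and $P_0'$ have \emph{distinct underlying parts in $\mbP_m$} once the overall phase is extracted. This is the main technical point, because a pair of Paulis differing only by a global phase would survive the twirl (every $R\in\mbP_m$ would act identically on both sides, leaving a nonzero rank-one tensor). Writing $P_0 = \alpha\hat P_0$ and $P_0' = \beta\hat P_0'$ with $\hat P_0,\hat P_0'\in\mbP_m$ and $\alpha,\beta\in\{\pm 1,\pm i\}$, injectivity of Clifford conjugation already forces $P_0\neq P_0'$ as operators. If in addition one had $\hat P_0 = \hat P_0'$, this would yield $P = (\alpha/\beta)P'$ as operators; but two distinct elements of $\mbP_m = \{\mcI,X,Y,Z\}^{\otimes m}$ cannot be related by any nontrivial scalar — Hermiticity of Pauli tensors excludes ratios in $\{\pm i\}$, and the linear independence of the standard Pauli tensor basis (no tensor product of $\{\mcI,X,Y,Z\}$-factors is the negative of another) excludes the ratio $-1$. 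Hence $\hat P_0\neq \hat P_0'$.

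With this in place, \Le{pauliTw} applied to each inner sum yields zero, and the entire Clifford twirl vanishes. The principal delicate point is the phase bookkeeping in the previous step — ensuring that the phases produced by Clifford conjugation cannot conspire to make two distinct $\mbP_m$-elements collapse onto the same underlying Pauli; the rest is a direct coset-decomposition argument.
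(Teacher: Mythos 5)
Your proof is correct and takes essentially the same approach as the paper: the paper inserts an average over $Q\in\mbP_m$ (since $C\mapsto CQ$ permutes $\mfC_m$) and swaps sums, which is the same reduction to the Pauli twirl that you organize via a coset decomposition of $\mfC_m$ by $\mcP_m$. Your phase bookkeeping is in fact more careful than the paper's, which simply invokes \Le{pauliTw} ``if $C^\dagger PC \neq C^\dagger P'C$'' without verifying that the underlying phaseless Paulis differ; your Hermiticity and linear-independence argument explicitly closes that small gap.
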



We now proceed to the proof of \Le{CliffordDiscretization}:

\begin{proofof}{\Le{CliffordDiscretization}}
We start with
\begin{eqnarray}
\frac{1}{|\mfC_m|}\sum\limits_{C\in\mfC_m} (C\otimes\mcI)^\dagger U (C\otimes\mcI)\rho' (C\otimes\mcI)^\dagger U^\dagger (C\otimes\mcI) &=& 
\frac{1}{|\mfC_m|}\sum\limits_{P,P'\in\mbP_m}\sum\limits_{C\in\mfC_m} (C^\dagger PC\otimes U_P)\rho' (C^\dagger P'C\otimes U_{P'})^\dagger\nonumber
\end{eqnarray}
We use \Le{clifTw} and are left only with $P=P'$
\begin{eqnarray}\label{prev}
{\ldots}  =
\frac{1}{|\mfC_m|}\sum\limits_{P\in\mbP_m}\sum\limits_{C\in\mfC_m} (C^\dagger PC\otimes U_P)\rho' (C^\dagger PC\otimes U_{P})^\dagger
\end{eqnarray}
We first consider the case were $P=\mcI$, then:
\begin{equation}
\frac{1}{|\mfC_m|}\sum_{C\in\mfC_m}(C^\dagger
P C \otimes U_P) \rho' (C^\dagger
P C \otimes U_P)^\dagger =(\mcI\otimes U_I)\rho'(\mcI\otimes U_I)^\dagger
\end{equation}
On the other hand when, $P\ne\mcI$  by \Le{mix}:
\begin{equation}
\frac{1}{|\mfC_m|}\sum\limits_{C\in\mfC_m} (C^\dagger PC\otimes U_P)\rho' (C^\dagger PC\otimes U_{P})^\dagger = \frac {1}{|\mbP_m|-1}\sum_{Q\in\mbP\setminus\{\mcI\}}(Q\otimes U_P)\rho' (Q\otimes U_P)^\dagger
\end{equation}
\end{proofof}

\subsubsection{Proof of \Le{Decompose}}
\begin{proofof}{\Le{Decompose}}
  We analyze the action of $U$ on the density matrix $\frac{1}{2^m}\mcI\otimes\tau$. We first
  notice that $U$ is a trace preserving operator, that is:
$\frac{1}{2^m}\tr(U(\mcI\otimes\tau) U^\dagger)=\frac{1}{2^m}\tr(\mcI\otimes\tau) = \tr{(\tau)}$.
On the other hand it holds that:
\begin{eqnarray}
\frac{1}{2^m}\tr\big(U(\mcI \otimes \tau) U^{\dagger}\big) &=&
\frac{1}{2^m}\sum_{P,P'\in \mbP_m}\tr\big((P\otimes U_P)(\mcI\otimes\tau)
(P'\otimes U_{P'})^{\dagger}\big)\\
&=&
\frac{1}{2^m}\sum_{P,P'\in \mbP_m}\tr\big(P\mcI P'^{\dagger} \otimes U_P\tau U_{P'}^{\dagger}\big)\\
&=&
\frac{1}{2^m}\sum_{P,P'\in \mbP_m}\tr\big(PP'^{\dagger}\big) \tr\big( U_P\tau U_{P'}^{\dagger}\big)
\end{eqnarray}
If $P\ne P'$ then $\tr\big(PP'^{\dagger}\big)=0$, and therefore:
\begin{eqnarray}\begin{aligned}
\ldots &=
\frac{1}{2^m}\sum_{P\in \mbP_m}\tr\big(\mcI\big) \tr\big( U_P\tau U_{P}^{\dagger}\big)\\
&=
\sum_{P\in \mbP_m} \tr\big( U_P\tau U_{P}^{\dagger}\big)
\end{aligned}\end{eqnarray}
It follows that  $\tr(\tau)=\sum_{P\in \mbP_m} \tr( U_P\tau U_{P}^{\dagger})$, which concludes the proof.

\end{proofof}

\section{Quantum Interactive Proofs with Clifford Authentication}\label{sec:cliffordIP}
\begin{protocol} \textbf{Clifford based Interactive Proof for \qc}:
  \label{prot:CliffordIP}Fix a security parameter $\epsilon$.  Given is a
  quantum circuit consisting of two-qubit gates, $U=U_{N}{\ldots}U_1$, acting on
  $n$ input qubits with error
  probability reduced to $\le \gamma$. The verifier chooses $n$ authentication keys $k_1,\ldots,k_n\in\mcK_{e+1}$, where $e = \lceil{\log{\frac 1 \epsilon}}\rceil$. The verifier authenticates the input
  qubits of the circuit one by one using the Clifford \QAS\ ; that is qubit $j$ is authenticated using
  operation $C_{k_j}$ on $e+1$ qubits. The verifier sends the
  authenticated qubits to the prover $\mathds{P}$. In round $i$ (for $1\leq i \leq N$), the verifier asks $\mathds{P}$ to return the qubits on which $U_i$
  will act. The verifier decodes these qubits by applying the inverse Clifford operator. The verifier then applies $U_i$, authenticates the resulting qubits with new authentication keys and the same (unmeasured) auxiliary qubits and sends the qubits to $\mathds{P}$. In round $N+1$, the prover sends the verifier the first authenticatd qubit, which the verifier decodes and rejects if the auxiliary qubits are not valid. The verifier then measures the decoded qubit (which contains the result of the circuit) and accepts or rejects accordingly. 
\label{prot:finalMeasurement} In any
  case that the verifier does not get the correct number of qubits during the protocol he aborts.
\end{protocol}

\begin{statement}{\Th{thm:CliffordIP}}\textit{For $0< \epsilon < 1$ and $\gamma < 1 - \epsilon$, \Prot{prot:CliffordIP} is a $\QPIP_{O(\log(\frac{1}{\epsilon}))}$ with completeness $1-\gamma$ and soundness $\gamma+\epsilon$ for
  $\qc_{\gamma}$.}
  \end{statement}

The quantum communication is linear in the number of gates.  For $\epsilon
=\frac 1 2$, we get $e=1$, and so the verifier uses a register of $4$ qubits (2 per gate).
In fact $3$ is enough, since each of the authenticated qubits can be decoded (or
encoded and sent) on its own before a new authenticated qubit is handled.

\subsection{Overall Proof of \Th{thm:CliffordIP}}
Let us first analyze completeness: 
\begin{claim}[\bfseries Completeness]\label{cl:qcircuitcompleteness}
For any $\gamma > 0$, \Prot{prot:CliffordIP} is a \QPIP\ protocol with completeness $1 - \gamma$ for \qc.
\end{claim}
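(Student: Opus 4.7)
The plan is to trace through the protocol when both the prover and the verifier behave honestly and verify that the verifier accepts with probability at least $1-\gamma$. First I would observe that after the initial encoding step, the joint state held by the prover is $\bigotimes_{j=1}^{n} C_{k_j}\bigl(|x_j\rangle \otimes |0\rangle^{\otimes e}\bigr)$. Since an honest prover returns the requested authenticated blocks unchanged in each round $i$, the verifier's inverse Clifford $C_{k_j}^\dagger$ perfectly undoes the encoding, recovering $|x_j\rangle \otimes |0\rangle^{\otimes e}$ for each relevant block (or, after later rounds, the correct partially-evolved computational state tensored with $|0\rangle^{\otimes e}$). Consequently the intermediate-round authentication check that would happen on the auxiliary register is never triggered — except that, by the modified protocol, such a check is only carried out in the final round.

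Next I would check that the gate application and re-authentication do not disturb the auxiliary registers. Because $U_i$ acts only on the computational qubits, the auxiliary registers remain in $|0\rangle^{\otimes e}$, so when the verifier picks fresh keys $k'_{j}$ and applies the corresponding Cliffords to the computational qubit tensored with those aux qubits, the resulting state is a valid encoding of the updated computational qubit. Iterating this reasoning through rounds $1, \ldots, N$, the total effect on the computational registers is exactly the application of $U = U_N \cdots U_1$ to the input $|x\rangle$, while each block remains a valid Clifford authentication of its current computational qubit.

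Finally, in round $N+1$ the prover returns the first authenticated block. Honest decoding yields $|0\rangle^{\otimes e}$ on the auxiliary register (so the verifier does not abort), and the computational qubit is the first qubit of $U|x\rangle$. The verifier's measurement in the computational basis then yields outcome $1$ with probability $\bigl\|(|1\rangle\langle 1| \otimes \mcI_{n-1})U|x\rangle\bigr\|^2 \geq 1-\gamma$ by the $\qc_{\text{YES}}$ promise. There is essentially no hard step here: the only thing to keep track of is that the Clifford encodings commute through in the sense of being inverted and re-applied exactly, and that the fresh auxiliary qubits are deterministically $|0\rangle^{\otimes e}$; the main obstacle will come later, in the soundness argument, rather than in completeness.
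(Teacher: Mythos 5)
Your proof is correct and follows the same line of reasoning as the paper's (honest prover maintains the correctly authenticated state at all times, so the final decoded qubit is the first qubit of $U\ket{x}$, which measures to $1$ with probability $\geq 1-\gamma$ by the $\qc_{\text{YES}}$ promise). You spell out more of the bookkeeping than the paper, which is a single short paragraph, but the content is identical; the only tiny slip is calling the auxiliary registers "fresh" — the protocol reuses the same unmeasured auxiliary qubits between rounds, which does not affect the argument since an honest run keeps them at $\ket{0}^{\otimes e}$ throughout.
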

\begin{proof}
To prove completeness, we assume the prover is honest and we will show that if $x\in L$, the verifier accepts with probability $\geq 1- \gamma$. Since the prover is honest, the state at all times is indeed 
the correctly authenticated state of the circuit. 
Thus, the output qubit is indeed an authentication of a (possibly mixed)
state, which if measured, outputs $1$ with probability $\ge 1-\gamma$ (the error in the circuit is $\le \gamma$). The decoding of the output block received by the verifier 
will thus result in accept with probability $\ge 1-\gamma$. 
\end{proof}
\\~\\
To prove that Protocol \ref{prot:CliffordIP} has soundness $\gamma + \epsilon$, we will first observe that each round is essentially a run of the Clifford authentication scheme (the only difference is that the auxiliary qubits are not measured and checked by the verifier in each round). Let round $i$ be the round in which the verifier applies $U_i$. Each round can be seen as an authentication protocol on the 1 or 2 authenticated qubits requested by $V$; the rest of the qubits are independently authenticated and can therefore be thought of as the extra space of the verifier. Therefore, we can apply the main claim involved in the soundness proof of the authentication scheme (\Cl{claim:EveAttack}) in each round to just the qubits requested by $V$. 
\\~\\
We will use the claim below to prove soundness, and then we will prove the claim by induction:
\begin{claim}[\bfseries Clifford \QPIP\ State Evolution]\label{claim:stateformat}
The state held by the prover at the start of round $i$ (before sending qubits for $U_i$ to the verifier) can be written as:
$$
O^i (C_k \rho_i C_k^\dagger\otimes\rho_{E_i}) (O^{i})^\dagger
$$
where $C_k = C_{k_1}\otimes\cdots\otimes C_{k_n}$, $O^i$ is a unitary, $\rho_i = (U_{i-1}\cdots U_1) \rho (U_{i-1}\cdots U_1)^\dagger$ ($\rho$ is the initial density $n$ qubit matrix), and $\rho_{E_i}$ represents the prover's extra space.
\end{claim}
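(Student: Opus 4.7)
My plan is to prove the claim by induction on the round index $i$. The base case $i=1$ is immediate: after the verifier sends the authenticated input $C_k(\rho \otimes \ket{0}\bra{0}^{\otimes ne})C_k^\dagger$, the prover is free to apply an arbitrary unitary on his full system (authenticated qubits plus an ancillary environment $E$), which gives exactly the required form with $\rho_1=\rho$ and $\rho_{E_1}$ the prover's initial environment state.

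For the inductive step, let $A$ denote the $2(e+1)$ authenticated qubits the prover sends to the verifier in round $i$, and let $B$ denote the remaining authenticated qubits in his possession. Write $\tilde{\rho}_i := \rho_i \otimes \ket{0}\bra{0}^{\otimes ne}$ so that $C_k \tilde{\rho}_i C_k^\dagger$ is the shorthand used in the statement. The verifier's round-$i$ operation $V_i$ acts as the identity on $BE$, and on $A$ it decodes with $C_{k_{j_1}}^\dagger \otimes C_{k_{j_2}}^\dagger$, applies $U_i$ to the logical qubits, and re-encodes with fresh Cliffords $C_{k'_{j_1}} \otimes C_{k'_{j_2}}$. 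A short computation, using only that $V_i$ touches solely the $A$-portion of $C_k$ and $\tilde{\rho}_i$, yields the honest-update identity
\begin{equation*}
V_i (C_k \tilde{\rho}_i C_k^\dagger) V_i^\dagger = C_{k'} \tilde{\rho}_{i+1} C_{k'}^\dagger,
\end{equation*}
where $C_{k'}$ differs from $C_k$ only by the replacement of $k_{j_1}, k_{j_2}$ with the fresh keys.

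Since $V_i$ is unitary, inserting $V_i^\dagger V_i$ between consecutive factors in the round-$i$ expression gives
\begin{equation*}
V_i\,O^i(C_k \tilde{\rho}_i C_k^\dagger \otimes \rho_{E_i})(O^i)^\dagger\,V_i^\dagger = (V_i O^i V_i^\dagger)(C_{k'}\tilde{\rho}_{i+1}C_{k'}^\dagger \otimes \rho_{E_i})(V_i O^i V_i^\dagger)^\dagger,
\end{equation*}
where $\rho_{E_i}$ passes through $V_i$ unchanged because $V_i$ is the identity on $E$. When the verifier returns the qubits and the prover applies an arbitrary unitary $\tilde{O}_p$, setting $O^{i+1} := \tilde{O}_p \cdot V_i O^i V_i^\dagger$ and $\rho_{E_{i+1}} := \rho_{E_i}$ closes the induction.

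The main thing to be careful about is the subsystem bookkeeping: the verifier's $V_i$ must act as the identity on all qubits in $B$ (including the auxiliary qubits of the untouched authenticated blocks) and on $E$, so that it genuinely commutes past $\rho_{E_i}$. One should also note that $O^{i+1}$ may depend on the verifier's secret keys through $V_i$, which is harmless since the statement only asserts existence of such a unitary for each fixed key. Notably, this step is purely algebraic and does not invoke Clifford decoherence or averaging over keys; those tools are saved for the analysis of the final measurement round, where they are applied to the absorbed unitary $O^{N+1}$ to bound the probability that the effective attack survives the auxiliary-qubit check.
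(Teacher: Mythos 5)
Your induction plan and the algebraic identity are both correct as far as they go: inserting $V_i^\dagger V_i$ and using the honest-update identity for $V_i$ is a valid manipulation, and the claim does have an inductive structure. The problem lies in the remark you append at the end, which is exactly backwards: the key-dependence of $O^{i+1} := \tilde{O}_p\, V_i O^i V_i^\dagger$ is \emph{not} harmless, and eliminating it is the entire non-trivial content of the claim.

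Look at how \Cl{claim:stateformat} is consumed by \Cl{cl:qcircuitsoundness}. There the state at the start of round $N+1$ is averaged over $k_2,\ldots,k_n$, and $O^{N+1}$ is pulled outside the sum so that Clifford mixing (\Le{cliffordmix}) can flatten $\sum_{k_2,\ldots,k_n} C_k \rho_{N+1} C_k^\dagger$ to a maximally mixed state on those registers. After that, the decoding average over $k_1$ is fed into \Cl{claim:EveAttack}, which again treats $O^{N+1}$ as a single fixed unitary independent of the Clifford key being averaged. Both steps fail if $O^{N+1}$ varies with the keys. In your construction it does: $V_i = (C_{\hat{k}_1}\otimes C_{\hat{k}_2}) U_i (C_{k_{j_1}}\otimes C_{k_{j_2}})^\dagger$ depends on both the old and the fresh Clifford keys, so $V_i O^i V_i^\dagger$ inherits that dependence, and iterating the recursion produces an $O^{N+1}$ that depends on every key used in the protocol. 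What your argument actually proves is that for each fixed key tuple there exists \emph{some} unitary placing the state in the required form; the claim needs one unitary that works uniformly across keys.

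This is precisely why the paper's proof does not proceed purely algebraically. It first averages the decoded state over the consumed keys $k_1,k_2$ for register $\mathcal{R}_i$, applies Clifford decoherence (\Le{CliffordDiscretization}) to replace the key-dependent conjugation by a \emph{fixed} superoperator consisting of $O_\mcI^i$ and a uniform mixture over non-identity Paulis $Q$ tensored with the $O_P^i$'s, and then uses the unitary commutation lemma (\Le{clifSimplify}) to slide $U_i' = (C_{\hat{k}_1}\otimes C_{\hat{k}_2})U_i$ past the Pauli part. The result is a key-independent superoperator acting on the correctly re-authenticated state, which is then dilated to the unitary $O^{i+1}$. So your closing sentence — that this step ``does not invoke Clifford decoherence or averaging over keys'' and that those tools are ``saved for the final measurement round'' — is exactly what fails: they are needed inside this very inductive step, and without them the $O^{i+1}$ you produce cannot support the subsequent mixing argument.
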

This is the claim, which, as mentioned in Section \ref{sec:proofsoverview} of the introduction, uses the strong properties of Clifford decoherence (\Le{CliffordDiscretization}) to show that the attack of the prover can be passed through all rounds to the end of the protocol. Using this claim, we will prove soundness: 
\begin{claim}[\bfseries Soundness]\label{cl:qcircuitsoundness}
For any $\epsilon,\gamma > 0$, \Prot{prot:CliffordIP} is a \QPIP\ protocol with soundness $\gamma + \epsilon$ for \qc.
\end{claim}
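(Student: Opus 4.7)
The plan is to reduce soundness of the QPIP protocol to the security of the Clifford QAS (\Th{thm:CliffordAuth}) by using the state-evolution claim (\Cl{claim:stateformat}) to collapse all of the prover's deviations across the entire protocol into a single unitary at the very end. Once this is done, the final round looks exactly like a single run of the Clifford QAS on the block containing the output qubit, and the acceptance bound derived in the proof of \Th{thm:CliffordAuth} can be invoked essentially as a black box.

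Concretely, I would apply \Cl{claim:stateformat} with $i=N+1$: right before the final round the joint prover--verifier state has the form $O^{N+1}(C_k\rho_{N+1}C_k^\dagger\otimes\rho_{E_{N+1}})(O^{N+1})^\dagger$, with $\rho_{N+1}=U\rho U^\dagger$ the correct final state of the circuit and $O^{N+1}$ a single unitary absorbing every deviation the prover has ever made. Focusing on the first authenticated block of $e+1$ qubits, and regarding the other $n-1$ blocks together with the prover's ancilla as Eve's environment in the sense of \Def{def:qas}, I would expand $O^{N+1}=\sum_{P\in\mbP_{e+1}}P\otimes U_P$ in the Pauli basis on that block. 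By \Le{CliffordDiscretization} applied to $k_1$ alone (the lemma holds for arbitrary $\rho'$, so entanglement of $\rho_{N+1}$ across blocks and dependence on $k_2,\ldots,k_n$ is harmless), the state on block $1$ together with the environment, after the verifier's decoding $C_{k_1}^\dagger$ and averaging over $k_1$, takes the form promised by \Cl{claim:EveAttack}: a convex combination of the correctly encoded state (weight $s$) and a uniform mixture over non-identity Pauli conjugations of the correctly encoded state (weight $1-s$).

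Having put the state in this form, the acceptance analysis is essentially a rerun of the calculation in \Eq{eq:ClifSec}--\Eq{goodProjection}. In the $s$-weighted ``correct'' branch the auxiliary qubits are deterministically zero, while the output qubit---being the first qubit of the correctly executed circuit on an input $x\notin\qc_\gamma$---gives outcome $1$ with probability at most $\gamma$. In the $(1-s)$-weighted ``attack'' branch, out of the $4^{e+1}-1$ non-identity Paulis only those of the form $Q'\otimes Q''$ with $Q''\in\{\mcI,Z\}^{\otimes e}$ leave the auxiliary register in $\ket{0}^{\otimes e}$, and there are exactly $4\cdot 2^{e}-1$ of these, each contributing at most $1$ to the acceptance probability. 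Summing the two branches gives acceptance probability at most $s\gamma+(1-s)\cdot 2^{-e}\le\gamma+\epsilon$, which is the required soundness.

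The main technical subtlety is not this final step, which, modulo \Cl{claim:stateformat}, is nearly mechanical; rather it lies in the reduction to a single-block QAS despite the fact that the prover may entangle his attack across all $n$ blocks and across his ancilla. This is handled cleanly by the fact that the keys $k_1,\ldots,k_n$ are drawn independently and uniformly, so for any fixed $k_2,\ldots,k_n$ the average over $k_1$ by itself already invokes the Clifford decoherence of \Le{CliffordDiscretization}, with the operators on the remaining blocks and on the prover's ancilla playing the role of the environment factors $U_P$. Given \Cl{claim:stateformat}, whose proof has already discharged the problem of shifting the earlier-round attacks through the protocol, no separate handling of intermediate rounds is needed---precisely what makes the Clifford argument so much shorter than the polynomial-code version.
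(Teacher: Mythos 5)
Your overall plan matches the paper --- reduce to a single-block Clifford QAS analysis via \Cl{claim:stateformat} --- and your final arithmetic is right, but there is a genuine gap in the middle step, precisely where you assert that for fixed $k_2,\ldots,k_n$ the decoded state on block~$1$ ``takes the form promised by \Cl{claim:EveAttack}.'' \Cl{claim:EveAttack} describes Bob's reduced state on block~$1$ \emph{after tracing out the environment}, and its proof relies essentially on the input having the tensor-product form $\rho\otimes\rho_E$: that factorization is what lets $\tr_E\bigl[(P\otimes U_Q)(\rho\otimes\rho_E)(P\otimes U_Q)^\dagger\bigr]$ collapse to $P\rho P^\dagger\cdot\tr(U_Q\rho_E U_Q^\dagger)$, producing the weight $s$ and the uniform mixture over Paulis. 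For \emph{fixed} $k_2,\ldots,k_n$, the state $\rho' = (\mcI\otimes C_{k'})\rho_{N+1}(\mcI\otimes C_{k'})^\dagger\otimes\rho_{E_{N+1}}$ that \Le{CliffordDiscretization} acts on is generally entangled between block~$1$ and the remaining blocks, because $\rho_{N+1}$ is the output of the circuit. The partial trace over the environment therefore does not factor, and the reduced state on block~$1$ is \emph{not} $s\,\tr_A(\rho_{N+1}) + \frac{1-s}{4^{e+1}-1}\sum_{P\ne\mcI}P\,\tr_A(\rho_{N+1})P^\dagger$. Your sentence ``entanglement of $\rho_{N+1}$ across blocks $\ldots$ is harmless'' is exactly where the argument breaks: \Le{CliffordDiscretization} is indeed agnostic to $\rho'$, but \Cl{claim:EveAttack} is not.

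The paper closes this gap by averaging over $k_2,\ldots,k_n$ \emph{before} applying \Le{CliffordDiscretization}, using \Le{cliffordmix} to replace the remaining blocks by a maximally mixed state; this yields the factorization $\tr_A(\rho_{N+1})\otimes\rho_{AE_{N+1}}$, on which \Cl{claim:EveAttack} then applies verbatim with $\rho\to\tr_A(\rho_{N+1})$ and $\rho_E\to\rho_{AE_{N+1}}$. One can also sidestep the factorization entirely by working directly from the output of \Le{CliffordDiscretization}: since $\sum_P U_P^\dagger U_P=\mcI$, one has $U_I^\dagger U_I\le\mcI$ and $\sum_{Q\ne\mcI}U_Q^\dagger U_Q\le\mcI$, so the $U_I$-term contributes at most $\min(\gamma,s)$ with $s=\tr[(\mcI\otimes U_I^\dagger U_I)\rho']$ and the remaining terms contribute at most $(1-s)2^{-e}$, after which a two-case split on whether $s\ge\gamma$ recovers the bound $\gamma+2^{-e}$. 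That works, but it is a different computation from invoking \Cl{claim:EveAttack} as a black box, which is what your proposal claims to do.
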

\begin{proof}
Assume \Cl{claim:stateformat} holds. Before the prover sends the verifier the first authenticated qubit in round $N+1$ (the final round), \Cl{claim:stateformat} implies that his state is:
$$
O^{N+1} (C_k \rho_{N+1} C_k^\dagger\otimes\rho_{E_{N+1}}) (O^{N+1})^\dagger
$$
Now we can average over all of the authentication keys except $k_1$, since they will not be used for decoding. After averaging, the prover's state can be written as:
$$
\frac{1}{|\mfC_{e+1}|^{n-1}}\sum_{k_2,\ldots,k_n} O^{N+1} (C_k \rho_{N+1} C_k^\dagger\otimes\rho_{E_{N+1}}) (O^{N+1})^\dagger 
$$
Here we require the following lemma, which states that random Clifford operators acting on a state turn it into a maximally mixed state: 
\begin{lem}[\bfseries Clifford Mixing]\label{cliffordmix}
For a matrix $\rho$ on spaces $A\otimes B$, where $A$ is the space of $n$ qubits and $n\in\mN$
$$
\frac{1}{|\mfC_n|}\sum_{C\in\mfC_n} (C\otimes\mcI_B) \rho (C\otimes\mcI_B)^\dagger = \frac{1}{2^n}\mcI_A\otimes\tr_A(\rho)
$$
\end{lem}
The proof of this lemma follows from a similar lemma which uses Pauli operators instead of Clifford operators:
\begin{lem}[\bfseries Pauli Mixing]\label{paulimix}
For a matrix $\rho$ on two spaces $A,B$
$$
\frac{1}{\mbP_n}\sum_{P\in\mbP_n} (P\otimes\mcI_B) \rho (P\otimes\mcI_B)^\dagger = \frac{1}{q^n}\mcI_A\otimes\tr_A(\rho)
$$
\end{lem}
The proofs of both Lemma \ref{cliffordmix} and Lemma \ref{paulimix} can be found in \pen{app:backgroundcliffordpauli}. By applying \Le{cliffordmix},we can see that the averaging changes the state in the final round to:
\begin{equation}
 O^{N+1} (C_{k_1}\tr_A(\rho_{N+1}) C_{k_1}^\dagger\otimes\rho_{AE_{N+1}})(O^{N+1})^\dagger
\end{equation}
where
\begin{equation}
\rho_{AE_{N+1}} = \frac{1}{2^{(n-1)(e+1)}}\mcI_A\otimes\rho_{E_{N+1}}
\end{equation}
and $A$ represents the space of all computational qubits other than the first. Now we proceed to write down the state after the verifier's decoding. Namely, the verifier will decode by applying $C_{k_1}^\dagger$, and then we can average over $k_1$, obtaining:
\begin{equation}
\frac{1}{|\mfC_{e+1}|} \sum_{k_1\in\mcK_{e+1}}(C_{k_1}^\dagger \otimes\mcI_{AE_{N+1}})O^{N+1} (C_{k_1}\tr_A(\rho_{N+1}) C_{k_1}^\dagger\otimes\rho_{AE_{N+1}})(O^{N+1})^\dagger (C_{k_1} \otimes\mcI_{AE_{N+1}})
\end{equation}
Now we can directly apply \Cl{claim:EveAttack} to obtain the state after decoding. This is done by replacing $\rho$ in the statement of \Cl{claim:EveAttack} with $\tr_A(\rho_{N+1})$ and replacing $\rho_E$ in the statement with $\rho_{AE_{N+1}}$. We would like to bound the projection of the state on $\Pi_0^{\ket{1}}$:
\begin{equation}\label{eq:cliffordfinalstate}
\tr{\Big(\Pi_0^{\ket{1}} \big(s\tr_A(\rho_{N+1}) + \frac{1-s}
{4^m -1}\sum_{Q\in\mbP_m\setminus\{\mcI\}}Q(\tr_A(\rho_{N+1}))
Q^\dagger\big)\Big)}
\end{equation}
\begin{equation}
= s\tr(\Pi_0^{\ket{1}}\tr_A(\rho_{N+1})) + \frac{1-s}{4^m-1}
\sum_{Q\in\mbP_m\setminus\{\mcI\}}\tr{(\Pi_0^{\ket{1}}
Q(\tr_A(\rho_{N+1})) Q^\dagger)}
\end{equation}
Due to the circuit error $\gamma$, $\tr(\Pi_0^{\ket{1}} \rho_{N+1})\leq\gamma$. The rest of the upper bound is exactly as in the proof of \Th{thm:CliffordAuth} (see the explanation linking equations \ref{tr} and \ref{aftertr}), and the final upper bound is:
\begin{eqnarray}
s\tr(\Pi_0^{\ket{1}}\rho_{N+1}) + \frac{1-s}{4^m-1}
\sum_{Q\in\mbP_m\setminus\{\mcI\}}\tr{(\Pi_0^{\ket{1}}
Q\rho_{N+1} Q^\dagger)}
&\le & s\gamma + \frac{1-s}{2^e}\\
&\le& \gamma + \frac{1}{2^e}
\end{eqnarray}
\end{proof}
\subsection{Proof of \Cl{claim:stateformat}}
We now proceed to the proof of the claim. We will require the following lemma, which we will prove after completing the current proof:
\begin{lem}[\bfseries Unitary Commutation] \label{clifSimplify}
For all unitaries $U$ acting on $A$, a space of $k$ qubits (for $k\in\mN$), and density matrices $\rho$ acting on $A\otimes B$, we have:
$$
\sum_{Q\neq\mcI\in\mbP_k} (UQ\otimes \mcI)\rho(UQ\otimes \mcI)^\dagger = \sum_{Q\neq\mcI\in\mbP_k} (QU\otimes \mcI)\rho(QU\otimes \mcI)^\dagger 
$$
\end{lem}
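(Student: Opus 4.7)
The plan is to prove the identity by first showing that the corresponding \emph{full} sums (including $Q = \mcI$) over $\mbP_k$ agree, and then observing that the $Q = \mcI$ terms on the two sides are literally identical, so they cancel when subtracted.

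First, I would invoke Pauli Mixing (\Le{paulimix}) on the operator $\rho$ acting on $A\otimes B$ (with $A$ the $k$-qubit space), which gives
\begin{equation}
\sum_{Q\in\mbP_k} (Q\otimes\mcI)\rho(Q\otimes\mcI)^\dagger \;=\; 2^k\,\mcI_A\otimes \tr_A(\rho).
\end{equation}
Applying $(U\otimes\mcI)(\cdot)(U\otimes\mcI)^\dagger$ to both sides and using $U\mcI_A U^\dagger=\mcI_A$, the left-hand side becomes $\sum_{Q\in\mbP_k}(UQ\otimes\mcI)\rho(UQ\otimes\mcI)^\dagger$, while the right-hand side remains $2^k\,\mcI_A\otimes\tr_A(\rho)$. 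On the other side of the desired identity, I would apply the same Pauli Mixing lemma to $\rho' := (U\otimes\mcI)\rho(U\otimes\mcI)^\dagger$, giving
\begin{equation}
\sum_{Q\in\mbP_k}(QU\otimes\mcI)\rho(QU\otimes\mcI)^\dagger \;=\; 2^k\,\mcI_A\otimes \tr_A\!\bigl((U\otimes\mcI)\rho(U\otimes\mcI)^\dagger\bigr).
\end{equation}
Since $U$ acts only on $A$, cyclicity of the partial trace over $A$ gives $\tr_A((U\otimes\mcI)\rho(U\otimes\mcI)^\dagger)=\tr_A(\rho)$, so the full sums over $\mbP_k$ on the two sides coincide.

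Next, I would observe that the $Q=\mcI$ term on each side is exactly $(U\otimes\mcI)\rho(U\otimes\mcI)^\dagger$. Subtracting these equal identity contributions from the two equal full sums yields
\begin{equation}
\sum_{Q\neq\mcI\in\mbP_k}(UQ\otimes\mcI)\rho(UQ\otimes\mcI)^\dagger \;=\; \sum_{Q\neq\mcI\in\mbP_k}(QU\otimes\mcI)\rho(QU\otimes\mcI)^\dagger,
\end{equation}
which is exactly \Le{clifSimplify}.

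There is no real obstacle here; the only subtlety to flag is that $U$ is a general (not necessarily Clifford) unitary, so conjugation by $U$ does \emph{not} map Paulis to Paulis — but the argument never needs this, because it routes everything through Pauli Mixing, which depends only on partial traces and is invariant under conjugation by any unitary supported on $A$. The argument is self-contained given \Le{paulimix}.
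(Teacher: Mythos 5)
Your proof is correct and takes essentially the same approach as the paper's: both rely on Pauli Mixing (\Le{paulimix}), both exploit the invariance $\tr_A\!\bigl((U\otimes\mcI)\rho(U\otimes\mcI)^\dagger\bigr)=\tr_A(\rho)$, and both peel off the identical $Q=\mcI$ term. The paper works with the restricted (non-identity) sum from the outset and conjugates the resulting expression, whereas you compare the two full sums and then subtract, but this is only a cosmetic reorganization of the same argument.
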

We prove \Cl{claim:stateformat} by induction. The base case is clear. For the inductive step, we assume the claim holds for round $i$ and show that it holds for round $i+1$. When the verifier requests the qubits needed for $U_i$, the prover sends back register $\mathcal{R}_i$, which contains the authenticated qudits required to apply $U_i$. Assume $\mathcal{R}_i$ is the first register of the state written below and that it contains 2 authenticated qudits. Then the prover sends back $\mathcal{R}_i$ from the state as given in the inductive step: 
\begin{equation}
O^{i} (C_k \rho_i C_k^\dagger\otimes\rho_{E_i}) (O^{i})^\dagger
\end{equation}
We now write the state after the verifier decodes register $\mathcal{R}_i$ (and after averaging over the Clifford keys for register $\mathcal{R}_i$):
\begin{equation}
\frac{1}{|\mfC_{e+1}|}\sum_{k_1,k_2\in\mcK_{e+1}} (C_{k_1}\otimes C_{k_2}\otimes\mcI)^\dagger O^{i} (C_k \rho_i C_k^\dagger\otimes\rho_{E_i}) (O^{i})^\dagger (C_{k_1}\otimes C_{k_2}\otimes\mcI)
\end{equation}
Next, we can decompose $O^{i}$ as $\sum_{P\in\mbP_{|\mathcal{R}_i|}}P\otimes O_P^{i}$, where $P$ acts on register $\mathcal{R}_i$ and $O_P^{i}$ acts on all other qubits (i.e. the remaining computational qubits as well as the extra space of the prover). Applying \Le{CliffordDiscretization}, we can write the state as:
\begin{equation}\label{clifforddecodedstate}
(\mcI \otimes O_\mcI^{i}) C_{k'}(\rho_i\otimes\rho_{E_i}) C_{k'}^\dagger(\mcI \otimes O_\mcI^{i})^\dagger
+ \frac{1}{|\mathbb{P}_{|\mathcal{R}_i|}| -1}\sum\limits_{P,Q\neq I\in\mathbb{P}_{|\mathcal{R}_i|}} (Q\otimes O_P^{i}) C_{k'}(\rho_i\otimes\rho_{E_i})C_{k'}^\dagger (Q\otimes O_P^{i})^\dagger
\end{equation}
where $C_{k'} = \mcI \otimes C_{k_3}\otimes\cdots\otimes C_{k_n}\otimes\mcI_E$. The verifier then applies the gate $U_i$ and authenticates register $\mathcal{R}_i$ with fresh keys $C_{\hat{k}_1}\otimes C_{\hat{k}_2}$. Let $U_i' = (C_{\hat{k}_1}\otimes C_{\hat{k}_2})U_i$; this is the operation applied to the decoded state (given in equation \ref{clifforddecodedstate}) by the verifier. First observe the action of this operation on the first term of the decoded state; it is now:
\begin{equation}
(\mcI \otimes O_\mcI^{i}) C_{\hat{k}}(\rho_{i+1}\otimes\rho_{E_i}) C_{\hat{k}}^\dagger(\mcI \otimes O_\mcI^{i})^\dagger
\end{equation}
where $C_{\hat{k}} = (C_{\hat{k}_1}\otimes C_{\hat{k}_2}\otimes\mcI) C_{k'}$. To determine what happens to the second term of the decoded state in equation \ref{clifforddecodedstate} after the verifier applies $U_i'$, we will apply the unitary commutation lemma, \Le{clifSimplify}. We first write the second term (after application of $U_i'$) in a way which makes it easier to apply \Le{clifSimplify}:
\begin{equation}
\frac{1}{|\mathbb{P}_{|\mathcal{R}_i|}| -1}\sum\limits_{P,Q\neq I\in\mathbb{P}_{|\mathcal{R}_i|}} (\mcI\otimes O_P^i)(U_i'Q\otimes \mcI) C_{k'}(\rho_i\otimes\rho_{E_i})C_{k'}^\dagger (U_i'Q\otimes \mcI)^\dagger (\mcI\otimes O_P^i)^\dagger
\end{equation}
Now we apply \Le{clifSimplify}, obtaining that the above expression is equal to:
\begin{equation}
\frac{1}{|\mathbb{P}_{|\mathcal{R}_i|}| -1}\sum\limits_{P,Q\neq I\in\mathbb{P}_{|\mathcal{R}_i|}} (\mcI\otimes O_P^i)(QU_i'\otimes \mcI) C_{k'}(\rho_i\otimes\rho_{E_i})C_{k'}^\dagger (QU_i'\otimes \mcI)^\dagger (\mcI\otimes O_P^i)^\dagger
\end{equation}
\begin{eqnarray}
 &=& \frac{1}{|\mathbb{P}_{|\mathcal{R}_i|}| -1}\sum\limits_{P,Q\neq I\in\mathbb{P}_{|\mathcal{R}_i|}} (Q\otimes O_P^i)(U_i'\otimes \mcI) C_{k'}(\rho_i\otimes\rho_{E_i})C_{k'}^\dagger (U_i'\otimes \mcI)^\dagger (Q\otimes O_P^i)^\dagger\\
&=& \frac{1}{|\mathbb{P}_{|\mathcal{R}_i|}| -1}\sum\limits_{P,Q\neq I\in\mathbb{P}_{|\mathcal{R}_i|}} (Q\otimes O_P^{i}) C_{\hat{k}}(\rho_{i+1}\otimes\rho_{E_i}) C_{\hat{k}}^\dagger (Q\otimes O_P^{i})^\dagger
\end{eqnarray}
It follows that the entire state is:
\begin{equation}
(\mcI \otimes O_\mcI^{i}) C_{\hat{k}}(\rho_{i+1}\otimes\rho_{E_i}) C_{\hat{k}}^\dagger (\mcI \otimes O_\mcI^{i})^\dagger
+ \frac{1}{|\mathbb{P}_{|\mathcal{R}_i|}| -1}\sum\limits_{P,Q\neq I\in\mathbb{P}_{|\mathcal{R}_i|}} (Q\otimes O_P^{i}) C_{\hat{k}}(\rho_{i+1}\otimes\rho_{E_i}) C_{\hat{k}}^\dagger (Q\otimes O_P^{i})^\dagger
\end{equation}
 We require one last observation: as it stands, the above state consists of a superoperator acting on $C_{\hat{k}}(\rho_{i+1}\otimes\rho_{E_i}) C_{\hat{k}}^\dagger$. To see that the above operation is a superoperator, note that it was obtained by conjugating a unitary by Clifford operators, and then averaging over the Clifford operators. By expanding the extra space from $\rho_{E_i}$ to $\rho_{E_{i+1}}$, we can instead assume we have a unitary $O^{i+1}$ acting on $C_{\hat{k}}(\rho_{i+1}\otimes\rho_{E_{i+1}}) C_{\hat{k}}^\dagger$.

\subsubsection{Proof of \Le{clifSimplify} (Unitary Commutation)}
We obtain the following equality from \Le{paulimix}:
$$
\frac{1}{|\mbP_k|}\sum_{Q\neq\mcI\in\mbP_k} (Q\otimes\mcI) \rho(Q\otimes\mcI)^\dagger = \frac{1}{2^k}\cdot\mcI\otimes\tr_A(\rho) - \frac{1}{|\mbP_k|}\rho
$$
where $\rho$ is a matrix on spaces $A,B$ and $Q$ acts on $A$. We have:
\begin{eqnarray}
\frac{1}{|\mbP_k|}\sum_{Q\neq\mcI\in\mbP_k} (UQ\otimes \mcI)\rho(UQ\otimes \mcI)^\dagger &=&
(U\otimes\mcI)(\frac{1}{2^k}\cdot\mcI\otimes\tr_A(\rho) - \frac{1}{|\mbP_k|}\rho) (U\otimes\mcI)^\dagger\\
&=& \frac{1}{2^k}\cdot\mcI\otimes\tr_A(\rho) - \frac{1}{|\mbP_k|}(U\otimes\mcI)\rho (U\otimes\mcI)^\dagger\\
&=& \frac{1}{2^k}\cdot \mcI\otimes\tr_A((U\otimes\mcI)\rho(U\otimes\mcI)^\dagger) - \frac{1}{|\mbP_k|}(U\otimes\mcI)\rho (U\otimes\mcI)^\dagger\\
&=& \frac{1}{|\mbP_k|}\sum_{Q\neq\mcI\in\mbP_k} (QU\otimes \mcI)\rho(QU\otimes \mcI)^\dagger 
\end{eqnarray}
This concludes the proof.

\section{Signed Polynomial Code Authentication Scheme}\label{sec:PolyAuth}

Let $\mcK_m$ be the set of pairs of Pauli and sign operators which will be used for authentication; i.e. $\mcK_m = \{(k,z,x)|k\in\{-1,1\}^m, x,z\in F_q^m\}$.
\begin{protocol}\label{protocol:polynomialqas} \textbf{Polynomial Authentication protocol }:
Alice wishes to send the state $\ket\psi$ of dimension $q$. She
chooses a security parameter $d$, a code length $m=2d+1$, and selects $k' = (k,z,x)\in\mcK_m$ at random.
\begin{itemize}
\item \textbf{Encoding - $A_{k'}$:}
 Alice applies $E_k$ to $\ket{\psi}\otimes\ket{0}^{\otimes m-1}$ to encode $\ket{\psi}$ using the signed quantum polynomial code of polynomial degree $d$ (see \Def{def:encodingcircuit}).
She then applies the Pauli $Z^zX^x$ defined by $x,z\in F_q^m$ (i.e., for $j \in \{1,..,m\}$ she applies  $Z^{z_j}X^{x_j}$ on the $j$'th qubit).

\item \textbf{Decoding - $B_{k'}$:}
 Bob applies the inverse of $A_{k'}$; he applies $(Z^zX^x)^\dagger$ followed by $E_k^\dagger$. Bob measures the $m - 1$ auxiliary registers and declares the state valid if they are all $0$, otherwise Bob aborts.

\end{itemize}
\end{protocol}
\begin{statement}{\Th{thm:PolynomialAuth}}\textit{The polynomial authentication scheme as described in Protocol \ref{protocol:polynomialqas}
is a \QAS\ with security $\epsilon = 2^{-d}$.}
\end{statement}

\subsection{The Overall Proof of \Th{thm:PolynomialAuth}}
The completeness of this protocol is trivial. We proceed to prove 
the security of the protocol. As in the proof of \Th{thm:CliffordAuth}, we first show that any intervention made by the adversary can be broken down into a distribution over generalized Pauli interventions. This is given by the Pauli decoherence lemma, \Le{PauliDiscretization}, which is the weaker analogue of the Clifford decoherence lemma (\Le{CliffordDiscretization}). We then state and prove the sign key security lemma (\Le{lem:PauliPolyAuthSec}), which states that non identity Pauli interventions by the adversary on states authenticated with the signed polynomial code are detected with high probability. We note that the main difference between the polynomial and Clifford \QAS\ proofs is the ease in which we can prove the security of each \QAS\ against Eve's non trivial Pauli interventions. In the Clifford case, this is easy because a random Pauli will change the auxiliary 0 states with high probability (see the explanation between equation \ref{tr} and equation \ref{aftertr} in the proof of \Th{thm:CliffordAuth}). In the polynomial case, the sign key security lemma (\Le{lem:PauliPolyAuthSec}) requires quite a few technical details. 
\\~\\
\begin{proofof}{\Th{thm:PolynomialAuth}}
We denote $\rho=\ket\psi \bra\psi\otimes\ket {
0}\bra{0}^{\otimes e}$ the state of Alice prior to encoding. Let $U$ be the attack made by Eve on the joint system, including the message space $M$ and Eve's environment $E$. Bob's state prior to measuring but after applying the decoding operators is:
\begin{eqnarray}\label{polynomialstatebefore}
\rho_{Bob} &=&  \frac{1}{2^m|\mbP_m|}\tr_E{\Big(\sum_{\substack{Q\in\mbP_m\\k\in\{-1,1\}^m}}(QE_k\otimes \mcI_E)^\dagger
U\left( ( QE_k\otimes \mcI_E)\rho \otimes\rho_E(QE_k\otimes \mcI_E)^\dagger
\right)U^\dagger(QE_k\otimes \mcI_E)\Big)^\dagger} 
\end{eqnarray}
At this point we require \Le{PauliDiscretization} (it is analogous to \Le{CliffordDiscretization}), which allows us to reduce general adversary interventions to adversary interventions which are generalized Pauli operators. The lemma states that a random Pauli conjugating an operator has the effect of decohering (removing the cross terms) of the operator (we will prove this lemma in Section \ref{subsection:proofofPauliDiscretization}):
\begin{lem}[\bfseries Pauli Decoherence]\label{PauliDiscretization}
Let $\rho$ be a matrix on $m'>m$ qudits and let $U = \sum\limits_{P\in\mbP_m}P\otimes U_P$ be a matrix acting on $\rho$. Then
$$
\frac{1}{|\mbP_m|}\sum\limits_{Q\in\mbP_m} (Q\otimes\mcI)^\dagger U (Q\otimes\mcI)\rho (Q\otimes\mcI)^\dagger U^\dagger (Q\otimes\mcI) = \sum_{P\in\mbP_m} (P\otimes U_P) \rho (P\otimes U_P)^\dagger
$$
\end{lem}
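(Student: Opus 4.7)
The plan is to mirror the proof of the Clifford decoherence lemma (\Le{CliffordDiscretization}), but with the Pauli Twirl (\Le{pauliTw}) playing the role that the Clifford Twirl played there. First I would expand $U=\sum_{P\in\mbP_m}P\otimes U_P$ and $U^\dagger=\sum_{P'\in\mbP_m}(P')^\dagger\otimes U_{P'}^\dagger$ and substitute into the left-hand side, so that it becomes a triple sum
\begin{equation}
\frac{1}{|\mbP_m|}\sum_{P,P',Q}(Q^\dagger P Q\otimes U_P)\,\rho\,(Q^\dagger (P')^\dagger Q\otimes U_{P'}^\dagger).
\end{equation}

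For the off-diagonal terms with $P\neq P'$, pulling the environment factors $\mcI\otimes U_P$ and $\mcI\otimes U_{P'}^\dagger$ outside the sum over $Q$ and applying the Pauli Twirl (\Le{pauliTw}) to the remaining message-space expression yields zero. Only the diagonal contributions $P=P'$ survive. For those terms I would invoke the standard commutation property of generalized Paulis: for every $Q,P\in\mbP_m$ there is a phase $\omega_q^{c(Q,P)}$ such that $Q^\dagger P Q=\omega_q^{c(Q,P)} P$, and taking adjoints gives $Q^\dagger P^\dagger Q=\omega_q^{-c(Q,P)} P^\dagger$. Hence the two phases cancel inside the sandwich, and every $Q$ contributes the same operator $(P\otimes U_P)\rho(P\otimes U_P)^\dagger$; averaging over $Q$ just removes the prefactor $1/|\mbP_m|$, and summing over $P$ delivers exactly the right-hand side.

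Since the Pauli Twirl is already proved and the phase cancellation is essentially bookkeeping, I do not expect any serious obstacle. The one subtlety worth double-checking is the phase convention for the generalized Paulis of \Sec{defgeneralized}: one must verify that the scalar $\omega_q^{c(Q,P)}$ in $Q^\dagger P Q=\omega_q^{c(Q,P)} P$ is always a root of unity (so that the adjoint identity used in the cancellation is valid), which follows immediately from $ZX=\omega_q XZ$. I would also note that, unlike in the Clifford case, there is no further averaging step producing a uniform mixture over nontrivial Paulis; each $P$ keeps its own weight determined by $U_P$. This weaker "decoherence" is precisely the point flagged in \Sec{sec:proofsoverview} which obstructs shifting the prover's attack to the end of the protocol in the polynomial \QPIP.
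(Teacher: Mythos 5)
Your proof follows the paper's argument essentially step for step: expand $U$ into a triple sum, regroup so the Pauli Twirl (\Le{pauliTw}) kills the $P\neq P'$ cross terms, and then observe that for the diagonal terms the conjugation phases cancel so that every $Q$ contributes $(P\otimes U_P)\rho(P\otimes U_P)^\dagger$. The paper packages your phase-cancellation step as its own small lemma (\Le{pauliID}), but the content is identical, so this is the same proof.
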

We decompose the attack $U$
  made by Eve to \mbox{$U=\sum_{P\in \mbP_m}P\otimes U_P$} (where $P$ acts on the space $M$ and $U_P$ acts on the space $E$) and then apply the lemma to equation \ref{polynomialstatebefore} by replacing $\rho$ in the lemma with $E_k\rho E_k^\dagger$, obtaining: 
\begin{eqnarray}
\rho_{Bob} &=& \frac{1}{2^m}\tr_E{\Big(\sum_{P\in\mbP_m,k\in\{-1,1\}^m} (E_k^\dagger PE_k\otimes U_P)\rho \otimes\rho_E(E_k^\dagger PE_k\otimes U_P)^\dagger\Big)} \\
&=& \frac{1}{2^m}\sum_{P\in\mbP_m,k\in\{-1,1\}^m} E_k^\dagger PE_k\rho E_k^\dagger P^\dagger E_k \cdot\tr(U_P\rho_EU_P^\dagger)
\end{eqnarray}
We set $\alpha_P = \tr{\big( U_P^\dagger U_P\rho_{E} \big)} $. Bob's state is now:
\begin{eqnarray}
\cdots &=& \alpha_{\mcI}\cdot \rho + \frac{1}{2^m}\sum_{P\in\mbP_m\setminus\{ \mcI\},k\in\{-1,1\}^m}
\alpha_P\cdot E_k^\dagger P
E_k\rho  E_k^\dagger P^\dagger E_k
\end{eqnarray}
Recall that we are interested in the projection of Bob's state onto $\Pi_0^{\ket{\psi}}$ (defined in equation \ref{authenticationprojections0}), which can now be written as: 
\begin{equation}
\tr(\Pi_0^{\ket{\psi}}\rho_{Bob}) = \alpha_{\mcI}\tr(\Pi_0^{\ket{\psi}}\rho) + \frac{1}{2^m}\sum_{P\in\mbP_m\setminus\{ \mcI\},k\in\{-1,1\}^m}
\alpha_P\tr(\Pi_0^{\ket{\psi}} E_k^\dagger PE_k\rho E_k^\dagger P^\dagger E_k)
\end{equation}
\begin{equation}\label{polynomialfinalsum}
= \frac{1}{2^m}\sum_{P\in\mbP_m\setminus\{ \mcI\},k\in\{-1,1\}^m}
\alpha_P\tr(\Pi_0^{\ket{\psi}} E_k^\dagger PE_k\rho E_k^\dagger P^\dagger E_k)
\end{equation}
Notice that each term in the above sum represents a generalized Pauli attack on the signed polynomial code. We now provide a lemma which states that the signed polynomial code allows detection of adversary interventions which are generalized Pauli operators (we will prove this lemma in Section \ref{subsection:proofofPauliPoly}):
\begin{lem}[\bfseries Sign Key Security]\label{lem:PauliPolyAuthSec} The signed polynomial code is $\frac{1}{2^{m-1}}$-secure against (generalized) Pauli attacks. More formally, for a density matrix $\rho = \ket{\psi}\bra{\psi}\otimes\ket{0}\bra{0}^{\otimes m-1}$ and a generalized Pauli operator $P\in\mbP_m\setminus\{\mcI\}$:
\begin{equation}
\frac{1}{2^m}\sum_{k\in\{-1,1\}^m}\tr(\Pi_0^{\ket{\psi}} E_k^\dagger PE_k\rho E_k^\dagger P^\dagger E_k)\leq\frac{1}{2^{m-1}}
\end{equation}
\end{lem}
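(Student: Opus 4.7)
My plan is to bound, for each non-identity Pauli $P = Z^zX^x$, the average over $k\in\{\pm 1\}^m$ of the ``undetected non-trivial logical action'' probability. First I would rewrite the trace of interest as
\begin{equation*}
\tr\big(\Pi_0^{\ket\psi} E_k^\dagger P E_k \rho E_k^\dagger P^\dagger E_k\big) \;=\; \sum_{\psi'\perp\psi} \big|\bra{S_{\psi'}^k} P\ket{S_{\psi}^k}\big|^2,
\end{equation*}
where $\ket{S_{\psi}^k}\EqDef E_k\ket{\psi}\ket{0}^{\otimes m-1}$ is the signed polynomial encoding of $\ket\psi$. Upper-bounding by the projector onto the full code subspace, it suffices to control the average (over $k$) probability that $P$ maps the codeword $\ket{S_\psi^k}$ back into the code space as a non-identity logical operator. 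A reduction from arbitrary $\ket\psi$ to computational-basis encodings follows by bilinearity in $\ket\psi\bra\psi$, using that the logical Paulis of the signed polynomial code (Claims \ref{claim:logicalx}, \ref{claim:logicalz}) permute basis encodings unitarily.

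Next I would analyze the conditions under which $P$ preserves the code for a given key $k$. Using the polynomial structure of the encoding, $X^x$ preserves the code iff the sign-twisted vector $(k_1 x_1,\ldots,k_m x_m)$ is the evaluation vector of some polynomial $h$ of degree $\leq d$ at $\alpha_1,\ldots,\alpha_m$, in which case the induced logical shift is $h(0)$. Similarly, $Z^z$ preserves the code iff the functional $f\mapsto\sum_i z_i k_i f(\alpha_i)$ depends only on $f(0)$; by \Le{inter} this is equivalent to $(k_1 z_1,\ldots,k_m z_m)$ agreeing, modulo the dual code, with a scalar multiple of the interpolation coefficients $c_i$. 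The critical combinatorial input is the minimum distance $d+1$ of the polynomial code on $m = 2d+1$ evaluation points: any non-zero codeword has weight at least $d+1$. Combined with the $\pm 1$ restriction on the $k_i$'s, uniqueness of polynomial interpolation through $d+1$ values forces the number of ``bad'' sign keys to be at most $2$ (a global-sign ambiguity), giving a sum over $k$ of at most $2$, hence the claimed bound $1/2^{m-1}$ after dividing by $2^m$.

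The main obstacle will be the mixed case $x,z\neq 0$: the sign key enters both constraints in a correlated way, with the $X$-condition restricting $k$ only on $\mathrm{supp}(x)$ while the $Z$-condition is global through its dual-code characterization. My plan is to decouple these by first using the $X$-condition to pin down $k\big|_{\mathrm{supp}(x)}$ up to a global-sign choice via the minimum-distance argument, and then using the $Z$-condition together with \Le{inter} to constrain $k$ on the remaining coordinates to a bounded number of choices (again up to the same global sign). Careful bookkeeping should then confirm that the total number of bad sign keys is at most $2$, independent of the precise structure of $x$ and $z$, producing the uniform bound $1/2^{m-1}$ asserted by the lemma.
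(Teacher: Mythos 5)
Your plan follows essentially the same route as the paper: recast the trace in terms of when $P$ maps the signed-polynomial codespace to itself (the paper's notion of a $k$-correlated Pauli, Definition~\ref{deff:correlatedpauli}), characterize code preservation by demanding $(k_1x_1,\dots,k_mx_m)$ be an evaluation vector of a degree-$\leq d$ polynomial (Claims~\ref{correlatedx} and~\ref{correlatedz}), and count how many sign keys admit this. One structural difference is in the mixed case: the paper does not decouple the $X$- and $Z$-conditions. Claim~\ref{correlatedpauli} simply observes that if $x\neq 0$ then the $X$-condition alone is already necessary, and if $x=0$ only the $Z$-condition is in play, so the worry you flag as ``the main obstacle'' is sidestepped rather than resolved by a combined analysis. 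The reduction to basis-state $\ket\psi$ you propose is also unnecessary: code preservation is a $\psi$-independent condition on $(P,k)$.

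However, your sharper observation --- that the $X$-condition pins down $k$ only on $\mathrm{supp}(x)$ --- exposes a genuine gap in the final counting step, and that gap is also present in the paper's own Fact~\ref{signkeyfact} and Claim~\ref{correlatedx}. The condition that $(k_ix_i)_i$ be a codeword constrains $k_i$ only at indices with $x_i\neq 0$; when $x_i=0$ at some coordinates (equivalently, when the interpolating polynomial $f$ vanishes at those $\alpha_i$), the corresponding $k_i$ are entirely free, so the number of consistent sign keys can reach $2\cdot 2^{\,m-|\mathrm{supp}(x)|}$, i.e.\ up to $2^{d+1}$ rather than $2$. Concretely, take $m=3$, $d=1$, $q=5$, $\alpha_i=i$, $P=X^{(2,1,0)}$, and $\rho=\ket{0}\bra{0}\otimes\ket{0}\bra{0}^{\otimes 2}$: then $(2,1,0)$ is the evaluation vector of $f(\alpha)=-\alpha+3$, and each of the four keys $k\in\{(1,1,\pm1),\,(-1,-1,\pm1)\}$ makes $P$ a logical $\tilde X^{\pm 3}$, so $E_k^\dagger PE_k\rho E_k^\dagger P^\dagger E_k$ lies entirely in the range of $\Pi_0^{\ket{0}}$ and the average over $k$ is $1/2$, exceeding the claimed $1/2^{m-1}=1/4$. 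The bound this style of argument actually yields is $2\cdot 2^{-(|\mathrm{supp}(x)|)}\leq 2^{-d}$ --- which is exactly what Theorem~\ref{thm:PolynomialAuth} states --- not the lemma's $2^{-(m-1)}$. Your proposed repair of bringing in the $Z$-condition to constrain the remaining coordinates cannot work when $z=0$ (or more generally when $\mathrm{supp}(z)$ does not cover the free coordinates), so as written the proposal would need either to weaken the target bound to $2^{-d}$ or to supply a different argument showing the extra sign keys do not all contribute with full weight.
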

We can now use the bound from \Le{lem:PauliPolyAuthSec} on each term in the sum in equation \ref{polynomialfinalsum} to obtain:
\begin{eqnarray}\label{eq:FinalCalc} 
\tr(\Pi_0^{\ket{\psi}}\rho_{Bob}) &=& \frac{1}{2^m}\sum_{P\in\mbP_m\setminus\{ \mcI\},k\in\{-1,1\}^m}
\alpha_P\tr(\Pi_0^{\ket{\psi}} E_k^\dagger PE_k\rho E_k^\dagger P^\dagger E_k)) \\
&\leq&\frac{1}{2^{m-1}}\sum_{P\in\mbP_m\setminus\{ \mcI\}}
\alpha_P\\
 &=& \frac{1 -\alpha_I}{2^{m-1}}\\
 &\leq& \frac{1}{2^{m-1}}
\end{eqnarray}
where the equality follows due to Lemma \ref{Decompose}, which provides the following equality:
\begin{equation}
\sum_{P\in\mbP_m}\alpha_P = 1
\end{equation}
\end{proofof}

Similarly to the random Clifford
authentication scheme, interventions that are very close to $\mcI$ are even more
likely to keep the state in the space defined by $P_1^{\ket\psi}$.

We notice that in this scheme a
$q$-dimensional system is encoded into a system of
dimension $q^m=q^{2d+1}$ and achieves security of $\frac{1}{2^{m-1}}$. The Clifford QAS encodes a 2-dimensional system into a system of dimension $2^{1+e}$ and achieves security of $2^{-e}$. The polynomial scheme is somewhat worse
in parameters (since $q$ must be at least 5), but still with exponentially good security.

To encode several registers, one can independently authenticate
each register as in the Clifford case, but in fact we will see that we can use the same sign key
$k$ for all registers, while still maintaining security.
This property will be extremely useful in applying gates as part of the polynomial QPIP protocol and we will use it in Section \ref{sec:IPQ}. For more details on how gates are applied on top of the signed polynomial code, see Section \ref{sec:logicalgates}.

\subsection{Proof of \Le{PauliDiscretization}}\label{subsection:proofofPauliDiscretization}
We will require the following lemma:
\begin{lem} \label{pauliID}For any two generalized Pauli operators $P$ and $Q$
\begin{eqnarray}
Q^\dagger P Q \rho
Q^\dagger P^\dagger Q = P\rho P^\dagger\nonumber 
\end{eqnarray}
\end{lem}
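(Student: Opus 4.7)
The plan is to exploit the fact that generalized Pauli operators commute up to a scalar phase. Specifically, for single-qudit Paulis the defining relation $ZX = \omega_q XZ$ from Section \ref{defgeneralized} implies that for $P = Z^{a}X^{b}$ and $Q = Z^{c}X^{d}$, one has $QP = \omega_q^{\alpha(P,Q)} PQ$ for an integer $\alpha(P,Q)$ depending only on $a,b,c,d$. This extends tensor-factor-wise to the multi-qudit group $\mbP_m$, so any two generalized Paulis $P,Q\in\mbP_m$ satisfy $QP = \omega_q^{\alpha} PQ$ for some integer $\alpha = \alpha(P,Q)$.

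From this I would immediately deduce $Q^\dagger P Q = \omega_q^{\alpha} P$, i.e.\ conjugation of a Pauli by a Pauli returns the same Pauli up to a global phase. Taking the adjoint of this identity (and using that $\omega_q^{\alpha}$ is a unit complex number, so its conjugate is $\omega_q^{-\alpha}$) gives $Q^\dagger P^\dagger Q = \omega_q^{-\alpha} P^\dagger$.

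Substituting both expressions into the left-hand side of the claimed identity, the two opposite phases cancel:
\begin{equation}
Q^\dagger P Q\,\rho\,Q^\dagger P^\dagger Q \;=\; \bigl(\omega_q^{\alpha} P\bigr)\,\rho\,\bigl(\omega_q^{-\alpha} P^\dagger\bigr) \;=\; \omega_q^{\alpha}\omega_q^{-\alpha}\,P\rho P^\dagger \;=\; P\rho P^\dagger,
\end{equation}
which is exactly the desired equality. There is no real obstacle here; the only thing worth being careful about is that the phase is a scalar (not a matrix-valued coefficient), which is what guarantees its cancellation. Writing it out cleanly amounts to verifying the commutation phase for the generators and extending by multiplicativity to all of $\mbP_m$.
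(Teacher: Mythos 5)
Your proof is correct and follows essentially the same route as the paper's: both rely on the fact that any two generalized Paulis commute up to a unit scalar phase, and then observe that conjugating $P$ and $P^\dagger$ by $Q$ produces opposite phases which cancel around $\rho$. The only cosmetic difference is that the paper writes the phase abstractly as $\beta$ and cancels $Q^\dagger Q$ in the middle, whereas you first simplify $Q^\dagger P Q = \omega_q^{\alpha} P$ and $Q^\dagger P^\dagger Q = \omega_q^{-\alpha} P^\dagger$ before substituting; the content is identical.
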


\begin{proofof}{\Le{pauliID}}
  From the observation about generalized Pauli operators in \Sec{Back} we know
  that for any two generalized Pauli operators $P,Q$ $PQ=\beta QP$ where
  $\beta$ is some phase (of magnitude 1) dependent on $P$ and $Q$.
\begin{equation}
Q^\dagger P Q \rho
Q^\dagger P^\dagger Q = Q^\dagger(\beta Q P)  \rho (\beta^* P^\dagger Q^\dagger) Q
= P\rho  P^\dagger
\end{equation}
\end{proofof}

We can now proceed to the proof:

\begin{proofof}{\Le{PauliDiscretization}}
We start with:
$$
\frac{1}{|\mbP_m|}\sum_{Q,P,P'\in\mbP_m} (Q\otimes\mcI)^\dagger (P\otimes U_P)(Q\otimes\mcI)\rho (Q\otimes\mcI)^\dagger (P'\otimes U_{P'})^\dagger (Q\otimes\mcI)
$$ 
We regroup elements to write the above expression as
\begin{eqnarray}
\ldots &=& \frac{1}{\mbP_m}\sum_{Q,P,P'\in\mbP_m} (\mcI\otimes U_P)(Q^\dagger PQ \otimes \mcI) \rho (Q^\dagger P'Q \otimes \mcI)^\dagger (\mcI\otimes U_{P'})^\dagger
\end{eqnarray}
We use \Le{pauliTw} and are left only with $P=P'$
\begin{eqnarray}\label{prev2}
{\ldots}  &=&
\frac{1}{\mbP_m}\sum_{P,Q\in\mbP_m} (Q^\dagger PQ \otimes U_P) \rho (Q^\dagger PQ \otimes U_P)^\dagger
\end{eqnarray}
Now we use \Le{pauliID} :
\begin{eqnarray}
{\ldots}  =
\sum_{P\in\mbP_m} (P \otimes U_P) \rho (P \otimes U_P)^\dagger
\end{eqnarray}
\end{proofof}

\subsection{Proof of \Le{lem:PauliPolyAuthSec}}\label{subsection:proofofPauliPoly}
In this section, we will prove \Le{lem:PauliPolyAuthSec} (security against Pauli attacks due to the sign key). 

\begin{proof}
Our goal is to show for a density matrix $\rho = \ket{\psi}\bra{\psi}\otimes\ket{0}\bra{0}^{\otimes m-1}$ and a generalized Pauli operator $P\in\mbP_m\setminus\{\mcI\}$:
\begin{equation}\label{mainequation}
\frac{1}{2^m}\sum_{k\in\{-1,1\}^m}\tr(\Pi_0^{\ket{\psi}} E_k^\dagger PE_k\rho E_k^\dagger P^\dagger E_k)\leq\frac{1}{2^{m-1}}
\end{equation}
Throughout this proof, we will ignore phases which come about from Cliffords conjugating Pauli operators or moving Pauli operators past each other. This is due to the format of $\rho_B$; whenever we manipulate $P$ to obtain a phase $\omega_q^a$, we obtain $\omega_q^{-a}$ by manipulating $P^\dagger$ in the same manner. We first need to develop tools to understand how a generalized Pauli attack affects a signed polynomial state. This is done in the following subsection, after which we will return to the proof of \Le{lem:PauliPolyAuthSec}.

\subsubsection{$k$-Correlated Pauli Operators}
We begin with definitions and their corresponding properties, and then proceed to analyze how generalized Pauli operators affect a signed polynomial state.
\paragraph{Definitions and Properties}
We now define what a correlated Pauli operator is:
\begin{deff}\label{deff:correlatedpauli}
For a sign key $k\in\{-1,1\}^m$, we will call a non identity Pauli operator $Q$ $k$-\textit{correlated} if there exist one qudit states $\ket{\psi}$ and $\ket{\phi}$:
\begin{equation}
QE_k\ket{\psi}\otimes\ket{0}^{m-1} = E_k\ket{\phi}\otimes\ket{0}^{m-1}
\end{equation}
\end{deff}
In other words, $Q$ maps a state encoded with the signed polynomial code to another state with the same encoding and therefore cannot be detected. We will show that a non identity generalized Pauli operator $Q$ can be $k$-correlated for at most 2 sign keys $k$ according to the above definition. We will then show that for all sign keys $k$, all non identity Pauli operators $Q$ which are \textit{not} $k$-correlated can be written as a product of a $k$-correlated Pauli operator $Q_k$ and a non identity, uncorrelated operator $\hat{Q}_k$ of a specific form. $\hat{Q}_k$ will always be detected by $B$'s decoding procedure, as it will change the auxiliary states. This implies that a non identity Pauli operator will be caught with probability $\frac{1}{2^{m-1}}$ (it will be caught for all but at most two sign keys for which it is $k$-correlated). 

Next, we describe what a $k$-correlated Pauli $X$ operator looks like. We will require the following fact.
\begin{fact}\label{signkeyfact}
For $k,\hat{k}\in\{-1,1\}^m$ (where $k\neq\hat{k}$), there exist polynomials $f,g$ of degree at most $d$ such that 
\begin{equation}
(k_1f(\alpha_1),\ldots,k_mf(\alpha_m)) = (\hat{k}_1g(\alpha_1),\ldots,\hat{k}_mg(\alpha_m))
\end{equation}
only if $k = -\hat{k}$. 
\end{fact}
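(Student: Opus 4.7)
The plan is to squeeze out the sign ambiguity by passing to squares and then case-splitting. First I would set $\sigma_i = k_i \hat{k}_i \in \{-1,+1\}$, so that the hypothesized coordinatewise equality rewrites as $f(\alpha_i) = \sigma_i\, g(\alpha_i)$ for each $i \in \{1,\dots,m\}$. Squaring eliminates the $\sigma_i$ and gives $f(\alpha_i)^2 = g(\alpha_i)^2$ at each of the $m = 2d+1$ distinct evaluation points.

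Next I would invoke the degree bound. The polynomial $f^2 - g^2 \in F_q[x]$ has degree at most $2d = m-1$ but it vanishes at $m$ distinct points, so it must be the zero polynomial. Since $F_q[x]$ is an integral domain and $(f-g)(f+g) = f^2 - g^2 = 0$, either $f = g$ or $f = -g$ in $F_q[x]$. Substituting back into $f(\alpha_i) = \sigma_i g(\alpha_i)$ yields, respectively, $(1-\sigma_i) f(\alpha_i) = 0$ for every $i$ in Case~A ($f=g$), or $(1+\sigma_i) f(\alpha_i) = 0$ for every $i$ in Case~B ($f=-g$). In either case, at any index where $f(\alpha_i) \neq 0$, the value of $\sigma_i$ is forced: $\sigma_i = 1$ in Case~A and $\sigma_i = -1$ in Case~B.

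Finally I would close the argument by pigeonhole on the support of $f$. Since $f$ has degree at most $d$, a nonzero $f$ vanishes at at most $d$ of the points $\alpha_1,\dots,\alpha_m$, so $\sigma_i$ is determined at at least $d+1$ coordinates. In the intended nondegenerate situation where $f$ has no zeros on the evaluation set (the case relevant to the invocation of this fact inside the proof of Lemma~\ref{lem:PauliPolyAuthSec}), $\sigma_i$ is pinned down at every coordinate: Case~A gives $k = \hat{k}$ globally, which contradicts the hypothesis $k \neq \hat{k}$, leaving only Case~B, and hence $k = -\hat{k}$, as claimed.

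\textbf{Main obstacle.} The algebra is short; the delicate point is the indices where $f(\alpha_i) = 0$, since there $\sigma_i$ is unconstrained by the displayed equality. Writing this up cleanly will require making explicit the non\-vanishing hypothesis on $f$ (or, equivalently, phrasing the conclusion as ``$k$ and $\hat{k}$ agree up to a global sign on the support of $f$''), and then checking that the application of this fact in Lemma~\ref{lem:PauliPolyAuthSec} indeed supplies a codeword that is nonzero on every evaluation coordinate.
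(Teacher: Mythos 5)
Your squaring argument---pass to $f^2 = g^2$, invoke the degree bound $m = 2d+1$ and the fact that $F_q[x]$ is an integral domain, then split into $f = g$ or $f = -g$---is correct and a bit cleaner than the paper's. The paper instead pigeonholes on the $m$ coordinates: either $k$ and $\hat{k}$ agree on at least $d+1$ of them (forcing $f = g$ by interpolation at those points) or they disagree on at least $d+1$ (forcing $f = -g$), and then reads off the key relation. Both routes rest on the same degree/interpolation principle applied to a difference of low-degree polynomials; yours avoids the front-end case split and isolates the sign ambiguity at the end, which is arguably more transparent, while the paper's version surfaces the $d+1$ threshold explicitly, which matches the intuition for why $m = 2d+1$ is the right code length.

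Your caveat about indices where $f(\alpha_i) = 0$ is genuine, and---worth knowing---the paper's own proof steps over exactly the same hole. Having deduced $f = g$, the paper asserts ``It follows that $k_i = \hat{k}_i$ for all $i$,'' but the hypothesis only yields $(k_i - \hat{k}_i)f(\alpha_i) = 0$, which says nothing at a zero of $f$; the symmetric gap appears in the $f = -g$ branch. (Indeed the degenerate choice $f = g = 0$ satisfies the displayed equality for every pair $k, \hat{k}$, so the statement cannot be literally true as written.) Your instinct to rephrase the conclusion as ``$k$ and $\hat{k}$ agree up to a global sign on the support of $f$,'' or to impose a nonvanishing hypothesis on $f$, is the correct repair. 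It is worth tracing the consequence through Claim~\ref{correlatedx} and Lemma~\ref{lem:PauliPolyAuthSec}: the ``at most $2$ sign keys'' count implicitly relies on the unqualified version of the Fact, and the Pauli vector $x$ appearing there may well have zero coordinates, at which the corresponding $k_i$ is unconstrained.
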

\begin{proof}
There must either be at least $d+1$
indices on which $k$ and $\hat{k}$ agree 
or at least $d+1$ indices on which they differ. First consider the case in which there are at least $d+1$ indices where they agree. Since the values of $k$ and $\hat{k}$ at these indices uniquely define $f$ and $g$, $f$ and
$g$ must be equal. It follows that $k_i = \hat{k}_i$ for all $i$. If 
we instead consider the case when $k$ and $\hat{k}$ differ on at least $d+1$ indices, we obtain
that $f$ is equal to $-g$ and therefore $k_i =-\hat{k}_i$. 
\end{proof}
\begin{claim}\label{correlatedx}
A non identity Pauli operator $X = X^x$ is $k$-correlated if and only if it has the following form:
$$
X^x = \beta X^{k_1f(\alpha_1)}\otimes\cdots\otimes X^{k_mf(\alpha_m)}
$$
where $\beta$ is a phase with $|\beta|^2 = 1$, $f$ is a polynomial of degree at most $d$. The Pauli operator $X^x$ can be $k$-correlated for at most 2 sign keys $k$. 
\end{claim}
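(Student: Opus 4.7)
The plan is to compute the action of $X^x$ on the code-subspace basis $\{\ket{S_a^k}\}_{a\in F_q}$ and extract the structure of those $x$ that preserve this subspace. First I would expand
\[
X^x \ket{S_a^k} \;=\; \frac{1}{\sqrt{q^d}} \sum_{f:\deg f \le d,\, f(0)=a} \ket{k_1 f(\alpha_1)+x_1,\ \ldots,\ k_m f(\alpha_m)+x_m},
\]
and use $k_i^2 = 1$ to rewrite the $i$th coordinate as $k_i\bigl(f(\alpha_i)+k_i x_i\bigr)$. This makes it transparent that the right-hand side is of the form $\ket{S_b^k}$ for some $b\in F_q$ if and only if the shift vector $(k_1 x_1,\ldots,k_m x_m)$ equals the evaluation vector $(h(\alpha_1),\ldots,h(\alpha_m))$ of some polynomial $h$ of degree $\le d$, in which case $b = a + h(0)$.

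For the characterization, the ``if'' direction follows by taking $f := h$, since then $x_i = k_i f(\alpha_i)$ and $X^x = \bigotimes_i X^{k_i f(\alpha_i)}$ carries $\ket{S_a^k}$ to $\ket{S_{a+f(0)}^k}$. For the ``only if'' direction, suppose $X^x$ preserves the code subspace. Since the supports of the $\ket{S_b^k}$ partition (subsets of) the standard basis into disjoint classes indexed by $b$, and $X^x$ merely permutes the standard basis, for each polynomial $f$ of degree $\le d$ with $f(0) = a$ the shifted tuple $(k_1 f(\alpha_1)+x_1, \ldots, k_m f(\alpha_m)+x_m)$ must equal $(k_1 g(\alpha_1),\ldots, k_m g(\alpha_m))$ for a single polynomial $g$ of degree $\le d$. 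Hence $h := g - f$ is a polynomial of degree $\le d$ interpolating $(k_1 x_1,\ldots, k_m x_m)$, and this $h$ is independent of $f$ because $x$ is. Setting $f := h$ recovers the stated form $X^x = \beta\,X^{k_1 f(\alpha_1)}\otimes\cdots\otimes X^{k_m f(\alpha_m)}$ with $\beta = 1$.

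For the bound of $2$, suppose $X^x \ne \mcI$ is both $k$-correlated and $\hat k$-correlated, and assume $k \ne \hat k$. By the characterization above there exist polynomials $f, g$ of degree $\le d$ with $x_i = k_i f(\alpha_i) = \hat k_i g(\alpha_i)$ for all $i$; since $x \ne 0$, both $f$ and $g$ are nonzero. Fact \ref{signkeyfact} then forces $k = -\hat k$, so the set of sign keys for which a given non-identity $X^x$ is correlated is contained in $\{k, -k\}$, establishing the bound. The main obstacle I anticipate is making the ``only if'' step fully rigorous, namely ruling out that $X^x \ket{S_a^k}$ could be a genuine superposition $\sum_b c_b \ket{S_b^k}$ with several nonzero coefficients; this is settled by the disjoint-support observation above, which forces the shift to act uniformly on the standard basis vectors of a single codeword and hence to be the evaluation vector of a single polynomial of degree $\le d$.
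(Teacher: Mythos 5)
Your proposal takes essentially the same route as the paper: characterize the $k$-correlated $X$ operators via the evaluation-vector form, then invoke Fact~\ref{signkeyfact} to deduce the two-sign-key bound. Your treatment of the ``only if'' direction, using disjoint supports of the $\ket{S_b^k}$'s and uniqueness of interpolation, is spelled out more explicitly than the paper's terse phrase, and your observation that $f$ and $g$ are nonzero because $x \neq 0$ is a necessary caveat the paper leaves implicit.

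However, the ``at most 2'' conclusion, in your proof and in the paper's, rests on Fact~\ref{signkeyfact}, and that fact is false as stated once the witnessing polynomials vanish at some evaluation points. A nonzero degree-$\le d$ polynomial $f$ can have up to $d$ roots among the $m = 2d + 1$ points $\alpha_i$; at any such index $i$ the coordinate equation $k_i f(\alpha_i) = \hat k_i g(\alpha_i)$ reads $0 = 0$ regardless of $k_i$, so the sign at that coordinate is unconstrained. Concretely, with $m = 3$, $d = 1$, $f = g$ both equal to $X - \alpha_1$, and $x = (0,\ \alpha_2 - \alpha_1,\ \alpha_3 - \alpha_1)$, the Pauli $X^x$ is $k$-correlated for all four of $(1,1,1)$, $(-1,1,1)$, $(1,-1,-1)$, $(-1,-1,-1)$; these split into two distinct $\pm$-pairs, contradicting ``contained in $\{k, -k\}$.'' More generally, for a fixed non-identity $X^x$ the number of admissible sign keys is $2^{|Z(x)|+1}$ where $Z(x) = \{i : x_i = 0\}$ has $|Z(x)| \le d$, so the true bound is $2^{d+1}$ rather than $2$. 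The offending step in the proof of Fact~\ref{signkeyfact}, ``it follows that $k_i = \hat k_i$ for all $i$,'' is valid only at coordinates where $f(\alpha_i) \neq 0$; your added requirement that $f, g$ be nonzero as polynomials is needed but does not close this gap, since a nonzero polynomial may still vanish at up to $d$ of the $\alpha_i$.
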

\begin{proof}
It follows by Definition \ref{deff:correlatedpauli} that
\begin{equation}
X^{k_1f(\alpha_1)}\otimes\cdots\otimes X^{k_mf(\alpha_m)}
\end{equation}
is $k$-correlated. We will now show that if $X^x$ is $k$-correlated it must have the form above. An $X$ Pauli operator can only be $k$-correlated if it adds a low degree polynomial signed with $k$ to the encoded state it is acting on. Therefore, if it is $k$-correlated, it must equal
\begin{equation}
X^{k_1f(\alpha_1)}\otimes\cdots\otimes X^{k_mf(\alpha_m)}
\end{equation}
for a polynomial $f$ of degree at most $d$. We now show that a non identity Pauli operator $X^x$ can be $k$-correlated for at most 2 sign keys. Assume now that $X$ is also $k'$-correlated. By the argument above, it follows that it must equal
\begin{equation}
X^{k'_1g(\alpha_1)}\otimes\cdots\otimes X^{k'_mg(\alpha_m)}
\end{equation}
for a polynomial $g$ of degree at most $d$. However, Fact \ref{signkeyfact} implies that either $k = k'$ or $k = -k'$. 
\end{proof}

Next, we describe what a $k$-correlated Pauli $Z$ operator looks like:
\begin{claim}\label{correlatedz}
A non identity Pauli operator $Z = Z^z$ is $k$-correlated if and only if it has the following form:
$$
Z^z = \beta Z^{c_1k_1f(\alpha_1)}\otimes\cdots\otimes Z^{c_mk_mf(\alpha_m)}
$$
where $\beta$ is a phase with $|\beta|^2 = 1$, $f$ is a polynomial of degree at most $d$ and $c_i$ is the interpolation coefficient defined in \Le{inter} with the following property:
$$
\sum_{1\leq i\leq m}c_if(\alpha_i) = f(0)
$$
The Pauli operator $Z^z$ can be $k$-correlated for at most 2 sign keys $k$. 
\end{claim}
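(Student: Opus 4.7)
The plan is to adapt the approach of Claim \ref{correlatedx}, with the key new ingredient being that the Fourier-dual description of the signed polynomial code is governed by the interpolation coefficients $c_i$ from Lemma \ref{inter}. I proceed in three steps: recast $k$-correlation as a linear orthogonality condition on $z$, solve the resulting system via a dimension count combined with an interpolation identity, and finally reduce the uniqueness statement to Fact \ref{signkeyfact}.

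In the first step I would compute directly
\[
Z^z\ket{S_a^k} = \frac{1}{\sqrt{q^d}}\sum_{\substack{f:\deg f\leq d\\f(0)=a}}\omega_q^{\sum_i z_ik_if(\alpha_i)}\ket{k_1f(\alpha_1),\ldots,k_mf(\alpha_m)},
\]
so $Z^z$ sends the encoded state to an encoded state (up to an overall phase $\beta$) if and only if the phase $\omega_q^{\sum_i z_ik_if(\alpha_i)}$ depends only on $f(0)$. Writing $g=f_1-f_2$ for two polynomials with the same constant term, this is equivalent to the $F_q$-linear condition $\sum_i z_ik_ig(\alpha_i)\equiv 0\pmod q$ for every polynomial $g$ with $\deg g\leq d$ and $g(0)=0$.

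In the second step I would solve this linear system by dimension counting. The space of polynomials $g$ with $g(0)=0$ and $\deg g\leq d$ is $d$-dimensional, and the corresponding $d$ constraint functionals on $F_q^m$ are linearly independent (otherwise a nonzero polynomial of degree $\leq d$ would vanish at all $m=2d+1$ points $\alpha_i$), so the solution space has dimension exactly $m-d=d+1$. On the other hand, the family $z_i=c_ik_if(\alpha_i)$ for $f$ of degree $\leq d$ lies in the solution space, since
\[
\sum_i c_ik_i^2 f(\alpha_i)g(\alpha_i)=\sum_i c_i(fg)(\alpha_i)=(fg)(0)=f(0)\cdot 0=0,
\]
using $k_i^2=1$ together with $\deg(fg)\leq 2d=m-1$ so that Lemma \ref{inter} applies. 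As the map $f\mapsto(c_if(\alpha_i))_i$ is injective on polynomials of degree $\leq d$ (again since a nonzero such polynomial cannot vanish at $m$ points), this family is $(d+1)$-dimensional and therefore fills the entire solution space. This yields the claimed form, with $\beta$ absorbing any scalar multiplier coming from the generalized Pauli group.

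For the bound on the number of sign keys, suppose $Z^z$ is both $k$-correlated via a polynomial $f$ and $k'$-correlated via a polynomial $g$; equating the resulting exponents gives $c_ik_if(\alpha_i)=c_ik'_ig(\alpha_i)$ for every $i$, and cancelling the nonzero interpolation coefficients $c_i$ yields $k_if(\alpha_i)=k'_ig(\alpha_i)$ for all $i$. Fact \ref{signkeyfact} then forces $k'=\pm k$, so $Z^z$ can be $k$-correlated for at most two sign keys. The main subtlety of the argument lies in verifying that the ``stabilizer plus logical'' family $\{(c_ik_if(\alpha_i))_i:\deg f\leq d\}$ exhausts the $V^\perp$ solution space; this hinges on the choice $m=2d+1$ which makes products $fg$ of two degree $\leq d$ polynomials interpolatable via the $c_i$, and is the only place where self-duality of the signed polynomial code enters.
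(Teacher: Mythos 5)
Your proof is correct, and it takes a genuinely different route from the paper's. The paper reduces the $Z$ case to the already-proven $X$ case (Claim \ref{correlatedx}): it conjugates the $k$-correlation identity by the logical Fourier transform $\wt F$ of Claim \ref{claim:fourier}, uses $\wt F Z^z \wt F^\dagger \propto X^{-c_1^{-1}z_1}\otimes\cdots\otimes X^{-c_m^{-1}z_m}$, and then imports both the structural form of the correlated operator and the two-key bound from Claim \ref{correlatedx}. You instead argue directly: compute the action of $Z^z$ on $\ket{S_a^k}$, observe that (by disjointness of supports of distinct codewords and the translation-by-$f_0$ trick) $k$-correlation is equivalent to the linear condition $\sum_i z_ik_ig(\alpha_i)=0$ for all $g$ with $\deg g\le d$ and $g(0)=0$, and then identify the $(d+1)$-dimensional solution space with $\{(c_ik_if(\alpha_i))_i : \deg f\le d\}$ by a dimension count, with Lemma \ref{inter} applied to the product $fg$ of degree $\le 2d=m-1$ supplying one containment and injectivity supplying dimension equality. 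The Fourier route is shorter given the logical-gate machinery the paper has already set up; your direct linear-algebra route is more self-contained (it never invokes the logical Fourier gate), establishes both directions of the iff at once rather than separately, and makes transparent exactly where the self-duality constraint $m=2d+1$ enters. For the "at most two sign keys" statement both proofs ultimately rely on Fact \ref{signkeyfact} in the same way (the paper implicitly via Claim \ref{correlatedx}, you directly after cancelling the nonzero $c_i$), so that part of the argument is essentially identical in both.
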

\begin{proof}
We first show that if $Z^z$ is $k$-correlated it must also have the form above. Assume that $Z^z$ for $z\in F_q^m$ is $k$-correlated. Then it follows from Definition \ref{correlatedpauli} that there exist one qudit states $\ket{\psi}$ and $\ket{\phi}$ such that:
\begin{equation}\label{equationfromdef}
Z^z E_k\ket{\psi}\otimes\ket{0}^{m-1} = E_k\ket{\phi}\otimes\ket{0}^{m-1}
\end{equation}
Now if we apply a logical Fourier operator (described in Claim \ref{claim:fourier}), the left hand side of the above equation becomes:
\begin{eqnarray}
\tilde{F}Z^z E_k\ket{\psi}\otimes\ket{0}^{m-1} &=&
\tilde{F}Z^z \tilde{F}^{\dagger} \tilde{F} E_k\ket{\psi}\otimes\ket{0}^{m-1}\\
&=& \tilde{F}Z^z \tilde{F}^{\dagger} E_kF\ket{\psi}\otimes\ket{0}^{m-1}
\end{eqnarray}
where the last equality follows since $\tilde{F}$ is a logical operator (which is also proven in Claim \ref{claim:fourier}).  The right hand side of equation \ref{equationfromdef} becomes:
\begin{equation}
\tilde{F}E_k\ket{\phi}\otimes\ket{0}^{m-1} = E_kF\ket{\phi}\otimes\ket{0}^{m-1}
\end{equation}
Then (by setting the right and left hand side of the equations equal to each other) we have:
\begin{equation}\label{equationzcorrelation}
\tilde{F}Z^z \tilde{F}^{\dagger} E_kF\ket{\psi}\otimes\ket{0}^{m-1} = E_kF\ket{\phi}\otimes\ket{0}^{m-1}
\end{equation}
Due to the conjugation properties of $\tilde{F}$ (for more details about the conjugation behavior of $\tilde{F}$, see the Fourier description in Section \ref{sec:conjugationproperties}), we have (where $\alpha$ is a phase)
\begin{equation}\label{fourierz}
\tilde{F}Z^z \tilde{F}^{\dagger}= \alpha X^{-c_1^{-1}z_1}\otimes\cdots\otimes X^{-c_m^{-1}z_m}
\end{equation}
and due to equation \ref{equationzcorrelation} we can see that the above operator is $k$-correlated. By Claim \ref{correlatedx}, the $X$ Pauli operator in equation \ref{fourierz} can be $k$-correlated for at most 2 sign keys $k$. It follows that the Pauli operator $Z^z$ can be $k$-correlated for at most 2 sign keys. Finally, if $Z^z$ is indeed $k$-correlated, we can combine the fact that the $X$ operator in equation \ref{fourierz} is $k$-correlated and Claim \ref{correlatedx}, to write $Z^z$ as:
\begin{equation}
Z^{c_1k_1f(\alpha_1)}\otimes\cdots\otimes Z^{c_mk_mf(\alpha_m)}
\end{equation}
for a polynomial $f$ of degree at most $d$. We now need to show the opposite direction: if $Z^z$ can be written as
\begin{equation}
 Z^{c_1k_1f(\alpha_1)}\otimes\cdots\otimes Z^{c_mk_mf(\alpha_m)}
 \end{equation}
 it is $k$-correlated. To see this, we can obtain the following equality from equation \ref{fourierz}
 \begin{equation}
 \tilde{F}Z^z\tilde{F}^\dagger = \alpha X^{-k_1f(\alpha_1)}\otimes\cdots\otimes X^{-k_mf(\alpha_m)}
 \end{equation}
 Since this is a correlated $X$ operator, it follows by Definition \ref{deff:correlatedpauli} that $Z^z$ is a correlated $Z$ operator. 
\end{proof}

We can extend the claims for $X$ and $Z$ Pauli operators to general Pauli operators:
\begin{claim}\label{correlatedpauli}
A non identity Pauli operator can be $k$-correlated for at most 2 sign keys $k$. 
\end{claim}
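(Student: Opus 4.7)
The plan is to reduce the general Pauli case to the pure $X$ and pure $Z$ cases already handled in \Cl{correlatedx} and \Cl{correlatedz}. Write the non-identity generalized Pauli as $P = Z^z X^x$, and suppose it is $k$-correlated, so there exist one-qudit states $\ket{\psi},\ket{\phi}$ with
$$Z^z X^x E_k\bigl(\ket{\psi}\otimes\ket{0}^{\otimes m-1}\bigr) = E_k\bigl(\ket{\phi}\otimes\ket{0}^{\otimes m-1}\bigr).$$
I would argue separately that $X^x$ must be $k$-correlated (or the identity) and, given that, $Z^z$ must be $k$-correlated (or the identity); then the two bounds from \Cl{correlatedx} and \Cl{correlatedz} combine to give the desired restriction on $k$.

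For the first step, the key observation is that $Z^z$ acts diagonally in the computational basis and hence does not change the support of any state. Therefore the computational-basis support of $X^x E_k(\ket{\psi}\otimes\ket{0}^{\otimes m-1})$ coincides with that of the codeword $E_k(\ket{\phi}\otimes\ket{0}^{\otimes m-1})$. Spelling both supports out using the definition of $\ket{S_a^k}$, one sees that for every polynomial $f$ of degree at most $d$ with $f(0)$ in $\mathrm{supp}(\ket{\psi})$ there must exist a polynomial $g$ of degree at most $d$ such that $k_i f(\alpha_i)+x_i = k_i g(\alpha_i)$ for all $i$. This forces $x_i = k_i h(\alpha_i)$ for a single polynomial $h$ of degree at most $d$, which is exactly the $k$-correlated form from \Cl{correlatedx}.

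Once $X^x$ is known to be $k$-correlated it preserves the codespace, so $X^x E_k(\ket{\psi}\otimes\ket{0}^{\otimes m-1}) = E_k(\ket{\psi'}\otimes\ket{0}^{\otimes m-1})$ for some one-qudit $\ket{\psi'}$, and the original equation rewrites as $Z^z E_k(\ket{\psi'}\otimes\ket{0}^{\otimes m-1}) = E_k(\ket{\phi}\otimes\ket{0}^{\otimes m-1})$. By \Cl{correlatedz}, either $Z^z=\mcI$ or $Z^z$ is $k$-correlated. To finish, note $P\neq\mcI$ forces at least one of $X^x, Z^z$ to be non-identity; whichever is non-identity already constrains $k$ to at most two values by the appropriate earlier claim, and the other factor must be $k$-correlated or trivial for the \emph{same} $k$, so the intersection still has at most two sign keys.

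The main subtle point, and where I would spend most care, is the first step: extracting that the $X$ part alone is $k$-correlated from the fact that only the combined operator preserves a codeword. The argument relies on the support-preservation by $Z^z$ together with a careful bookkeeping of which strings $(k_1 f(\alpha_1),\dots,k_m f(\alpha_m))$ appear in the two codewords; the uniqueness of the required polynomial $h$, and hence the reduction to \Cl{correlatedx}, ultimately rests on the interpolation rigidity of degree-$d$ polynomials on $m=2d+1$ points already used in \Fact{signkeyfact}.
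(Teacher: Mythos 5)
Your proof is correct and follows the same route as the paper: decompose $P=Z^zX^x$, observe that the $X$ part must by itself be $k$-correlated, and then invoke Claims \ref{correlatedx} and \ref{correlatedz} to bound the number of compatible sign keys. The one substantive addition is your support-preservation argument (that $Z^z$ is diagonal, so comparing computational-basis supports of the two codewords forces $x_i = k_i h(\alpha_i)$ for a single degree-$\le d$ polynomial $h$), which fleshes out a step the paper's proof asserts without justification.
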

\begin{proof}
Consider a non identity Pauli operator $Z^zX^x$. In order for it to be correlated, $X^x$ must add a low degree signed polynomial to a state encoded by $E_k$ which it is acting on. This means that $X^x$ is $k$-correlated. It follows by Claim \ref{correlatedx} that if $x\neq 0$, the Pauli operator $Z^zX^x$ can be $k$-correlated for at most 2 sign keys $k$. If $x = 0$, Claim \ref{correlatedz} implies that the Pauli operator $Z^zX^x$ can be $k$-correlated for at most 2 sign keys. 
\end{proof}

\paragraph{Correlation Properties of Generalized Pauli Operators}
Now that we have defined $k$-correlation, we can see how a generalized Pauli operator will behave on a signed polynomial state. We begin by showing that for a fixed sign key $k$, a Pauli operator $Q$ can be broken down into a product of a $k$-correlated Pauli operator and an uncorrelated Pauli operator:
\begin{claim}\label{correlateddecomposition}
Let $k$ be a sign key and $Q\in\mbP_m\setminus\{\mcI\}$ be an uncorrelated Pauli operator. Then $Q = Z^zX^x$ can be written as
\begin{equation}
Q = \hat{Q}_k Q_k
\end{equation}
where $Q_k$ is $k$-correlated and $\hat{Q}_k$ is uncorrelated (and in particular, non identity) and can be written (up to a phase) as:
\begin{equation}\label{uncorrelatedform}
\mcI\otimes Z^{\hat{z}_2}\otimes\cdots Z^{\hat{z}_{d+1}}\otimes X^{\hat{x}_{d+2}}\cdots\otimes X^{\hat{x}_{m}}
\end{equation}
where if $z = 0$, $(\hat{z}_2,\ldots,\hat{z}_{d+1}) = 0^d$ and if $x = 0$,  $(\hat{x}_{d+2},\ldots,\hat{x}_m)= 0^d$. 
\end{claim}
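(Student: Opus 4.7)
The plan is to construct $Q_k$ and $\hat{Q}_k$ by polynomial interpolation, exploiting the fact that a polynomial of degree at most $d$ over $F_q$ is uniquely determined by its values on any $d+1$ distinct points. Write $Q = Z^z X^x$ (up to a phase). For the $X$ part, I will use the $d+1$ points $\alpha_1,\ldots,\alpha_{d+1}$ to define the unique polynomial $f$ of degree at most $d$ satisfying $k_i f(\alpha_i) = x_i$ for $i = 1,\ldots,d+1$ (taking $f \equiv 0$ in the degenerate case $x = 0$). The $k$-correlated $X$ piece is then $Q_k^X = X^{k_1 f(\alpha_1)} \otimes \cdots \otimes X^{k_m f(\alpha_m)}$, and the residual $\hat{Q}_k^X \EqDef X^x (Q_k^X)^{-1}$ is identity on the first $d+1$ coordinates by construction and acts as $X^{\hat{x}_i}$ on coordinates $d+2,\ldots,m$, matching the prescribed form.

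Symmetrically, for the $Z$ part I will interpolate $g$ of degree at most $d$ from the $d+1$ points $\alpha_1,\alpha_{d+2},\ldots,\alpha_m$ via $c_i k_i g(\alpha_i) = z_i$ on those indices (taking $g \equiv 0$ if $z = 0$). Setting $Q_k^Z = Z^{c_1 k_1 g(\alpha_1)} \otimes \cdots \otimes Z^{c_m k_m g(\alpha_m)}$ makes $Q_k^Z$ $k$-correlated by \Cl{correlatedz}, and $\hat{Q}_k^Z \EqDef Z^z (Q_k^Z)^{-1}$ is identity on coordinates $1,d+2,\ldots,m$ and acts as $Z^{\hat{z}_i}$ on coordinates $2,\ldots,d+1$. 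Composing, $Q_k = Q_k^Z Q_k^X$ is $k$-correlated and $\hat{Q}_k = \hat{Q}_k^Z \hat{Q}_k^X$ has precisely the form of equation \ref{uncorrelatedform}, with $Q = \hat{Q}_k Q_k$ up to a commutation phase absorbed into $\beta$.

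The two remaining verifications are that $\hat{Q}_k$ is non-identity and uncorrelated. For non-identity: if $\hat{Q}_k = \mcI$ then $Q = Q_k$ would itself be $k$-correlated, contradicting the hypothesis. For uncorrelated: suppose toward contradiction that $\hat{Q}_k$ is $k$-correlated. By \Cl{correlatedx}, its $X$ part must equal $X^{k_1 h(\alpha_1)} \otimes \cdots \otimes X^{k_m h(\alpha_m)}$ for some $h$ of degree at most $d$, yet it is identity on coordinates $1,\ldots,d+1$; so $h$ has $d+1$ zeros and vanishes identically, forcing $\hat{Q}_k^X = \mcI$. The same ``$d+1$ zeros'' argument applied via \Cl{correlatedz} to the $Z$ part, using that it is identity on coordinates $1,d+2,\ldots,m$ and that $c_i k_i \neq 0$ there, forces $\hat{Q}_k^Z = \mcI$; hence $\hat{Q}_k = \mcI$, contradicting the previous step.

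The main obstacle will be the bookkeeping of commutation phases between $X$ and $Z$ factors when rearranging $Q = Z^z X^x$ into the product $\hat{Q}_k Q_k$ after the interpolation; but since both \Cl{correlatedx} and \Cl{correlatedz} already state the $k$-correlated forms only up to a phase $\beta$, these phases can simply be absorbed into the implicit phase of the decomposition. The degenerate cases $x = 0$ or $z = 0$ are handled uniformly by the convention of taking the corresponding interpolating polynomial to be identically zero, which collapses $\hat{Q}_k^X$ or $\hat{Q}_k^Z$ to identity in precisely the way the parenthetical in the statement requires.
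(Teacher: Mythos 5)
Your proof is correct and follows essentially the same route as the paper: you interpolate the same degree-$\le d$ polynomials on exactly the same index sets ($\{1,\ldots,d+1\}$ for the $X$-part, $\{1,d+2,\ldots,m\}$ for the $Z$-part), producing the same $Q_k$ and the same residual $\hat{Q}_k$ supported on the complementary coordinates, with the commutation phase absorbed into $\beta$. The only cosmetic difference is that where the paper dispatches the uncorrelatedness of $\hat{Q}_k$ with a one-line argument (if $\hat{Q}_k$ were $k$-correlated, then $Q = \hat{Q}_k Q_k$ would be a product of two $k$-correlated operators and hence $k$-correlated, contradiction), you instead rerun the ``degree-$d$ polynomial with $d+1$ zeros vanishes'' argument via Claims \ref{correlatedx} and \ref{correlatedz} to conclude $\hat{Q}_k=\mcI$ and contradict non-identity; both are sound, and yours is slightly more self-contained at the cost of a few lines.
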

\begin{proof}
Observe that a signed low degree polynomial is determined by $d+1$ points. For a given sign key $k$ and $d+1$ points $y_{i_1},\ldots,y_{i_{d+1}}\in F_q$, where $i_1,\ldots,i_{d+1}\in\{1,\ldots,m\}$, let
\begin{equation}
s_k(y_{i_1},\ldots,y_{i_{d+1}}) = (k_1f(\alpha_1),\ldots,k_mf(\alpha_m))\in F_q^m
\end{equation}
be the signed polynomial that is obtained by interpolating the $d+1$ points $y_{i_1},\ldots,y_{i_{d+1}}$. Let: 
\begin{equation}
[s_k'(y_{i_1},\ldots,y_{i_{d+1}})]_i = c_i\cdot[s_k(y_{i_1},\ldots,y_{i_{d+1}})]_i
\end{equation}
For $Q = Z^zX^x = Z^{z_1}X^{x_1}\otimes\cdots\otimes Z^{z_m}X^{x_m}$, we claim that 
\begin{equation}\label{correlatedpart}
Q_k = (Z^{s_k'(c_1^{-1}z_1,c_{d+2}^{-1}z_{d+2},\ldots,c_m^{-1}z_m)})(X^{s_k(x_1,\ldots,x_{d+1})})
\end{equation}
is $k$-correlated. Claims \ref{correlatedx} and \ref{correlatedz} imply that both the $Z$ and $X$ operators of $Q_k$ are $k$-correlated. It follows by the definition of a $k$-correlated operator (Definition \ref{deff:correlatedpauli}) that the product of two $k$-correlated operators ($Q_k$ in this case) is $k$-correlated. Now we define $\hat{Q}_k$ such that:
\begin{equation}
\hat{Q}_k = Q Q_k^\dagger
\end{equation}
It can be readily checked, using the definition of $Q_k$ in equation \ref{correlatedpart}, that $\hat{Q}_k$ is of the following form (up to a phase):
\begin{equation}\label{equation:uncorrelatedform}
\hat{Q}_k \equiv \mcI\otimes Z^{\hat{z}_2}\otimes\cdots Z^{\hat{z}_{d+1}}\otimes X^{\hat{x}_{d+2}}\cdots\otimes X^{\hat{x}_{m}}
\end{equation}
Observe that $\hat{Q}_k$ as written in the above equation is uncorrelated. If it was $k$-correlated, then $Q$ would be a product of two $k$-correlated operators ($Q_k$ and $\hat{Q}_k$) which implies that $Q$ is also $k$-correlated, which contradicts our starting assumption that $Q$ is uncorrelated. Observe also that the last line of the claim (about the implication of $z = 0$ or $x = 0$) follows immediately from the definitions of $Q_k$ and $\hat{Q}_k$.
\end{proof}

Observe that an operator of the form of $\hat{Q}_k$ is always detected, as it will change the auxiliary qudits:
\begin{claim}\label{uncorrelated}
For all one qudit states $\ket{\psi}$ and $\ket{\phi}$ and a fixed sign key $k\in\{-1,1\}^m$, an uncorrelated operator $\hat{Q}_k$ of the form described in Claim \ref{correlateddecomposition} in equation \ref{uncorrelatedform} satisfies the following equation:
\begin{equation}
\tr(\Pi_0^{\ket{\psi}} E_k^\dagger \hat{Q}_k E_k \ket{\phi}\bra{\phi}\otimes\ket{0}\bra{0}^{\otimes m-1} E_k^\dagger \hat{Q}_k^\dagger E_k) = 0
\end{equation}
\end{claim}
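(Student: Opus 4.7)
The plan is to reduce the trace identity to the statement that $\hat Q_k\ket{S_\phi^k}$ is orthogonal to the code space of the signed polynomial code, where $\ket{S_\phi^k}=E_k\ket{\phi}\ket{0}^{\otimes m-1}$, and then to verify that orthogonality by two short direct calculations split according to whether the $X$-part of $\hat Q_k$ is trivial.

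For the reduction, note that by \Def{def:encodingcircuit} the encoding $E_k$ is a unitary bijection between $\mathbb{C}^q\otimes\mathrm{span}\{\ket{0}^{\otimes m-1}\}$ and the code subspace $\mathcal{C}_k=\mathrm{span}\{\ket{S_a^k}\}_{a\in F_q}$, so $E_k(\mcI\otimes\ket{0}\bra{0}^{\otimes m-1})E_k^\dagger$ is exactly the projector $\Pi_{\mathcal{C}_k}$ onto $\mathcal{C}_k$. Since $\Pi_0^{\ket\psi}=\big((\mcI-\ket\psi\bra\psi)\otimes\mcI^{\otimes m-1}\big)\big(\mcI\otimes\ket{0}\bra{0}^{\otimes m-1}\big)$, setting $\ket\chi=E_k^\dagger\hat Q_k\ket{S_\phi^k}$ one sees that $\tr(\Pi_0^{\ket\psi}\ket\chi\bra\chi)=0$ whenever $\Pi_{\mathcal{C}_k}\hat Q_k\ket{S_\phi^k}=0$, uniformly in $\ket\psi$. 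By linearity in $\ket\phi$ it therefore suffices to prove $\bra{S_b^k}\hat Q_k\ket{S_a^k}=0$ for all $a,b\in F_q$.

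For the orthogonality, write $\hat Q_k=Z^{\vec z}X^{\vec x}$ (up to a global phase) with $\vec x$ supported on $\{d+2,\ldots,m\}$ and $\vec z$ supported on $\{2,\ldots,d+1\}$, with at least one of $\vec x,\vec z$ nonzero. If $\vec x\neq 0$, expand both signed polynomial states in the computational basis and note that $Z^{\vec z}X^{\vec x}$ shifts the bit-string coordinates only at positions $\ge d+2$; a surviving overlap would force the underlying polynomials $f$ (from $\ket{S_a^k}$) and $g$ (from $\ket{S_b^k}$) to satisfy $f(\alpha_i)=g(\alpha_i)$ for $i=1,\ldots,d+1$, hence $f=g$ by interpolation uniqueness at $d+1$ distinct points, and then would demand $\hat x_i=0$ for every $i\ge d+2$, contradicting $\vec x\neq 0$. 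If instead $\vec x=0$ and $\vec z\neq 0$, then $Z^{\vec z}$ is diagonal, so only $a=b$ can contribute; parametrizing polynomials $f$ of degree $\le d$ with $f(0)=a$ by the independent values $f(\alpha_2),\ldots,f(\alpha_{d+1})\in F_q$ factorizes the diagonal matrix element as $\tfrac{1}{q^d}\prod_{i=2}^{d+1}\sum_{y\in F_q}\omega_q^{\hat z_i k_i y}$, and each inner sum is a nontrivial additive character sum on $F_q$ which vanishes unless $\hat z_ik_i=0$; since $k_i\in\{\pm 1\}$ and $q$ is prime with $q\ge 5$, the condition $\vec z\neq 0$ guarantees at least one vanishing factor.

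The main conceptual step is the reduction via the identity $E_k(\mcI\otimes\ket{0}\bra{0}^{\otimes m-1})E_k^\dagger=\Pi_{\mathcal{C}_k}$, which converts the trace vanishing into a pure code-space orthogonality question and explains why it must hold for \emph{every} $\ket\psi$. Once this is isolated, the remaining two cases are driven by exactly two foundational facts, namely that a polynomial of degree at most $d$ is determined by its values at $d+1$ distinct points, and that a nontrivial additive character of $F_q$ sums to zero; no further properties of $\hat Q_k$ are needed beyond the structural form guaranteed by \Cl{correlateddecomposition}.
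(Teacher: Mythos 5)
Your proof is correct, but it takes a genuinely different route from the paper's. The paper proves the claim by conjugating $\hat{Q}_k$ through the explicit circuit decomposition $E_k = D_k(\mcI\otimes F^{\otimes d}\otimes\mcI)$, showing via the gate-by-gate conjugation rules (inverse \textit{SUM}s leave $\hat Q_k$ alone because the controls have no $X$ part and the targets have no $Z$ part; the multiplication acts on the first register, which is $\mcI$; the inverse Fouriers convert the $Z$s on registers $2,\dots,d+1$ to $X$s) that $E_k^\dagger\hat Q_k E_k = \mcI\otimes X^{\hat z_2}\otimes\cdots\otimes X^{\hat z_{d+1}}\otimes X^{\hat x_{d+2}}\otimes\cdots\otimes X^{\hat x_m}$ up to a phase. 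Since $\hat Q_k\neq\mcI$, this operator flips at least one of the $m-1$ auxiliary $\ket 0$ qudits, so the projection onto $\ket 0\bra 0^{\otimes m-1}$ embedded in $\Pi_0^{\ket\psi}$ vanishes. You instead replace $\Pi_0^{\ket\psi}$ with the weaker projector $\mcI\otimes\ket 0\bra 0^{\otimes m-1}$, push it through $E_k$ to get $\Pi_{\mathcal{C}_k}$, and verify $\bra{S_b^k}\hat Q_k\ket{S_a^k}=0$ directly in the computational basis, splitting on whether $\hat Q_k$ has an $X$ component (interpolation uniqueness on the first $d+1$ coordinates) or not (a nontrivial additive character sum over $F_q$). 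The two arguments verify the same underlying fact; the paper's version buys a tighter link to the circuit-conjugation machinery it reuses later (e.g.\ in Corollary \ref{soundnesstrivialpaulis}, where the explicit form of $E_k^\dagger\hat Q_k E_k$ is needed again), while yours is self-contained and avoids having to know the internal structure of $D_k$ at all, relying only on the definition of the signed polynomial code states and the fact that $E_k$ is a unitary encoding. One small point worth flagging: your reduction in fact proves the stronger statement that $\hat Q_k\ket{S_\phi^k}$ is orthogonal to the entire code space (not merely to the one-dimensional bad subspace picked out by $\Pi_0^{\ket\psi}$), which is why the vanishing is uniform in $\ket\psi$; the paper's conjugation argument proves the same stronger statement implicitly, so the two are equivalent in strength.
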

\begin{proof}
We claim the following equality holds up to a phase:
\begin{equation}\label{conjugationproperties}
E_k^\dagger \hat{Q}_k E_k = \mcI\otimes X^{\hat{z}_2}\otimes\cdots X^{\hat{z}_{d+1}}\otimes X^{\hat{x}_{d+2}}\cdots\otimes X^{\hat{x}_{m}}
\end{equation}
Recall (from Section \ref{sec:encodingcircuit}) that
\begin{equation}
E_k = D_k(\mcI\otimes F^{\otimes d}\otimes\mcI)
\end{equation}
The conjugation behavior of $E_k^\dagger$ can be determined by looking at the conjugation properties of Clifford operators (see Section \ref{sec:conjugationproperties}). As a brief description, recall from Section \ref{sec:encodingcircuit} that $E_k^\dagger$ consists of the interpolation circuit ($D_k^\dagger$), which is a series of inverse controlled sum operations and an inverse multiplication operator on the first register (see Claim \ref{decodingoperations}). The final operation in $E_k^\dagger$ is an inverse Fourier transform ($(\mcI\otimes F^{\otimes d}\otimes\mcI)^\dagger$). Using the conjugation properties given in equations \ref{sumconjugation}, \ref{multiplyconjugation}, and \ref{fourierconjugation} in Section \ref{sec:conjugationproperties}, we obtain the following equalities. Inverse sum operations have the following conjugation behavior (up to a phase):
\begin{equation}\label{sumconjugation2}
SUM^\dagger (Z^{z_1}X^{x_1}\otimes Z^{z_2}X^{x_2}) SUM = Z^{z_1+z_2}X^{x_1}\otimes Z^{z_2}X^{x_2-x_1}
\end{equation}
where the \textit{SUM} operator above is controlled by the left register. The inverse multiplication operation, $M_r^\dagger$ (for $r\neq 0$), has the following conjugation behavior (up to a phase):
\begin{equation}
M_r^\dagger (Z^{z}X^x) M_r = Z^{rz}X^{r^{-1}x}
\end{equation}
Fourier operations have the following conjugation behavior (up to a phase):
\begin{equation}
F^\dagger Z^zX^x F = X^zZ^{-x}
\end{equation}
The inverse sum operations in $D_k^\dagger$ have no effect on $\hat{Q_k}$, since the target registers (registers $1,d+2,\ldots,m$) never have a non zero $Z$ coefficient in equation \ref{conjugationproperties} and the control registers (registers $1,\ldots,d+1$) never have a non zero $X$ coefficient in equation \ref{conjugationproperties}. In other words, the coefficients $x_1$ and $z_2$ in equation \ref{sumconjugation2} will be 0. The multiplication operation (which is in between the inverse sum operations) similarly has no effect on $\hat{Q}_k$; it is acting on the first register, for which both the $Z$ and $X$ coefficient are 0 (since it is $\mcI$). The inverse Fourier operation flips the $Z$ operators of registers $2,\ldots,d+1$ to $X$ operators. 

Now, returning to equation \ref{conjugationproperties}, since $\hat{Q}_k$ was not equal to the identity, it will make at least one of the auxiliary qudits nonzero (recall that the auxiliary qudits are contained in registers $2,\ldots,m$).
\end{proof}

\subsubsection{Proof of \Le{lem:PauliPolyAuthSec}}
We now use the concepts we developed in the above section to prove \Le{lem:PauliPolyAuthSec}. To summarize, what we have shown is that for each sign key $k$, a generalized non identity Pauli operator $Q$ can be broken down into a product of a $k$-correlated operator $Q_k$ and an uncorrelated operator $\hat{Q}_k$ (\Cl{correlateddecomposition}). The uncorrelated operator $\hat{Q}_k$ will always be detected (\Cl{uncorrelated}) and will only be non identity for at most 2 sign keys $k$ (\Cl{correlatedpauli}). Therefore, $Q$ can only preserve a signed polynomial state for at most 2 sign keys $k$. 

Recall that we would like to upper bound the following expression:
\begin{equation}
\frac{1}{2^m}\sum_{k\in\{-1,1\}^m}\tr(\Pi_0^{\ket{\psi}} E_k^\dagger PE_k\rho E_k^\dagger P^\dagger E_k)
\end{equation}
By Claim \ref{correlatedpauli}, $P$ can be $k$-correlated for at most 2 sign keys $k$. Consider one $k$ in the above sum for which $P$ is not $k$-correlated. We can now apply Claim \ref{correlateddecomposition}, to obtain that the term is equal to
\begin{eqnarray}
\cdots &=& \tr(\Pi_0^{\ket{\psi}} E_k^\dagger \hat{P}_kP_kE_k \ket{\psi}\bra{\psi}\otimes\ket{0}\bra{0}^{\otimes m-1} E_k^\dagger (\hat{P}_kP_k)^\dagger E_k) \\
&=& \tr(\Pi_0^{\ket{\psi}} E_k^\dagger \hat{P}_kE_k \ket{\psi_{P_k}}\bra{\psi_{P_k}}\otimes\ket{0}\bra{0}^{\otimes m-1} E_k^\dagger \hat{P}_k^\dagger E_k )\\
&=& 0 
\end{eqnarray}
where the second equality follows from the fact that $P_k$ is a $k$-correlated operator and the final equality follows from Claim \ref{uncorrelated}. Then we only obtain a non zero expression when $P$ is $k$-correlated. It follows that
\begin{equation}
\frac{1}{2^m}\sum_{k\in\{-1,1\}^m}\tr(\Pi_0^{\ket{\psi}} E_k^\dagger PE_k\rho E_k^\dagger P^\dagger E_k)\leq \frac{1}{2^{m-1}}
\end{equation}

\end{proof}




\section{Quantum Interactive Proofs with Polynomial Authentication}\label{sec:IPQ}
In this section, we give a \QPIP\ for \qc\ (providing another proof for \Th{thm:qcircuit}) using the signed polynomial encoding from the previous section. The key advantage of this protocol is that the prover can perform the gates on top of the encoding without knowing the encoding itself. This means that the prover does not need to hand back the qudits to the verifier in order for the verifier to perform the gates; the prover can perform them on his own. This also means that only one way quantum communication is required (the verifier only needs to send qudits at the start of the protocol, and the rest of the communication is classical). 

The key disadvantage of this protocol is the relative difficulty of proving soundness in comparison to the Clifford \QPIP\ protocol (Theorem \ref{thm:CliffordIP}). This difficulty arises due to the difference between \Le{PauliDiscretization} (Pauli decoherence) and \Le{CliffordDiscretization} (Clifford decoherence). 
The strength of Clifford decoherence allows us to prove \Cl{claim:stateformat} (the Clifford state evolution claim), which states that the prover's state throughout the protocol is the correct authenticated state (i.e. the state with the gates applied as requested by the verifier) with an attack independent of the authentication acting on top of it. Essentially, this is because \Cl{claim:stateformat} uses the unitary commutation lemma, \Le{clifSimplify}, and the Clifford decoherence lemma, \Le{CliffordDiscretization}, to change any logical attack (an attack acting \textit{inside} the authentication) to an attack \textit{outside} of the authentication (which no longer preserves the authenticated state). \Cl{claim:stateformat} then allows us to reduce the soundness of the Clifford \QPIP\ to the security of the Clifford \QAS.  

We cannot use Pauli decoherence (\Le{PauliDiscretization}) to prove a claim analogous to \Cl{claim:stateformat} in the polynomial case for the following reason. \Le{PauliDiscretization} shows that averaging over the Pauli conjugations of an operator removes cross terms, thereby mapping the operator to a convex sum over Pauli operators. \Le{CliffordDiscretization} shows that averaging over the Clifford conjugations of an operator not only maps the operator to a convex sum over Pauli operators, but goes one step further to map each non identity Pauli operator to a unifrom mixture over all non identity Pauli operators. This uniform mixture is crucial to the proof of \Cl{claim:stateformat}; the key part of the proof is the application of the unitary commutation lemma (\Le{clifSimplify}) to the mixture, which allows us to shift the prover's attacks to the end of the protocol. 

Since we do not have a claim analogous to \Cl{claim:stateformat} for the polynomial \QPIP, we instead have to monitor how the authenticated state changes throughout the protocol, as a function of the prover's deviation. At a high level, we do this by partitioning the Hilbert space of the prover according to the interaction transcript (as done in \cite{fk2012}). In each partition, the transcript is fixed at the start and then the measurement results of the state are projected onto the fixed transcript to enforce consistency. This method is formalized in \Cl{claim:polystateformat}, which describes the state shared by the verifier and prover throughout the protocol.  

Now since each partition has a fixed interaction transcript, we can shift the prover's attack to the end of the protocol (his attack no longer determines the interaction transcript). After shifting the prover's attack, we can analyze each partition using the same main ideas we used to prove security of the polynomial \QAS\ (Pauli decoherence from \Le{PauliDiscretization} and sign key security from \Le{lem:PauliPolyAuthSec}). 

We begin by discussing how to apply gates on top of the signed polynomial authentication (Section \ref{des:secapp}). We then describe the protocol, introduce necessary notation and assumptions and conclude with proving the soundness and completeness of the protocol.

\subsection{Application of Quantum Gates}\label{des:secapp}
We will describe how the prover performs
a set of universal gates (consisting of the Fourier transform and Toffoli gate) on authenticated qubits by applying only Clifford operators which do not require knowledge of the Pauli or sign keys. The prover does this by using classical
communication with the verifier and authenticated Toffoli states sent by the verifier. As described in Section \ref{app:toffoli}, if given an authenticated Toffoli state, a Toffoli gate can be applied using logical Pauli, \textit{SUM} and Fourier operations, along with measurement. We now describe how to apply these operations on authenticated states, which will complete our description of how the prover performs a universal set of gates. 

\subsubsection{Pauli Operations}
To apply Pauli $X$ and $Z$ operations, the verifier only needs to update his Pauli keys and the prover does not need to do anything. Recall from Section \ref{sec:logicalgates} that the logical $\wt{X}_k$ operator consists of an application of $X^{k_1}\otimes, {\ldots} \otimes X^{k_m}$ where $k\in\{-1,1\}^m$ is the sign key. We claim that this operation can be applied to the authenticated state by the verifier simply changing his Pauli key from $(x,z)$ to $(x - k,z)$. This is because:
\begin{eqnarray}
P_{x,z}\ket{S^k_a} &=& P_{x-k,z}   P_{x-k,z}^\dagger  P_{x,z}\ket{S^k_a}\\
&=&  P_{x-k,z} X^{-(x-k)}Z^{-z} Z^zX^x\ket{S^k_a}\\
&=& P_{x-k,z}(X^{k_1}\otimes, {\ldots} \otimes X^{k_m})\ket{S^k_a} \\
&=&  P_{x-k,z}\wt{X}_k \ket{S^k_a}
\end{eqnarray}
The $Z$ operator is performed in the same manner as the $X$ operator; all that is needed is a change of the Pauli key. We recall that $\wt{Z}_k=Z^{c_1k_1}\odots Z^{c_mk_m}$. We define the vector $\mathbf{t}$ to be $t_i=c_ik_i$. From the same argument as above, it holds that the change of keys must be $(x,z) \rightarrow (x,z -\mathbf{t})$.

\subsubsection{Fourier and \textit{SUM} Operations}
To apply Fourier and \textit{SUM} operations, the verifier needs to update his Pauli keys and the prover needs to apply the corresponding logical gate. For the \textit{SUM} gate, the prover applies the logical \textit{SUM} gate ($\wt{ \textit{SUM}}$ as given in Section \ref{sec:logicalgates}) and the verifier updates his pair of keys (for $x_A,z_A,x_B,z_B\in F_q^m$) from $(x_A,z_A), (x_B,z_B)$ to
$(x_A,z_A-z_B)$ and $ (x_B+x_A,z_B)$ where $A$ is the control register and $B$ is the target register. This is because the logical \textit{SUM} operator is applied on top of the Pauli keys, and must be shifted past. The update operations of the verifier essentially perform this shift:
\begin{eqnarray}
\wt{\textit{SUM}} (Z^{z_A}X^{x_A}\otimes Z^{z_B}Z^{x_B})\ket{S_a^k}\ket{S_b^k} &=& \wt{\textit{SUM}} (Z^{z_A}X^{x_A}\otimes Z^{z_B}Z^{x_B}) \wt{\textit{SUM}}^\dagger \wt{\textit{SUM}}\ket{S_a^k}\ket{S_b^k}\\
&=& (Z^{z_A- z_B}X^{x_A}\otimes Z^{z_B} X^{x_A + x_B}) \wt{\textit{SUM}}\ket{S_a^k}\ket{S_b^k}
\end{eqnarray}
where the last equality is up to a global phase and follows due to the conjugation properties given in Section \ref{sec:conjugationproperties}.

The Fourier gate is applied in a similar way; the prover applies the logical Fourier transform $\wt{F}$ given in Section \ref{sec:logicalgates} (Claim \ref{claim:fourier}) and the verifier updates his keys according to the conjugation behavior of $\wt{F}$, which we can determine from Section \ref{sec:conjugationproperties}. The following equality is up to a global phase:
\begin{equation}
\wt{F} (Z^zX^x) \wt{F}^\dagger = Z^{c_1x_1}X^{-c_1^{-1}z_1}\otimes{\ldots}\otimes
    Z^{c_mx_m}X^{-c_m^{-1}z_m}
\end{equation}
Therefore, for each register $i$, the verifier must  change the key from $(x_i,z_i)$ to $(-c^{-1}_iz_i,c_ix_i)$.

\subsubsection{Measurement}\label{sec:measurement}
 The prover measures the encoded state in the standard basis and sends the resulting string in $F_q^m$ to the verifier. The verifier first removes the entire Pauli key. Note that we are assuming a classical verifier can remove the $Z$ portion of the Pauli key; this is because the Pauli key is acting on a measured string, and phase gates have no effect on standard basis strings. Therefore, applying a $Z$ operator is the same as not applying it. We choose to assume the verifier does apply it because it simplifies the soundness proof of the protocol (specifically, it comes up in the proof of \Cl{claim:statesimplify}). The verifier then applies $D_k^\dagger$ (see Section \ref{sec:encodingcircuit}), obtaining a string $\delta\in F_q^{m}$. If the prover requires the decoded measurement result, the verifier sends the prover the first coordinate of $\delta$ (which should contain the value of the polynomial at 0). If the last $d$ coordinates of $\delta$ are not 0, the verifier records the measurement as invalid and aborts at the end of the protocol.
\\~\\
Observe that the verifier is not applying $E_k^\dagger$ (the full decoding circuit). It turns out that this is actually enough for the interactive protocol, since we only need to be able to catch attack operators involving Pauli $X$ deviations. Attack operators involving $Z$ deviations will not change measurement results. We will see below (in Corollary \ref{soundnesstrivialpaulis}) that  applying $D_k^\dagger$ and checking the appropriate auxiliary qudits allows the verifier to catch Pauli $X$ deviations.

\subsubsection{Conversion to Logical Circuit}\label{sec:conversiontological}
Now that we have described how to apply gates, we can describe how to convert a quantum circuit on $n$ qubits consisting of gates from the above universal set, $U = U_N\cdots U_1$, into a logical circuit acting on authenticated states. Assume $U$ contains $L$ Toffoli gates. Then 
\begin{equation}
U = A_LT_L\cdots A_1T_1A_0
\end{equation}
where $A_i$ is a Clifford circuit. To apply $U$ to authenticated states, we instead apply 
\begin{equation}
\tilde{A}_L\tilde{T}_L\cdots \tilde{A}_1\tilde{T}_1\tilde{A}_0
\end{equation}
where $\tilde{A}_L$ denotes a logical operation, as described above. Each $\tilde{T}_i$ involves Clifford entanglement operations (which we will denote by $\tilde{B}_i$) with a new magic state followed by a measurement, the results of which are sent to the verifier. Assume that the $i^{th}$ measurement result decodes to $\beta_i\in F_q^3$. Then $\tilde{T}_i$ consists of $\tilde{B}_i$, followed by measurement, followed by correction $\tilde{C}_{\beta_i}$ (which is the logical version of $C_{\beta_i}$ - see Section \ref{app:toffoli} for a reminder of how the Toffoli gate is applied). Now combine the Clifford entangling operators with the preceding Clifford operators in the circuit:
\begin{equation}
\tilde{Q}_i = \tilde{B}_{i+1}\tilde{A}_i
\end{equation}
where $B_{L+1} = \mcI$. Then to apply $U$ to authenticated states, we first apply $\tilde{Q_0}$. Then for $1\leq i\leq L$, we measure, obtaining $\beta_i\in F_q^3$, and then apply $\tilde{Q_i}\tilde{C}_{\beta_i}$. 

\paragraph{Properties of Toffoli Gate by Teleportation}\label{sec:teleportationproperties}
In order to prove soundness of the polynomial \QPIP, we will need to better understand the result of applying a circuit using Toffoli states (as described immediately above in Section \ref{sec:conversiontological}). More specifically, this understanding will come in to play when we are analyzing the behavior of Pauli attacks on the state at the end of the protocol (this is done in Claim \ref{claim:trivialsoundness} and Claim \ref{claim:nontrivialsoundness}). In this section, we will not work with logical operators and authenticated qudits, but with unauthenticated qudits. However, the analysis can immediately be extended to authenticated qudits. To begin, assume the measurement results $\beta_i\in F_q^3$ of each Toffoli gate are fixed beforehand. Then the circuit which will be applied (as described above) is:
\begin{equation}\label{equation:toffolicircuit}
Q_LC_{\beta_L}\cdots Q_1C_{\beta_1}Q_0
\end{equation} 
We will now provide a fact (used in Claim \ref{claim:trivialsoundness} and Claim \ref{claim:nontrivialsoundness}) which characterizes what the state looks like (including measurement results) after applying the circuit in equation \ref{equation:toffolicircuit} on $n + 3L$ qudits (the circuit acts on $n$ input qudits initially in state $\ket{\phi}$ and $L$ Toffoli states of 3 qudits each):
\begin{fact}\label{fact:teleportationstate}
For a string $\beta = (\beta_1,\ldots,\beta_L)\in F_q^{3L}$, where $\beta_i\in F_q^3$, the result of applying 
\begin{equation}
Q_LC_{\beta_L}\cdots Q_1C_{\beta_1}Q_0
\end{equation}
to 
\begin{equation}
\ket{\phi}(\frac{1}{q}\sum_{a,b\in F_q}\ket{a,b,ab})^{\otimes L}
\end{equation}
is
\begin{equation}
\frac{1}{\sqrt{q^{3L}}}\sum_{l\in F_q^{3L}}\ket{l}\ket{\psi}_{\beta,l}
\end{equation}
where $\ket{\phi}$ is a state on $n$ qudits and $\ket{\psi}_{\beta,l}$ is a state on $n$ qudits which equals $U\ket{\phi}$ if $\beta = l$. 
\end{fact}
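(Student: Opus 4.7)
\textbf{Proof proposal for Fact \ref{fact:teleportationstate}.} The plan is to prove this by induction on $L$, exploiting the principle of deferred measurement to reinterpret each fixed Clifford correction $C_{\beta_i}$ as the adaptive correction in the standard Toffoli-by-teleportation protocol of Section \ref{app:toffoli}, but with the teleportation-measurement outcome $l_i \in F_q^3$ kept coherent rather than sampled. For the base case $L=0$, the circuit collapses to $Q_0 = A_0 = U$, there is no magic state, the sum over $l$ is empty, and the statement reduces to $U\ket\phi$.

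For the inductive step, I would expand the circuit as
\begin{equation}
Q_L C_{\beta_L}\cdots Q_1 C_{\beta_1} Q_0 \;=\; A_L\, C_{\beta_L}\,B_L\,A_{L-1}\,C_{\beta_{L-1}}\,B_{L-1}\cdots A_1\,C_{\beta_1}\,B_1\,A_0,
\end{equation}
and peel off the last Toffoli. The key lemma I would extract from the calculation preceding equation \ref{teleportationstateaftermeas} is that, for any three-qudit input $\ket{c,d,e}$, the entangling block $B_L$ applied to $\ket{c,d,e}\otimes\frac{1}{q}\sum_{a,b}\ket{a,b,ab}$ produces $\frac{1}{\sqrt{q^3}}\sum_{l_L\in F_q^3}\ket{l_L}_{\text{meas}}\otimes\ket{\chi_{l_L}}$ on the three ``measurement'' qudits plus three surviving qudits, with the property that applying the correction $C_{l_L}$ in each branch yields $\ket{c,d,e+cd}=T_L\ket{c,d,e}$. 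Consequently, applying the fixed correction $C_{\beta_L}$ yields, on the branch $l_L=\beta_L$, precisely $T_L\ket{c,d,e}$, and, on every other branch, some (in general incorrect) three-qudit state that I will simply absorb into the definition of $\ket\psi_{\beta,l}$. Applying the subsequent Clifford $A_L$ does not touch the measurement register, so this branch-by-branch structure is preserved. Combining with the inductive hypothesis applied to $Q_{L-1}C_{\beta_{L-1}}\cdots Q_0$ (which acts on $\ket\phi$ tensored with the first $L-1$ magic states) gives the claimed decomposition
\begin{equation}
\frac{1}{\sqrt{q^{3L}}}\sum_{l\in F_q^{3L}}\ket{l}\ket{\psi}_{\beta,l},
\end{equation}
with $\ket\psi_{\beta,l}=U\ket\phi$ precisely on the diagonal branch $l=\beta$, since there every Toffoli teleportation has had its correct adaptive correction applied in sequence.

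The main obstacle will be the bookkeeping of the tensor structure across inductive steps: the measurement register produced by $B_i$ lives in three qudits that must be left untouched by all later $B_j$ ($j>i$), all later corrections $C_{\beta_j}$, and by $A_L$, so that the final Hilbert space really does factor as $(F_q^3)^{\otimes L}\otimes(F_q)^{\otimes n}$ with the first factor holding $l=(l_1,\ldots,l_L)$. This is built into how Section \ref{sec:conversiontological} defines $Q_i = B_{i+1}A_i$, but I would have to verify explicitly that the index ranges are disjoint. Once that factorization is in place, the ``diagonal'' identity $\ket\psi_{\beta,\beta}=U\ket\phi$ follows immediately from telescoping the equality
\begin{equation}
A_i\bigl(T_i\ket{\cdot}\bigr) \;=\; A_i\,C_{\beta_i}\,B_i\,(\ket{\cdot}\otimes\text{magic})\big|_{l_i=\beta_i}
\end{equation}
for $i=1,\ldots,L$.
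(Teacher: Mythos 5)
Your proposal is correct and matches the paper's proof in all essentials: the paper likewise analyzes a single Toffoli-by-teleportation step, rewriting the pre-measurement state of equation \ref{beforemeasurement} as a coherent superposition $\frac{1}{\sqrt{q^3}}\sum_{x,y,z}\ket{x,y,z}\otimes(T(X^x\otimes X^y\otimes Z^z)T^\dagger)^\dagger T\ket{\psi}$, applies the fixed correction $C_\beta$, observes that the branch $l=\beta$ yields $T\ket{\psi}$, and then states that the full claim follows by applying this analysis at each Toffoli. You simply make the paper's "apply repeatedly" explicit as an induction on $L$ with a trivial base case, and you correctly flag the one point the paper leaves implicit (that the $3L$ measurement registers created by the $B_i$ are pairwise disjoint and untouched by later $B_j$, $C_{\beta_j}$, and $A_L$, so the Hilbert space factors as claimed).
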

Before proving the fact, observe that if we project the first register containing $l$ onto $\beta$, we obtain the state $U\ket{\psi}$. This makes sense; if the measurement results obtained are the same ones we fixed for the Clifford corrections, then each Toffoli gate is applied as intended. Moreover, note that without this projection, each $\l\in F_q^{3L}$ is equally probable. We now prove the fact.

\begin{proofof}{Fact \ref{fact:teleportationstate}}
First consider what happens if we would like to apply one Toffoli (as described above) to a 3 qudit state $\ket{\psi}$ using a magic state. After the Clifford operations entangling $\ket{\psi}$ and the magic state, but preceding the measurement (i.e. at the stage of equation \ref{beforemeasurement}), the state is:
\begin{eqnarray}
\frac{1}{\sqrt{q^3}}\sum_{a,b,l\in F_q}\omega^{-l e}\ket{a,b,ab+e,c-a,d-b,l} &=& \frac{1}{\sqrt{q^3}}\sum_{x,y,z\in F_q}\omega^{-ze}\ket{c-x,d-y,(c-x)(d-y)+e}\ket{x,y,z}\nonumber\\
&=& \frac{1}{\sqrt{q^3}}\sum_{x,y,z\in F_q} ((T(X^x\otimes X^y\otimes Z^z)T^\dagger)^\dagger T\ket{\psi})\ket{x,y,z}
\end{eqnarray}
We can write the state in this format because we know that when the measurement result is $x,y,z$, the operation $T(X^x\otimes X^y\otimes Z^z)T^\dagger$ corrects the state to $T\ket{\psi}$. Let's swap the first 3 registers with the last 3 registers (to be consistent with how the state is written in Fact \ref{fact:teleportationstate}). The state is now:
\begin{equation}
\frac{1}{\sqrt{q^3}}\sum_{x,y,z\in F_q} \ket{x,y,z}(T(X^x\otimes X^y\otimes Z^z)T^\dagger)^\dagger T\ket{\psi}
\end{equation}
Now let's assume the correction operator applied is $T(X^a\otimes X^b\otimes Z^c)T^\dagger$. The result is:
\begin{equation}
\frac{1}{\sqrt{q^{3}}}\sum_{x,y,z\in F_q} \ket{x,y,z} \otimes \ket{\psi}_{(a,b,c),(x,y,z)}
\end{equation}
where 
\begin{equation}
\ket{\psi}_{(a,b,c),(x,y,z)} = T(X^{a-x}\otimes X^{b-y}\otimes Z^{c-z}) \ket{\psi}
\end{equation}
Observe that 
\begin{equation}
\ket{\psi}_{(a,b,c),(a,b,c)} = T\ket{\psi}
\end{equation}
In other words, when the correction $\beta$ matches the measurement result $l$, we get the desired result: the application of a Toffoli gate to the initial state. To prove the fact, the above analysis is simply applied each time the Toffoli protocol is performed. 
\end{proofof}

\subsection{Protocol Description}
\begin{protocol}\textbf{Polynomial based Interactive Proof for
    \qc} \label{prot:PolynomialIP}
Fix a security parameter $\epsilon$.  Given is a
 quantum circuit on $n$ qubits
consisting of gates from the above universal set, $U=U_N{\cdots}U_1$, which can be converted to a logical circuit on authenticated qudits as in Section \ref{sec:conversiontological}. We assume there are $L$ Toffoli gates. We assume the circuit $U$ has error
 probability $\le \gamma$.
The verifier sets  $m = \lceil{\log{\frac 1 \epsilon}}\rceil + 1$, $d = \frac{m-1}{2}$ and
 uses $3$ registers of $m$ qudits each, where each qudit is
of dimensionality $q>m$.  The verifier uses the
polynomial \QAS\ with security parameter $d$ to authenticate
$n$ input qudits and $L$ Toffoli states and sends the authenticated states to the prover. The verifier uses
the same sign key (but independent Pauli keys) for each state.
This is done sequentially using $3m$ qudits at a time. Round 0 consists of the prover and verifier performing the Clifford gates $\tilde{Q}_0$. 
At the start of round $i$, for $1\leq i\leq L$, the prover and verifier perform the measurement (as described in Section \ref{sec:measurement}) on the $3m$ qudits as required for the $i^{th}$ Toffoli gate. The verifier sends the prover the decoded measurement result, and then they jointly perform the Clifford corrections required to complete the Toffoli gate and the Clifford circuit $\tilde{Q}_i$. In round $L+1$ (the final round), the verifier and prover perform the measurement of the first authenticated qudit (the verifier does not provide the prover with the decoded measurement result). The verifier aborts if the measurement results from any round were stored as invalid (see Section \ref{sec:measurement}). If he does not abort, he accepts or rejects
according to the final decoded measurement outcome.
\end{protocol}
\begin{statement}{\Th{thm:PolynomialIP}}\textit{For $0< \epsilon < 1$ and $\gamma < 1 - \epsilon$, Protocol \ref{prot:PolynomialIP} is a $\QPIP_{O(\log(\frac{1}{\epsilon}))}$ protocol with
  completeness $1-\gamma$ and soundness $\gamma + \epsilon$ for $\qc_{\gamma}$.}
  \end{statement}
  \\~\\
This theorem implies a second proof for \Th{thm:qcircuit}.
The size of the verifier's register is naively $3m$, but using the same idea
as in the Clifford case, $m+2$ suffice. As a reminder, the idea is to send qudits as they are encoded. For the Toffoli state, the verifier begins with 3 qudits, encodes the first one (using $m+2$ registers at this point), sends the first encoded qudit to the prover, and continues. With $\epsilon=1/2$, $m = 3$ (because $m = 2d+1$) giving a register size of 5 qudits of dimension 5 (since $q > m$). Before we provide the proof of the theorem, we introduce some necessary notation and make several observations about the protocol described above. 

\subsection{Assumptions} 
\begin{itemize}
\item{\textbf{The prover's messages are quantum states.}}
Note that although in the protocol the prover sends the verifier classical strings which the verifier then decodes, we can instead assume that the prover sends the verifier qudits, then the verifier decodes and finally measures. This is because the verifier's decoding operations (which consist of removing Pauli keys and applying $D_k^\dagger$, as described in Section \ref{sec:measurement}) commute with standard basis measurement. In other words, if you consider an $m$ qudit density matrix $\rho$, the following equality holds:
\begin{equation}\label{paulisigndecoding}
\sum_{j\in F_q^m} D_k^\dagger (Z^zX^x)^\dagger \ket{j}\bra{j}\rho \ket{j}\bra{j}Z^zX^xD_k  = \sum_{j\in F_q^m} \ket{j}\bra{j}D_k^\dagger (Z^zX^x)^\dagger \rho Z^zX^xD_k \ket{j}\bra{j} 
\end{equation}

\item{\textbf{The prover's deviation can be delayed until the end of each round.}}
In round $i$ (for $i\geq 1$), we can assume without loss of generality that the prover measures, sends the results to the verifier, receives the decoded measurement results $g(\delta_i)$ from the verifier, and then applies a unitary $\hat{V}_{g(\delta_i)}$ to the authenticated qudits and his extra space. Anything the prover does before the measurement can be shifted to the previous round. $\hat{V}_{g(\delta_i)}$ can be written as
\begin{equation}
\hat{V}_{g(\delta_i)} = \hat{V}_{g(\delta_i)}(\tilde{Q}_i\tilde{C}_{\beta_i})^\dagger \tilde{Q}_i\tilde{C}_{\beta_i} = V_{g(\delta_i)}\tilde{Q}_i\tilde{C}_{\beta_i}
\end{equation}
In other words, we can assume that the prover measures, applies the unitaries requested by the verifier in round $i$ ($\tilde{Q}_i\tilde{C}_{\beta_i}$) and then applies a unitary attack $V_{g(\delta_i)}$. Using similar reasoning, in round 0, we can assume the prover first applies $\tilde{Q}_0$ as requested and then applies a unitary attack. 
\end{itemize}

\subsection{Notation}\label{sec:notation}
Now we provide some of the notation that will be used in the proof. Please see the notation tables in Appendix \ref{app:tables} for all notations together, which hopefully will help in reading this part of the paper, since it is quite heavy on notation. First, throughout this protocol, we will refer to $Z$ and $\mcI$ Pauli operators as trivial and all other Pauli operators (Pauli operators containing the $X$ operator) as non trivial. This is because, as noted in Section \ref{sec:measurement}, trivial Pauli operators cannot change measurement results and non trivial Pauli operators can. Therefore, we only need to ensure that the verifier can detect non trivial Pauli operators. 
\\~\\
For strings $b_1,\ldots,b_j\in F_q^{m}$, let 
\begin{equation}\label{deffunctiong}
g((b_1,\ldots,b_j)) = (b_1(1),\ldots,b_j(1))
\end{equation}
where $b_i(1)\in F_q$ is the first value of the string. This is referred to as the decoded value of a measurement result, as it is the value the verifier will return to the prover after decoding the prover's measurement result.   
\\~\\
If the prover decides to deviate from the protocol, he can apply unitary operators to both the qudits sent by the verifier and his environment. We call the register corresponding to his environment $\mathcal{E}$. For convenience, we also label the other registers of the quantum state shared between the prover and verifier as follows. 
Note that at the start of round 0, the total number of qudits sent to the prover by the verifier is $m' = 3mL + mn$. In every following round, except round $L+1$, $3m$ qudits (which are to be measured) are sent to the verifier ($m$ qudits to be measured are sent in round $L+1$). For $i\geq 1$, we call the register containing all authenticated qudits left at the prover's hands at the beginning of round $i$ (equivalently at the end of round $i-1$) register $\mathcal{P}_i$ (it holds $m(3(L-i+1) + n)$ qudits). We call the register containing all the qudits sent to the verifier in rounds $1,\ldots,i-1$ register $\mathcal{V}_i$ (it holds $3m(i-1)$ qudits and is held by the verifier). We also add another register to the verifier's space (which we call the key register): at the start of round $i$, it contains the state $\tau_i(z,x,k)$, which carries the memory of the sign key $k$ and the Pauli keys $x,z\in F_q^{|\mathcal{P}_i|}$ for those qudits still held by the prover (qudits in register $\mathcal{P}_i$). More precisely:
$$
\tau_i(z,x,k) = \ket{z}\bra{z}\otimes\ket{x}\bra{x}\otimes\ket{k}\bra{k}
$$
Note that we are assuming the verifier no longer keeps record of the Pauli keys for qudits which were already sent to him by the prover; after the verifier uses these Pauli keys to decode, he traces them out of the key register.  
\\~\\
Given this notation, we can now provide Figure \ref{schematic} as an illustration of Protocol \ref{prot:PolynomialIP}. 
\begin{figure}[!ht]\label{protocolfigure}
\centering\includegraphics[scale=.4]{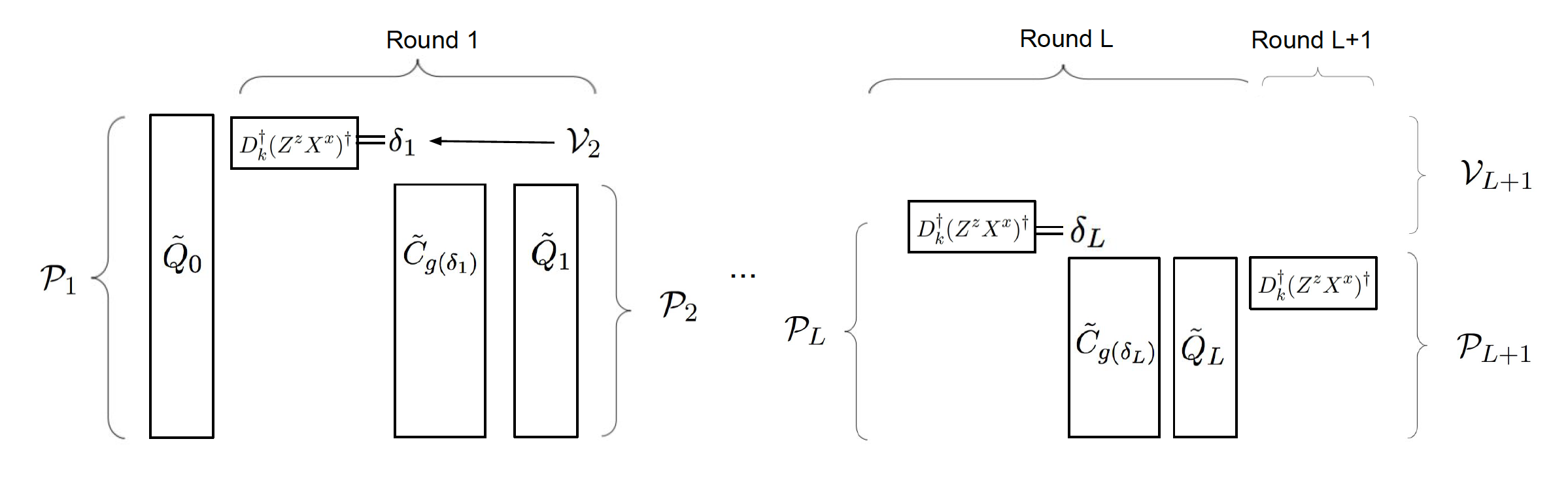}
\caption{This figure illustrates the gates an honest prover would apply during Protocol \ref{prot:PolynomialIP}, and which qubits are in which register during different rounds. In the figure, we are assuming that the measurement result in round $i$ (after the verifier removes the Pauli key and decodes with $D_k^\dagger$) is $\delta_i\in F_q^{3m}$. To simplify the illustration, we have left out the detail that the decoding Pauli keys $(Z^zX^x)^{-1}$ will be different for each register, and that the final decoding (at the start of round $L+1$) acts on $m$ qudits, while the previous $i$ decodings act on $3m$ qudits.  }\label{schematic}
\end{figure}

\subsection{Overall Proof of \Th{thm:PolynomialIP}}\label{sec:securitypolynomial}
\begin{proof}
 The completeness is trivial, similarly to the Clifford case (see \Th{thm:CliffordIP}). To prove soundness, recall that we begin with a \qc\ instance, $U$, which takes as input $\ket{y}^{\otimes n}$ (where $y$ is a classical $n$ bit string), and for soundness we would like to show that if the first qudit of $U\ket{y}^{\otimes n}$ is 0 with probability $1 - \gamma$, the verifier will either abort or not accept the final decoded measurement result with probability $\geq 1 - (\gamma + \epsilon)$, which gives the soundness parameter of $\gamma + \epsilon$. To do this, we will characterize how the prover's state evolves throughout the protocol. 

When each qudit is sent to the verifier at the start of round $i$ as part of the application of the Toffoli gate, the verifier will apply the inverse of the appropriate Pauli keys, interpolate with operator $D_k^\dagger$ (see \Def{def:interpolationcircuit}), and measure the $3m$ received qudits. Let the result of this measurement be $\delta_i\in F_q^{3m}$. We thus denote the effect of the measurement with this result by the projection $\ket{\delta_i}\bra{\delta_i}$ conjugating the density matrix (for $\delta_i\in F_q^{3m}$). Of course we will sum over all the different values of $\delta_i$. Next (in all rounds except round $L+1$), the verifier will send the prover the decoded measurement results $g(\delta_i)$ so the prover will be able to apply the Clifford correction $C_{g(\delta_i)}$, as written in equation \ref{cliffordcorrection}. The verifier will then instruct the prover to apply the next set of Clifford gates $\tilde{Q}_i$ in the circuit. Since the verifier sent the prover $g(\delta_i)$, the prover's next attack can be dependent on this value. 

We now provide the claim characterizing the state shared between the prover and the verifier, as a summation over all  of the measurement results from previous rounds ($\Delta_{i-1} = (\delta_1,\ldots,\delta_{i-1})$):
\begin{claim}[\bfseries Polynomial \QPIP\ State Evolution]\label{claim:polystateformat}
For $1\leq i\leq L+1$, the state shared by the prover and the verifier at the start of round $i$ can be written as:
\begin{equation}
\frac{1}{2^m|\mbP_{m'}|}\sum_{\substack{z,x\in F_q^{|\mathcal{P}_i|}\\k\in\{-1,1\}^m}}\tau_i(z,x,k)\otimes\sum_{\Delta_{i-1},z_1,x_1\in F_q^{|\mathcal{V}_i|}} W_{\Delta_{i-1},\hat{z},\hat{x},k}^i (\rho^k_{g(\Delta_{i-1})}\otimes\rho_{\mathcal{E}}) W_{\Delta_{i-1},\hat{z},\hat{x},k}^{i\dagger}
\end{equation}
where $m' = 3mL + mn$,
\begin{equation}
\hat{z} = (z_1,z), \hat{x} = (x_1,x),
\end{equation}
\begin{equation}
W_{\Delta_{i-1},\hat{z},\hat{x},k}^i = (\ket{\Delta_{i-1}}\bra{\Delta_{i-1}}(D_k^\dagger)^{\otimes |M_{i}|}(Z^{z_1}X^{x_1})^\dagger_{\mathcal{V}_i}\otimes\mcI_{\mathcal{P}_i,E}) U_{g(\Delta_{i-1})} ((Z^{z_1}X^{x_1})_{\mathcal{V}_i}\otimes (Z^{z}X^{x})_{\mathcal{P}_i}\otimes \mcI_{\mathcal{E}})
\end{equation}
where $U_{g(\Delta_{i-1})}$ is a unitary operator dependent on $g(\Delta_{i-1})$ and 
\begin{equation}\label{eq:finalstate}
\rho_{g(\Delta_{i-1})}^k = (\tilde{Q}_{i-1}\tilde{C}_{g(\delta_{i-1})}\cdots \tilde{Q}_1\tilde{C}_{g(\delta_1)}\tilde{Q}_0)\rho^k ((\tilde{Q}_{i-1}\tilde{C}_{g(\delta_{i-1})}\cdots \tilde{Q}_1\tilde{C}_{g(\delta_1)}\tilde{Q}_0)^\dagger
\end{equation}
for $\Delta_{i-1} = (\delta_1,\ldots,\delta_{i-1})\in F_q^{|\mathcal{V}_i|}$, where $\rho$ is the initial state on $3L + n$ qubits (consisting of $L$ Toffoli states and an $n$ qudit input state), $\rho^k$ indicates authentication as described in equation \ref{def:encodedstate}, and $\rho_{\mathcal{E}}$ is the initial state of the prover's environment. 
\end{claim}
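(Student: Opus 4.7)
The proof is by induction on $i \in \{1, \ldots, L+1\}$. The key conceptual device is the transcript-fixing trick adapted from \cite{fk2012}: by placing the projection $\ket{\Delta_{i-1}}\bra{\Delta_{i-1}}$ at the outermost layer of $W^i$, every branch of the summation has a fixed transcript $\Delta_{i-1}$, so the classical message $g(\delta_j)$ that the prover received in round $j$ is fixed, and his attack $V_{g(\delta_j)}$ is a fixed unitary rather than one that genuinely branches on a quantum measurement outcome. This lets us collect all attacks through round $i-1$ into a single unitary $U_{g(\Delta_{i-1})}$.

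For the base case $i = 1$, after the verifier authenticates $\rho$ via $E_k$ and $(Z^z X^x)$ and sends all qudits to the prover, who then applies $\tilde{Q}_0$ and an arbitrary deviation $V_\emptyset$ on $\mathcal{P}_1 \otimes \mathcal{E}$, the joint state is
\[
\frac{1}{2^m |\mbP_{m'}|} \sum_{z,x,k} \tau_1(z,x,k) \otimes V_\emptyset \tilde{Q}_0 (Z^z X^x) \rho^k (Z^z X^x)^\dagger \tilde{Q}_0^\dagger V_\emptyset^\dagger \otimes \rho_\mathcal{E}.
\]
Since $\tilde{Q}_0$ is Clifford, $\tilde{Q}_0 (Z^z X^x) \tilde{Q}_0^\dagger$ is another Pauli, and reindexing the uniform sum over $(z,x) \in F_q^{2|\mathcal{P}_1|}$ (invariant under any permutation of $\mbP_{|\mathcal{P}_1|}$, simultaneously relabeling the paired key register $\tau_1$) rewrites this in the claimed form with $\mathcal{V}_1 = \emptyset$ (so the projection and decoding factors are trivial), $\rho^k_{g(\Delta_0)} := \tilde{Q}_0 \rho^k \tilde{Q}_0^\dagger$, and $U_{g(\Delta_0)} := V_\emptyset$.

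For the inductive step, assume the claim at round $i$ and walk through round $i$. The prover measures the qudits of $M_i \subseteq \mathcal{P}_i$ and sends them to the verifier (equivalently, by equation \ref{paulisigndecoding}, the verifier performs the measurement after decoding); the verifier then removes the Pauli keys on $M_i$, applies $D_k^\dagger$ to each $m$-qudit block of $M_i$, measures to obtain $\delta_i$, transmits $g(\delta_i)$ to the prover, who in turn applies $\tilde{C}_{g(\delta_i)}$, $\tilde{Q}_i$, and a new unitary deviation $V_{g(\delta_i)}$ on $\mathcal{P}_{i+1} \otimes \mathcal{E}$. Inserting the resolution $\mcI = \sum_{\delta_i} \ket{\delta_i}\bra{\delta_i}$ between the verifier's measurement and the delivery of $g(\delta_i)$ fixes the classical message in each branch of the sum. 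We then absorb: the new Pauli removal into an enlarged $(Z^{z_1'} X^{x_1'})^{\dagger}_{\mathcal{V}_{i+1}}$ with $(z_1', x_1')$ extending $(z_1, x_1)$ by the freshly measured $M_i$-keys; the new projection $\ket{\delta_i}\bra{\delta_i}$ into $\ket{\Delta_i}\bra{\Delta_i}$; the honest operation $\tilde{Q}_i \tilde{C}_{g(\delta_i)}$ into $\rho^k_{g(\Delta_i)}$ via equation \ref{eq:finalstate}; and the new attack, together with a conjugation past $\tilde{Q}_i \tilde{C}_{g(\delta_i)}$ so as to align with the new $\rho^k_{g(\Delta_i)}$, into $U_{g(\Delta_i)} := V_{g(\delta_i)} \tilde{Q}_i \tilde{C}_{g(\delta_i)} U_{g(\Delta_{i-1})} (\tilde{Q}_i \tilde{C}_{g(\delta_i)})^{-1}$. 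Reindexing the Pauli sums by the Clifford permutations, as in the base case, then yields the claimed form.

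The main obstacle is the careful bookkeeping of Pauli keys as they migrate through the Cliffords and as the verifier discards the keys on measured qudits. Each application of $\tilde{Q}_j$ or $\tilde{C}_{g(\delta_j)}$ induces a deterministic permutation on the Pauli group, under which the uniform sum over $(z,x)$ is invariant; nonetheless, the correspondence between the indices recorded in $\tau_i$ and the current-time Paulis on $\mathcal{P}_i$ must be tracked so that, after the verifier uses and traces out the keys on $M_i$, the residual state indeed has the form prescribed by $\tau_{i+1}$. A subtler point is the placement of the projection inside $W^{i+1}$: it must come \emph{after} the prover's deviation $V_{g(\delta_i)}$ in round $i$ because that deviation depends on $g(\delta_i)$, so only after conditioning on $\delta_i$ is the deviation a fixed unitary; once conditioned, however, the projection commutes past everything to its right and can be pulled out to the outermost layer as in the statement.
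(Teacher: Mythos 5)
Your proof is correct and follows essentially the same inductive strategy as the paper's: establish the base case from the verifier's initial encoding plus $\tilde{Q}_0$ and the prover's first deviation, then in the inductive step fix the transcript by conditioning on $\delta_i$, absorb the verifier's Pauli/key decoding into the enlarged $\mathcal{V}_{i+1}$ register, conjugate $\tilde{Q}_i\tilde{C}_{g(\delta_i)}$ past $U_{g(\Delta_{i-1})}$ and past the Pauli key (via the verifier's key update), and define $U_{g(\Delta_i)}=V_{g(\delta_i)}\tilde{Q}_i\tilde{C}_{g(\delta_i)}U_{g(\Delta_{i-1})}(\tilde{Q}_i\tilde{C}_{g(\delta_i)})^{\dagger}$ --- matching the paper's definition exactly. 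The one presentational difference is in the base case: you phrase the passage of $\tilde{Q}_0$ past the Pauli key as an explicit reindexing of the uniform sum (with the key register relabeled in tandem), whereas the paper simply states that the verifier's keys are already updated and so the Pauli sits outside $\tilde{Q}_0$ in equation (\ref{eq:startstate}); these are equivalent viewpoints and the reindexing is precisely what the verifier's key-update rule implements.
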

The projection $\ket{\Delta_{i-1}}\bra{\Delta_{i-1}}$ denotes the verifier's measurement (it acts on register $\mathcal{V}_i$), part of which has been sent back to the prover in the form of $g(\Delta_{i-1})$ (hence the dependence of $U$ and $\rho^k$ on $g(\Delta_{i-1})$). 

As a brief aside, recall that (as mentioned at the start of Section \ref{sec:IPQ}) one key difference between the Clifford and polynomial protocols is that the authenticated state throughout the polynomial protocol is not necessarily the correct authenticated state (i.e. the authentication of the state which would result by applying the \qc\ instance $U$). This can be seen by observing the form of $\rho^k_{g(\Delta_{i-1})}$. Note that if the projection $\ket{\Delta_{i-1}}\bra{\Delta_{i-1}}$ acted directly on the state, it would indeed be the correct state. However, the projection acts after the attack $U_{g(\Delta_{i-1})}$, which implies that if $U_{g(\Delta_{i-1})}$ acts non trivially on register $\mathcal{V}_i$, $\rho^k_{g(\Delta_{i-1})}$ will not necessarily be the correct authenticated state.

Before we proceed with the proof of soundness, we will write down the state at the start of round 1 as an example of how \Cl{claim:polystateformat} works. At the start of round 1, the state shared between the verifier and the prover is:
\begin{eqnarray}
\frac{1}{2^m|\mbP_{m'}|}\sum_{\substack {z,x\in F_q^{|R_1|} \\ k\in\{-1,1\}^m}} \tau_{1}(z,x,k)\otimes  V_0 (Z^zX^x\tilde{Q}_0\otimes\mcI_{\mathcal{E}}) \rho^k\otimes\rho_{\mathcal{E}} (Z^zX^x\tilde{Q}_0\otimes\mcI_{\mathcal{E}})^\dagger V_0^\dagger\label{eq:startstate}
\end{eqnarray}
where $\rho^k$ is the initial state of the qudits sent to the prover and $V_0$ is the unitary attack of the prover applied at the end of round 0. Note that as the prover and verifier performed the Clifford operator $\tilde{Q}_0$, the verifier updated his initial Pauli keys to account for this operator (as described in Section \ref{des:secapp}). This is why the Pauli operator $Z^zX^x$ acts after $\tilde{Q}_0$ on $\rho^k$ in equation \ref{eq:startstate}. As you can see, \Cl{claim:polystateformat} holds for $i = 1$. 

We now proceed with the proof of soundness. \Cl{claim:polystateformat} implies that at the start of the final round, round $L+1$, the joint state of the prover's registers, $\mathcal{P}_{L+1}$ and the environment $\mathcal{E}$, and the verifier's registers, $\mathcal{V}_{L+1}$ and the key register containing the sign key and Pauli keys of qudits in $\mathcal{P}_{L+1}$, is:
\begin{equation}\label{startoffinalround}
\frac{1}{2^m|\mbP_{m'}|}\sum_{\substack{z,x\in F_q^{|\mathcal{P}_{L+1}|}\\k\in\{-1,1\}^m}}\tau_{L+1}(z,x,k)\otimes\sum_{\Delta_L,z_1,x_1\in F_q^{|\mathcal{V}_{L+1}|}} W_{\Delta_L,\hat{z},\hat{x},k}^{L+1} (\rho^k_{g(\Delta_L)}\otimes\rho_{\mathcal{E}} )(W_{\Delta_L,\hat{z},\hat{x}}^{L+1})^\dagger
\end{equation}
As in previous rounds, the verifier decodes the final authenticated qudit sent by the prover, with both the Pauli and sign key. Let $\mathcal{F}$ denote the register containing the final authenticated qudit. Let $\mathcal{P}_{final}$ denote the register of the remaining authenticated qudits (this contains all qudits in $\mathcal{P}_{L+1}$ except those in register $\mathcal{F}$). Let $\mathcal{V}_{final} = \mathcal{F}\cup \mathcal{V}_{L+1}$ be the register containing all qudits sent to the verifier during the protocol. 
\begin{corol}\label{finalstate}
The state shared between the prover and verifier after the decoding of register $\mathcal{F}$ is:
\begin{equation}\label{corfinalstateequation}
\rho_{L+1} \EqDef \frac{1}{2^m|\mbP_{m'}|}\sum_{\substack{z,x\in F_q^{|\mathcal{P}_{final}|}\\k\in\{-1,1\}^m}}\tau_{final}(z,x,k)\otimes\sum_{\substack{\Delta_L\in F_q^{|\mathcal{V}_{L+1}|}\\z_1,x_1\in F_q^{|\mathcal{V}_{final}|}}}  V_{\Delta_L,\hat{z},\hat{x},k}(\rho^k_{g(\Delta_L)}\otimes\rho_{\mathcal{E}})V_{\Delta_L,\hat{z},\hat{x},k}^\dagger
\end{equation}
where $\hat{z} = (z_1,z)$, $\hat{x} = (x_1,x)$ and
$$
V_{\Delta_L,\hat{z},\hat{x},k} = ((\ket{\Delta_L}\bra{\Delta_L}\otimes\mcI_{\mathcal{F}})(D_k^\dagger)^{\otimes |\mathcal{V}_{final}|}(Z^{z_1}X^{x_1})_{\mathcal{V}_{final}}^\dagger\otimes\mcI_{\mathcal{P}_{final},{\mathcal{E}}}) U_{g(\Delta_L)} ((Z^{z_1}X^{x_1})_{\mathcal{V}_{final}}\otimes (Z^zX^x)_{\mathcal{P}_{final}}\otimes \mcI_{\mathcal{E}})
$$
\end{corol}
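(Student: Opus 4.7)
The plan is to start from the round $L+1$ expression in Claim \ref{claim:polystateformat} and track exactly three bookkeeping changes produced by the verifier's actions in the final round: (i) the register $\mathcal{F}$ is physically transferred from prover to verifier, so it joins $\mathcal{V}_{L+1}$ to form $\mathcal{V}_{final}$; (ii) the verifier applies $(Z^{z_F}X^{x_F})^\dagger$ followed by $(D_k^\dagger)^{\otimes m}$ on $\mathcal{F}$, where $(z_F,x_F)\in F_q^{m}\times F_q^{m}$ is the Pauli key for $\mathcal{F}$ that is recorded inside $\tau_{L+1}$; and (iii) these Pauli key components for $\mathcal{F}$ are then traced out of the key register by the convention stated in Section \ref{sec:notation}, turning $\tau_{L+1}$ into $\tau_{final}$. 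No measurement of $\mathcal{F}$ is performed at this stage, which is why the statement carries the factor $\mcI_{\mathcal{F}}$ rather than an additional projector.

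To execute the derivation, I would first split the summation variable $(z,x)\in F_q^{|\mathcal{P}_{L+1}|}$ into $(z_F,z_{final})\in F_q^{m}\times F_q^{|\mathcal{P}_{final}|}$ (and analogously for $x$), so that $(Z^zX^x)_{\mathcal{P}_{L+1}}$ factors as $(Z^{z_F}X^{x_F})_{\mathcal{F}}\otimes (Z^{z_{final}}X^{x_{final}})_{\mathcal{P}_{final}}$ in the rightmost factor of $W^{L+1}_{\Delta_L,\hat{z},\hat{x},k}$. I would then insert the verifier's decoding operations on $\mathcal{F}$ to the left of $U_{g(\Delta_L)}$: the Pauli-inverse factor $(Z^{z_1}X^{x_1})^\dagger_{\mathcal{V}_{L+1}}$ is joined by $(Z^{z_F}X^{x_F})^\dagger_{\mathcal{F}}$, and $(D_k^\dagger)^{\otimes |\mathcal{V}_{L+1}|}$ grows to act on $\mathcal{V}_{final}$. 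These new operations commute past the existing $\mathcal{V}_{L+1}$-decoding because they act on disjoint registers.

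The next step is to merge summation indices: relabeling $(z_1,x_1)\in F_q^{|\mathcal{V}_{L+1}|}$ together with $(z_F,x_F)\in F_q^{m}$ into a single pair $(z_1,x_1)\in F_q^{|\mathcal{V}_{final}|}$ yields exactly the range specified in the corollary, and this consolidates both the left-hand $(Z^{z_1}X^{x_1})^\dagger$ factor and the right-hand $(Z^{z_1}X^{x_1})$ factor into the form appearing in $V_{\Delta_L,\hat{z},\hat{x},k}$. Simultaneously, the trace-out of the $(z_F,x_F)$ components from the key register replaces $\tau_{L+1}(z,x,k)$ by $\tau_{final}(z_{final},x_{final},k)$, and the projection $\ket{\Delta_L}\bra{\Delta_L}$ still only enforces consistency on the measurement outcomes from rounds $1,\ldots,L$, so on the extended register $\mathcal{V}_{final}$ it reads $\ket{\Delta_L}\bra{\Delta_L}\otimes \mcI_{\mathcal{F}}$, matching the statement exactly.

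I do not anticipate a substantive obstacle; the argument is essentially a re-indexing and re-grouping of the summation indices of the round $L+1$ state together with the observation that the verifier's decoding on $\mathcal{F}$ combines cleanly with the decoding on $\mathcal{V}_{L+1}$ already present inside $W^{L+1}$, since both consist of register-local inverse Pauli operations followed by register-local $D_k^\dagger$. The only step that deserves a moment of care is that $(z_F,x_F)$ initially appears in three places (the rightmost Pauli factor of $W^{L+1}$, the newly inserted left-side Pauli inverse, and the key register), so I would verify carefully that the trace-out plus relabeling leaves no stray copies of $(z_F,x_F)$ and that the resulting sum over $(z_1,x_1)\in F_q^{|\mathcal{V}_{final}|}$ and $(z,x)\in F_q^{|\mathcal{P}_{final}|}$ is the one indexing $V_{\Delta_L,\hat{z},\hat{x},k}$ in the statement.
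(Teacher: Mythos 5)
Your proposal is correct and follows the same route as the paper's own proof: both arguments observe that the verifier's decoding of $\mathcal{F}$ adds the inverse Pauli and $D_k^\dagger$ factors on $\mathcal{F}$ to the left of $U_{g(\Delta_L)}$, and that since $\mathcal{V}_{final}\cup\mathcal{P}_{final}=\mathcal{V}_{L+1}\cup\mathcal{P}_{L+1}$, the sums over Pauli keys can be re-partitioned so that nothing changes to the right of $U_{g(\Delta_L)}$ while the new decoding merges with the existing $\mathcal{V}_{L+1}$-decoding on the left. Your explicit split of $(z,x)$ into $(z_F,z_{final})$ and the bookkeeping of the three appearances of $(z_F,x_F)$ is a fuller rendering of what the paper states more tersely, but there is no substantive difference.
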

\begin{proof}The only change between this state and equation \ref{startoffinalround} is the decoding of register $F$ (by applying the inverse Pauli key and the signed polynomial decoding $D_k^\dagger$).
Observe that $W_{\Delta_L,\hat{z},\hat{x}}^{L+1}$ and $V_{\Delta_L,\hat{z},\hat{x},k}$ differ only to the left of $U_{g(\delta)}$; nothing changes to the right. This is because in the state above, we are averaging over all Pauli operators acting on registers $\mathcal{V}_{final}$ and $\mathcal{P}_{final}$ and in equation \ref{startoffinalround} we are averaging over all Pauli operators acting on registers $\mathcal{V}_{L+1}$ and $\mathcal{P}_{L+1}$. To see that this is the same, observe that 
\begin{equation}
\mathcal{V}_{final}\cup \mathcal{P}_{final} = \mathcal{V}_{L+1}\cup \mathcal{P}_{L+1}
\end{equation}
To the left of $U_{g(\delta)}$, one additional register ($F$) is decoded first by the corresponding Pauli keys (which is reflected by the replacement of $(Z^{z_1}X^{x_1})_{\mathcal{V}_{L+1}}^\dagger$ with $(Z^{z_1}X^{x_1})_{\mathcal{V}_{final}}^\dagger$) and then by $D_k^\dagger$. The projection does not change (as indicated by $\mcI_{\mathcal{F}}$ in the projection) as we are only decoding register $\mathcal{F}$.  
\end{proof}

Note that the verifier only holds the first key register (containing $\tau_{final}(z,x,k)$) and register $\mathcal{V}_{final}$. Recall that our goal is to show that for the \qc\ instance $U$, if the first qudit of $U\ket{0}^{\otimes n}$ is 0 with probability $1 - \gamma$, the verifier will either abort or not accept the final decoded measurement result with probability $\geq 1 - (\gamma + \epsilon)$. For this purpose, we define the following projection on $\mcV_{final} = \mathcal{F}\cup\mcV_{L+1}$:
\begin{equation}\label{acceptingprojection}
\hat{\Pi}_0 \EqDef (\mcI^{\otimes d+1}\otimes\ket{0}\bra{0}^{\otimes d})^{\otimes 3L}_{\mathcal{V}_{L+1}}\otimes (\ket{1}\bra{1}\otimes\mcI^{\otimes d}\otimes \ket{0}\bra{0}^{\otimes d})_{\mathcal{F}}\EqDef (\hat{\Pi}_0)_{\mathcal{V}_{L+1}}\otimes(\hat{\Pi}_0)_{\mathcal{F}}
\end{equation}
The first term in the above projection describes the space of valid measurement results (i.e. strings which interpolate to low degree polynomials). The second term describes the space of a final qudit which is accepted and decodes to 1. We would like to show that
\begin{equation}\label{initialsoundnessequation}
\tr(\hat{\Pi}_0\rho_{L+1}|_{\mathcal{V}_{final}})\leq \gamma + \epsilon
\end{equation}
In other words, if the decoded measurement result of the final qudit does not yield 1, the verifier rejects or aborts with high probability. Each block of $m$ qudits in the register $\mathcal{V}_{L+1}$ is projected onto $\mcI^{\otimes d+1}\otimes\ket{0}\bra{0}^{\otimes d}$ as we are not looking for a specific decoded value in the measurement registers; we are only checking that the measurement results are valid.

Observe that in order to bound soundness, we only need to look at $\rho_{L+1}$ on $\mathcal{V}_{final}$; the key register containing $\tau_{final}(z,x,k)$ was unnecessary. This is because the keys $z,x$ acting on $\mathcal{P}_{final}$ will not be used; that register is never sent to the verifier. Also, the verifier no longer needs to remember the sign key, since it has already been used to decode the qudits in $\mathcal{V}_{final}$. Therefore, the verifier can trace out the first register containing $\tau_{final}(z,x,k)$. 

Before continuing to prove equation \ref{initialsoundnessequation}, we can simplify $\rho_{L+1}|_{\mathcal{V}_{final}}$:
\begin{claim}[\bfseries Final State]{\label{claim:statesimplify}}
$\rho_{L+1}|_{\mathcal{V}_{final}}$ is equal to
\begin{equation}
\frac{1}{2^m}\sum_{k\in\{-1,1\}^m}\sigma_k = \frac{1}{2^m}\sum_{k\in\{-1,1\}^m}\sum\limits_{P\in\mbP_{|\mathcal{V}_{final}|}}\sigma_k^P
\end{equation}
where
\begin{equation} \label{eq:sigmadef}
\sigma_k^P = \sum\limits_{\Delta_L\in F_q^{|\mathcal{V}_{L+1}|}} \alpha_{P,g(\Delta_L)}\cdot((\ket{\Delta_L}\bra{\Delta_L}\otimes\mcI_{\mathcal{F}})(D_k^\dagger)^{\otimes|\mathcal{V}_{final}|}P)\sigma_{g(\Delta_L)}^k((\ket{\Delta_L}\bra{\Delta_L}\otimes\mcI_{\mathcal{F}})(D_k^\dagger)^{\otimes|\mathcal{V}_{final}|}P)^\dagger
\end{equation}
and $\tr_{\mathcal{P}_{final}}(\rho_{g(\Delta_L)}^k) = \sigma_{g(\Delta_L)}^k$,
\begin{equation}
\alpha_{P,g(\Delta_L)} = \frac{1}{q^{|\mathcal{P}_{final}|}}\tr(U_{g(\Delta_L)}^P(\mcI_{\mathcal{P}_{final}}\otimes\rho_{\mathcal{E}})(U_{g(\Delta_L)}^P)^\dagger)
\end{equation}
and
\begin{equation}
U_{g(\Delta_L)} = \sum_{P\in\mbP_{|\mathcal{V}_{final}|}}P\otimes U_{g(\Delta_L)}^P
\end{equation}
\end{claim}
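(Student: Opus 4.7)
The plan is to derive the claim by tracing out everything the verifier does not hold and then combining two averagings over Pauli keys: the $\mathcal{V}_{final}$ keys $(z_1,x_1)$ will trigger Pauli decoherence on the attack, while the $\mathcal{P}_{final}$ keys $(z,x)$ will trigger Pauli mixing on the state. I would start from the expression for $\rho_{L+1}$ in Corollary \ref{finalstate} and trace out the classical key register carrying $\tau_{final}(z,x,k)$ together with the prover's registers $\mathcal{P}_{final}\otimes\mathcal{E}$. The $\tau_{final}$ piece disappears immediately, leaving an outer average over $k,z,x,z_1,x_1$ of the object $V_{\Delta_L,\hat z,\hat x,k}(\rho^k_{g(\Delta_L)}\otimes\rho_{\mathcal{E}})V_{\Delta_L,\hat z,\hat x,k}^\dagger$, partially traced over $\mathcal{P}_{final}\otimes\mathcal{E}$.

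Next, I would rearrange $V_{\Delta_L,\hat z,\hat x,k}(\cdot)V_{\Delta_L,\hat z,\hat x,k}^\dagger$ into the form $M_{\Delta_L,k}\,\tilde{U}\,\tilde\rho\,\tilde U^\dagger\,M_{\Delta_L,k}^\dagger$, where $M_{\Delta_L,k}=(\ket{\Delta_L}\bra{\Delta_L}\otimes\mcI_{\mathcal{F}})(D_k^\dagger)^{\otimes|\mathcal{V}_{final}|}$ is the measurement-and-decoding piece on $\mathcal{V}_{final}$, $\tilde U=(Z^{z_1}X^{x_1})_{\mathcal{V}_{final}}^\dagger\,U_{g(\Delta_L)}\,(Z^{z_1}X^{x_1})_{\mathcal{V}_{final}}$ is the attack conjugated by the $\mathcal{V}_{final}$ Pauli key, and $\tilde\rho=(Z^zX^x)_{\mathcal{P}_{final}}(\rho^k_{g(\Delta_L)}\otimes\rho_{\mathcal{E}})(Z^zX^x)_{\mathcal{P}_{final}}^\dagger$ is the state dressed by the $\mathcal{P}_{final}$ Pauli key. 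Decomposing $U_{g(\Delta_L)}=\sum_P P\otimes U^P_{g(\Delta_L)}$ with $P\in\mbP_{|\mathcal{V}_{final}|}$ acting on $\mathcal{V}_{final}$, and applying \Le{PauliDiscretization} to the average over $(z_1,x_1)$, collapses the double Pauli sum on $\mathcal{V}_{final}$ to the diagonal, producing $\sum_P (P\otimes U^P_{g(\Delta_L)})\tilde\rho(P\otimes U^P_{g(\Delta_L)})^\dagger$.

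Averaging then over $(z,x)\in F_q^{|\mathcal{P}_{final}|}$, \Le{paulimix} (Pauli mixing) replaces the $\mathcal{P}_{final}$ factor of $\rho^k_{g(\Delta_L)}\otimes\rho_{\mathcal{E}}$ by $\frac{1}{q^{|\mathcal{P}_{final}|}}\mcI_{\mathcal{P}_{final}}\otimes\sigma^k_{g(\Delta_L)}\otimes\rho_{\mathcal{E}}$, where $\sigma^k_{g(\Delta_L)}=\tr_{\mathcal{P}_{final}}(\rho^k_{g(\Delta_L)})$. Since $P$ and $M_{\Delta_L,k}$ live only on $\mathcal{V}_{final}$ while $U^P_{g(\Delta_L)}$ lives only on $\mathcal{P}_{final}\otimes\mathcal{E}$, the operator factors cleanly across the two sides; tracing out $\mathcal{P}_{final}\otimes\mathcal{E}$ produces the scalar $\alpha_{P,g(\Delta_L)}=\frac{1}{q^{|\mathcal{P}_{final}|}}\tr(U^P_{g(\Delta_L)}(\mcI_{\mathcal{P}_{final}}\otimes\rho_{\mathcal{E}})(U^P_{g(\Delta_L)})^\dagger)$ on one side and $(M_{\Delta_L,k}P)\sigma^k_{g(\Delta_L)}(M_{\Delta_L,k}P)^\dagger$ on the other. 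Summing over $\Delta_L$ and $P$ assembles $\sigma_k^P$ as in the claim, and averaging over $k$ finishes the argument.

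The main obstacle will be the bookkeeping around which Pauli keys act on which register and in which order: the $(z_1,x_1)$ keys conjugate $U_{g(\Delta_L)}$ from both sides (enabling Pauli decoherence of its Pauli expansion on $\mathcal{V}_{final}$), whereas the $(z,x)$ keys act one-sidedly on the input state restricted to $\mathcal{P}_{final}$ (enabling Pauli mixing there). The clean factorization of the result into a product of an operator on $\mathcal{V}_{final}$ and a scalar weight hinges on the observation that $M_{\Delta_L,k}$ commutes trivially past $\mathcal{P}_{final}\otimes\mathcal{E}$; once this is in place the rest is the straightforward Pauli twirl and mixing computation sketched above.
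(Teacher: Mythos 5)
Your proposal is correct and follows essentially the same approach as the paper: trace out the key register, then combine Lemma \ref{PauliDiscretization} (applied to the $(z_1,x_1)$ sum, which wraps $U_{g(\Delta_L)}$ on both sides) with Lemma \ref{paulimix} (applied to the $(z,x)$ sum, which acts one-sidedly on the $\mathcal{P}_{final}$ part of the state), and then factor into an operator on $\mathcal{V}_{final}$ and the scalar weight $\alpha_{P,g(\Delta_L)}$. The only cosmetic difference is that you apply Pauli decoherence before Pauli mixing whereas the paper applies them in the reverse order; since the two averages are over independent keys and $(P\otimes U^P_{g(\Delta_L)})$, $M_{\Delta_L,k}$ do not depend on $(z,x)$ while $Q_2\rho Q_2^\dagger$ does not depend on $(z_1,x_1)$, the two steps commute and either order yields the claimed form.
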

Starting from the form of $\rho_{L+1}$ in equation \ref{corfinalstateequation} in Corollary \ref{finalstate}, we show this claim by first summing over $z,x$ (this can be done since $\tau_{final}(z,x,k)$ is traced out), which has the effect of mixing register $\mathcal{P}_{final}$, as shown in the Pauli mixing lemma, \Le{paulimix} (which is analogous to \Le{cliffordmix} and also proven in Appendix \ref{app:backgroundcliffordpauli}). Next, we can use $z_1,x_1$ to decohere (or remove all cross terms of) the part of $U_{g(\Delta_L)}$ acting on register $\mathcal{V}_{final}$ (by applying \Le{PauliDiscretization}). 

Now let's return to our goal of proving equation \ref{initialsoundnessequation}. With the above state simplification, we are now proving:
\begin{equation}\label{soundnessequation}
\frac{1}{2^m}\tr(\hat{\Pi}_0(\sum_k \sigma_k)= \sum_{P\in\mbP_{\mathcal{V}_{final}}}\frac{1}{2^m}\tr(\hat{\Pi}_0\sum_k\sigma_k^P)\leq\gamma + \epsilon 
\end{equation}
We first consider terms $\sigma_k^P$ for which $P$ is trivial (i.e. $P$ consists only of $Z$ and $\mcI$ operators). To prove the following claim, we first observe that trivial Pauli operators have no effect on measurement results, since they commute with the verifier's decoding process (application of $D_k^\dagger$ and the inverse Pauli keys). Given this observation, we can see that the prover's decoded final answer will be 0 with probability $1 - \gamma$ (as it should be), and therefore we can upper bound the projection of the state onto $\hat{\Pi}_0$:
\begin{claim}[\bfseries Trivial Deviation]{\label{claim:trivialsoundness}}
For trivial $P$, 
$$
\frac{1}{2^m}\tr(\hat{\Pi}_0\sum_k\sigma_k^P) \leq \frac{\gamma}{q^{3L}}\sum\limits_{a\in F_q^{3L}} \alpha_{P,a}
$$
\end{claim}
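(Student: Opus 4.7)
The plan rests on a single observation: a trivial Pauli $P$ (built only from $Z$'s and $\mcI$'s) is diagonal in the standard basis, and conjugation by $D_k$ preserves this $Z$-type structure. To see the latter, I would expand $D_k$ using \Cl{decodingoperations} into \textit{SUM} gates and a multiplication $M_r$, and apply the conjugation rules of \Sec{sec:conjugationproperties} (\Eq{sumconjugation} and \Eq{multiplyconjugation}), both of which send $Z$-type Paulis to $Z$-type Paulis. Consequently $P'_k := (D_k^\dagger)^{\otimes|\mcV_{final}|}\, P\, D_k^{\otimes|\mcV_{final}|}$ is itself trivial, and I can rewrite $(D_k^\dagger)^{\otimes}P = P'_k (D_k^\dagger)^{\otimes}$ inside the definition of $\sigma_k^P$ (\Eq{eq:sigmadef}), bringing $P'_k$ against the projection $\ket{\Delta_L}\bra{\Delta_L}\otimes\mcI_{\mathcal{F}}$ on the left and its adjoint against the mirror projection on the right.

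Splitting $P'_k = P'_{k,\mcV}\otimes P'_{k,\mathcal{F}}$, I would then argue that $P'_k$ contributes nothing to $\tr(\hat{\Pi}_0\sigma_k^P)$. On $\mcV_{L+1}$, $P'_{k,\mcV}$ turns each rank-one projection $\ket{\Delta_L}\bra{\Delta_L}$ into a phase, and the left- and right-phases are complex conjugates which cancel. On $\mathcal{F}$, the operator $P'_{k,\mathcal{F}}$ commutes with $\hat{\Pi}_{0,\mathcal{F}}=\ket{1}\bra{1}\otimes\mcI^{\otimes d}\otimes\ket{0}\bra{0}^{\otimes d}$ since both are diagonal in the standard basis, so cyclicity of the trace eliminates it. This collapses the whole expression to the $P=\mcI$ case with the coefficient $\alpha_{P,g(\Delta_L)}$ left in front:
\begin{equation*}
\tr(\hat{\Pi}_0\sigma_k^P)=\sum_{a\in F_q^{3L}}\alpha_{P,a}\sum_{\Delta_L:\, g(\Delta_L)=a}\tr\big[\hat{\Pi}_0(\ket{\Delta_L}\bra{\Delta_L}\otimes\mcI_{\mathcal{F}})(D_k^\dagger)^{\otimes}\sigma_a^k(D_k)^{\otimes}(\ket{\Delta_L}\bra{\Delta_L}\otimes\mcI_{\mathcal{F}})\big].
\end{equation*}

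The inner trace is now an honest-execution probability. By Fact~\ref{fact:teleportationstate}, the unauthenticated teleportation outcomes $l\in F_q^{3L}$ are uniform with probability $q^{-3L}$ each, and conditioned on $l=a$ the remaining computational qudits hold $U\ket{y}^{\otimes n}$. Applying $D_k^\dagger$ block-by-block places $l_i$ in the first coordinate of block $i$, fixes the last $d$ coordinates to $0^{\otimes d}$ with probability $1$, and leaves the middle $d$ coordinates uniform over $F_q^d$. Marginalising over these middle coordinates, the probability of obtaining some valid $\Delta_L$ with $g(\Delta_L)=a$ is exactly $q^{-3L}$; conditioned on this event the state on $\mathcal{F}$ is the partial $D_k^\dagger$-decoding of the first qudit of $U\ket{y}^{\otimes n}$, whose first coordinate equals $1$ with probability at most $\gamma$ by the $\qc_{\gamma}$ NO-case promise (the last $d$ auxiliary coordinates being $0$ deterministically). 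Multiplying the two probabilities, summing over $a$, and averaging over $k$ (the bound is independent of $k$) yields the claimed $\gamma q^{-3L}\sum_a\alpha_{P,a}$.

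The main obstacle will be the phase-bookkeeping in the first step: verifying that every phase generated by pushing $P$ past $D_k^\dagger$ and through the standard-basis projections really pairs up across the $\dagger$-symmetric form of $\sigma_k^P$ for every $\Delta_L$ appearing in the sum, and that no extra $X$-components sneak in from the SUM conjugations once the $Z$-exponents are tracked carefully. Once that commutation is established cleanly, what remains is a routine honest-case probability computation driven by Fact~\ref{fact:teleportationstate} and the $\qc_{\gamma}$ promise.
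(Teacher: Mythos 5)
Your proof is correct, and it takes a genuinely different route from the paper's. The paper reaches the same bound by invoking \Cl{cl:simplifyingsoundness}, which it has already proven for use in both the trivial and nontrivial cases; that claim pre-packages the elimination of the $Z$-part of $P$ \emph{and} the conversion of $D_k^\dagger$ to the full decoder $E_k^\dagger$, after which $E_k^\dagger\sigma_a^k E_k = \sigma_a'$ collapses the conjugation entirely and the auxiliary qudits trace out. You instead stay with $D_k^\dagger$ and do the $P$-elimination by hand: conjugation of a $Z$-type Pauli by the $\textit{SUM}$/$M_r$ gates composing $D_k$ (\Eq{sumconjugation},~\Eq{multiplyconjugation}) stays $Z$-type, then the resulting diagonal operator is killed by phase cancellation against the rank-one projection on $\mcV_{L+1}$ and by commuting through the diagonal $\hat\Pi_{0,\mathcal{F}}$ and cyclicity of trace. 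After that you perform the honest-execution counting (uniformity of $l$ from Fact~\ref{fact:teleportationstate}, the $F^{\otimes d}\ket{0}$ middle block tracing to one, and the $\qc_\gamma$ promise bounding $\bra{1}\tau\ket{1}\le\gamma$), which matches the computation the paper does after its reduction to $E_k^\dagger$. The paper's route has the advantage that \Cl{cl:simplifyingsoundness} is reused verbatim in \Cl{claim:nontrivialsoundness}, so nothing is proven twice; your route avoids introducing $E_k$ at all in this claim and keeps the argument local. The one notational nit is that in your displayed equation the inner sum should really range over $\Delta_L \in G_a$ (the subset whose last $d$ coordinates of each block vanish, as defined in \Eq{goodmeasurementspace}), though this is harmless since $\hat\Pi_0$ annihilates the other terms automatically; and the "phase-bookkeeping obstacle" you flag at the end is not actually an obstacle, since the left and right phases are functions only of $\Delta_L$ and hence pair up exactly as you suspected.
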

Next, we consider terms $\sigma_k^P$ for which $P$ is non trivial. By using \Le{lem:PauliPolyAuthSec}, which implies that $P$ can produce a non zero trace (after the state is projected onto $\hat{\Pi}_0$) for at most 2 values of $k$, we show:
\begin{claim}[\bfseries Nontrivial Deviation]{\label{claim:nontrivialsoundness}}
For non trivial $P$, 
$$
\frac{1}{2^m}\tr(\hat{\Pi}_0\sum_k\sigma_k^P) \leq \frac{1}{q^{3L}2^{m-1}}\sum_{a\in F_q^{3L}}\alpha_{P,a}
$$ 
\end{claim}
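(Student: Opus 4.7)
The plan is to bound $\tr(\hat{\Pi}_0 \sigma_k^P)$ for each fixed sign key $k$ and then sum over $k$; the $\tfrac{1}{2^{m-1}}$ prefactor will arise from sign key security (in the spirit of Lemma \ref{lem:PauliPolyAuthSec}), while the $\tfrac{1}{q^{3L}}$ prefactor will arise from the teleportation structure exhibited by Fact \ref{fact:teleportationstate}.

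First I would observe that $\hat{\Pi}_0 (\ket{\Delta_L}\bra{\Delta_L}\otimes\mcI_{\mathcal{F}})$ vanishes unless every block of $\Delta_L$ has the valid form $(a_b, y_b, 0^d)$; for such valid $\Delta_L$, cyclicity of trace reduces the quantity of interest to $\tr(Q_{\Delta_L, k} P \sigma_{g(\Delta_L)}^k P^\dagger)$, where $Q_{\Delta_L, k} := D_k(\ket{\Delta_L}\bra{\Delta_L}\otimes (\hat{\Pi}_0)_{\mathcal{F}}) D_k^\dagger$ factorises across blocks into a rank-one projector onto a specific polynomial evaluation (on each measurement block) and a projector onto the span of codewords with $f(0) = 1$ (on $\mathcal{F}$). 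Then I would apply Fact \ref{fact:teleportationstate} to $\rho^k_a$, which lets me write $\sigma_a^k = \tr_{\mathcal{P}_{final}}\rho^k_a$ as a double sum over teleportation labels $l,l'\in F_q^{3L}$ carrying a prefactor $\tfrac{1}{q^{3L}}$ in front.

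The heart of the argument is a direct computation of the block overlaps $\bra{\delta^{(b)}} D_k^\dagger P^{(b)}\ket{S_l^k}$. Writing $P = \bigotimes_b P^{(b)}$ with $P^{(b)} = Z^{z_b}X^{x_b}$, if on every measurement block the $X$-part $X^{x_b}$ is $k$-correlated with polynomial $h_b$, this overlap equals $q^{-d/2}\delta_{l,\, a_b - h_b(0)}$ up to an $l$-independent phase. Hence the $l,l'$ cross terms of $\sigma_a^k$ are forced to $l = l' = a - (h_1(0),\ldots,h_{3L}(0))$, and the $q^{3Ld}$ valid $\Delta_L$'s sharing a fixed $a = g(\Delta_L)$ cancel exactly against the $q^{-3Ld}$ from the squared overlaps, leaving only the $\tfrac{1}{q^{3L}}$ prefactor from $\sigma_a^k$. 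On the other hand, if on some block $b^\ast$ the operator $X^{x_{b^\ast}}$ is \emph{not} $k$-correlated, the block overlap vanishes identically and so does the entire trace. Since $P$ is non trivial, some block carries a non trivial $X$ component, and Claim \ref{correlatedx} guarantees that any such $X$-operator is $k$-correlated for at most two sign keys $k$; hence at least $2^m - 2$ of the $2^m$ summands in $\sum_k \tr(\hat\Pi_0\sigma_k^P)$ vanish, yielding the prefactor $\tfrac{2}{2^m} = \tfrac{1}{2^{m-1}}$. Combining with the trivial bound $\tr(Q^{\mathcal{F}}_k P^{\mathcal{F}}\rho^{\mathcal{F}}_{a, a-\vec{h}(0), a-\vec{h}(0)} P^{\mathcal{F}\dagger}) \leq 1$ on the remaining $\mathcal{F}$ contribution yields
\begin{equation}
\frac{1}{2^m}\tr\bigl(\hat\Pi_0\sum_k \sigma_k^P\bigr) \;\leq\; \frac{1}{q^{3L}\,2^{m-1}}\sum_{a\in F_q^{3L}}\alpha_{P, a}.
\end{equation}

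The main obstacle will be the careful block-by-block overlap computation: one must verify that $\bra{\delta^{(b)}}D_k^\dagger Z^{z_b}X^{x_b}\ket{S_l^k}$ is nonzero precisely when $x_b$ satisfies the $k$-correlation condition with $l = a_b - h_b(0)$, and that the phase contributions are $l$-independent so that they cancel cleanly in the $l,l'$ cross terms rather than proliferating into off-diagonal mixed terms that would spoil the counting. A secondary subtlety is the $\mathcal{F}$ block, whose projection accepts any codeword with $f(0) = 1$ rather than a single polynomial evaluation; however, since $\rho^{\mathcal{F}}_{a, l, l}$ has trace one and $P^{\mathcal{F}}$ is unitary, the same $k$-correlation analysis applies and contributes only a bounded scalar that does not affect the leading constants.
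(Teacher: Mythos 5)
Your proposal is correct and takes essentially the same approach as the paper's proof: both rest on the teleportation decomposition (Fact \ref{fact:teleportationstate} / Claim \ref{cl:finalstatedescription}) to supply the $q^{-3L}$ factor and on $k$-correlation (Claim \ref{correlatedx}, which the paper packages as Corollary \ref{soundnesstrivialpaulis}) to supply the $2/2^m$ factor. You simply collapse the paper's modular lemma chain, in particular Claim \ref{cl:simplifyingsoundness} (which converts $D_k^\dagger$ to $E_k^\dagger$ and strips the $Z$ part of $P$ before the $k$-correlation analysis), into a single direct block-by-block overlap computation carried out directly against $D_k^\dagger$; the underlying argument is identical.
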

By combining both claims, we obtain:
\begin{eqnarray}
\frac{1}{2^m}\tr(\hat{\Pi}_0\sum_k\sigma_k) &=& 
\frac{1}{2^m}\sum_{P\in\mbP_{\mathcal{V}_{final}}}\tr(\hat{\Pi}_0\sum_k\sigma_k^P)\\ 
&\leq&
\max(\gamma,\frac{1}{2^{m-1}})\frac{1}{q^{3L}}\sum_{a\in F_q^{3L}}(\sum_{P\in\mbP_{\mathcal{V}_{final}}}\alpha_{P,a})\\
&=&\max(\gamma,\frac{1}{2^{m-1}})\\
&\leq& \gamma + \frac{1}{2^{m-1}}
\end{eqnarray}
The final equality follows because:
$$
\sum_{P\in\mbP_{\mathcal{V}_{final}}}\alpha_{P,a} = 1 
$$
by \Le{Decompose}. 
\end{proof}
\subsection{Proof of \Cl{claim:polystateformat} (Polynomial \QPIP\ State Evolution)}
\begin{proof}
We will prove this claim by induction. The base case (round 1) is proven already in Section \ref{sec:securitypolynomial}, equation \ref{eq:startstate}. We assume the claim holds in round $i$ and show that it holds in round $i+1$. By the inductive hypothesis, we have the state shared by the prover and verifier in round $i$ is:
\begin{equation}
\frac{1}{2^m|\mbP_{m'}|}\sum_{\substack{z,x\in F_q^{|\mathcal{P}_i|}\\ k\in\{-1,1\}^m}}\tau_i(z,x,k)\otimes \sum_{\Delta_{i-1},z_1,x_1\in F_q^{|\mathcal{V}_i|}} W_{\Delta_{i-1},\hat{z},\hat{x},k}^i (\rho^k_{g(\Delta_{i-1})}\otimes\rho_{\mathcal{E}}) W_{\Delta_{i-1},\hat{z},\hat{x},k}^{i\dagger}
\end{equation}
where $\hat{z} = (z_1,z)$, $\hat{x} = (x_1,x)$ and
$$
W_{\Delta_{i-1},\hat{z},\hat{x},k}^i = (\ket{\Delta_{i-1}}\bra{\Delta_{i-1}}(D_k^\dagger)^{\otimes |\mathcal{V}_i|}(Z^{z_1}X^{x_1})^\dagger_{\mathcal{V}_i}\otimes\mcI_{\mathcal{P}_i,E}) U_{g(\Delta_{i-1})} ((Z^{z_1}X^{x_1})_{\mathcal{V}_i}\otimes (Z^{z}X^{x})_{\mathcal{P}_i}\otimes \mcI_{\mathcal{E}})
$$
Recall that the verifier holds register $\mathcal{V}_i$ and registers $\mathcal{P}_i$ and $E$ are held by the prover. When the prover measures and sends the verifier his measurement results, the verifier decodes them with both the Pauli keys and the sign key (as in equation \ref{paulisigndecoding}) to obtain $\delta_i\in F_q^{3m}$. The shared state at this point is:
\begin{equation}\label{sharedstatebeforesimp}
\frac{1}{2^m|\mbP_{m'}|}\sum_{\substack{z,x\in F_q^{|R_{i+1}|}\\ k\in\{-1,1\}^m}}\tau_{i+1}(z,x,k)\otimes\sum_{\Delta_{i},z_1,x_1\in F_q^{|M_{i+1}|}} T_{\Delta_{i},\hat{z},\hat{x},k}(\rho^k_{g(\Delta_{i-1})}\otimes\rho_{\mathcal{E}}) T_{\Delta_{i},\hat{z},\hat{x},k}^\dagger
\end{equation}
where
$$
T_{\Delta_i,\hat{z},\hat{x},k} = (\ket{\Delta_i}\bra{\Delta_i}(D_k^\dagger)^{\otimes |M_{i+1}|}(Z^{z_1}X^{x_1})_{M_{i+1}}^\dagger\otimes\mcI_{R_{i+1},E})U_{g(\Delta_{i-1})}((Z^{z_1}X^{x_1})_{M_{i+1}}\otimes (Z^{z}X^{x})_{R_{i+1}}\otimes \mcI_{\mathcal{E}})
$$
Note that $\Delta_i = (\delta_i,\Delta_{i-1})$, where $\delta_i\in F_q^{3m}$ ($\delta_i$ is the measurement result obtained in round $i$) and $\Delta_{i-1}\in F_q^{|\mathcal{V}_i|}$ (measurement results from previous rounds). The key difference here is that we have taken $3m$ qudits from register $R_{i}$ and added them to register $\mathcal{V}_i$ to create registers $R_{i+1}$ and $M_{i+1}$. We have also removed the Pauli keys corresponding to the newly measured qudits from the first register; the verifier traces out these keys after decoding as he no longer needs them. 

The remainder of round $i$ consists of the prover and verifier performing the Clifford gate $\tilde{Q}_i\tilde{C}_{g(\delta_i)}$. To show that the shared state in round $i+1$ is of the form described in \Cl{claim:polystateformat}, we need to replace $\rho^k_{g(\Delta_{i-1})}$ with $\rho^k_{g(\Delta_{i})}$ in equation \ref{sharedstatebeforesimp}. This can be done by determing how the application of $\tilde{Q}_i\tilde{C}_{g(\delta_i)}$ changes the state. Recall that:
\begin{eqnarray}
\rho_{g(\Delta_{i})}^k &=& (\tilde{Q}_{i}\tilde{C}_{g(\delta_{i})}\cdots \tilde{Q}_1\tilde{C}_{g(\delta_1)}\tilde{Q}_0)\rho^k ((\tilde{Q}_{i}\tilde{C}_{g(\delta_{i})}\cdots \tilde{Q}_1\tilde{C}_{g(\delta_1)}\tilde{Q}_0)^\dagger\\
&=& \tilde{Q}_i\tilde{C}_{g(\delta_{i})}\rho^k_{g(\Delta_{i-1})} (\tilde{Q}_i\tilde{C}_{g(\delta_{i})})^\dagger
\end{eqnarray}

In order to replace $\rho^k_{g(\Delta_{i-1})}$ with $\rho^k_{g(\Delta_{i})}$, we need to commute $\tilde{Q}_i\tilde{C}_{g(\delta_i)}$ past $T_{\Delta_i,\hat{z},\hat{x},k}$. First observe that $\tilde{Q}_i\tilde{C}_{g(\delta_i)}$ operates on the register held by the prover, $R_{i+1}$, and therefore commutes with operators acting on register $M_{i+1}$. However, it does not commute with $U_{g(\Delta_{i-1})}$. To take care of this issue, observe that:
\begin{equation}
\tilde{Q}_i\tilde{C}_{g(\delta_i)}U_{g(\Delta_{i-1})} = (\tilde{Q}_i\tilde{C}_{g(\delta_i)})U_{g(\Delta_{i-1})} (\tilde{Q}_i\tilde{C}_{g(\delta_i)})^\dagger(\tilde{Q}_i\tilde{C}_{g(\delta_i)})
\end{equation}

Now the rightmost part of the above expression, $\tilde{Q}_i\tilde{C}_{g(\delta_i)}$, is acting on the Pauli operator $(Z^zX^x)_{R_{i+1}}$ which is acting on $\rho^k_{g(\Delta_{i-1})}$. If the verifier updates his Pauli keys  for register $R_{i+1}$ (as is part of the protocol for performing a Clifford operation, described in Section \ref{des:secapp}), $\tilde{Q}_i\tilde{C}_{g(\delta_i)}$ can be commuted past the Pauli operator:
\begin{equation}
\tilde{Q}_i\tilde{C}_{g(\delta_i)} Z^zX^x = \tilde{Q}_i\tilde{C}_{g(\delta_i)} Z^zX^x (\tilde{Q}_i\tilde{C}_{g(\delta_i)})^\dagger \tilde{Q}_i\tilde{C}_{g(\delta_i)} 
\end{equation}
As described in more detail in Section \ref{des:secapp}, applying the Clifford operator $\tilde{Q}_i\tilde{C}_{g(\delta_i)}$ involves both the prover applying the operator to the authenticated states and the verifier updating his Pauli keys from $Z^zX^x$ to $\tilde{Q}_i\tilde{C}_{g(\delta_i)} Z^zX^x (\tilde{Q}_i\tilde{C}_{g(\delta_i)})^\dagger$ (this is a Pauli since $\tilde{Q}_i\tilde{C}_{g(\delta_i)}$ is a Clifford). 

Now $\tilde{Q}_i\tilde{C}_{g(\delta_i)}$, is acting directly on $\rho^k_{g(\Delta_{i-1})}$ so we have:
\begin{equation}
\rho_{g(\Delta_i)}^k = \tilde{Q}_i\tilde{C}_{g(\delta_i)}\rho_{g(\Delta_{i-1})}^k\tilde{C}_{g(\delta_i)}^\dagger \tilde{Q}_i^\dagger
\end{equation}
Note that this is still a state encoded with the signed polynomial code (hence the superscript $k$), since the Clifford operators are logical operators on the signed polynomial encoding. Finally, the prover can apply another attack $V_{g(\Delta_i)}$. Note that this attack acts only on the registers held by the prover ($R_{i+1}$ and $E$) and therefore can be shifted past operators acting on register $M_{i+1}$ in $T_{\Delta_i,\hat{z},\hat{x},k}$. We now set:
\begin{equation}
U_{g(\Delta_i)} = V_{g(\Delta_i)}\tilde{Q}_i\tilde{C}_{g(\delta_i)}U_{g(\Delta_{i-1})}\tilde{C}_{g(\delta_i)}^\dagger \tilde{Q}_i^\dagger
\end{equation}
The prover's state at the end of the round is then:
\begin{equation}
\frac{1}{2^m|\mbP_{m'}|}\sum_{\substack{z,x\in F_q^{|R_{i+1}|}\\ k\in\{-1,1\}^m}}\tau_{i+1}(z,x,k)\otimes\sum_{\Delta_i,z_1,x_1\in F_q^{|M_{i+1}|}} W^{i+1}_{\Delta_i,\hat{z},\hat{x},k}(\rho^k_{g(\Delta_i)}\otimes\rho_{\mathcal{E}})(W^{i+1}_{\Delta_i,\hat{z},\hat{x},k})^\dagger
\end{equation}
where
$$
W^{i+1}_{\Delta_i,\hat{z},\hat{x},k} = (\ket{\Delta_i}\bra{\Delta_i}(D_k^\dagger)^{\otimes |M_{i+1}|}(Z^{z_1}X^{x_1})^\dagger\otimes\mcI_{R_{i+1},E}) U_{g(\Delta_i)} ((Z^{z_1}X^{x_1})_{M_{i+1}}\otimes (Z^{z}X^{x})_{R_{i+1}}\otimes \mcI_{\mathcal{E}})
$$
\end{proof}
\subsection{Proof of \Cl{claim:statesimplify} (Final State)}
\begin{proof}
Recall that we start with the state given in equation \ref{corfinalstateequation} in Corollary \ref{finalstate}:
\begin{equation}
\rho_{L+1} \EqDef \frac{1}{2^m|\mbP_{m'}|}\sum_{\substack{z,x\in F_q^{|\mathcal{P}_{final}|}\\k\in\{-1,1\}^m}}\tau_{final}(z,x,k)\otimes\sum_{\substack{\Delta_L\in F_q^{|\mathcal{V}_{L+1}|}\\z_1,x_1\in F_q^{|\mathcal{V}_{final}|}}}  V_{\Delta_L,\hat{z},\hat{x},k}(\rho^k_{g(\Delta_L)}\otimes\rho_{\mathcal{E}})V_{\Delta_L,\hat{z},\hat{x},k}^\dagger
\end{equation}
where $\hat{z} = (z_1,z)$, $\hat{x} = (x_1,x)$ and
$$
V_{\Delta_L,\hat{z},\hat{x},k} = ((\ket{\Delta_L}\bra{\Delta_L}\otimes\mcI_{\mathcal{F}})(D_k^\dagger)^{\otimes |\mathcal{V}_{final}|}(Z^{z_1}X^{x_1})_{\mathcal{V}_{final}}^\dagger\otimes\mcI_{\mathcal{P}_{final},\mathcal{E}}) U_{g(\Delta_L)} ((Z^{z_1}X^{x_1})_{\mathcal{V}_{final}}\otimes (Z^zX^x)_{\mathcal{P}_{final}}\otimes \mcI_{\mathcal{E}})
$$
Our goal is to determine the form of the state after tracing out the first register (the key register) and registers $\mathcal{P}_{final}$ and $E$. We begin by tracing out the key register, which allows us to sum over $z,x\in F_q^{|\mathcal{P}_{final}|}$. We are also allowed to sum over $k$, but we will keep $k$ fixed while we simplify the state. The state can then be written as:
\begin{equation}
\sum_{z,x\in F_q^{|\mathcal{P}_{final}|}}\sum_{\substack{\Delta_L\in F_q^{|\mathcal{V}_{L+1}|}\\z_1,x_1\in F_q^{|\mathcal{V}_{final}|}}}  V_{\Delta_L,\hat{z},\hat{x},k}(\rho^k_{g(\Delta_L)}\otimes\rho_{\mathcal{E}})V_{\Delta_L,\hat{z},\hat{x},k}^\dagger
\end{equation}
By \Le{paulimix}, this has the effect of mixing register $\mathcal{P}_{final}$. The state is now:
\begin{equation}
\frac{1}{q^{|\mathcal{P}_{final}|}}\sum_{\substack{\Delta_L\in F_q^{|\mathcal{V}_{L+1}|}\\z_1,x_1\in F_q^{|\mathcal{V}_{final}|}}} V_{\Delta_L,z_1,x_1,k}' (\sigma^k_{g(\Delta_L)}\otimes\mcI_{\mathcal{P}_{final}}\otimes\rho_{\mathcal{E}})  V_{\Delta_L,z_1,x_1,k}'^\dagger
\end{equation}
where 
$$
V_{\Delta_L,z_1,x_1,k}' =((\ket{\Delta_L}\bra{\Delta_L}\otimes\mcI_{\mathcal{F}})(D_k^\dagger)^{\otimes |\mathcal{V}_{final}|}(Z^{z_1}X^{x_1})^\dagger\otimes\mcI_{\mathcal{P}_{final},\mathcal{E}}) U_{g(\Delta_L)} (Z^{z_1}X^{x_1}\otimes \mcI_{\mathcal{P}_{final},\mathcal{E}})
$$
and
$$
\tr_{\mathcal{P}_{final}}(\rho^k_{g(\Delta_L)}) = \sigma^k_{g(\Delta_L)}
$$
Next, we observe that the Pauli encoding/decoding of $Z^{z_1}X^{x_1}$ on register $\mathcal{V}_{final}$ has the effect of decohering (removing cross terms of) the part of $U_{g(\Delta_L)}$ that is acting on $\mathcal{V}_{final}$, as shown in \Le{PauliDiscretization}.

Applying the lemma with:
\begin{equation}
U = U_{g(\Delta_L)} = \sum_{P\in\mbP_{|\mathcal{V}_{final}|}} P\otimes U_{g(\Delta_L)}^P 
\end{equation}
we can simplify the prover's decoded state to:
\begin{equation}
\sum\limits_{\substack{\Delta_L\in F_q^{|\mathcal{V}_{L+1}|}\\P\in\mbP_{|\mathcal{V}_{final}|}}} ((\ket{\Delta_L}\bra{\Delta_L}\otimes\mcI_{\mathcal{F}})(D_k^\dagger)^{\otimes |\mathcal{V}_{final}|}P)\sigma_{g(\Delta_L)}^k((\ket{\Delta_L}\bra{\Delta_L}\otimes\mcI_{\mathcal{F}})D_k^{\otimes |\mathcal{V}_{final}|}P)^\dagger\otimes U_P^{g(\Delta_L)}(\mcI_{\mathcal{P}_{final}}\otimes\rho_{\mathcal{E}})(U_P^{g(\Delta_L)})^\dagger
 \end{equation}
 where the above state also has a factor of $\frac{1}{q^{|\mathcal{P}_{final}|}}$.
 
We trace out registers $\mathcal{P}_{final}$ and $E$ since the verifier will not look at these registers:
\begin{eqnarray}
\sigma_k &=&\sum\limits_{\substack{\Delta_L\in F_q^{|\mathcal{V}_{L+1}|}\\P\in\mbP_{|\mathcal{V}_{final}|}}} \alpha_{P,\Delta_L}\cdot ((\ket{\Delta_L}\bra{\Delta_L}\otimes\mcI_{\mathcal{F}})D_k^{\otimes |\mathcal{V}_{final}|}P)\sigma_{g(\Delta_L)}^k((\ket{\Delta_L}\     \bra{\Delta_L}\otimes\mcI_{\mathcal{F}})D_k^{\otimes |\mathcal{V}_{final}|}P)^\dagger\nonumber
 \end{eqnarray}
 where $\alpha_{P,\Delta_L} = \frac{1}{q^{|\mathcal{P}_{final}|}}\tr(U_P^{g(\Delta_L)}(\mcI_{\mathcal{P}_{final}}\otimes\rho_{\mathcal{E}})(U_P^{g(\Delta_L)})^\dagger)$. 
\end{proof}

\subsection{Proofs of \Cl{claim:trivialsoundness} and \Cl{claim:nontrivialsoundness} (Trivial and Nontrivial Deviation)}
\subsubsection{Necessary Claims}
For both proofs, we require the  three following claims. The first is regarding the state $\sigma_a^k = \tr_{\mathcal{P}_{final}}(\rho_a^k)$ (for $a\in F_q^{3L})$, where $\rho_a^k$ is defined in \Cl{claim:polystateformat} (in equation \ref{eq:finalstate}) and $\sigma_a^k$ is first defined in Claim \ref{claim:statesimplify}. The claim below considers the unauthenticated version of the state ($\sigma_a$). In other words, if $\sigma_a'$ is $\sigma_a$ with $m-1$ auxiliary 0 qudits appended to each individual qudit, then:
\begin{equation}
\sigma_a^k = E_k\sigma_a'E_k^\dagger
\end{equation}
This equality follows from the definition of the encoding circuit (Definition \ref{def:encodingcircuit}).
\begin{claim}\label{cl:finalstatedescription}
For $a\in F_q^{3L}$ and $\sigma_a$ as defined in Claim \ref{claim:statesimplify} we claim that
\begin{equation}
\sigma_a = \frac{1}{q^{3L}}\sum_{l,l'\in F_q^{3L}}\ket{l}\bra{l'}\otimes\tr_{\mathcal{P}_{final}}(\ket{\psi}_{a,l}\bra{\psi}_{a,l'})
\end{equation}
where $\ket{\psi}_{a,a} = U\ket{0}^{\otimes n}$.
\end{claim}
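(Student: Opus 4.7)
The plan is to prove \Cl{cl:finalstatedescription} by directly applying Fact~\ref{fact:teleportationstate} to the unauthenticated version of the final state. First I would unpack the definitions: the state $\sigma_a$ is the unauthenticated version of $\sigma_a^k = \tr_{\mathcal{P}_{final}}(\rho_a^k)$, where $\rho_a^k$ is defined in equation \ref{eq:finalstate} of \Cl{claim:polystateformat}. Since each $\tilde{Q}_i$ and $\tilde{C}_{g(\delta_i)}$ is a logical Clifford on the signed polynomial code (by the construction in \Sec{sec:conversiontological}), they act on the encoded computational subspace exactly as their unauthenticated counterparts $Q_i$ and $C_{g(\delta_i)}$ act on the underlying qudits. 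Hence $\rho_a$ is the density matrix of the pure state obtained by applying $Q_L C_{a_L}\cdots Q_1 C_{a_1} Q_0$ to the unauthenticated initial state $\ket{0}^{\otimes n}\otimes\bigl(\tfrac{1}{q}\sum_{b,c\in F_q}\ket{b,c,bc}\bigr)^{\otimes L}$.

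Next, I would apply Fact~\ref{fact:teleportationstate} with $\ket\phi=\ket{0}^{\otimes n}$ and $\beta=a$, which immediately yields that the joint state is the pure state
$$\ket{\Phi_a} \;\EqDef\; \frac{1}{\sqrt{q^{3L}}}\sum_{l\in F_q^{3L}}\ket{l}\ket{\psi}_{a,l},$$
with $\ket{\psi}_{a,a}=U\ket{0}^{\otimes n}$. Here the first register consists of the $3L$ teleportation qudits that ultimately populate $\mathcal{V}_{L+1}$, while the second register is the $n$-qudit output of $U$, whose first qudit is the register $\mathcal{F}\subset\mathcal{V}_{final}$ and whose remaining $n-1$ qudits lie in $\mathcal{P}_{final}$.

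Finally, expanding $\rho_a = \ket{\Phi_a}\bra{\Phi_a}$ as a double sum over $l,l'\in F_q^{3L}$ and taking the partial trace over $\mathcal{P}_{final}$ leaves the teleportation register and $\mathcal{F}$ untouched while reducing $\ket{\psi}_{a,l}\bra{\psi}_{a,l'}$ to $\tr_{\mathcal{P}_{final}}(\ket{\psi}_{a,l}\bra{\psi}_{a,l'})$, giving precisely the stated expression for $\sigma_a$. I do not foresee a real obstacle here: once Fact~\ref{fact:teleportationstate} is in hand, the proof is essentially bookkeeping. The only mild subtlety is checking that the register partition produced by the fact matches the split $\mathcal{V}_{final} = \mathcal{V}_{L+1}\cup\mathcal{F}$ versus $\mathcal{P}_{final}$, and noting that no prover attack enters $\rho_a^k$ (the attacks $U_{g(\Delta_L)}$ are kept outside in \Cl{claim:polystateformat}), which is exactly what lets us apply the fact to the ideal circuit.
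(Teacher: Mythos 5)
Your proposal is correct and follows exactly the same route as the paper's own proof: identify that $\rho_a$ is the pure state obtained from the unauthenticated circuit $Q_L C_{a_L}\cdots Q_1 C_{a_1} Q_0$, invoke Fact~\ref{fact:teleportationstate} with $\ket{\phi}=\ket{0}^{\otimes n}$ and $\beta=a$, and take the partial trace over $\mathcal{P}_{final}$. The extra remarks you make about register bookkeeping and the absence of the prover's attack in $\rho_a^k$ are accurate clarifications that the paper leaves implicit.
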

The second claim involves conjugation properties of the encoding circuit $E_k$ (see Definition \ref{def:encodingcircuit}) with respect to trivial Pauli operators: 
\begin{claim}\label{claim:trivialcommute}
For a trivial Pauli operator $P\in\mbP_{|\mathcal{V}_{final}|}$, 
\begin{equation}
(\Pi_{G_a}\otimes(\hat{\Pi}_0)_{\mathcal{F}} )(E_k^\dagger)^{\otimes |\mathcal{V}_{final}|}P E_k^{\otimes |\mathcal{V}_{final}|} = (E_k^\dagger)^{\otimes |\mathcal{V}_{final}|}P E_k^{\otimes |\mathcal{V}_{final}|}(\Pi_{G_a}\otimes(\hat{\Pi}_0)_{\mathcal{F}})
\end{equation}
where $(\hat{\Pi}_0)_{\mathcal{F}}$ is defined in equation \ref{acceptingprojection} as
\begin{equation}
(\hat{\Pi}_0)_{\mathcal{F}} = \ket{1}\bra{1}\otimes\mcI^{\otimes d}\ket{0}\bra{0}^{\otimes d}
\end{equation}
and for $a = (a(1),\ldots,a(3L))\in F_q^{3L}$
\begin{equation}\label{simplifiedprojection}
\Pi_{G_a} = (\ket{a(1)}\bra{a(1)}\otimes\mcI^{\otimes d}\otimes\ket{0}\bra{0}^{\otimes d})\otimes\cdots\otimes (\ket{a(3L)}\bra{a(3L)}\otimes\mcI^{\otimes d}\otimes\ket{0}\bra{0}^{\otimes d})
\end{equation}
\end{claim}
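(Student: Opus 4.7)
The plan is to reduce the claim to a single-block, single-$Z$ statement and then verify it by direct computation of the relevant Pauli conjugation. First, note that $P = \bigotimes_b P_b$ is a tensor product of trivial block Paulis (where each block consists of $m$ qudits corresponding to one authenticated qudit), and both the encoding $E_k^{\otimes |\mathcal{V}_{final}|}$ and the projection $\Pi_{G_a}\otimes (\hat{\Pi}_0)_{\mathcal{F}}$ factor as tensor products over blocks. Hence it suffices to prove that within each block $E_k^\dagger P_b E_k$ commutes with the corresponding block projection $\Pi_b$, which is either $\ket{a(j)}\bra{a(j)}\otimes \mcI^{\otimes d}\otimes \ket{0}\bra{0}^{\otimes d}$ (for a Toffoli block) or $\ket{1}\bra{1}\otimes \mcI^{\otimes d}\otimes \ket{0}\bra{0}^{\otimes d}$ (for the final block $\mathcal{F}$). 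By linearity, and by decomposing a trivial $P_b$ into a product of single-qudit $Z$ operators, it is further enough to show that $E_k^\dagger Z_j E_k$ commutes with $\Pi_b$ for each $j \in \{1,\ldots,m\}$.

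Next, I would compute the structure of $E_k^\dagger Z_j E_k$ using the factorization $E_k = D_k (\mcI \otimes F^{\otimes d} \otimes \mcI)$ from equation \ref{equation:encodingcircuit}, where the Fourier gates act only on registers $2,\ldots,d+1$. Recall from Claim \ref{decodingoperations} that $D_k$ is a product of \textit{SUM} gates and a single multiplication gate $M_r$ on register $1$. Using the conjugation rules of Section \ref{sec:conjugationproperties} (equations \ref{sumconjugation} and \ref{multiplyconjugation}), both $\textit{SUM}^\dagger (\cdot) \textit{SUM}$ and $M_r^\dagger (\cdot) M_r$ map any $Z$-only Pauli to another $Z$-only Pauli (up to phase). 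Therefore $D_k^\dagger Z_j D_k = Z^{\vec{w}}$ for some exponent vector $\vec{w} \in F_q^m$. The outer Fourier layer then conjugates each coordinate of $Z^{\vec{w}}$ on registers $2,\ldots,d+1$ into an $X$ (by equation \ref{fourierconjugation}), while leaving the coordinates on register $1$ and on registers $d+2,\ldots,m$ as $Z$'s. Thus, up to an irrelevant global phase,
\begin{equation}
E_k^\dagger Z_j E_k \;=\; Z^{w_1} \otimes X^{w_2} \otimes \cdots \otimes X^{w_{d+1}} \otimes Z^{w_{d+2}} \otimes \cdots \otimes Z^{w_m}.
\end{equation}

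Finally, commutation with the block projection follows by inspection: on register $1$ the operator is diagonal ($Z^{w_1}$) and so is the projector factor ($\ket{a(j)}\bra{a(j)}$ or $\ket{1}\bra{1}$); on registers $2,\ldots,d+1$ the projector factor is $\mcI$, so $X^{w_i}$ trivially commutes; on registers $d+2,\ldots,m$ both the operator ($Z^{w_i}$) and the projector factor ($\ket{0}\bra{0}$) are diagonal. A global scalar phase does not affect the commutator, so $[E_k^\dagger Z_j E_k, \Pi_b] = 0$ as required. The one place that needs care is the second paragraph: one has to check that no $X$ components leak onto register $1$ or onto the interpolation registers $d+2,\ldots,m$, which is precisely what the placement of the Fourier gates in $E_k$ (only on registers $2,\ldots,d+1$) guarantees, combined with the fact that $D_k$ is built entirely from gates preserving the $Z$-only sector under conjugation. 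This is the main obstacle, but it is a bookkeeping step rather than a conceptual one.
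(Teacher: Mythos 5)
Your proof is correct and follows essentially the same route as the paper: both arguments use $E_k = D_k(\mcI\otimes F^{\otimes d}\otimes\mcI)$ together with the conjugation rules to show that $E_k^\dagger P E_k$ stays $Z$-only on registers $1,d+2,\ldots,m$ of each block and can acquire $X$ factors only on registers $2,\ldots,d+1$, and then observe that the projection is diagonal on the former set and identity on the latter. Your version is slightly more explicit in factoring block-by-block and reducing to single-qudit $Z$'s, but the key lemma and decomposition are the same.
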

The final claim simplifies the expression for $\tr(\hat{\Pi}_0\sigma_k^P)$:
\begin{claim}\label{cl:simplifyingsoundness}
For all $P = Z^zX^x\in\mbP_{|\mathcal{V}_{final}|}$, and for $\sigma_k^P$ as defined in \Cl{claim:statesimplify} (equation \ref{eq:sigmadef}),
\begin{equation}
\tr(\hat{\Pi}_0\sigma_k^P) = \sum\limits_{a\in F_q^{3L}} \alpha_{P,a}\cdot\tr((\Pi_{G_a}\otimes (\hat{\Pi}_0)_{\mathcal{F}})(E_k^\dagger)^{\otimes |\mathcal{V}_{final}|}X^x\sigma_{a}^k(X^x)^\dagger(E_k)^{\otimes |\mathcal{V}_{final}|})
\end{equation}
where $(\hat{\Pi}_0)_{\mathcal{F}}$ is defined in equation \ref{acceptingprojection} as
\begin{equation}
(\hat{\Pi}_0)_{\mathcal{F}} = \ket{1}\bra{1}\otimes\mcI^{\otimes d}\ket{0}\bra{0}^{\otimes d}
\end{equation}
and for $a = (a(1),\ldots,a(3L))\in F_q^{3L}$
\begin{equation}
\Pi_{G_a} = (\ket{a(1)}\bra{a(1)}\otimes\mcI^{\otimes d}\otimes\ket{0}\bra{0}^{\otimes d})\otimes\cdots\otimes (\ket{a(3L)}\bra{a(3L)}\otimes\mcI^{\otimes d}\otimes\ket{0}\bra{0}^{\otimes d})
\end{equation}
\end{claim}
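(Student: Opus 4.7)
The plan is to rewrite the trace in three steps: (i) use $\hat{\Pi}_0$ to restrict the sum in the definition of $\sigma_k^P$ to valid measurement strings and group them by $a=g(\Delta_L)$; (ii) replace $D_k^\dagger$ by $E_k^\dagger$; and (iii) eliminate the $Z^z$ component of $P=Z^zX^x$. For step (i), I would unfold $\sigma_k^P$ from equation~\ref{eq:sigmadef}. On $\mathcal{V}_{L+1}$, $\hat{\Pi}_0$ projects the last $d$ positions of every $m$-qudit block onto $\ket{0}^{\otimes d}$, so $\hat{\Pi}_0|_{\mathcal{V}_{L+1}}\ket{\Delta_L}\bra{\Delta_L}$ equals $\ket{\Delta_L}\bra{\Delta_L}$ if $\Delta_L$ is \emph{valid} (last $d$ coordinates of every block are $0$) and vanishes otherwise, while on $\mathcal{F}$ the factor $(\hat{\Pi}_0)_{\mathcal{F}}$ survives untouched. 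For valid $\Delta_L$, $g(\Delta_L)$ records the tuple of first coordinates of the blocks and the middle $d$ coordinates range freely. Since $\alpha_{P,g(\Delta_L)}$ and $\sigma_{g(\Delta_L)}^k$ depend on $\Delta_L$ only through $a:=g(\Delta_L)$, grouping by $a$ and using $\sum_{b_2,\ldots,b_{d+1}}\ket{b_2,\ldots,b_{d+1}}\bra{b_2,\ldots,b_{d+1}}=\mcI^{\otimes d}$ makes the inner sum over valid $\Delta_L$ with $g(\Delta_L)=a$ collapse to exactly $\Pi_{G_a}$.

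For step (ii), equation~\ref{equation:encodingcircuit} gives $E_k=D_k\cdot(\mcI\otimes F^{\otimes d}\otimes\mcI)$, so $D_k^\dagger=(\mcI\otimes F^{\otimes d}\otimes\mcI)\cdot E_k^\dagger$. Cyclicity of the trace moves the extra blockwise Fourier layers so that they conjugate $\Pi_{G_a}\otimes(\hat{\Pi}_0)_{\mathcal{F}}$. This projector is the identity on positions $2,\ldots,d+1$ of every block --- exactly the positions where the Fourier transforms act --- so the conjugation is trivial and the projector is unchanged. The expression now agrees with the desired right-hand side except that the residual Pauli acting on $\sigma_a^k$ is still the full $P=Z^zX^x$ rather than just $X^x$.

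Step (iii) is the most delicate and the real obstacle. Since $E_k$ is Clifford, $\tilde Q_z:=(E_k^\dagger)^{\otimes}Z^z(E_k)^{\otimes}$ is a Pauli, and by cyclicity of the trace it suffices to show that $\tilde Q_z$ commutes with $\Pi_{G_a}\otimes(\hat{\Pi}_0)_{\mathcal{F}}$. Computed blockwise, the \textit{SUM}s and the multiplication that comprise $D_k$ preserve the $Z$-type by the conjugation rules of Section~\ref{sec:conjugationproperties}, whereas the middle Fourier transforms turn $Z$ into $X$ since $F^\dagger Z^c F=X^c$. Hence $\tilde Q_z$ is a tensor of $Z$-type factors at position $1$ and positions $d+2,\ldots,m$ of each block and $X$-type factors at positions $2,\ldots,d+1$. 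This is exactly the pattern that commutes with our projector: the $Z$-type factors commute with the standard-basis projectors $\ket{a(j)}\bra{a(j)}$, $\ket{1}\bra{1}$, and $\ket{0}\bra{0}^{\otimes d}$ at positions $1,d+2,\ldots,m$, while the $X$-type factors sit on positions carrying only the identity. Therefore $\tilde Q_z$ commutes with the projector, the $Z^z$ drops out of the trace, and only $X^x\sigma_a^k(X^x)^\dagger$ remains, yielding the claimed identity. Steps (i) and (ii) are essentially bookkeeping about $\hat{\Pi}_0$ and cyclicity.
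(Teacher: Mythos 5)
Your proposal is correct and follows essentially the same route as the paper's proof: restrict to valid $\Delta_L$ and absorb the inner sum into $\Pi_{G_a}$ via cyclicity, swap $D_k^\dagger$ for $E_k^\dagger$ after commuting the $F^{\otimes d}$ layer past the projector, and then discard $Z^z$ by showing $(E_k^\dagger)^{\otimes}Z^z(E_k)^{\otimes}$ commutes with $\Pi_{G_a}\otimes(\hat{\Pi}_0)_{\mathcal{F}}$. The only cosmetic difference is that you prove the $Z$-commutation directly from the conjugation rules, whereas the paper packages it as Claim \ref{claim:trivialcommute}; the underlying reasoning (Fourier only touches positions $2,\ldots,d+1$, where the projector is identity, while $D_k$ preserves $Z$-type) is the same.
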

We now proceed to proving Claims \ref{claim:trivialsoundness} and \ref{claim:nontrivialsoundness}, and then we prove the claims listed above.
\subsubsection{Proof of \Cl{claim:trivialsoundness} (Trivial Deviation)}
\begin{proofof}{\Cl{claim:trivialsoundness}}
Our goal in this proof is to show
\begin{equation}
\frac{1}{2^m}\tr(\hat{\Pi}_0\sum_k\sigma_k^P) \leq \frac{\gamma}{q^{3L}}\sum_{a\in F_q^{3L}}\alpha_{P,a}
\end{equation}
for a trivial Pauli operator $P$ acting on $\mathcal{V}_{final}$. We will show that for all $k$, 
\begin{equation}
\tr(\hat{\Pi}_0\sigma_k^P) \leq \frac{\gamma}{q^{3L}}\sum_{a\in F_q^{3L}}\alpha_{P,a}
\end{equation}
By \Cl{cl:simplifyingsoundness} (and by the fact that $P$ is a trivial Pauli operator and therefore has no $X$ operator), we have
\begin{equation}
\tr(\hat{\Pi}_0\sigma_k^P) = \sum\limits_{a\in F_q^{3L}} \alpha_{P,a}\cdot\tr((\Pi_{G_a}\otimes (\hat{\Pi}_0)_{\mathcal{F}})(E_k^\dagger)^{\otimes |\mathcal{V}_{final}|}\sigma_{a}^k(E_k)^{\otimes |\mathcal{V}_{final}|})
\end{equation}
where (as defined in equation \ref{simplifiedprojection}) for $a = (a(1),\ldots,a(3L))\in F_q^{3L}$
\begin{equation}
\Pi_{G_a} = (\ket{a(1)}\bra{a(1)}\otimes\mcI^{\otimes d}\otimes\ket{0}\bra{0}^{\otimes d})\otimes\cdots\otimes (\ket{a(3L)}\bra{a(3L)}\otimes\mcI^{\otimes d}\otimes\ket{0}\bra{0}^{\otimes d})
\end{equation}
and (as defined in equation \ref{acceptingprojection}):
\begin{equation}
\hat{\Pi}_0 = (\mcI^{\otimes d+1}\otimes\ket{0}\bra{0}^{\otimes d})^{\otimes 3L}_{\mathcal{V}_{L+1}}\otimes (\ket{1}\bra{1}\otimes\mcI^{\otimes d}\otimes\ket{0}\bra{0}^{\otimes d})_{\mathcal{F}} = (\hat{\Pi}_0)_{\mathcal{V}_{L+1}}\otimes(\hat{\Pi}_0)_{\mathcal{F}}
\end{equation}
We note that $\sigma_a^k$ is the density matrix $\sigma_a$ encoded with the signed polynomial code; i.e. if $\sigma_a'$ is the density matrix $\sigma_a$ with $m-1$ auxiliary 0 qudits appended to each individual qudit, we have:
\begin{equation}
\sigma_a^k = E_k^{\otimes|\mathcal{V}_{final}|}\sigma_a' (E_k^{\otimes|\mathcal{V}_{final}|})^\dagger
\end{equation}
It follows that
\begin{eqnarray}
\tr(\hat{\Pi}_0\sigma_k^P)
&=& \sum\limits_{a\in F_q^{3L}} \alpha_{P,a}\cdot\tr((\Pi_{G_a}\otimes (\hat{\Pi}_0)_{\mathcal{F}})\sigma_{a}')
\end{eqnarray}
Observe that the projection $\Pi_{G_a}\otimes (\hat{\Pi}_0)_{\mathcal{F}}$ does not alter the auxiliary 0 qudits; it acts as $\mcI^{\otimes d}\otimes\ket{0}\bra{0}^{\otimes d}$ on each set of $m-1$ auxiliary qudits. Therefore, we can trace out all of the auxiliary qubits (and also remove the corresponding operators from the projections). Tracing out the auxiliary qudits from $\sigma_a'$ simply results in $\sigma_a$. $\Pi_{G_a}$ can be replaced by $\ket{a}\bra{a}$, and $(\hat{\Pi}_0)_{\mathcal{F}}$ can be replaced by $\ket{1}\bra{1}$. The resulting expression is:
\begin{equation}
\tr(\hat{\Pi}_0\sigma_k^P) = \sum_{a\in F_q^{3L}}\alpha_{P,a}\cdot\tr(\ket{a1}\bra{a1}\sigma_a)
\end{equation}
\Cl{cl:finalstatedescription} implies that the state $\sigma_a$ can be written as
\begin{equation}
\sigma_a = \frac{1}{q^{3L}}\sum_{l,l'\in F_q^{3L}}\ket{l}\bra{l'}\otimes\tr_{\mathcal{P}_{final}}(\ket{\psi}_{a,l}\bra{\psi}_{a,l'})
\end{equation}
where $\ket{\psi}_{a,a} = U\ket{0}^{\otimes n}$. Given this information about $\sigma_a$, we can continue:
\begin{eqnarray}
\tr(\hat{\Pi}_0\sigma_k^P) &=& \sum_{a\in F_q^{3L}}\alpha_{P,a}\cdot\tr(\ket{a1}\bra{a1}\sigma_a)\\
&=& \frac{1}{q^{3L}}\sum_{a,l,l'\in F_q^{3L}}\alpha_{P,a}\cdot\tr(\ket{a}\bra{a}\ket{l}\bra{l'})\tr(\ket{1}\bra{1}(\tr_{\mathcal{P}_{final}}(\ket{\psi}_{a,l}\bra{\psi}_{a,l'})))\\
&=& \frac{1}{q^{3L}}\sum_{a\in F_q^{3L}}\alpha_{P,a}\cdot\tr(\ket{1}\bra{1}(\tr_{\mathcal{P}_{final}}(U\ket{0}\bra{0}^{\otimes n}U^\dagger)))\label{eq:fortrivialcorol}\\
&\leq& \frac{\gamma}{q^{3L}}\sum_{a\in F_q^{3L}}\alpha_{P,a}
\end{eqnarray}
The last inequality follows because measuring $U\ket{0}$ results in 0 with probability $\geq 1 - \gamma$ and 1 otherwise. 
\begin{remark}\label{remark:fortrivialcorol}
Note that if $U\ket{0}$ resulted in 0 with probability exactly $1 - \gamma$, the last inequality would be replaced by an equality (which would replace the inequality in the statement of the claim with an equality). Also note that if in equation \ref{eq:fortrivialcorol} the projection $\ket{1}\bra{1}$ was replaced by $\mcI$, the next line would be the same, except $\gamma$ would be replaced by 1 and the inequality by an equality. These two facts will be useful in the proof of Claim \ref{cl:trivialdensitymatrix}, which is required for the proof of the polynomial version of Corollary \ref{corol:confidence} (Corollary \ref{corol:polyconfidence})
\end{remark}
\end{proofof}


\subsubsection{Proof of \Cl{claim:nontrivialsoundness} (Nontrivial Deviation)}
We now prove \Cl{claim:nontrivialsoundness}.

\begin{proofof}{\Cl{claim:nontrivialsoundness}}
Our goal in this proof is to show that for a non trivial Pauli operator $P = Z^zX^x\in\mbP_{|\mathcal{V}_{final}|}$,
\begin{equation}
\frac{1}{2^m}\tr(\hat{\Pi}_0\sum_k\sigma_k^P)\leq\frac{1}{q^{3L}2^{m-1}}\sum_{a\in F_q^{3L}}\alpha_{P,a}
\end{equation}
To do this, we will show that at most 2 terms in the above sum over $k$ can be non zero; each of those terms can be at most
\begin{equation}
\frac{1}{2^mq^{3L}}\sum_{a\in F_q^{3L}}\alpha_{P,a}
\end{equation}
The claim follows. We begin by using \Cl{cl:simplifyingsoundness} to write:
\begin{equation}\label{eq:startingstatetrivial}
\tr(\hat{\Pi}_0\sigma_k^P) = \sum\limits_{a\in F_q^{3L}} \alpha_{P,a}\cdot\tr((\Pi_{G_a}\otimes (\hat{\Pi}_0)_{\mathcal{F}})(E_k^\dagger)^{\otimes |\mathcal{V}_{final}|}X^x\sigma_{a}^k(X^x)^\dagger(E_k)^{\otimes |\mathcal{V}_{final}|})
\end{equation}
where (as defined in equation \ref{simplifiedprojection}) for $a = (a(1),\ldots,a(3L))\in F_q^{3L}$
\begin{equation}
\Pi_{G_a} = (\ket{a(1)}\bra{a(1)}\otimes\mcI^{\otimes d}\otimes\ket{0}\bra{0}^{\otimes d})\otimes\cdots\otimes (\ket{a(3L)}\bra{a(3L)}\otimes\mcI^{\otimes d}\otimes\ket{0}\bra{0}^{\otimes d})
\end{equation}
and (as defined in equation \ref{acceptingprojection}):
\begin{equation}
\hat{\Pi}_0 = (\mcI^{\otimes d+1}\otimes\ket{0}\bra{0}^{\otimes d})^{\otimes 3L}_{\mathcal{V}_{L+1}}\otimes (\ket{1}\bra{1}\otimes\mcI^{\otimes d}\otimes\ket{0}\bra{0}^{\otimes d})_{\mathcal{F}} = (\hat{\Pi}_0)_{\mathcal{V}_{L+1}}\otimes(\hat{\Pi}_0)_{\mathcal{F}}
\end{equation}
Note that the projection $\Pi_{G_a}\otimes (\hat{\Pi}_0)_{\mathcal{F}}$ includes the projection of each block of $m$ qudits onto $\mcI^{\otimes d+1}\otimes\ket{0}\bra{0}^{\otimes d}$; it can be written as:
\begin{equation}
 \Pi_{G_a}\otimes (\hat{\Pi}_0)_{\mathcal{F}} = \hat{\Pi}_0^L\hat{\Pi}_0^A
\end{equation}
where
\begin{equation}\label{eq:corollaryprojection}
\hat{\Pi}_0^A = (\mcI^{\otimes d+1}\otimes\ket{0}\bra{0}^{\otimes d})^{\otimes |\mathcal{V}_{final}|}
\end{equation}
Intuitively, this implies that the non trivial Pauli operator $P$ must preserve the authenticated state (up to trivial operators) on every block of $m$ qudits in order end up in the subspace defined by $\hat{\Pi}_0^A$. Using  similar reasoning as used in the proof of Lemma \ref{lem:PauliPolyAuthSec}, we should be able to say that $P$ can only do this for at most 2 sign keys at a time. This intuition is formalized in Corollary \ref{soundnesstrivialpaulis}, which follows from Lemma \ref{lem:PauliPolyAuthSec} and is proven immediately after this proof:
\begin{corol}\label{soundnesstrivialpaulis}
For a non trivial Pauli operator $X^x\in\mbP_{tm}$ and a density matrix $\sigma$ on $m$ qudits, there exist at most 2 sign keys $k\in\{-1,1\}^m$ (which are the same regardless of $\sigma$) for which the following expression
\begin{equation}
(\mcI^{\otimes d+1}\otimes\ket{0}\bra{0}^{\otimes d})^{\otimes tm}(E_k^\dagger)^{\otimes t} X^x \sigma^k (X^x)^\dagger E_k^{\otimes t} (\mcI^{\otimes d+1}\otimes\ket{0}\bra{0}^{\otimes d})^{\otimes tm}
\end{equation}
can be non zero. If $X^x = X^{x_1}\otimes\cdots\otimes X^{x_t}$ for $x_i\in F_q^{m}$ and the expression above is non zero, $X^{x_i}$ must be $k$-correlated for all $i$.
\end{corol}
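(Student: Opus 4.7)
The plan is to reduce the corollary to a per-block analysis in the spirit of Claim~\ref{uncorrelated}, and then invoke Claim~\ref{correlatedx} to obtain the two-key bound. Write $X^x = X^{x_1}\otimes\cdots\otimes X^{x_t}$ with each $x_i\in F_q^m$; since $X^x$ is non-trivial, there is at least one index $i_0$ with $x_{i_0}\neq 0$. The projection $(\mcI^{\otimes d+1}\otimes\ket{0}\bra{0}^{\otimes d})^{\otimes tm}$, the operator $X^x$, and the encoding $E_k^{\otimes t}$ all factor as tensor products across the $t$ blocks of $m$ qudits, so the per-block structure can be handled independently even though $\sigma^k$ may be entangled across blocks.

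First I would peel the Fourier layer off each encoding block. Using $E_k = D_k(\mcI\otimes F^{\otimes d}\otimes\mcI)$ (equation~\ref{equation:encodingcircuit}), the inverse Fourier acts only on qudits $2,\ldots,d+1$ and hence commutes with the per-block projection $\mcI^{\otimes d+1}\otimes\ket{0}\bra{0}^{\otimes d}$, so inside the trace I may replace $E_k^\dagger$ by $D_k^\dagger$ on each block. Next, fix a block $i$ with $x_i\neq 0$ and $X^{x_i}$ not $k$-correlated, and apply Claim~\ref{correlateddecomposition} to write $X^{x_i} = \hat{Q}_k Q_k$ with $Q_k$ $k$-correlated and $\hat{Q}_k$ non-identity. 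Because $X^{x_i}$ has no $Z$ component, the explicit formula in that claim forces $Q_k$ to be a pure $X$ (obtained by interpolating the signed polynomial through the first $d+1$ coordinates of $x_i$), and consequently $\hat{Q}_k = X^y$ with $y_j = 0$ for $j\le d+1$; thus $\hat{Q}_k = \mcI\otimes\mcI^{\otimes d}\otimes X^{y_{d+2}}\otimes\cdots\otimes X^{y_m}$ with at least one non-zero $y_j$ for $j\ge d+2$.

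The main computational step will be to verify that $E_k^\dagger \hat{Q}_k E_k$ retains this pure-$X$ form supported on the last $d$ qudits (up to phase). I would check this using the conjugation rules of equations~\ref{sumconjugation2}, \ref{multiplyconjugation}, \ref{fourierconjugation} applied to the explicit factorization of $D_k$ in Claim~\ref{decodingoperations}: every SUM inside $D_k$ has its control in $\{1,\ldots,d+1\}$ and/or its target equal to register $1$, and $\hat{Q}_k$ acts as $\mcI$ on all of those registers, so every SUM and the multiplication on register $1$ commutes with $\hat{Q}_k$; the inverse Fourier on registers $2,\ldots,d+1$ likewise commutes since $\hat{Q}_k = \mcI$ there. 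Combining this with the fact that $E_k^\dagger Q_k E_k$ preserves the subspace of states with trailing ancillas $\ket{0}^{\otimes m-1}$ (by $k$-correlation), $E_k^\dagger X^{x_i} E_k$ applied to any such state produces a state whose last $d$ ancillas contain at least one non-zero standard-basis label. Projecting those ancillas onto $\ket{0}\bra{0}^{\otimes d}$ annihilates that block, and because the full projection and decoding factor as tensor products of block factors, this annihilates the entire expression regardless of entanglement across blocks.

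Finally I would assemble the argument: the expression can be non-zero only if $X^{x_i}$ is either identity or $k$-correlated for every $i$, and in particular $X^{x_{i_0}}$ must be $k$-correlated. By Claim~\ref{correlatedx} there are at most two sign keys $k\in\{-1,1\}^m$ for which a non-identity pure $X$ Pauli is $k$-correlated, and these keys are determined entirely by $x_{i_0}$ (hence by $x$, not by $\sigma$). This is exactly the assertion of the corollary. The main and essentially only subtle point will be the verification of the identity $E_k^\dagger \hat{Q}_k E_k = \hat{Q}_k$ (up to phase); everything else is bookkeeping from Claim~\ref{uncorrelated} specialized to pure-$X$ attacks, where the absence of $Z$'s is precisely what makes the weaker projection onto only the last $d$ ancillas (rather than all $m-1$ ancillas, as in Lemma~\ref{lem:PauliPolyAuthSec}) sufficient to detect the deviation.
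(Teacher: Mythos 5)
Your proof is correct and takes essentially the same route as the paper: decompose each block's $X^{x_i}$ via Claim~\ref{correlateddecomposition}, use the conjugation identity $E_k^\dagger \hat{Q}_k E_k = \hat{Q}_k$ (equation~\ref{conjugationproperties}) to show the uncorrelated factor is detected by the $\ket{0}\bra{0}^{\otimes d}$ ancillas, conclude every block must be identity or $k$-correlated, and invoke Claim~\ref{correlatedx} for the two-key bound. Your opening ``peel the Fourier layer off'' step is superfluous and somewhat loosely stated (the corollary's expression is not inside a trace, and the Fourier factors would not simply cancel there); you never actually use it, and the direct conjugation computation you give afterward is precisely the paper's argument.
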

This corollary implies that $\tr(\hat{\Pi}_0^A\sigma_k^P)$  (where $\hat{\Pi}_0^A$ is defined in equation \ref{eq:corollaryprojection}) is non zero for at most 2 sign keys. Fix one sign key for which $\tr(\hat{\Pi}_0^A\sigma_k^P)$ is non zero. We will now simplify the expression in equation \ref{eq:startingstatetrivial} for this fixed sign key $k$.
\begin{eqnarray}
\tr(\hat{\Pi}_0\sigma_k^P)&=& \sum\limits_{a\in F_q^{3L}} \alpha_{P,a}\cdot\tr((\Pi_{G_a}\otimes (\hat{\Pi}_0)_{\mathcal{F}})(E_k^\dagger)^{\otimes |\mathcal{V}_{final}|}X^x\sigma_{a}^k(X^x)^\dagger(E_k)^{\otimes |\mathcal{V}_{final}|})
\end{eqnarray}
Now due to Corollary \ref{soundnesstrivialpaulis}, we know that $X^x$ is $k$-correlated (see Definition \ref{correlatedx}). Because $X^x$ maps an authenticated state to a different authenticated state, it is by definition equal to a logical Pauli operator $\tilde{X}^{x_k}$, so it maps $\sigma_a^k$ to $\sigma_{a,x_k}^k = (X^{x_k}\sigma_a(X^{x_k})^\dagger)^k$:
\begin{eqnarray}
\ldots &=& \sum\limits_{a\in F_q^{3L}} \alpha_{P,a}\cdot\tr((\Pi_{G_a}\otimes (\hat{\Pi}_0)_{\mathcal{F}})(E_k^\dagger)^{\otimes |\mathcal{V}_{final}|}\sigma_{a,x_k}^k(E_k)^{\otimes |\mathcal{V}_{final}|})\\
&=& \sum\limits_{a\in F_q^{3L}} \alpha_{P,a}\cdot\tr(
(\Pi_{G_a}\otimes(\hat{\Pi}_0)_{\mathcal{F}}))\sigma_{a,x_k}'(\Pi_{G_a}\otimes\mcI_{\mathcal{F}}))
\end{eqnarray}
where the equality follows because $\sigma_{a,x_k}^k$ is the density matrix $\sigma_{a,x_k}$ encoded with the signed polynomial code, and $\sigma_{a,x_k}'$ is the density matrix $\sigma_{a,x_k}$ with $m-1$ auxiliary 0 qudits appended to each qudit of $\sigma_{a,x_k}$. Note that the projection $\Pi_{G_a}\otimes(\hat{\Pi}_0)_{\mathcal{F}}$ does not alter the auxiliary 0 qudits, as it acts on each set of $m-1$ 0 qudits as $\mcI^{\otimes d}\otimes\ket{0}\bra{0}^{\otimes d}$. Then we can trace out all the auxiliary 0 qudits and also remove them from the projection $\Pi_{G_a} \otimes (\hat{\Pi}_0)_{\mathcal{F}}$. The projection is now simply $\ket{a}\bra{a}\otimes\ket{1}\bra{1}$ and we have:
\begin{eqnarray}
\tr(\hat{\Pi}_0\sigma_k^P) &=& \frac{1}{2^m}\sum\limits_{a\in F_q^{3L}} \alpha_{P,a}\cdot\tr(
(\ket{a1}\bra{a1})X^{x_k}\sigma_{a} (X^{x_k})^\dagger)
\end{eqnarray}
Now we use \Cl{cl:finalstatedescription} to write $\sigma_a$ as
\begin{equation}
\sigma_a = \frac{1}{q^{3L}}\sum_{l,l'\in F_q^{3L}}\ket{l}\bra{l'}\otimes\tr_{\mathcal{P}_{final}}(\ket{\psi}_{a,l}\bra{\psi}_{a,l'})
\end{equation}
where $\ket{\psi}_{a,a} = U\ket{0}^{\otimes n}$. Let $\sigma_{a,l,l'} = \tr_{\mathcal{P}_{final}}(\ket{\psi}_{a,l}\bra{\psi}_{a,l'})$ and let $X^{x_k} = X^{x_k}_{\mathcal{V}_{L+1}}\otimes X^{x_k}_{\mathcal{F}}$. Now we have:
\begin{eqnarray}
\tr(\hat{\Pi}_0\sigma_k^P) &=& \frac{1}{q^{3L}}\sum\limits_{a,l,l'\in F_q^{3L}} \alpha_{P,a}\cdot\tr(
(\ket{a1}\bra{a1})X^{x_k}(\ket{l}\bra{l'}\otimes\sigma_{a,l,l'}) (X^{x_k})^\dagger)\\
&=& \frac{1}{q^{3L}}\sum\limits_{a,l,l'\in F_q^{3L}} \alpha_{P,a}\cdot\tr
(\ket{a}\bra{a}(X^{x_k}_{\mathcal{V}_{L+1}}\ket{l}\bra{l'}(X^{x_k}_{\mathcal{V}_{L+1}})^\dagger))\tr(\ket{1}\bra{1} X^{x_k}_{\mathcal{F}}\sigma_{a,l,l'}(X^{x_k}_{\mathcal{F}})^\dagger))\nonumber\\
&=& \frac{1}{q^{3L}}\sum\limits_{a\in F_q^{3L}} \alpha_{P,a}\cdot\tr(\ket{1}\bra{1} X^{x_k}_{\mathcal{F}}\sigma_{a,a-x_k,a-x_k}(X^{x_k}_{\mathcal{F}})^\dagger))\\\label{eq:nontrivialclaimprojection}
&\leq& \frac{1}{q^{3L}}\sum\limits_{a\in F_q^{3L}} \alpha_{P,a}
\end{eqnarray}
where we have obtained the second to last equality because $l= l' = a-x_k$ in order for $\tr
(\ket{a}\bra{a}(X^{x_k}\ket{l}\bra{l'}(X^{x_k})^\dagger))$ to be 1 (otherwise the trace will be 0). The last equality is obtained because $\sigma_{a,a-x_k,a-x_k}$ is a density matrix. 
\begin{remark}\label{remark:claimcorol}
Note that the final inequality would still hold if the projection $\ket{1}\bra{1}$ was replaced by $\mcI$; this fact will be useful in the proof of Claim \ref{cl:nontrivialdensitymatrix}, which is required for the proof of the polynomial version of Corollary \ref{corol:confidence} (Corollary \ref{corol:polyconfidence}). 
\end{remark}
\end{proofof}

\begin{proofof}{Corollary \ref{soundnesstrivialpaulis}}
Let $X^x = X^{x_1}\otimes\cdots\otimes X^{x_t}$. We show that if there exists an $i$ for which  $X^{x_i}$ is not $k$-correlated,
\begin{equation}
(\mcI^{\otimes d+1}\otimes\ket{0}\bra{0}^{\otimes d})^{\otimes tm}(E_k^\dagger)^{\otimes t} X^x \sigma^k (X^x)^\dagger E_k^{\otimes t} (\mcI^{\otimes d+1}\otimes\ket{0}\bra{0}^{\otimes d})^{\otimes tm}
\end{equation}
\begin{equation}\label{eqsoundnesstrivialpaulis}
=(\mcI^{\otimes d+1}\otimes\ket{0}\bra{0}^{\otimes d})^{\otimes tm}((E_k^\dagger)^{\otimes t} X^x E_k^{\otimes t})\sigma((E_k^\dagger)^{\otimes t} X^x E_k^{\otimes t})^\dagger (\mcI^{\otimes d+1}\otimes\ket{0}\bra{0}^{\otimes d})^{\otimes tm} = 0
\end{equation}
where the equality follows due to the fact that $\sigma^k$ is the density matrix $\sigma$ encoded with the signed polynomial code. It follows by Claim \ref{correlatedx} that the expression can be non zero for at most 2 sign keys, which completes the proof. Note that it is implied that the choice of $k$ for which the expression is non zero is independent of $\sigma$; it is dependent only on whether $X^x$ is $k$-correlated or not. Assume there exists an $i$ for which $X^{x_i}$ is not $k$-correlated. By \Cl{correlateddecomposition} we can break down $X^{x_i}$ into a product of a $k$-correlated operator $Q_k$ and an uncorrelated operator $\hat{Q}_k$. By equation \ref{equation:uncorrelatedform}, we know that 
$\hat{Q}_k$ can be written as
\begin{equation}
\hat{Q}_k = \mcI^{\otimes d+1}\otimes X^{\hat{x_i}_{d+2}}\otimes\cdots\otimes X^{\hat{x_i}_m}
\end{equation}
where $(\hat{x_i}_{d+2},\ldots,\hat{x_i}_m)\neq 0^d$ since $X^{x_i}$ is not $k$-correlated. Now we can refer to equation \ref{conjugationproperties} to write:
\begin{equation}\label{uncorrelatedconjugation}
E_k^\dagger \hat{Q}_k E_k = \mcI^{\otimes d+1}\otimes X^{\hat{x_i}_{d+2}}\otimes\cdots\otimes X^{\hat{x_i}_m}
\end{equation}
Returning to the expression above (equation \ref{eqsoundnesstrivialpaulis}), but just the leftmost part of the expression which operates on the $i^{th}$ register of $\sigma$, we have:
\begin{eqnarray}
(\mcI^{\otimes d+1}\otimes\ket{0}\bra{0}^{\otimes d})E_k^\dagger X^{x_i} E_k &=& (\mcI^{\otimes d+1}\otimes\ket{0}\bra{0}^{\otimes d})E_k^\dagger \hat{Q}_k Q_k E_k\\
&=& (\mcI^{\otimes d+1}\otimes\ket{0}\bra{0}^{\otimes d})E_k^\dagger \hat{Q}_k E_k E_k^\dagger Q_k E_k
\end{eqnarray}
Plugging in the expression for $E_k^\dagger\hat{Q}_k E_k$ from equation \ref{uncorrelatedconjugation}, we have
\begin{eqnarray}\label{eq:beforeinnerproduct}
\ldots &=& (\mcI^{\otimes d+1}\otimes\ket{0}\bra{0}^{\otimes d})(\mcI^{\otimes d+1}\otimes X^{\hat{x_i}_{d+2}}\otimes\cdots\otimes X^{\hat{x_i}_m})E_k^\dagger Q_k E_k
\end{eqnarray}
Observe that $Q_k$ is $k$-correlated; by definition, it preserves a state authenticated with a sign key. Then the rightmost part of equation \ref{eq:beforeinnerproduct} is simply $E_k^\dagger$ acting on an authenticated state. When $E_k^\dagger$ acts on an arbitrary authenticated density matrix, it performs the following map:
\begin{equation}
E_k^\dagger \sigma^k = \sigma\otimes\ket{0}\bra{0}^{m-1}
\end{equation}
It follows that the above expression (equation \ref{eq:beforeinnerproduct}) contains the inner product below:
\begin{equation}
\bra{0^d}(X^{\hat{x_i}_{d+2}}\otimes\cdots\otimes X^{\hat{x_i}_m})\ket{0^d}
\end{equation}
which must be equal to 0 since $(\hat{x_i}_{d+2},\ldots,\hat{x_i}_m)\neq 0^d$. 
\end{proofof}

\subsubsection{Proofs of Necessary Claims}
We begin with \Cl{cl:finalstatedescription}.

\begin{proofof}{\Cl{cl:finalstatedescription}}
Recall that we are considering the unauthenticated state $\sigma_a$ (defined in \Cl{claim:statesimplify}), where $\sigma_a = \tr_{\mathcal{P}_{final}}(\rho_a)$, for $a = (a_1,\ldots,a_L)$ ($a_i\in F_q^3$) and, as defined in \Cl{claim:polystateformat} (in equation \ref{eq:finalstate}), 
\begin{equation}
\rho_a = (Q_{L}C_{a_L}\cdots Q_1C_{a_1}Q_0)\rho (Q_{L}C_{a_L}\cdots Q_1C_{a_1}Q_0)^\dagger
\end{equation}
Note that $\rho_a$ is a pure state and can be written as $\ket{\psi_a}\bra{\psi_a}$ (since $\rho$ consists of $n$ 0 qudits and $L$ resource states). Fact \ref{fact:teleportationstate} states that $\ket{\psi_a}$ can be written as:
\begin{equation}
\ket{\psi_a} = \frac{1}{\sqrt{q^{3L}}}\sum_{l\in F_q^{3L}}\ket{l}\ket{\psi}_{a,l}
\end{equation}
where $\ket{\psi}_{a,a} = U\ket{0}^{\otimes n}$ (recall that $U$ is the \qc\ instance which the prover was asked to apply). It follows that the state $\sigma_a$ can be written as
\begin{equation}
\sigma_a = \tr_{\mathcal{P}_{final}}(\ket{\psi_a}\bra{\psi_a}) = \frac{1}{q^{3L}}\sum_{l,l'\in F_q^{3L}}\ket{l}\bra{l'}\otimes\tr_{\mathcal{P}_{final}}(\ket{\psi}_{a,l}\bra{\psi}_{a,l'})
\end{equation}
\end{proofof}
We proceed to proving Claim \ref{claim:trivialcommute}.

\begin{proofof}{\Cl{claim:trivialcommute}}
Recall that we would like to prove
\begin{equation}
(\Pi_{G_a}\otimes(\hat{\Pi}_0)_{\mathcal{F}} )(E_k^\dagger)^{\otimes |\mathcal{V}_{final}|}P E_k^{\otimes |\mathcal{V}_{final}|} = (E_k^\dagger)^{\otimes |\mathcal{V}_{final}|}P E_k^{\otimes |\mathcal{V}_{final}|}(\Pi_{G_a}\otimes(\hat{\Pi}_0)_{\mathcal{F}})
\end{equation}
for a trivial Pauli operator $P\in\mbP_{|\mathcal{V}_{final}|}$, where $\Pi_{G_a}$ is defined in equation \ref{simplifiedprojection} and $(\hat{\Pi}_0)_{\mathcal{F}}$ is defined in \ref{acceptingprojection}. In other words, we want to show that the Pauli operator $(E_k^\dagger)^{\otimes |\mathcal{V}_{final}|}P E_k^{\otimes |\mathcal{V}_{final}|}$ commutes with the projection $\Pi_{G_a}\otimes(\hat{\Pi}_0)_{\mathcal{F}}$ when $P$ is trivial. Observe that
\begin{equation}
(E_k^\dagger)^{\otimes |\mathcal{V}_{final}|}P E_k^{\otimes |\mathcal{V}_{final}|}
\end{equation}
must be trivial in the registers $1,d+2,\ldots,m$; only registers $d+2,\ldots,m$ can be non trivial. This follows from the definition of $E_k$ (see Section \ref{sec:encodingcircuit}). In more detail, $E_k^\dagger$ consists of \textit{SUM} and multiplication operations (which compose $D_k^\dagger$) followed by Fourier operations. As shown in Section \ref{sec:conjugationproperties} (in equations \ref{sumconjugation} and \ref{multiplyconjugation}), the \textit{SUM} and multiplication operators in $D_k^\dagger$ map trivial operators to trivial operators by conjugation. The Fourier operators, which flip $Z$ and $X$ operators, act only on registers $2,\ldots,d+1$, so only these registers can be mapped to non trivial operators. 

Note that trivial operators commute with standard basis projections, and $\Pi_{G_a}\otimes(\hat{\Pi}_0)_{\mathcal{F}}$ acts with standard basis projections on registers $1,d+2,\ldots,m$ for each block of $m$ registers. Since $\Pi_{G_a}\otimes(\hat{\Pi}_0)_{\mathcal{F}}$ acts as $\mcI$ on registers $2,\ldots,d+1$ for each block of $m$ registers, the non trivial portion of $(E_k^\dagger)^{\otimes |\mathcal{V}_{final}|}P E_k^{\otimes |\mathcal{V}_{final}|}$ also commutes with $\Pi_{G_a}\otimes(\hat{\Pi}_0)_{\mathcal{F}}$. 
\end{proofof}

Finally, we prove \Cl{cl:simplifyingsoundness}.

\begin{proofof}{\Cl{cl:simplifyingsoundness}}
Recall that our goal is to prove the following equality for all $P = Z^zX^x\in \mbP_{|\mathcal{V}_{final}|}$:
\begin{equation}
\tr(\hat{\Pi}_0\sigma_k^P) = \sum\limits_{a\in F_q^{3L}} \alpha_{P,a}\cdot\tr((\Pi_{G_a}\otimes (\hat{\Pi}_0)_{\mathcal{F}})(E_k^\dagger)^{\otimes |\mathcal{V}_{final}|}X^x\sigma_{a}^k(X^x)^\dagger(E_k)^{\otimes |\mathcal{V}_{final}|})
\end{equation}
where $\sigma_k^P$ was defined as follows in \Cl{claim:statesimplify}:
\begin{equation}
\sigma_k^P = \sum\limits_{\Delta_L\in F_q^{|\mathcal{V}_{L+1}|}} \alpha_{P,g(\Delta_L)}\cdot((\ket{\Delta_L}\bra{\Delta_L}\otimes\mcI_{\mathcal{F}})(D_k^\dagger)^{\otimes|\mathcal{V}_{final}|}P)\sigma_{g(\Delta_L)}^k((\ket{\Delta_L}\bra{\Delta_L}\otimes\mcI_{\mathcal{F}})(D_k^\dagger)^{\otimes|\mathcal{V}_{final}|}P)^\dagger
\end{equation}
and the projection $\hat{\Pi}_0$ acting on $\mathcal{V}_{final} = \mathcal{V}_{L+1}\cup F$ was defined in equation \ref{acceptingprojection} as follows:
\begin{equation}
\hat{\Pi}_0 = (\mcI^{\otimes d+1}\otimes\ket{0}\bra{0}^{\otimes d})^{\otimes 3L}_{\mathcal{V}_{L+1}}\otimes (\ket{1}\bra{1}\otimes\mcI^{\otimes d}\otimes\ket{0}\bra{0}^{\otimes d})_{\mathcal{F}} = (\hat{\Pi}_0)_{\mathcal{V}_{L+1}}\otimes(\hat{\Pi}_0)_{\mathcal{F}}
\end{equation}
Before continuing, note that the projection $\hat{\Pi}_0$ makes it unnecessary to sum over all $\Delta_L$ in the expression for $\sigma_k$; we can instead sum over a subset of $\Delta_L$, and split the sum according to the value $a\in F_q^{3L}$ of $g(\Delta_L)$:
\begin{equation}\label{eqduringsimplification}
\tr(\hat{\Pi}_0\sigma_k^P) = \tr(\hat{\Pi}_0\sum\limits_{\substack{a\in F_q^{3L}\\\Delta_L\in G_a}} \alpha_{P,a}\cdot((\ket{\Delta_L}\bra{\Delta_L}\otimes\mcI_{\mathcal{F}})(D_k^\dagger)^{\otimes|\mathcal{V}_{final}|}P)\sigma_{a}^k((\ket{\Delta_L}\bra{\Delta_L}\otimes\mcI_{\mathcal{F}})(D_k^\dagger)^{\otimes|\mathcal{V}_{final}|}P)^\dagger)
\end{equation}
where 
\begin{equation}\label{goodmeasurementspace}
G_a \EqDef \{((s_1,0^{d}),\ldots,(s_{3L},0^{d}))|s_1,\ldots,s_{3L}\in F_q^{d+1},g(s_1,\ldots,s_{3L}) = a\}
\end{equation}
Instead of summing over all $\Delta_L\in F_q^{|\mathcal{V}_{L+1}|}$, we have restricted to $\Delta_L\in \cup_a G_a$. This is because $\Delta_L$ must equal $((s_1,0^{d}),\ldots,(s_{3L},0^{d}))$ to give a non zero trace when projected onto $\hat{\Pi}_0$. Next, since $\ket{\Delta_L}\bra{\Delta_L}\otimes\mcI_{\mathcal{F}}$ commutes with $\hat{\Pi}_0$, we can remove $\ket{\Delta_L}\bra{\Delta_L}\otimes\mcI_{\mathcal{F}}$ from the right hand side of equation \ref{eqduringsimplification} (due to the cyclic nature of trace), obtaining:
\begin{eqnarray}
\tr(\hat{\Pi}_0\sigma_k^P) &=& \tr(\hat{\Pi}_0\sum\limits_{\substack{a\in F_q^{3L}\\\Delta_L\in G_a}} \alpha_{P,a}\cdot((\ket{\Delta_L}\bra{\Delta_L}\otimes\mcI_{\mathcal{F}})(D_k^\dagger)^{\otimes|\mathcal{V}_{final}|}P)\sigma_{a}^k((D_k^\dagger)^{\otimes|\mathcal{V}_{final}|}P)^\dagger)\\
&=& \tr(\hat{\Pi}_0\sum\limits_{a\in F_q^{3L}} \alpha_{P,a}\cdot((\Pi_{G_a}\otimes\mcI_{\mathcal{F}})(D_k^\dagger)^{\otimes|\mathcal{V}_{final}|}P)\sigma_{a}^k((D_k^\dagger)^{\otimes|\mathcal{V}_{final}|}P)^\dagger)\label{restrictedsum}
\end{eqnarray}
where for
$a = (a(1),\ldots,a(3L))\in F_q^{3L}$ 
\begin{equation}
\Pi_{G_a} = \sum_{\Delta_L\in G_a}\ket{\Delta_L}\bra{\Delta_L} = (\ket{a(1)}\bra{a(1)}\otimes\mcI^{\otimes d}\otimes\ket{0}\bra{0}^{\otimes d})\otimes\cdots\otimes (\ket{a(3L)}\bra{a(3L)}\otimes\mcI^{\otimes d}\otimes\ket{0}\bra{0}^{\otimes d})
\end{equation}
We can further simplify to:
\begin{equation}
\tr(\hat{\Pi}_0\sigma_k^P) = \sum\limits_{a\in F_q^{3L}} \alpha_{P,a}\cdot\tr((\Pi_{G_a}\otimes (\hat{\Pi}_0)_{\mathcal{F}})(D_k^\dagger)^{\otimes |\mathcal{V}_{final}|}P\sigma_{a}^kP^\dagger(D_k)^{\otimes |\mathcal{V}_{final}|})
\end{equation}
because
\begin{equation}
(\hat{\Pi}_0)_{\mathcal{V}_{L+1}}\Pi_{G_a} = \Pi_{G_a}
\end{equation}
Recall (from \ref{equation:encodingcircuit}) that
\begin{equation}
D_k^\dagger = (\mcI\otimes F^{\otimes d}\otimes \mcI^{\otimes d}) E_k^\dagger
\end{equation}
It follows that:
\begin{eqnarray}\label{equation:switchtorealdecoding}
(\Pi_{G_a}\otimes (\hat{\Pi}_0)_{\mathcal{F}})(D_k^\dagger)^{\otimes |\mathcal{V}_{final}|} &=& (\Pi_{G_a}\otimes (\hat{\Pi}_0)_{\mathcal{F}})((\mcI\otimes F^{\otimes d}\otimes \mcI^{\otimes d})^\dagger)^{\otimes |\mathcal{V}_{final}|}(E_k^\dagger)^{\otimes |\mathcal{V}_{final}|}\\
&=& ((\mcI\otimes F^{\otimes d}\otimes \mcI^{\otimes d})^\dagger)^{\otimes |\mathcal{V}_{final}|}(\Pi_{G_a}\otimes (\hat{\Pi}_0)_{\mathcal{F}})(E_k^\dagger)^{\otimes |\mathcal{V}_{final}|}\label{switchdecoding}
\end{eqnarray}
The commutation in the final equality occurs because of the structure of $(\Pi_{G_a}\otimes (\hat{\Pi}_0)_{\mathcal{F}})$; for each set of $m$ registers, it acts as identity on registers $2,\ldots, d+1$ in the set. Plugging in the equality obtained above, we obtain:
\begin{eqnarray}
\tr(\hat{\Pi}_0\sigma_k^P)
&=& \sum\limits_{a\in F_q^{3L}} \alpha_{P,a}\cdot\tr(((\Pi_{G_a}\otimes(\hat{\Pi}_0)_{\mathcal{F}})(E_k^\dagger)^{\otimes|\mathcal{V}_{final}|}P)\sigma_{a}^k((E_k^\dagger)^{\otimes|\mathcal{V}_{final}|}P)^\dagger)
\end{eqnarray}
Note that we have removed the terms corresponding to $((\mcI\otimes F^{\otimes d}\otimes \mcI^{\otimes d})^\dagger)^{\otimes |\mathcal{V}_{final}|}$; these terms canceled due to the cyclic nature of trace, since they were present on both ends of the above expression.

To complete the claim, we need to show that $P = Z^zX^x$ can be replaced by $X^x$. To begin, observe that in the expression for $\sigma_k^P$, we can replace
\begin{equation}
(E_k^\dagger)^{\otimes |\mathcal{V}_{final}|}P 
\end{equation}
with
\begin{equation}
(E_k^\dagger)^{\otimes |\mathcal{V}_{final}|}Z^z E_k^{\otimes |\mathcal{V}_{final}|} (E_k^\dagger)^{\otimes |\mathcal{V}_{final}|}X^x
\end{equation}
By \Cl{claim:trivialcommute}, we know that the Pauli operator $(E_k^\dagger)^{\otimes |\mathcal{V}_{final}|}Z^z E_k^{\otimes |\mathcal{V}_{final}|}$ can be commuted past the projection $\Pi_{G_a}\otimes(\hat{\Pi}_0)_{\mathcal{F}}$. Note that since we have now pulled the Pauli operator 
\begin{equation}
(E_k^\dagger)^{\otimes |\mathcal{V}_{final}|}Z^z E_k^{\otimes |\mathcal{V}_{final}|}
\end{equation}
past the projection, we can remove it from the expression, due to the cyclic nature of trace.
\end{proofof}

\section{Blind \QPIP}\label{app:blind}
In this section, we will prove the following theorem: 

\begin{statement}{\Th{thm:blind}} \textit{\Th{thm:qcircuit} holds also in a blind setting,
namely, the prover does not get any
 information regarding the function being computed and its input.}
\end{statement}\newline
To begin, we define blindness: 
\begin{deff} \cite{blind,broadbent2008ubq,childs2001saq}
  Secure blind quantum computation is a
  process where a server computes a function for a client and the following
  properties hold:
\begin{itemize}
\item \textbf{Blindness}: The prover gets no information beyond an upper bound on the size of the circuit.
    Formally, in a blind computation scheme for a set of circuits  $\mathfrak{C}_n$ which take as input strings in $\{0,1\}^n$, the prover's reduced density matrix is identical for every
      $C\in\mathfrak{C}_n$ and input $x\in\{0,1\}^n$.
\item \textbf{Security}: Completeness and soundness hold the same way as in
      \QAS\ (\Def{def:qas}).
\end{itemize}

\end{deff}
We use the \QPIP\ protocols for \qc\ in order to provide a blind \QPIP\ for
any language in \BQP. To do this, we require a universal circuit. Roughly, a universal circuit acts on
input bits and control bits. The control bits can be thought of as a description of a circuit that should be applied to the input bits. Universal circuits can be formally defined as follows:
\begin{deff} For a circuit $U$ acting on $n$ qubits, let $c(U)\in\{0,1\}^k$ be the canonical (classical) description of $U$. The universal circuit $\mathfrak{U}_{n,k}$ acts in the
  following way:
  \begin{eqnarray}
    \mathfrak{U}_{n,k} \ket\phi\otimes\ket{c(U)}
    \longrightarrow U\ket\phi\ket{c(U)}
  \end{eqnarray}
\end{deff}
Constructing such a circuit is an easy exercise. We would like the universal circuit to simulate any circuit made of Toffoli and Hadamard gates on $n$ qubits - it is well known that such circuits are quantum universal. Assume there is an upper bound of $m$ gates in the circuit. The universal circuit is split up into $m$ layers. Each layer $i$ consists of every possible gate on the $n$ input qubits, and each such gate is controlled by 1 qubit. Only 1 of these control qubits will be set to 1, based on which is the $i^{th}$ gate applied in $U$. 
\\~\\
To perform a blind computation, the verifier will compute, with the
prover's help, the result of the universal circuit acting on input and control
bits. It follows that to prove blindness, we need to show that the input to the universal circuit is hidden from the prover. To do this, we show that the prover's density matrix in both the Clifford and polynomial schemes remains independent of the input (to the universal circuit) throughout the computation. We begin with the Clifford scheme.

\begin{statement}{\Th{thm:blindclifford} (Blindness of the Clifford based \QPIP)}\textit{The state of the prover in the Clifford based \QPIP\ (Protocol \ref{prot:CliffordIP}) is independent of the input to the circuit which is being computed throughout the protocol.}\end{statement}

\begin{proofof}{\Th{thm:blindclifford}}
We do not need to consider the prover's extra space, since that contains no information about the input. We need only consider the qubits sent to the prover by the verifier. Whenever the prover receives a state from the verifier (at the beginning of the protocol and during the protocol), the verifier has chosen new, independent keys to encode the state using the Clifford \QAS. Therefore, each density matrix sent to the prover by the verifier is the maximally mixed state (by \Le{cliffordmix}) regardless of the input. This remains true throughout the protocol, since when the prover sends a register to the verifier, the verifier returns a completely mixed state, independent of the remaining registers. It follows that the prover's state (other than his extra space) is always described by the completely mixed state.
\end{proofof}
\\~\\
We now consider the polynomial \QPIP.

\begin{statement}{\Th{thm:blindpolynomial} (Blindness of the Polynomial Based \QPIP)}\textit{The state of the prover in the polynomial based \QPIP\ (Protocol \ref{prot:PolynomialIP}) remains independent of the input to the circuit which is being computed throughout the protocol.}\end{statement}\newline
We remark that the proof of this fact turns out to be rather cumbersome, because we are relying on the randomness provided by the measurement results to prove blindness. In fact, the proof can be greatly simplified by adding additional randomness to the Toffoli states (intuitively, this adds a one time pad to the decoded measurement results sent to the prover). However, it is interesting to note that this additional randomness is not needed for blindness, and that the randomness of the measurement results is indeed enough. 
\\~\\
\begin{proofof}{\Th{thm:blindpolynomial}}
This case is more complicated, due to the classical interaction in each round. Without the classical interaction, the prover's initial state is just the maximally mixed state, due to the Pauli keys (by Lemma \ref{paulimix}), so blindness follows easily in this case. Returning to the case in which there is classical interaction, we need to show that the joint quantum state and classical information of the prover are independent of the input to the computation. 
Recall that each message sent back by the verifier is a decoded measurement result which is required in order to apply the Toffoli gate. We will argue that due to the way the Toffoli gate is applied (see Section \ref{app:toffoli} and Fact \ref{fact:teleportationstate}), the decoded measurement results are uniformly random regardless of the input. This implies that revealing the decoded measurement result leaks no information about the input. We now proceed to prove this formally.
\\~\\
We will show that at the start of the final round (round $L+1$) the prover's state (which includes the classical messages from the verifier) is independent of the input. As given in \Cl{claim:polystateformat} and equation \ref{startoffinalround} the joint state of the prover's registers, $\mathcal{P}_{L+1}$, the environment $\mathcal{E}$, the verifier's registers, $\mathcal{V}_{L+1}$ and the key register containing the sign key and Pauli keys of qudits in $\mathcal{P}_{L+1}$, is:
\begin{equation}
\frac{1}{2^m|\mbP_{m'}|}\sum_{\substack{z,x\in F_q^{|\mathcal{P}_{L+1}|}\\k\in\{-1,1\}^m}}\tau_{L+1}(z,x,k)\otimes\sum_{\Delta_L,z_1,x_1\in F_q^{|\mathcal{V}_{L+1}|}} W_{\Delta_L,\hat{z},\hat{x},k}^{L+1} (\rho^k_{g(\Delta_L)}\otimes\rho_{\mathcal{E}} )(W_{\Delta_L,\hat{z},\hat{x}}^{L+1})^\dagger
\end{equation}
where 
\begin{equation}
\hat{z} = (z_1,z), \hat{x} = (x_1,x),
\end{equation}
\begin{equation}
W_{\Delta_{L},\hat{z},\hat{x},k}^{L+1} = (\ket{\Delta_{L}}\bra{\Delta_{L}}(D_k^\dagger)^{\otimes |\mathcal{V}_{L+1}|}(Z^{z_1}X^{x_1})^\dagger_{\mathcal{V}_{L+1}}\otimes\mcI_{\mathcal{P}_{L+1},E}) U_{g(\Delta_{L})} ((Z^{z_1}X^{x_1})_{\mathcal{V}_{L+1}}\otimes (Z^{z}X^{x})_{\mathcal{P}_{L+1}}\otimes \mcI_{\mathcal{E}})
\end{equation}
where $U_{g(\Delta_{L})}$ is a unitary operator dependent on $g(\Delta_{L})$, $\rho_{\mathcal{E}}$ is the initial state of the prover's environment and 
\begin{equation}
\rho_{g(\Delta_{L})}^k = (\tilde{Q}_{L}\tilde{C}_{g(\delta_{L})}\cdots \tilde{Q}_1\tilde{C}_{g(\delta_1)}\tilde{Q}_0)\rho^k (\tilde{Q}_{L}\tilde{C}_{g(\delta_{L})}\cdots \tilde{Q}_1\tilde{C}_{g(\delta_1)}\tilde{Q}_0)^\dagger
\end{equation}
for $\Delta_{L} = (\delta_1,\ldots,\delta_{L})\in F_q^{|\mathcal{V}_{L+1}|}$ and where $\rho^k$ is the initial density matrix, containing an authentication of the input state on $n$ qudits and authentications of $L$ Toffoli states on 3 qudits each. 
\\~\\
First, since we are only considering the prover's state, we can trace out the verifier's registers $\mathcal{V}_{L+1}$ and the first register containing the keys. This gives the following state:
\begin{equation}
\frac{1}{2^m|\mbP_{m'}|}\sum_{\substack{z,x\in F_q^{|\mathcal{P}_{L+1}|}\\k\in\{-1,1\}^m}}\sum_{\Delta_L,z_1,x_1\in F_q^{|\mathcal{V}_{L+1}|}} \tr_{\mathcal{V}_{L+1}}(W_{\Delta_L,\hat{z},\hat{x},k}^{L+1} (\rho^k_{g(\Delta_L)}\otimes\rho_{\mathcal{E}} )(W_{\Delta_L,\hat{z},\hat{x}}^{L+1})^\dagger)
\end{equation}
Next, the sum over $z,x\in F_q^{|\mathcal{P}_{L+1}|}$ allows us to use the Pauli mixing lemma (\Le{paulimix}) to replace $\rho_{g(\Delta_{L})}^k$ with 
\begin{equation}
 \tr_{\mathcal{P}_{L+1}}(\rho_{g(\Delta_{L})}^k) \otimes\frac{1}{q^{|\mathcal{P}_{L+1}|}}\mcI^{\otimes |\mathcal{P}_{L+1}|}
\end{equation}
We can now rewrite the prover's state as:
\begin{equation}\label{eq:blindnessproversstate}
\frac{1}{2^m|\mbP_{|\mathcal{V}_{L+1}|}|}\sum_{\substack{k\in\{-1,1\}^m\\\Delta_L,z_1,x_1\in F_q^{|\mathcal{V}_{L+1}|}}} \tr_{\mathcal{V}_{L+1}}(W_{\Delta_L,z_1,x_1,k}'  (\sigma_{g(\Delta_L),k})  (W_{\Delta_L,z_1,x_1,k}')^\dagger)
\end{equation}
where 
\begin{equation}
\sigma_{g(\Delta_L),k} =  \tr_{\mathcal{P}_{L+1}}(\rho_{g(\Delta_{L})}^k) \otimes\frac{1}{q^{|\mathcal{P}_{L+1}|}}\mcI^{\otimes |\mathcal{P}_{L+1}|}\otimes\rho_{\mathcal{E}} 
\end{equation}
\begin{equation}
W_{\Delta_{L},z_1,x_1,k}' = (\ket{\Delta_{L}}\bra{\Delta_{L}}(D_k^\dagger)^{\otimes |\mathcal{V}_{L+1}|}(Z^{z_1}X^{x_1})^\dagger_{\mathcal{V}_{L+1}}\otimes\mcI_{\mathcal{P}_{L+1},E}) U_{g(\Delta_{L})} ((Z^{z_1}X^{x_1})_{\mathcal{V}_{L+1}}\otimes \mcI_{\mathcal{P}_{L+1}}\otimes \mcI_{\mathcal{E}})
\end{equation}
Since $\rho_{g(\Delta_L)}^k$ is a pure state, it can be written as $\ket{\psi_{g(\Delta_L)}}^k\bra{\psi_{g(\Delta_L)}}^k$. Recall from Fact \ref{fact:teleportationstate} that the unauthenticated state $\ket{\psi_{g(\Delta_L)}}$ can be written as
\begin{equation}
\ket{\psi_{g(\Delta_{L})}} = \frac{1}{\sqrt{q^{3L}}}\sum_{l\in F_q^{3L}} \ket{l}\ket{\psi}_{g(\Delta_{L}),l}
\end{equation}
Given this form of the unauthenticated state, we revert back to analyzing the authenticated state, which is:
\begin{equation}
\rho_{g(\Delta_L)}^k = (\ket{\psi_{g(\Delta_{L})}}\bra{\psi_{g(\Delta_{L})}})^k = \frac{1}{q^{3L}}\sum_{l,l'\in F_q^{3L}} (\ket{l}\bra{l'})^k\otimes(\ket{\psi}_{g(\Delta_{L}),l}\bra{\psi}_{g(\Delta_{L}),l'})^k
\end{equation}
Since $\mathcal{V}_{L+1}$ corresponds to the first register in the above sum (containing the authentication of $l$) and $\mathcal{P}_{L+1}$ corresponds to the register containing the authentication of $\ket{\psi}_{g(\Delta_{L}),l}$, we have:
\begin{equation}\label{eq:authenticatedtracedoutstate}
\tr_{\mathcal{P}_{L+1}}(\rho_{g(\Delta_{L})}^k) = \frac{1}{q^{3L}}\sum_{l,l'\in F_q^{3L}}\tr( \ket{\psi}_{g(\Delta_{L}),l}\bra{\psi}_{g(\Delta_{L}),l'})^k\cdot(\ket{l}\bra{l'})^k 
\end{equation}
In the following claim (which we prove after this proof), we show that once we plug in the above expression into the prover's state as given in Equation \ref{eq:blindnessproversstate}, the state is only non zero when $l = l'$. To see this, observe that summing over $z_1,x_1\in F_q^{|\mathcal{V}_{L+1}|}$ results in decohering (removing cross terms of) the part of $U_{g(\Delta_L)}$ acting on register $\mathcal{V}_{L+1}$ by the Pauli decoherence lemma (Lemma \ref{PauliDiscretization}). Then due to the standard basis projection onto $\ket{\Delta_L}\bra{\Delta_L}$, the state will be zero unless $l = l'$. 
\begin{claim}\label{cl:blindnessdecoherence}
The following expression (from equation \ref{eq:blindnessproversstate}), which represents the prover's state at the start of round $L+1$
\begin{equation}
\frac{1}{2^m|\mbP_{|\mathcal{V}_{L+1}|}|}\sum_{\substack{k\in\{-1,1\}^m\\\Delta_L,z_1,x_1\in F_q^{|\mathcal{V}_{L+1}|}}} \tr_{\mathcal{V}_{L+1}}(W_{\Delta_L,z_1,x_1,k}' ( \sigma_{g(\Delta_L),k} )(W_{\Delta_L,z_1,x_1,k}')^\dagger)
\end{equation}
is equal to
\begin{equation}
\frac{1}{q^{3L}2^m}\sum_{\substack{k\in\{-1,1\}^m\\P\in \mbP_{|\mathcal{V}_{L+1}|}\\ \Delta_L\in F_q^{|\mathcal{V}_{L+1}|},l\in F_q^{3L}}}  \mu_{\Delta_L,l,P,k}\cdot U_{P,g(\Delta_L)} (\frac{1}{q^{|\mathcal{P}_{L+1}|}}\mcI^{\otimes |\mathcal{P}_{L+1}|}\otimes\rho_{\mathcal{E}} )U_{P,g(\Delta_L)}^\dagger 
\end{equation}
where
\begin{equation}\label{blindnessoperatorbreakdown}
U_{g(\Delta_L)} = \sum_{P\in\mbP_{|\mathcal{V}_{L+1}|}} P\otimes U_{P,g(\Delta_L)}
\end{equation}
and
\begin{equation}
\mu_{\Delta_L,l,P,k} = \tr(\ket{\Delta_{L}}\bra{\Delta_{L}}(D_k^\dagger)^{\otimes |\mathcal{V}_{L+1}|}P (\ket{l}\bra{l})^k P^\dagger (D_k)^{\otimes |\mathcal{V}_{L+1}|})
\end{equation}
\end{claim}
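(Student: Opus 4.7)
The plan has three main ingredients: apply \Le{PauliDiscretization} to the sum over $(z_1, x_1)$, substitute the decomposition of $\tr_{\mathcal{P}_{L+1}}(\rho^k_{g(\Delta_L)})$ from \Eq{eq:authenticatedtracedoutstate}, and then exploit the standard-basis projection onto $\ket{\Delta_L}$ to eliminate the off-diagonal $l \neq l'$ contributions. The hard part will be this last step: showing that the quantity $\alpha_l \EqDef \bra{\Delta_L} (D_k^\dagger)^{\otimes |\mathcal{V}_{L+1}|} P \ket{l}^k$ is non-zero for at most one value of $l$, so that only the diagonal $l = l'$ survives the sum over $(l,l')$.

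First I would expand $W'_{\Delta_L, z_1, x_1, k}\,\sigma_{g(\Delta_L), k}\,(W'_{\Delta_L, z_1, x_1, k})^\dagger$, denote $Q \EqDef Z^{z_1}X^{x_1}$, and pull the decoding-and-projection operator $\ket{\Delta_L}\bra{\Delta_L}(D_k^\dagger)^{\otimes |\mathcal{V}_{L+1}|}$ outside; the middle factor is then $(Q^\dagger U_{g(\Delta_L)} Q)\,\sigma_{g(\Delta_L), k}\,(Q^\dagger U_{g(\Delta_L)}^\dagger Q)$ with all of the $Q$'s acting on $\mathcal{V}_{L+1}$. Averaging over $Q \in \mbP_{|\mathcal{V}_{L+1}|}$ and invoking \Le{PauliDiscretization} on the decomposition $U_{g(\Delta_L)} = \sum_P P \otimes U_{P, g(\Delta_L)}$ collapses the middle factor to $\sum_P (P \otimes U_{P, g(\Delta_L)})\,\sigma_{g(\Delta_L), k}\,(P \otimes U_{P, g(\Delta_L)})^\dagger$, and the $1/|\mbP_{|\mathcal{V}_{L+1}|}|$ prefactor cancels exactly. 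Substituting \Eq{eq:authenticatedtracedoutstate} for the $\mathcal{V}_{L+1}$-component of $\sigma_{g(\Delta_L), k}$ introduces the double sum $\frac{1}{q^{3L}}\sum_{l,l'} \langle\psi_{g(\Delta_L),l'}|\psi_{g(\Delta_L),l}\rangle\,(\ket{l}\bra{l'})^k$, and tracing out $\mathcal{V}_{L+1}$ by cyclicity replaces each term by a scalar coefficient $\alpha_l\alpha_{l'}^{*}$ multiplying $U_{P, g(\Delta_L)}(\cdot)U_{P, g(\Delta_L)}^\dagger$ on $\mathcal{P}_{L+1}\otimes\mathcal{E}$.

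To resolve the main obstacle, I would use $E_k = D_k (\mcI \otimes F^{\otimes d} \otimes \mcI^{\otimes d})$ block-wise to write, on each block $i$ of $m$ qudits, $D_k^\dagger P^{(i)} E_k = \tilde{P}^{(i)}(\mcI \otimes F^{\otimes d} \otimes \mcI^{\otimes d})$ with $\tilde{P}^{(i)} \EqDef D_k^\dagger P^{(i)} D_k$ still a Pauli, since $D_k$ is the Clifford composed of \textit{SUM} and multiplication gates given by \Cl{decodingoperations}. Writing $\tilde{P}^{(i)} = Z^{\tilde{z}^{(i)}}X^{\tilde{x}^{(i)}}$ and using $X\ket{+}=\ket{+}$ together with $Z^a\ket{+}=F\ket{a}$, applying $\tilde{P}^{(i)}(\mcI\otimes F^{\otimes d}\otimes \mcI^{\otimes d})$ to $\ket{l_i, 0^{m-1}}$ yields, up to a phase depending only on $(P,k,l_i)$, the vector $\ket{l_i + \tilde{x}^{(i)}_1}\otimes F^{\otimes d}\ket{\tilde{z}^{(i)}_{2:d+1}}\otimes \ket{\tilde{x}^{(i)}_{d+2:m}}$, in which $(\tilde{x}^{(i)},\tilde{z}^{(i)})$ depend only on $(P,k)$. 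Taking the inner product with $\bra{\Delta_L^{(i)}}$ produces a Kronecker $\delta_{(\Delta_L)_i^{(1)},\, l_i + \tilde{x}^{(i)}_1}$ multiplied by a factor independent of $l_i$, so the block-wise factorization $\alpha_l = \prod_i \alpha_{l_i}^{(i)}$ shows $\alpha_l$ is non-zero only at the unique $l^{*}(\Delta_L, P, k)$. Hence $\alpha_l\alpha_{l'}^{*} = |\alpha_{l^*}|^2\,\delta_{l,l^*}\delta_{l',l^*}$, and since each $\ket{\psi}_{g(\Delta_L), l}$ is a unit vector by Fact \ref{fact:teleportationstate}, $\langle\psi_{l^*}|\psi_{l^*}\rangle = 1$ drops out. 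Relabeling the surviving $l^*$ as a dummy summation index $l$ then identifies the remaining coefficient with $\mu_{\Delta_L, l, P, k}$ and reproduces the claimed equality.
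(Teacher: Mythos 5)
Your overall strategy matches the paper's: apply \Le{PauliDiscretization} to the average over $Q = Z^{z_1}X^{x_1}$ to collapse $U_{g(\Delta_L)}$ to its Pauli blocks $P\otimes U_{P,g(\Delta_L)}$, substitute \Eq{eq:authenticatedtracedoutstate}, trace out $\mathcal{V}_{L+1}$, and show the off-diagonal $l\neq l'$ contributions vanish. Where you diverge is in the argument for that last step, and your alternative is correct. The paper's route is a short conceptual one: it notes that $(D_k)^{\otimes 3L}\ket{\Delta_L}\bra{\Delta_L}(D_k^\dagger)^{\otimes 3L}$ is a standard-basis rank-one projection $S$, that $\tr(SP\ket{\psi}\bra{\psi'}P^\dagger)$ only picks up diagonal entries $\ket{i}\bra{i}$ in the standard-basis expansion, and that the code states $\ket{S_l^k}$ and $\ket{S_{l'}^k}$ have disjoint standard-basis supports for $l\neq l'$ (by the uniqueness of degree-$\le d$ polynomials through $d+1$ points), so those diagonal overlaps vanish. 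Your route is more mechanical: you unwind $E_k = D_k(\mcI\otimes F^{\otimes d}\otimes\mcI^{\otimes d})$ block by block, conjugate $P^{(i)}$ through $D_k$ to get a new Pauli $\tilde{P}^{(i)}$, observe that $\tilde{P}^{(i)}$ sends $\ket{l_i}\ket{+}^{\otimes d}\ket{0}^{\otimes d}$ to $\ket{l_i+\tilde{x}^{(i)}_1}\otimes F^{\otimes d}\ket{\tilde{z}^{(i)}}\otimes\ket{\tilde{x}^{(i)}_{d+2:m}}$, and read off from the inner product with $\bra{\Delta_L^{(i)}}$ a Kronecker delta that pins down $l_i$ uniquely, with the Fourier factors never vanishing. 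This actually proves a slightly stronger statement than the paper does — that for fixed $(\Delta_L,P,k)$ at most one $l^*$ contributes — which you do not need (the paper only needs $l=l'$), but it does the job. The trade-off: the paper's argument does not depend on the specific circuit structure of $E_k$ beyond $D_k$ permuting standard basis states and codewords being standard-basis disjoint, so it is more robust; yours requires the explicit decomposition of $E_k$ and \Cl{decodingoperations}, but gives more detailed information. One small point to be careful about if you flesh this out: the per-block phase $\omega^{\tilde{z}^{(i)}_1(l_i+\tilde{x}^{(i)}_1)}$ does depend on $l_i$, but since only $\alpha_{l^*}\alpha_{l^*}^* = |\alpha_{l^*}|^2$ survives, this phase cancels — your parenthetical "up to a phase depending only on $(P,k,l_i)$" is exactly right, just make sure the eventual reader sees why the $l_i$-dependence is harmless.
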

The above state is the same regardless of the input density matrix $\rho^k$; therefore, we have shown blindness for the polynomial scheme.
\end{proofof}
\\~\\
We proceed to the proof of the claim.
\\~\\
\begin{proofof}{\Cl{cl:blindnessdecoherence}}
We begin with
\begin{equation}\label{eq:startingblindnessstate}
\frac{1}{2^m|\mbP_{|\mathcal{V}_{L+1}|}|}\sum_{\substack{k\in\{-1,1\}^m\\\Delta_L,z_1,x_1\in F_q^{|\mathcal{V}_{L+1}|}}} \tr_{\mathcal{V}_{L+1}}(W_{\Delta_L,z_1,x_1,k}' (\sigma_{g(\Delta_L),k})(W_{\Delta_L,z_1,x_1,k}')^\dagger)
\end{equation}
where
\begin{equation}
\sigma_{g(\Delta_L),k} =  \tr_{\mathcal{P}_{L+1}}(\rho_{g(\Delta_{L})}^k) \otimes\frac{1}{q^{|\mathcal{P}_{L+1}|}}\mcI^{\otimes |\mathcal{P}_{L+1}|}\otimes\rho_{\mathcal{E}} 
\end{equation}
and
\begin{equation}
W_{\Delta_{L},z_1,x_1,k}' = (\ket{\Delta_{L}}\bra{\Delta_{L}}(D_k^\dagger)^{\otimes |\mathcal{V}_{L+1}|}(Z^{z_1}X^{x_1})^\dagger_{\mathcal{V}_{L+1}}\otimes\mcI_{\mathcal{P}_{L+1},E}) U_{g(\Delta_{L})} ((Z^{z_1}X^{x_1})_{\mathcal{V}_{L+1}}\otimes \mcI_{\mathcal{P}_{L+1}}\otimes \mcI_{\mathcal{E}})
\end{equation}
We can now apply the Pauli decoherence lemma (\Le{PauliDiscretization}) with the decomposition of $U_{g(\Delta_L)}$ as given in equation \ref{blindnessoperatorbreakdown} (and with $Z^{z_1}X^{x_1}$ playing the role of $Q$ in the lemma) to simplify the state in equation \ref{eq:startingblindnessstate} to: 
\begin{equation}\label{eq:blindnessaffterpaulitw}
\frac{1}{2^m}\sum_{\substack{k\in\{-1,1\}^m, \Delta_L\in F_q^{|\mathcal{V}_{L+1}|}\\ P\in\mbP_{|\mcV_{L+1}|}}} \tr_{\mathcal{V}_{L+1}}(W_{\Delta_L,P,k}'' ( \sigma_{g(\Delta_L),k})(W_{\Delta_L,P,k}'')^\dagger)
\end{equation}
where
\begin{equation}
W_{\Delta_{L},P,k}'' = \ket{\Delta_{L}}\bra{\Delta_{L}}(D_k^\dagger)^{\otimes |\mathcal{V}_{L+1}|}P\otimes U_{P,g(\Delta_L)} 
\end{equation}
Now we can plug in the expression for $\tr_{\mathcal{P}_{L+1}}(\rho_{g(\Delta_{L})}^k)$ from equation \ref{eq:authenticatedtracedoutstate} to obtain:
\begin{equation}
\sigma_{g(\Delta_L),k} = \frac{1}{q^{3L}}\sum_{l,l'\in F_q^{3L}}\tr( \ket{\psi}_{g(\Delta_{L}),l}\bra{\psi}_{g(\Delta_{L}),l'})^k\cdot(\ket{l}\bra{l'})^k  \otimes\frac{1}{q^{|\mathcal{P}_{L+1}|}}\mcI^{\otimes |\mathcal{P}_{L+1}|}\otimes\rho_{\mathcal{E}} 
\end{equation}
Plugging this in to equation \ref{eq:blindnessaffterpaulitw} we obtain:
\begin{equation}\label{eq:blindnessafterpluggingin}
\frac{1}{q^{3L}2^m}\sum_{\substack{k\in\{-1,1\}^m\\P\in \mbP_{|\mathcal{V}_{L+1}|}\\ \Delta_L\in F_q^{|\mathcal{V}_{L+1}|},l,l'\in F_q^{3L}}}  \mu_{\Delta_L,l,l',P,k}\tr(( \ket{\psi}_{g(\Delta_{L}),l}\bra{\psi}_{g(\Delta_{L}),l'})^k)\cdot U_{P,g(\Delta_L)} (\frac{1}{q^{|\mathcal{P}_{L+1}|}}\mcI^{\otimes |\mathcal{P}_{L+1}|}\otimes\rho_{\mathcal{E}} )U_{P,g(\Delta_L)}^\dagger 
\end{equation}
where
\begin{equation}
\mu_{\Delta_L,l,l',P,k} = \tr(\ket{\Delta_{L}}\bra{\Delta_{L}}(D_k^\dagger)^{\otimes |\mathcal{V}_{L+1}|}P (\ket{l}\bra{l'})^k P^\dagger (D_k)^{\otimes |\mathcal{V}_{L+1}|})
\end{equation}
Observe that $\mu_{\Delta_L,l,l',P,k}$ is only non zero when $l = l'$. This follows due to two observations. First, note that 
\begin{equation}
(D_k)^{\otimes |\mathcal{V}_{L+1}|}\ket{\Delta_{L}}\bra{\Delta_{L}}(D_k^\dagger)^{\otimes |\mathcal{V}_{L+1}|}
\end{equation}
is a standard basis projection, because $\ket{\Delta_L}\bra{\Delta_L}$ is a standard basis projection, and $D_k$ consists only of sum and multiplication operations (see Claim \ref{decodingoperations}). Next, note that if we have a standard basis projection $S$ acting on a matrix $P\ket{\psi}\bra{\psi'}P^\dagger$ in register $\mathcal{V}_{L+1}$, where $\ket{\psi} = \sum_i\alpha_i\ket{i}$ and $\ket{\psi'} = \sum_i\beta_i\ket{i}$, we need not consider the cross terms of $\ket{\psi}\bra{\psi'}$: 
\begin{eqnarray}
\tr(SP\ket{\psi}\bra{\psi'}P^\dagger) &=& \sum_{i,j}\alpha_i\beta_j^*\tr(SP\ket{i}\bra{j} P^\dagger)\\
&=& \sum_{i}\alpha_i\beta_i^*\tr(SP\ket{i}\bra{i} P^\dagger)
\end{eqnarray}
Since the authenticated states $(\ket{l}\bra{l'})^k$ consists only of cross terms unless $l = l'$, this implies that $l$ must equal $l'$ in order for $\mu_{\Delta_L,l,l',P,k}$ to be non zero. Therefore, we can now write equation \ref{eq:blindnessafterpluggingin} as
\begin{equation}
\frac{1}{q^{3L}2^m}\sum_{\substack{k\in\{-1,1\}^m\\P\in \mbP_{|\mathcal{V}_{L+1}|}\\ \Delta_L\in F_q^{|\mathcal{V}_{L+1}|},l\in F_q^{3L}}}  \mu_{\Delta_L,l,P,k}\tr(( \ket{\psi}_{g(\Delta_{L}),l}\bra{\psi}_{g(\Delta_{L}),l})^k)\cdot U_{P,g(\Delta_L)} (\frac{1}{q^{|\mathcal{P}_{L+1}|}}\mcI^{\otimes |\mathcal{P}_{L+1}|}\otimes\rho_{\mathcal{E}} )U_{P,g(\Delta_L)}^\dagger 
\end{equation}
Since $\tr(( \ket{\psi}_{g(\Delta_{L}),l}\bra{\psi}_{g(\Delta_{L}),l})^k) = 1$, the above expression is equal to
\begin{equation}
\frac{1}{q^{3L}2^m}\sum_{\substack{k\in\{-1,1\}^m\\P\in \mbP_{|\mathcal{V}_{L+1}|}\\ \Delta_L\in F_q^{|\mathcal{V}_{L+1}|},l\in F_q^{3L}}}  \mu_{\Delta_L,l,P,k}\cdot U_{P,g(\Delta_L)} (\frac{1}{q^{|\mathcal{P}_{L+1}|}}\mcI^{\otimes |\mathcal{P}_{L+1}|}\otimes\rho_{\mathcal{E}} )U_{P,g(\Delta_L)}^\dagger 
\end{equation}

\end{proofof}

\section{Interpretation of Results}\label{app:interpretation}
In this section, we prove Corollary \ref{corol:confidence}. 
\subsection{Clifford \QPIP}
\begin{corol} \label{corol:cliffordconfidence}
For the Clifford \QPIP\ protocol (Protocol \ref{prot:CliffordIP})  with security parameter $\epsilon$ (where $\epsilon = \frac{1}{2^e}$ by definition), if the
verifier does not abort  with probability $\ge \beta$
then the trace distance between the final density matrix conditioned on the verifier's acceptance and that of the correct state
is at most $\frac {\epsilon} \beta$
\end{corol}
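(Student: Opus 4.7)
The plan is to reduce the corollary to a one-shot calculation at the final round, by using Claim \ref{claim:stateformat} to describe the prover's state at the start of round $N+1$, and then invoking Claim \ref{claim:EveAttack} to describe what the verifier sees after decoding. More precisely, I would first apply Claim \ref{claim:stateformat} to write the prover's state at the start of round $N+1$ in its canonical form, then average over the unused authentication keys $k_2,\ldots,k_n$ using Lemma \ref{cliffordmix} (as is already done in the proof of Claim \ref{cl:qcircuitsoundness}), and finally average over $k_1$ after the verifier applies $C_{k_1}^\dagger$ to the returned block. By Claim \ref{claim:EveAttack}, Bob's $(e+1)$-qubit decoded state before the auxiliary-qubit check has the form
$$\rho_B \;=\; s\,(\rho_* \otimes \ket{0}\bra{0}^{\otimes e}) \;+\; \frac{1-s}{4^m-1}\sum_{P\neq\mcI}P\bigl(\rho_* \otimes \ket{0}\bra{0}^{\otimes e}\bigr)P^\dagger,$$
with $m = e+1$, $\rho_* = \tr_A(\rho_{N+1})$ the correct reduced final state, and some $s\in[0,1]$.

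Next I would project $\rho_B$ onto the accepting subspace $\mcI\otimes\ket{0}\bra{0}^{\otimes e}$, renormalise by $p_{\text{accept}}\geq\beta$, and trace out the auxiliary qubits. As in the soundness proof, only Paulis of the form $P'\otimes P''$ with $P''\in\{\mcI,Z\}^{\otimes e}$ contribute. Applying the single-qubit Pauli twirl $\sum_{P'\in\mbP_1}P'\rho_*(P')^\dagger = 2\mcI$ (the one-qubit instance of Lemma \ref{paulimix}) collapses the sum, so that the deviation from $\rho_*$ ends up proportional to $\mcI-2\rho_*$, whose trace norm is at most $2$ for any qubit density matrix. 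Plugging in $4^m-1\geq 2^{2e+1}$ and the hypothesis $p_{\text{accept}}\geq\beta$ then immediately yields $T(\sigma_{\text{accept}},\rho_*)\leq 2^{-e}/\beta=\epsilon/\beta$.

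The main obstacle I expect is not conceptual but bookkeeping: one must be careful that the ``correct state'' against which trace distance is measured is $\tr_A(\rho_{N+1})$ (which is generally mixed, since $U$ entangles the output qubit with the rest), and that Claim \ref{claim:EveAttack} is being applied with $\rho$ replaced by this reduced state; linearity of the claim in $\rho$ handles this, but the statement and proof in the text are phrased for pure $\ket\psi$. A slightly more delicate point is that a naive ``convex decomposition into good and bad'' argument loses a square root and only gives $\sqrt{\epsilon}/\beta$; one really does need the explicit cancellation produced by the Pauli twirl to recover the linear bound $\epsilon/\beta$.
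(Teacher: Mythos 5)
Your overall plan is correct and your Pauli-twirl finish does give the desired bound, but your diagnostic of the ``naive'' alternative is wrong, and the paper in fact uses exactly that alternative.

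Concretely, the paper proceeds as you propose up to the point where the (unnormalised) post-acceptance state is
$s\rho_C + \frac{1-s}{4^m-1}\sum_{Q_1\otimes Q_2\in V}Q_1\rho_C Q_1^\dagger$, where $V$ is the set of non-identity Paulis that preserve the $\ket{0}^{\otimes e}$ ancillas. The key observation is that the acceptance probability $\beta$ is \emph{exactly} the sum of the coefficients $s + \frac{(1-s)|V|}{4^m-1}$, so the normalised state $\sigma$ is a genuine convex combination of $\rho_C$ and the $Q_1\rho_C Q_1^\dagger$'s with weights summing to one. Since $\rho_C$ can be written as the \emph{same} convex mixture of copies of itself, joint convexity of trace distance gives
\begin{equation}
T(\sigma,\rho_C)\;\le\;\frac{s}{\beta}\cdot 0+\frac{1-s}{\beta(4^m-1)}\sum_{Q_1\otimes Q_2\in V}T(Q_1\rho_C Q_1^\dagger,\rho_C)\;\le\;\frac{|V|}{\beta(4^m-1)}\;\le\;\frac{\epsilon}{\beta},
\end{equation}
with no square root anywhere. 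Your fear about a $\sqrt{\epsilon}$ loss would be justified if one tried to go from a fidelity or overlap bound to a trace-distance bound (Fuchs--van~de~Graaf, gentle measurement), but that is not the route here: the state is available as an explicit Pauli-indexed convex combination, so convexity alone already gives the linear bound. Your Pauli-twirl computation, which collapses $\sum_{Q_1\ne\mcI}Q_1\rho_*Q_1^\dagger$ to $2\mcI-\rho_*$ and bounds $\|\mcI-2\rho_*\|_1\le 2$, is a correct and slightly tighter alternative (it even saves a constant factor), but it is not ``needed'' --- it simply replaces a per-term bound by an exact cancellation. Otherwise your set-up (applying \Cl{claim:stateformat}, averaging unused keys via \Le{cliffordmix}, then invoking \Cl{claim:EveAttack} with $\rho$ replaced by $\tr_A(\rho_{N+1})$, which is fine by linearity) matches the paper.
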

\begin{proofof}{Corollary \ref{corol:cliffordconfidence}}
The final state of the protocol before the verifier's cheat detection can be
  written as (see \Eq{eq:cliffordfinalstate}):
\begin{equation}\label{eq:cliffordfinalstateint}
s\tr_A(\rho_{N+1}) + \frac{1-s}
{4^m -1}\sum_{Q\in\mbP_m\setminus\{\mcI\}}Q(\tr_A(\rho_{N+1}))
Q^\dagger
\end{equation}
where $s$ represents the weight of the prover's attack on the identity, $A$ is the space of all computational qubits other than the first, and $\rho_{N+1}$ is the correct final state of the protocol:
\begin{equation}
\rho_{N+1} = (U_{N}\cdots U_1) \rho (U_{N}\cdots U_1)^\dagger
\end{equation}
where $\rho$ is equal to the initial density matrix. Note that $\rho_{N+1}$ includes the auxiliary 0 states, but the circuit does not act on the auxiliary 0 states (so it includes $\mcI$ operators which we have not included for ease of notation). We can instead write
\begin{equation}
\rho_{N+1}' = (U_{N}\cdots U_1)\rho' (U_{N}\cdots U_1)^\dagger
\end{equation}
where $\rho'$ is the input state $\rho$ without the auxiliary 0's. Then
\begin{equation}
\tr_A(\rho_{N+1}) = \tr_{A'}(\rho'_{N+1})\otimes\ket{0}\bra{0}^{\otimes e} \EqDef \rho_C \otimes\ket{0}\bra{0}^{\otimes e}
\end{equation}
where $A'$ is the space of all compuational qubits other than the first (but excluding the auxiliary 0's). 
\\~\\
We now rewrite the state from equation \Eq{eq:cliffordfinalstateint}:
\begin{equation}
s\rho_C\otimes\ket{0}\bra{0}^{\otimes e} + \frac{1-s}
{4^m -1}\sum_{Q_1\otimes Q_2\in\mbP_m\setminus\{\mcI\}}Q_1\rho_C
Q_1^\dagger\otimes Q_2\ket{0}\bra{0}^{\otimes e} Q_2^\dagger
\end{equation}
Let $V\subset \mbP_m\setminus\{\mcI\}$ be the set of Pauli operators which pass the cheat detection procedure (i.e. preserve the auxiliary 0 states). Assume the verifier declares the computation valid with probability $\beta$. After he declares the computation valid (and we trace out the auxiliary 0 states), his state is:
\begin{equation}
\sigma = \frac{1}{\beta}(s\rho_C + \frac{1-s}{4^m-1}\sum_{Q_1\otimes Q_2\in V} Q_1\rho_C
Q_1^\dagger)
\end{equation}
Then the trace distance to the correct state $\rho_C$ is: 
\begin{eqnarray}
T(\sigma,\rho_C) &\leq& \frac{1}{\beta}(sT(\rho_C,\rho_C) + \frac{1-s}{4^m-1}\sum_{Q_1\otimes Q_2\in V} T(Q_1\rho_C
Q_1^\dagger,\rho_C))\\
&\leq& \frac{1}{\beta}\cdot\frac{|V|}{4^m-1}\\
&\leq& \frac{\epsilon}{\beta}
\end{eqnarray}
where the first inequality follows by convexity of trace distance and the final inequality follows from the same argument used to prove the security of the Cliford \QAS- more specifically, see the explanation preceding equation \ref{aftertr}.
\end{proofof}
\subsection{Polynomial \QPIP}
We now continue to the polynomial \QPIP. In this setting, we are concerned with the trace distance between density matrices \textit{after} measurement. This is because the verifier in the polynomial \QPIP\ protocol only performs a classical verification circuit; therefore, he cannot detect phase attacks on the correct state. Once the state is measured, phase attacks have no effect on the state. For this purpose, let $\sigma_M$ represent a density matrix $\sigma$ on one qudit after measurement:
\begin{equation}
\sigma_M = \sum_{i\in F_q}\ket{i}\bra{i}\sigma\ket{i}\bra{i}
\end{equation}
We will require the following fact:
\begin{fact}\label{fact:measureddensity}
For all density matrices $\rho,\sigma$ on 1 qudit,
\begin{equation}
T(\sigma_M,\rho_M)\leq T(\sigma,\rho)
\end{equation}
\end{fact}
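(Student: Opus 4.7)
The plan is to recognize that $\sigma \mapsto \sigma_M$ is a completely positive trace preserving (CPTP) map, and then invoke the standard monotonicity of trace distance under CPTP maps. Concretely, define the map $\Phi(X) = \sum_{i\in F_q} \ket{i}\bra{i} X \ket{i}\bra{i}$. This is the completely dephasing channel in the standard basis: it is the Kraus operator sum representation with Kraus operators $\{K_i = \ket{i}\bra{i}\}$ satisfying $\sum_i K_i^\dagger K_i = \sum_i \ket{i}\bra{i} = \mcI$, so $\Phi$ is indeed CPTP. By definition, $\sigma_M = \Phi(\sigma)$ and $\rho_M = \Phi(\rho)$, so the fact reduces to $T(\Phi(\sigma),\Phi(\rho)) \le T(\sigma,\rho)$, which is the well-known contractivity of trace distance under quantum channels.

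If one prefers an entirely self-contained argument, the same inequality follows directly from the variational characterization $T(\alpha,\beta) = \max_{0 \le E \le \mcI} \tr(E(\alpha-\beta))$. The optimal $E$ distinguishing $\sigma_M$ from $\rho_M$ is diagonal in the standard basis (since $\sigma_M - \rho_M$ is diagonal, one can take $E = \sum_{i: \bra{i}(\sigma_M-\rho_M)\ket{i} \ge 0}\ket{i}\bra{i}$), and any diagonal $E$ is itself a valid POVM element on the original states, giving $\tr(E(\sigma_M - \rho_M)) = \tr(E(\sigma - \rho))$. Taking the max over all $E$ with $0 \le E \le \mcI$ on the right hand side only enlarges the quantity, yielding $T(\sigma_M,\rho_M) \le T(\sigma,\rho)$.

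There is essentially no obstacle here: the fact is completely standard, and both the CPTP-map route and the POVM-characterization route are one-liners once the right framing is chosen. I would write the proof in the CPTP-map style because it is the shortest and because monotonicity under CPTP maps is invoked elsewhere in the paper's informal discussion of trace-distance reasoning. The only care needed is to explicitly verify that $\Phi$ is CPTP (which is immediate from the Kraus form above), so that invoking monotonicity is fully justified.
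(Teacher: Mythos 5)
The paper states Fact \ref{fact:measureddensity} without proof, treating it as standard background (it is invoked once in the proof of Claim \ref{cl:nontrivialdensitymatrix}), so there is no paper proof to compare against. Your argument is correct and closes this gap: the map $X \mapsto \sum_i \ket{i}\bra{i}X\ket{i}\bra{i}$ is CPTP (the Kraus-sum verification you give is exactly right), and monotonicity of trace distance under CPTP maps yields the inequality immediately. Your alternative POVM-based route is also correct and has the mild virtue of being self-contained; in particular, the key observation that $\tr(E(\sigma_M - \rho_M)) = \tr(E(\sigma - \rho))$ for diagonal $E$ is exactly what makes the maximization on the right at least as large. Either version would serve; the CPTP phrasing is the most economical and matches the level of generality at which the paper reasons about trace distance.
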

We require a bit more notation before stating the corollary. Recall that the register $\mathcal{V}_{final}$, which is the register of containing the $3mL + m$ qudits held by the verifier at the end of the protocol, is equal to $\mathcal{V}_{L+1}\cup \mathcal{F}$. We will only be interested in the first qudit of register $\mathcal{F}$; this is the qudit which contains the final result of the circuit. For this purpose, we introduce the following notation. For a density matrix $\sigma$ on $\mathcal{V}_{final}$, let $\sigma' = \tr_{2,\ldots,m}(\tr_{\mathcal{V}_{L+1}}(\sigma))$. 
\\~\\
\begin{statement}{Corollary \ref{corol:polyconfidence}}\textit{
For the polynomial \QPIP\ protocol (Protocol \ref{prot:PolynomialIP}) with security parameter $\epsilon$ (where $\epsilon = \frac{1}{2^{m-1}}$ by definition), assume
the verifier aborts  with probability at most $1 - \beta$. Then the trace distance between the final measured density matrix conditioned on the verifier's acceptance ($\sigma_M'$) and that of the correct measured state ($\rho_C$)
is at most $\frac {2\epsilon} \beta$.}
\end{statement}
\\~\\
\begin{proofof}{Corollary \ref{corol:polyconfidence}}
Recall the final state on register $\mathcal{V}_{final}$ held by the verifier (before the verifier checks for errors, but after he decodes) from \Cl{claim:statesimplify}:
\begin{equation}\label{eq:startingsigmamaincor0}
\rho_{L+1}|_{\mathcal{V}_{final}} = \frac{1}{2^m}\sum_{k\in\{-1,1\}^m}\sigma_k = \frac{1}{2^m}\sum_{k\in\{-1,1\}^m}\sum\limits_{P\in\mbP_{|\mathcal{V}_{final}|}}\sigma_k^P
\end{equation}
where
\begin{equation}\label{eq:startingsigmaincor}
\sigma_k^P = \sum\limits_{\Delta_L\in F_q^{|\mathcal{V}_{L+1}|}} \alpha_{P,g(\Delta_L)}\cdot((\ket{\Delta_L}\bra{\Delta_L}\otimes\mcI_{\mathcal{F}})(D_k^\dagger)^{\otimes|\mathcal{V}_{final}|}P)\sigma_{g(\Delta_L)}^k((\ket{\Delta_L}\bra{\Delta_L}\otimes\mcI_{\mathcal{F}})(D_k^\dagger)^{\otimes|\mathcal{V}_{final}|}P)^\dagger
\end{equation}
and $\tr_{\mathcal{P}_{final}}(\rho_{g(\Delta_L)}^k) = \sigma_{g(\Delta_L)}^k$,
\begin{equation}\label{eq:positivecoefficients}
\alpha_{P,g(\Delta_L)} = \frac{1}{q^{|\mathcal{P}_{final}|}}\tr(U_{g(\Delta_L)}^P(\mcI_{\mathcal{P}_{final}}\otimes\rho_{\mathcal{E}})(U_{g(\Delta_L)}^P)^\dagger)
\end{equation}
and
\begin{equation}
U_{g(\Delta_L)} = \sum_{P\in\mbP_{|\mathcal{V}_{final}|}}P\otimes U_{g(\Delta_L)}^P
\end{equation}
We begin by conditioning on the verifier's acceptance, by applying the projection
\begin{equation}\label{eq:interpretationprojection}
(\hat{\Pi}_0)_{\mathcal{V}_{final}}\EqDef (\mcI^{\otimes d+1}\otimes \ket{0}\bra{0}^{\otimes d})^{\otimes 3L + 1} 
\end{equation}
on the state and then re-normalizing. The projection above represents the verifier's test of checking that the last $d$ qudits of each block of $m$ qudits are 0 (for a reminder of why this is the test and how the protocol works, see Protocol \ref{prot:PolynomialIP}). The resulting state after conditioning on acceptance (where $\beta$ is the probability of acceptance) is
\begin{eqnarray}
 \sigma &=& \frac{1}{\beta} (\hat{\Pi}_0)_{\mathcal{V}_{final}}(\rho_{L+1}|_{\mathcal{V}_{final}})(\hat{\Pi}_0)_{\mathcal{V}_{final}}\\
 &=& \frac{1}{2^m\beta}\sum_{k\in\{-1,1\}^m}\sum\limits_{P\in\mbP_{|\mathcal{V}_{final}|}} \gamma_k^P\hat{\sigma}_k^P\label{eq:stateafteraccept}
\end{eqnarray}
where
\begin{equation}\label{eq:defintsigma}
\hat{\sigma}_k^P = \frac{1}{\gamma_k^P}(\hat{\Pi}_0)_{\mathcal{V}_{final}}(\sigma_k^P)(\hat{\Pi}_0)_{\mathcal{V}_{final}}
\end{equation}
and
\begin{equation}\label{eq:defgamma}
\gamma_k^P = \tr((\hat{\Pi}_0)_{\mathcal{V}_{final}}(\sigma_k^P))
\end{equation}
Our goal is to show that the trace distance $T(\sigma_M',\rho_C)$ between $\rho_C$ and $\sigma_M'$ is at most $\frac{2\epsilon}{\beta}$. Note that the expression in equation \ref{eq:stateafteraccept} is a convex sum over density matrices; this is because $\sigma_{k}^{P}$ is an unnormalized density matrix. To see this, observe that $\sigma_k^P$ (written in equation \ref{eq:startingsigmaincor}) is a sum over terms of the following form: there is a density matrix ($\sigma_{g(\Delta_L)}^k$), followed by a unitary operation $(D_k^\dagger)^{\otimes |\mathcal{V}_{final}|}P$, followed by a projection. Each term has a non negative coefficient ($\alpha_{P,g(\Delta_L)}$ from equation \ref{eq:positivecoefficients}). Therefore, by convexity of trace distance, we can upper bound the trace distance as follows:
\begin{eqnarray}\label{eq:tracedistancebound}
  T(\sigma_M',\rho_C)&\leq& \frac{1}{2^m\beta}\sum_{k\in\{-1,1\}^m}
  \sum\limits_{P\in\mbP_{|\mathcal{V}_{final}|}}\gamma_k^P T((\hat{\sigma}_{k}^P)_M',\rho_C)\\
  &=&  \frac{1}{\beta}\sum\limits_{P\in\mbP_{|\mathcal{V}_{final}|}}\frac{1}{2^m} \sum_{k\in\{-1,1\}^m}\gamma_{k}^PT((\hat{\sigma}_{k}^P)_M',\rho_C)
\end{eqnarray}
Next, we will require two claims (which we prove immediately after the current proof). The first claim shows that when $P$ is a trivial Pauli operator, it preserves the correct final state on the first qudit of register $\mathcal{F}$:
\begin{claim}\label{cl:trivialdensitymatrix}
For all trivial $P\in\mbP_{|\mathcal{V}_{final}|}$ (i.e. $P$ consisting of only $Z$ and $\mcI$ operators):
\begin{equation}
 \frac{1}{2^m}\sum_{k\in\{-1,1\}^m}\gamma_k^PT((\hat{\sigma}_{k}^P)_M',\rho_C) = 0
\end{equation}
\end{claim}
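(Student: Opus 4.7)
The plan is to compute the measured density matrix $(\hat{\sigma}_k^P)_M'$ on the first qudit of $\mathcal{F}$ and show it equals $\rho_C$ exactly (whenever $\gamma_k^P>0$), so each summand in the claim vanishes. The underlying intuition is that trivial Pauli operators $P=Z^z$ consist only of $Z$ and $\mcI$ factors, and such operators commute with every standard-basis projection; hence neither the verifier's acceptance test nor the final standard-basis measurement on the first qudit of $\mathcal{F}$ can perturb its outcome distribution away from the ideal one.

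For each $j\in F_q$ introduce the ``accept-and-output-$j$'' projection
\begin{equation*}
\hat{\Pi}_0^j \EqDef (\hat{\Pi}_0)_{\mathcal{V}_{L+1}}\otimes(\ket{j}\bra{j}\otimes\mcI^{\otimes d}\otimes\ket{0}\bra{0}^{\otimes d})_{\mathcal{F}},
\end{equation*}
so that $\hat{\Pi}_0^1$ coincides with the projection $\hat{\Pi}_0$ of equation \ref{acceptingprojection} and $\sum_{j\in F_q}\hat{\Pi}_0^j=(\hat{\Pi}_0)_{\mathcal{V}_{final}}$. Since $\ket{j}\bra{j}$ on the first qudit of $\mathcal{F}$ commutes with $(\hat{\Pi}_0)_{\mathcal{V}_{final}}$ (which acts as $\mcI$ on that qudit) and the latter is an idempotent projection, cyclicity of trace yields
\begin{equation*}
\tr\bigl(\ket{j}\bra{j}_{(\text{first of }\mathcal{F})}\,\gamma_k^P\hat{\sigma}_k^P\bigr)=\tr(\hat{\Pi}_0^j\sigma_k^P),
\end{equation*}
i.e.\ the joint probability that the verifier accepts and the first qudit of $\mathcal{F}$ is measured to $j$.

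The right-hand side is computed by replaying the proof of \Cl{claim:trivialsoundness} with $\ket{1}\bra{1}$ replaced throughout by $\ket{j}\bra{j}$. Neither \Cl{cl:simplifyingsoundness} nor \Cl{claim:trivialcommute} uses the specific vector $\ket{1}$: their proofs only require $(\hat{\Pi}_0)_{\mathcal{F}}$ to factorise as a tensor product of standard-basis projections on registers $1,d+2,\ldots,m$ and $\mcI$ on registers $2,\ldots,d+1$. Following the proof to the penultimate display (the equality that precedes the soundness-specific inequality, as highlighted in Remark \ref{remark:fortrivialcorol}) and invoking \Cl{cl:finalstatedescription}, we obtain the exact identity
\begin{equation*}
\tr(\hat{\Pi}_0^j\sigma_k^P)=\frac{1}{q^{3L}}\sum_{a\in F_q^{3L}}\alpha_{P,a}\cdot p_j,\qquad p_j\EqDef\tr\bigl(\ket{j}\bra{j}\,\tr_{\mathcal{P}_{final}}(U\ket{0}\bra{0}^{\otimes n}U^\dagger)\bigr).
\end{equation*}
Summing over $j$ and using $\sum_j p_j=1$ together with $\sum_j\hat{\Pi}_0^j=(\hat{\Pi}_0)_{\mathcal{V}_{final}}$ forces $\gamma_k^P=q^{-3L}\sum_a\alpha_{P,a}$, so the joint probability factorises as $\tr(\hat{\Pi}_0^j\sigma_k^P)=\gamma_k^P\cdot p_j$.

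Assembling the pieces: whenever $\gamma_k^P>0$, dividing gives $(\hat{\sigma}_k^P)_M'=\sum_j p_j\ket{j}\bra{j}=\rho_C$, so $T((\hat{\sigma}_k^P)_M',\rho_C)=0$; if $\gamma_k^P=0$ the summand $\gamma_k^P T(\cdot,\cdot)$ vanishes irrespective of how $\hat{\sigma}_k^P$ is defined. Averaging over $k$ yields the identity claimed. The only nontrivial point to verify, and in my view the substantive step of the argument, is the ``$j$-agnostic'' reading of the \Cl{claim:trivialsoundness} proof; this reduces to the observation that trivial Paulis commute with any standard-basis projection, which is precisely the hypothesis used in \Cl{claim:trivialcommute}.
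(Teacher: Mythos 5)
Your proof is correct and takes essentially the same route as the paper's, but is somewhat more careful. The paper's proof establishes only that $\tr(\ket{1}\bra{1}(\hat{\sigma}_k^P)') = \tr(\ket{1}\bra{1}\rho_C)$ and then asserts that equality of the probability of outcome $1$ already forces the trace distance of the measured density matrices to vanish; that assertion is not justified on its face, since the measurement is a $q$-ary one and agreement on a single outcome's probability does not in general pin down the whole distribution (it implicitly relies on the hypothesis in Corollary~\ref{corol:polyconfidence} that $\rho_C$ is a standard basis state, and even then requires a short extra argument). You instead introduce the accept-and-output-$j$ projections $\hat{\Pi}_0^j$, rerun the proof of Claim~\ref{claim:trivialsoundness} for each $j$ (correctly observing that Claims~\ref{cl:simplifyingsoundness}, \ref{claim:trivialcommute} and \ref{cl:finalstatedescription} never use anything specific to $\ket{1}$, only the block structure of the projection), and thereby obtain the entire conditional measurement distribution $\tr(\hat{\Pi}_0^j\sigma_k^P)/\gamma_k^P = p_j$ with $\gamma_k^P = q^{-3L}\sum_a\alpha_{P,a}$ falling out by summing over $j$. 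This gives $(\hat{\sigma}_k^P)_M' = \rho_C$ outright (and you correctly dispose of the $\gamma_k^P=0$ edge case). Both arguments use Remark~\ref{remark:fortrivialcorol}'s observation as the key input; your version simply extracts the full distribution rather than a single marginal probability, which both closes the small gap in the paper's justification and removes the dependence on $\rho_C$ being a standard basis state.
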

The next claim shows that if $P$ is a non trivial Pauli operator, the trace distance will still be small; intuitively, this is because the state with attack operator $P$ can only pass the verifier's test for 2 (out of $2^m$) sign keys:
\begin{claim}\label{cl:nontrivialdensitymatrix}
 For all non trivial $P\in\mbP_{|\mathcal{V}_{final}|}$:
 \begin{equation}
  \frac{1}{2^m}\sum_{k\in\{-1,1\}^m}\gamma_{k}^P T((\hat{\sigma}_{k}^P)_M',\rho_C) \leq \frac{1}{2^{m-1}q^{3L}}\sum_{a\in F_q^{3L}}\alpha_{P,a}
\end{equation}
\end{claim}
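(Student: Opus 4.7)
The plan is to derive the bound by combining a trivial trace-distance estimate with the bound already established in the proof of Claim \ref{claim:nontrivialsoundness}. First, I would invoke the fact that the trace distance between two normalized density matrices is always at most $1$, so that
\[
\frac{1}{2^m}\sum_{k\in\{-1,1\}^m}\gamma_{k}^P\, T((\hat{\sigma}_{k}^P)_M',\rho_C) \;\leq\; \frac{1}{2^m}\sum_{k\in\{-1,1\}^m}\gamma_{k}^P.
\]
This reduces the problem to bounding the total acceptance weight $\frac{1}{2^m}\sum_k \gamma_k^P$, which is exactly the quantity computed by replacing the final projection onto $\ket{1}\bra{1}$ in $\hat{\Pi}_0$ (see equation \ref{acceptingprojection}) by $\mcI$ on the first qudit of register $\mathcal{F}$.

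Next, I would recall the definition $\gamma_k^P = \tr((\hat{\Pi}_0)_{\mathcal{V}_{final}}\sigma_k^P)$ from equation \ref{eq:defgamma} and observe that $(\hat{\Pi}_0)_{\mathcal{V}_{final}}$ (defined in equation \ref{eq:interpretationprojection}) differs from $\hat{\Pi}_0$ only in that the accepting projector $\ket{1}\bra{1}$ on the first qudit of $\mathcal{F}$ is replaced by $\mcI$. The proof of Claim \ref{claim:nontrivialsoundness} already establishes the bound
\[
\frac{1}{2^m}\tr\Bigl(\hat{\Pi}_0 \sum_k \sigma_k^P\Bigr) \;\leq\; \frac{1}{2^{m-1}q^{3L}}\sum_{a\in F_q^{3L}}\alpha_{P,a},
\]
and Remark \ref{remark:claimcorol} explicitly notes that the identical argument goes through when $\ket{1}\bra{1}$ is replaced by $\mcI$ in the final step (the term $\tr(\ket{1}\bra{1} X^{x_k}_{\mathcal{F}}\sigma_{a,a-x_k,a-x_k}(X^{x_k}_{\mathcal{F}})^\dagger)$ is still trivially bounded by $1$ when the projector is removed).

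Applying this remark directly gives
\[
\frac{1}{2^m}\sum_{k\in\{-1,1\}^m}\gamma_k^P \;=\; \frac{1}{2^m}\tr\Bigl((\hat{\Pi}_0)_{\mathcal{V}_{final}}\sum_k \sigma_k^P\Bigr) \;\leq\; \frac{1}{2^{m-1}q^{3L}}\sum_{a\in F_q^{3L}}\alpha_{P,a},
\]
and combining this with the trace-distance estimate yields the desired inequality. There is no genuine obstacle here; the nontrivial content (the use of sign-key security via Lemma \ref{lem:PauliPolyAuthSec} and Corollary \ref{soundnesstrivialpaulis} to show that at most two sign keys contribute) has already been handled in the proof of Claim \ref{claim:nontrivialsoundness}, and this corollary simply repackages that bound after pessimistically estimating the conditional trace distance by $1$.
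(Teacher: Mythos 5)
Your proposal is correct and follows essentially the same route as the paper: bound the trace distance by $1$, reduce to bounding $\frac{1}{2^m}\sum_k\gamma_k^P = \frac{1}{2^m}\sum_k\tr((\hat{\Pi}_0)_{\mathcal{V}_{final}}\sigma_k^P)$, and then invoke Claim \ref{claim:nontrivialsoundness} together with Remark \ref{remark:claimcorol} to handle the replacement of $\ket{1}\bra{1}$ by $\mcI$ on the output qudit. The only cosmetic difference is that the paper inserts an intermediate appeal to Fact \ref{fact:measureddensity} before dropping to the trivial bound of $1$, which your version skips; since both density matrices are normalized the trivial bound already applies, so this changes nothing.
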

Given the two claims, we can further simplify the bound in equation \ref{eq:tracedistancebound} by first using \Cl{cl:trivialdensitymatrix} to remove trivial Pauli operators from the expression (let $P\in\mbP_{|\mathcal{V}_{final}|}^T$ be the set of all non trivial Pauli operators):
\begin{eqnarray}
    T(\sigma_M',\rho_C) &\leq& \frac{1}{\beta}\sum\limits_{P\in\mbP_{|\mathcal{V}_{final}|}^{NT}}\frac{1}{2^m} \sum_{k\in\{-1,1\}^m}\gamma_{k}^PT((\hat{\sigma}_{k}^P)'_M,\rho_C)\\
    &\leq&  \frac{1}{2^{m-1}q^{3L}\beta}\sum\limits_{P\in\mbP_{|\mathcal{V}_{final}|}^{NT}}\sum_{a\in F_q^{3L}}\alpha_{P,a} \\
    &=& \frac{1}{2^{m-1}q^{3L}\beta}\sum_{a\in F_q^{3L}}\sum\limits_{P\in\mbP_{|\mathcal{V}_{final}|}^{NT}}\alpha_{P,a} \\
  \end{eqnarray}
The second inequality follows from \Cl{cl:nontrivialdensitymatrix}. Next, we can use  \Le{Decompose}, which provides the following equality:
\begin{equation}
\sum_{P\in \mbP_{|\mathcal{V}_{final}}|} \alpha_{P,a} = 1
\end{equation}
Continuing with the upper bound, we obtain:
\begin{eqnarray}
    T(\sigma_M',\rho_C) &\leq& \frac{1}{2^{m-1} q^{3L}\beta}\sum_{a\in F_q^{3L}} 1\\
    &=& \frac{1}{2^{m-1}\beta}
    \end{eqnarray}
  which completes the proof.
  \end{proofof}
 
 \subsubsection{Proof of \Cl{cl:trivialdensitymatrix}}
  \begin{proofof}{\Cl{cl:trivialdensitymatrix}}
We would like to show that for all trivial $P\in\mbP_{|\mathcal{V}_{final}|}$ (i.e. $P$ consisting of only $Z$ and $\mcI$ operators):
\begin{equation}\label{eq:trivialdensitymatrix}
 \frac{1}{2^m}\sum_{k\in\{-1,1\}^m}\gamma_k^PT((\hat{\sigma}_{k}^P)_M',\rho_C) = 0
\end{equation}
To do so, we need to show that for all $k$, 
\begin{equation}\label{eq:needtoprovetrivial}
\tr(\ket{1}\bra{1}(\hat{\sigma}_{k}^P)') = \tr(\ket{1}\bra{1}\rho_C)\EqDef p_1
\end{equation}
where $p_1$ is the probability that the correct state outputs 1 when measured. The reason the above statement is equivalent to equation \ref{eq:trivialdensitymatrix} is because we are considering the measured density matrices; therefore, we need only prove that both density matrices obtain 1 with the same probability to prove that the trace distance is 0.
\\~\\
Recall from equation \ref{eq:defintsigma} that:
\begin{equation}
\hat{\sigma}_k^P = \frac{1}{\gamma_k^P}(\hat{\Pi}_0)_{\mathcal{V}_{final}}(\sigma_k^P)(\hat{\Pi}_0)_{\mathcal{V}_{final}}
\end{equation}
Plugging this in to equation \ref{eq:needtoprovetrivial}, we obtain:
\begin{eqnarray}
\tr(\ket{1}\bra{1}(\hat{\sigma}_{k}^P)') &=& \frac{1}{\gamma_k^P}\tr((\mcI^{\otimes |\mathcal{V}_{final}|}\otimes\ket{1}\bra{1}\otimes\mcI^{\otimes m-1}) (\hat{\Pi}_0)_{\mathcal{V}_{final}}(\sigma_k^P)(\hat{\Pi}_0)_{\mathcal{V}_{final}})\\
&=&  \frac{1}{\gamma_k^P}\tr((\mcI^{\otimes |\mathcal{V}_{final}|}\otimes\ket{1}\bra{1}\otimes\mcI^{\otimes m-1}) (\hat{\Pi}_0)_{\mathcal{V}_{final}}(\sigma_k^P))\\
&=& \frac{1}{\gamma_k^P}\tr(\hat{\Pi}_0\sigma_k^P)\label{eq:trivialdensitymatrix2}
\end{eqnarray}
where the second equality follows because $(\mcI^{\otimes |\mathcal{V}_{final}|}\otimes\ket{1}\bra{1}\otimes\mcI^{\otimes m-1})$ commutes with $(\hat{\Pi}_0)_{\mathcal{V}_{final}}$ and due to the cyclic nature of trace. The third equality follows due to the following equality: 
\begin{equation}
\hat{\Pi}_0 = (\mcI^{\otimes |\mathcal{V}_{final}|}\otimes\ket{1}\bra{1}\otimes\mcI^{\otimes m-1}) (\hat{\Pi}_0)_{\mathcal{V}_{final}}
\end{equation}
where we recall (from equation \ref{acceptingprojection}) that
\begin{equation}
\hat{\Pi}_0 = (\mcI^{\otimes d+1}\otimes\ket{0}\bra{0}^{\otimes d})^{\otimes 3L}_{\mathcal{V}_{L+1}}\otimes (\ket{1}\bra{1}\otimes\mcI^{\otimes d}\otimes \ket{0}\bra{0}^{\otimes d})_{\mathcal{F}}\EqDef (\hat{\Pi}_0)_{\mathcal{V}_{L+1}}\otimes(\hat{\Pi}_0)_{\mathcal{F}}
\end{equation}
and (from equation \ref{eq:interpretationprojection}) that 
\begin{equation}
(\hat{\Pi}_0)_{\mathcal{V}_{final}}= (\mcI^{\otimes d+1}\otimes \ket{0}\bra{0}^{\otimes d})^{\otimes 3L + 1} 
\end{equation}
Next, Claim \ref{claim:trivialsoundness} gives the following inequality:
\begin{equation}
\frac{1}{\gamma_k^P}\tr(\hat{\Pi}_0\sigma_k^P)= \frac{1}{\gamma_k^P}\frac{p_1}{q^{3L}}\sum_{a\in F_q^{3L}}\alpha_{P,a}
\end{equation}
Note that Claim \ref{claim:trivialsoundness} is not stated in this format. The first difference is that instead of $p_1$, the claim has $\gamma$; this is simply the probability that the circuit applied by the honest prover outputs 1, which is $p_1$ in this case. The second is that the statement of the claim also has a summation over $k$. However, in the proof of the claim, that summation is not used at all; the claim is shown for individual elements of the sum over $k$, and the sum over $k$ is only used in the statement. Finally, Claim \ref{claim:trivialsoundness} has an inequality rather than an equality; see Remark \ref{remark:fortrivialcorol} for why it is okay to use equality in this setting. 
\\~\\
Finally, we claim that $\gamma_k^P = \frac{1}{q^{3L}}\sum\limits_{a\in F_q^{3L}}\alpha_{P,a}$. Given this claim, we have
\begin{eqnarray}
\tr(\ket{1}\bra{1}(\hat{\sigma}_{k}^P)')&=& \frac{1}{\gamma_k^P}\tr(\hat{\Pi}_0\sigma_k^P)\\
&=& \frac{1}{\gamma_k^P}\frac{p_1}{q^{3L}}\sum_{a\in F_q^{3L}}\alpha_{P,a}\\
&=& p_1
\end{eqnarray}
which completes the proof, since we have proven the equality in equation \ref{eq:needtoprovetrivial}.To see why $\gamma_k^P$ satisfies the above equality, recall from equation \ref{eq:defgamma} that
\begin{equation}
\gamma_k^P = \tr((\hat{\Pi}_0)_{\mathcal{V}_{final}}(\sigma_k^P))
\end{equation}
We can again apply Claim \ref{claim:trivialsoundness} here; the two differences are that we are using $(\hat{\Pi}_0)_{\mathcal{V}_{final}}$ rather than $\hat{\Pi}_0$ and that we require equality rather than inequality. The difference between these two projections is that the latter projects the first qudit of $\mathcal{F}$ onto $\ket{1}\bra{1}$. See Remark \ref{remark:fortrivialcorol} for why the claim still applies, but with $\gamma$ in the statement of the claim replaced by 1 and the inequality replaced by equality, which gives 
\begin{equation}
\gamma_k^P = \tr((\hat{\Pi}_0)_{\mathcal{V}_{final}}(\sigma_k^P)) = \frac{1}{q^{3L}}\sum_{a\in F_q^{3L}}\alpha_{P,a}
\end{equation}

  \end{proofof}

  \subsubsection{Proof of \Cl{cl:nontrivialdensitymatrix}}
  \begin{proofof}{\Cl{cl:nontrivialdensitymatrix}}
   We would like to show that for all non trivial $P\in\mbP_{|\mathcal{V}_{final}|}$:
 \begin{equation}
  \frac{1}{2^m}\sum_{k\in\{-1,1\}^m}\gamma_{k}^P T((\hat{\sigma}_{k}^P)_M',\rho_C) \leq \frac{1}{2^{m-1}q^{3L}}\sum_{a\in F_q^{3L}}\alpha_{P,a}
\end{equation}
First, we use Fact \ref{fact:measureddensity} and then upper bound the trace distance by 1 to obtain:
\begin{eqnarray}
  \frac{1}{2^m}\sum_{k\in\{-1,1\}^m}\gamma_{k}^P T((\hat{\sigma}_{k}^P)_M',\rho_C) &\leq&
  \frac{1}{2^m}\sum_{k\in\{-1,1\}^m}\gamma_{k}^P T((\hat{\sigma}_{k}^P)',\rho_C) \\
  &\leq & \frac{1}{2^m}\sum_{k\in\{-1,1\}^m}\gamma_{k}^P 
  \end{eqnarray}
  Next, note that by the definition of $\gamma_k^P$ in equation \ref{eq:defgamma}
\begin{eqnarray}
  \frac{1}{2^m}\sum_{k\in\{-1,1\}^m}\gamma_{k}^P  = \frac{1}{2^m}\sum_{k\in\{-1,1\}^m}\tr((\hat{\Pi}_0)_{\mathcal{V}_{final}}\sigma_k^P)\leq \frac{1}{2^{m-1}q^{3L}}\sum_{a\in F_q^{3L}}\alpha_{P,a}
\end{eqnarray}
where the final inequality follows from Claim \ref{claim:nontrivialsoundness}. Note that in Claim \ref{claim:nontrivialsoundness} the projection is $\hat{\Pi}_0$ rather than $(\hat{\Pi}_0)_{\mathcal{V}_{final}}$. Recall (from equation \ref{acceptingprojection}) that
\begin{equation}
\hat{\Pi}_0 = (\mcI^{\otimes d+1}\otimes\ket{0}\bra{0}^{\otimes d})^{\otimes 3L}_{\mathcal{V}_{L+1}}\otimes (\ket{1}\bra{1}\otimes\mcI^{\otimes d}\otimes \ket{0}\bra{0}^{\otimes d})_{\mathcal{F}}\EqDef (\hat{\Pi}_0)_{\mathcal{V}_{L+1}}\otimes(\hat{\Pi}_0)_{\mathcal{F}}
\end{equation}
and (from equation \ref{eq:interpretationprojection})
\begin{equation}
(\hat{\Pi}_0)_{\mathcal{V}_{final}}= (\mcI^{\otimes d+1}\otimes \ket{0}\bra{0}^{\otimes d})^{\otimes 3L + 1} 
\end{equation}
The difference between the two is that the first qudit of register $\mathcal{F}$ is projected onto $\ket{1}\bra{1}$ in $\hat{\Pi}_0$ and onto $\mcI$ in $(\hat{\Pi}_0)_{\mathcal{V}_{final}}$. However, the proof of Claim \ref{claim:nontrivialsoundness} holds if $\hat{\Pi}_0$ is replaced with $(\hat{\Pi}_0)_{\mathcal{V}_{final}}$; see Remark \ref{remark:claimcorol}. 
  \end{proofof}

\section{Acknowledgements}
D.A. thanks Oded Goldreich, Madhu sudan and Guy Rothblum
for exciting and inspiring conversations that eventually led to this work.
E.E. thanks Avinatan Hassidim for
stimulating and refining ideas, particularly about fault tolerance.
We also thank Gil Kalai, David DiVincenzo and Ari Mizel, for stimulating
questions and clarifications, and Daniel Gottesman for many helpful ideas and
remarks, and in particular, for his help in
proving \Th{thm:CliffordAuth}.

\bibliographystyle{alpha}
\bibliography{qpip}
\newpage \appendix

\section{A Symmetric Definition of \QPIP}\label{sec:symmetric}  
Here we provide the definition of $\QPIP_\kappa^{sym}$ and then prove Corollary \ref{thm:mainsym} and Corollary \ref{corol:bqpsepbpp}. We begin with the definition of $\QPIP_{\kappa}^{sym}$: 
\begin{deff}\label{def:QPIPsym} A language $\mcL$ is in the class symmetric quantum prover
  interactive proof $(\QPIP_{\kappa}^{sym})$ with completeness $c$ 
and soundness $s$ (where $c-s$ is constant) if there exists an interactive protocol with
  the following properties:
\begin{itemize}
\item The prover $\mathds{P}$ and verifier $\mathds{V}$ are exactly the same as in the
      definition of $\QPIP_{\kappa}$ (\Def{def:QPIP}). Namely, a \BQP\ machine and
      quantum-classical hybrid machine for the prover and verifier respectively.
\item Communication is identical to the $\QPIP_{\kappa}$ definition.
\item The verifier has three possible outcomes: \textbf{1}, \textbf{0}, and \textbf{ABORT}:
      \begin{itemize}
  \item\textbf{1}: The verifier is convinced that $x\in \mcL$.
  \item\textbf{0}: The verifier is convinced that $x\notin \mcL$.
  \item\textbf{ABORT}: The verifier caught the prover cheating.
      \end{itemize}

\item \textbf{Completeness}: $\forall
      x\in\{0,1\}^*$, after interacting with $\mathds{P}$, the verifier's outcome is correct with high probability:
\[ \Pr_r({\left[\mathds{V},\mathds{P} \right](x,r) = \mathbbm{1}_\mcL })  \ge c \]
where $\mathbbm{1}_\mcL$ is the indicator function of $\mcL$, $r$ represents the randomness used by the verifier, and $\left[\mathds{V},\mathds{P} \right](x,r)$ is the verifier's outcome after using randomness $r$ and interacting with $\mathds{P}$ on input $x$.
\item \textbf{Soundness}: For all provers $\mathds{P}'$ (with the same description as $\mathds{P}$) and for {\bf all}
      $x\in\{0,1\}^*$, the verifier is mistaken with bounded
      probability, that is:
\[ \Pr_r({\left[\mathds{V},\mathds{P} \right](x,r) = 1- \mathbbm{1}_\mcL })  \le s\]
\end{itemize}
\end{deff}
We now prove Corollary \ref{thm:mainsym}: 

\begin{proofof}{Corollary \ref{thm:mainsym}}
We will prove that $\QPIP_c = \QPIP_c^{sym}$. It follows from the definitions of $\QPIP_c$ (Definition \ref{def:QPIP}) and $\QPIP_c^{sym}$ (Definition \ref{def:QPIPsym}) that $\QPIP_c\subseteq \QPIP_c^{sym}$. We obtain the other direction by showing that for any language $\mcL$, if $\mcL$ is in $\QPIP_c$ then $\mcL,\mcL^c \in \QPIP_c^{sym}$. First note that if $\mcL$ is in $\QPIP_c$, then so is $\mcL^c$, since \BQP\ is closed under complement and $\BQP\ = \QPIP_c$ by \Th{thm:main}. Let $\mathds{V}_{\mcL},\mathds{P}_{\mcL}$ denote the $\QPIP_c$ verifier and prover for the
  language $\mcL$. By the assumption, there exists such a pair for both $\mcL$
  and $\mcL^c$. We define the pair $\wt{\mathds{P}}$ and $\wt{\mathds{V}}$ to be
  $\QPIP_c^{sym}$ verifier and prover in the following way: on the first round the
  prover $\wt{\mathds{P}}$ sends to $\wt{\mathds{V}}$\; ``yes'' if $x\in \mcL$ and ``no''
  otherwise. Now, both $\wt{\mathds{P}}$ and $\wt{\mathds{V}}$ behave according to
  $\mathds{V}_{\mcL},\mathds{P}_{\mcL}$ if ``yes'' was sent or according to
  $\mathds{V}_{\mcL^c},\mathds{P}_{\mcL^c}$ otherwise. Soundness and completeness follow
  immediately from the definition.

\end{proofof}
\\~\\
Finally, we prove Corollary \ref{corol:bqpsepbpp}: 

\begin{proofof}{Corollary \ref{corol:bqpsepbpp}}
This corollary uses $\QPIP^{sym}$ rather than \QPIP\ (recall from Corollary \ref{thm:mainsym} that $\QPIP^{sym} = \BQP$). First note that if we run $\QPIP^{sym}$ (either the polynomial based or Clifford based protocol) on an instance $x$ drawn from $D$ and the verifier does not abort with probability $\beta$, the probability that the verifier outputs the incorrect answer is at most $\frac{2\epsilon}{\beta} + \gamma$, by Corollary \ref{corol:confidence} and Corollary \ref{corol:polyconfidence}. We need to amplify this probability so that the verifier outputs the incorrect answer with probability which is at most inverse exponential in $n$. If we can do so, the corollary follows since any \BPP\ machine would err with non-negligible probability on instances drawn from $D$, by assumption. Therefore, the prover cannot be simulated by a \BPP\ machine. 
\\~\\
To amplify the probability of outputting an incorrect answer, we run $\QPIP^{sym}$ polynomially many times (in $n$) and take the output to be the majority of the output values, ignoring runs on which the verifier aborted. Since $\beta$ is constant, if we repeat the protocol polynomially many times, we expect to collect polynomially many output values (we fail to do so with probability $p$ which is inverse exponential $n$). Since each output is correct with probability $1 - (\frac{2\epsilon}{\beta} + \gamma) > \frac{1}{2}$, by taking the majority of these output values, we can reduce the error of the output to be $p'$, which is inverse exponential in $n$. The overall probability of error is then $(1-p)p' + p$, which is inverse exponential in $n$. 
\end{proofof}

\section{Clifford and Pauli Operators}\label{app:backgroundcliffordpauli}
Here are some useful lemmas about Clifford/Pauli operators. We first prove the Pauli mixing lemma:

\begin{proofof}{\Le{paulimix} \textbf{ (Pauli Mixing)}}
First, we write $\rho$ as:
$$
\sum_{ij}\ket{i}\bra{j}_A\otimes \rho_{ij}
$$
It follows that:
$$
\tr_A(\rho) = \sum_i\rho_{ii}
$$
Next, observe that:
\begin{eqnarray}
\sum_{P\in\mbP_n} P\ket{i}\bra{j}P^\dagger &=& \sum_{zx}Z^zX^x \ket{i}\bra{j} (Z^zX^x)^\dagger\\
&=& \sum_{zx}\omega_q^{z(i-j)} X^x\ket{i}\bra{j}(X^x)^\dagger\
\end{eqnarray}
This expression is 0 if $i\neq j$. If $i = j$, we obtain $q^n\mcI$. Plugging in this observation to the above expression, we have:
\begin{eqnarray}
\frac{1}{|\mbP_n|}\sum_{P\in\mbP_n} (P\otimes\mcI_B)\rho(P\otimes\mcI_B)^\dagger &=&
\frac{1}{|\mbP_n|}\sum_{ij}\sum_{P\in\mbP_n} P\ket{i}\bra{j}_A P^\dagger\otimes \rho_{ij}\\
&=& \frac{1}{|\mbP_n|}\sum_{i}\sum_{P\in\mbP_n} P\ket{i}\bra{i}_A P^\dagger\otimes \rho_{ij}\\
&=& \frac{q^n}{|\mbP_n|}\mcI\otimes\sum_i \rho_{ii}\\
&=&\frac{1}{q^n}\mcI\otimes\tr_A(\rho)
\end{eqnarray}
\end{proofof}

Now we prove the Clifford mixing lemma:

\begin{proofof}{\Le{cliffordmix} \textbf{ (Clifford Mixing)}}
To prove this lemma, we observe that applying a random Clifford includes applying a random Pauli, and the lemma then follows from \Le{paulimix}. In more detail, we have the following equality for all $Q\in\mbP_n$:
\begin{eqnarray}
\frac{1}{|\mfC_n|}\sum_{C\in\mfC_n} (C\otimes\mcI_B) \rho (C\otimes\mcI_B)^\dagger &=& \frac{1}{|\mfC_n|}\sum_{C\in\mfC_n} (CQ\otimes\mcI_B) \rho (CQ\otimes\mcI_B)^\dagger 
\end{eqnarray}
Now we have:
\begin{eqnarray}
\frac{1}{|\mfC_n|}\sum_{C\in\mfC_n} (C\otimes\mcI_B) \rho (C\otimes\mcI_B)^\dagger &=& \frac{1}{|\mfC_n||\mbP_n|}\sum_{C\in\mfC_n} \sum_{Q\in\mbP_n} (CQ\otimes\mcI_B) \rho (CQ\otimes\mcI_B)^\dagger
\end{eqnarray}
Regrouping terms, we have
\begin{eqnarray}
\ldots &=& \frac{1}{|\mfC_n|}\sum_{C\in\mfC_n} (C\otimes\mcI_B)(\frac{1}{|\mbP_n|}\sum_{Q\in\mbP_n} (Q\otimes\mcI_B) \rho (Q\otimes\mcI_B)^\dagger)(C\otimes\mcI_B)^\dagger
\end{eqnarray}
By \Le{paulimix} (with $q = 2$) the above expression is equal to:
\begin{eqnarray}
\ldots &=& \frac{1}{|\mfC_n|2^n}\sum_{C\in\mfC_n}(C\otimes\mcI_B)(\mcI\otimes\tr_A(\rho))(C\otimes\mcI_B)^\dagger\\
&=& \frac{1}{2^n}\mcI\otimes\tr_A(\rho)
\end{eqnarray}
\end{proofof}

\section{Clifford Technical Details}\label{sec:cliffordtechnical}
Here we prove \Le{mix}, \Le{pauliTw} and \Le{clifTw}, which were used to prove \Le{CliffordDiscretization} (and \Le{pauliTw} is also used to prove \Le{PauliDiscretization}). 
\\~\\
\begin{statement}{\Le{mix}}\textbf{(Pauli Partitioning by Cliffords)} 
\textit{For every
  $P,Q\in\mbP_m\setminus\{\mcI\}$ it holds that : $\left|\left\{C\in\mfC_m |
      C^\dagger PC =Q\right\}\right| = \frac {\left|\mfC_m\right|} {\left|\mbP_m
    \right| -1}= \frac {\left|\mfC_m\right|} {4^m -1}$.}
    \end{statement}
\begin{proofof}{\Le{mix}}
  We first claim that for every
  $Q,P\in\mbP_m\setminus \mcI$ there exists $D\in\mfC_m$ such that $D^\dagger P
  D=Q$. We will prove this claim by induction. Specifically, we show that
  starting from any non identity Pauli operator one can, using conjunction by
  Clifford group operator reach the Pauli operator $X\otimes \mcI^{\otimes
    m-1}$.

We first notice that the swap operation is in $\mfC_2$ since it holds that:
\begin{eqnarray}
    SWAP_{k,k+1} &= &CNOT_{k\rightarrow (k+1)}CNOT_{(k+1)\rightarrow k}CNOT_{k\rightarrow(k+1)}
\end{eqnarray}
Furthermore, we recall that $K^\dagger(XZ)K\propto X$ and $H^\dagger
ZH=X$. Therefore, any non identity Pauli $P=P_1\odots P_m$ can be transformed using $SWAP,H$
and $K$ to the form: $X^{\otimes k}\otimes \mcI^{\otimes m-k}$ (up to a
phase and for some $k\geq 1$). To conclude we use:
\begin{eqnarray}
CNOT_{1\rightarrow2}^\dagger (X_1\otimes X_2)CNOT_{1\rightarrow2}& =& X\otimes \mcI
\end{eqnarray}
which reduces the number of $X$ operations at hand. Applying this sufficiently many times results in reaching the desired form. Since this holds for any non-identity Pauli operators: $P,Q$ we know there are $C,D\in \mfC_m$ such that:
\begin{eqnarray}
X\otimes\mcI^{\otimes m-1}&=&C^\dagger PC=D^\dagger QD \\
&\Rightarrow& DC^\dagger PCD^\dagger= Q
\end{eqnarray}
therefore $CD^\dagger$ is the operator we looked for.  

Given $P'\in\mbP_m\setminus\mcI$, define $A_{P',Q}$ as follows $A_{P',Q}\EqDef\left\{C\in\mfC_m| C^\dagger P'C=Q\right\}$. We will show that $|A_{P',Q}|$ is independent of $P'$. Now fix $P'\in\mbP_m\setminus\mcI$ and let $D\in\mfC_m$ be one of the operators for which the following equality holds: $D^\dagger P'D=Q$. Then it holds
that for all $Q'\in\mbP_m\setminus\mcI$, $D^\dagger C\in A_{Q',Q} \iff C\in A_{P',Q'}$. Therefore $|A_{Q',Q}| = |A_{P',Q'}|$ for all non identity $P',Q'$ and $Q$. Using the fact that $|A_{P',Q'}| = |A_{Q',P'}|$, it follows that $|A_{P',Q'}|$ is independent of $Q'$ and $P'$. 

Now note that the sets $\{A_{P',Q'}\,:\forall P' \}$ form a partition of $\mfC_m$. These sets clearly do not intersect. Observe that for each $C\in \mfC_m$, there exists $P'\in\mbP_m\setminus\mcI$ such that $P' = CQ'C^\dagger$. Since all the sets in the partition have the same size, we obtain:
\begin{eqnarray}
\left|\mfC_m\right| =\sum_{P'\in\mbP_m\setminus\mcI}\left|A_{P',Q}\right|
=& (4^m -1)\left|A_{P,Q}\right|
\end{eqnarray}
which concludes the proof.
\end{proofof}
\\~\\
\begin{statement}{\Le{pauliTw}}\textbf{(Pauli Twirl)} \textit{Let $P \ne P'$ be generalized Pauli operators. For
  any density matrix $\rho'$ on $m'>m$ qubits it holds that
  \begin{equation}
\sum\limits_{Q\in \mbP_m} (Q^\dagger P Q\otimes\mcI) \rho' (Q^\dagger (P')^\dagger Q\otimes\mcI) = 0\nonumber
\end{equation}}
\end{statement}
\begin{proofof}{\Le{pauliTw}}
Let  $P\ne P'$ be generalized Pauli operator $P=X^aZ^b$ and $P'=X^{a'}Z^{b'}$.
\begin{eqnarray}
\sum_{Q\in \mbP_m} (Q^\dagger P Q\otimes\mcI) \rho'
(Q^\dagger P'^\dagger Q\otimes\mcI)
&=&
\sum_{d,c=0}^{q-1} ((X^cZ^d)^\dagger X^aZ^b (X^cZ^d)\otimes\mcI) \rho'
((X^cZ^d)^\dagger (X^{a'}Z^{b'})^\dagger (X^cZ^d)\otimes\mcI)\nonumber
\end{eqnarray}
We use the fact that $Z^dX^c=\omega_q^{dc} X^cZ^d$ (see Definition \ref{defpauli}) and some algebra:
\begin{eqnarray}
{\ldots} &=&
\sum_{d,c=0}^{q-1} \omega_q^{d(a-a')+c(b-b')} (X^aZ^b\otimes\mcI) \rho'
(Z^{-b'}X^{-a'}\otimes\mcI)\\
&=& (X^aZ^b\otimes\mcI) \rho'
(Z^{-b'}X^{-a'}\otimes\mcI) \sum_{c=0}^{q-1} \omega_q^{c(b-b')} \sum_{d=0}^{q-1}
\omega_q^{d(a-a')}
\end{eqnarray}
To conclude the proof we recall that $a\ne a'$ or $b\ne b'$, hence
one of the above sums vanishes.

\end{proofof}
\\~\\
\begin{statement}{\Le{clifTw}}\textbf{(Clifford Twirl)} \textit{Let $P\ne P'$ be Pauli operators. For
  any density matrix $\rho'$ on $m'>m$ qubits it holds that
  \begin{equation}
  \sum\limits_{C\in\mfC_m}(C^\dagger P C\otimes\mcI)\rho' (C^\dagger
    (P')^\dagger C \otimes\mcI)=0\nonumber
    \end{equation}}
    \end{statement}
\begin{proofof}{\Le{clifTw}}
Notice that applying a random Clifford operator ``includes'' the application
  of a random Pauli:
\begin{eqnarray}
  \sum_{c\in \mfC_m} (C^\dagger PC\otimes\mcI)\rho' (C^\dagger (P')^\dagger C\otimes\mcI) &=& \sum_{c\in \mfC_m} ((CQ)^\dagger P(CQ)\otimes\mcI)\rho' ((CQ)^\dagger (P')^\dagger(CQ)\otimes\mcI)
\end{eqnarray}
Equality holds for any $Q\in \mfC_n$ since it is nothing but a change of order
of summation.
\begin{eqnarray}
\ldots &=& \sum_{Q\in \mbP_m}\frac 1 {|\mbP_m|}\sum_{c\in \mfC_m} ((CQ)^\dagger P(CQ)\otimes\mcI)\rho' ((CQ)^\dagger (P')^\dagger(CQ)\otimes\mcI)\\
&=& \sum_{c\in \mfC_m} \frac 1 {|\mbP_m|}\sum_{Q\in \mbP_m}((CQ)^\dagger P(CQ)\otimes\mcI)\rho' ((CQ)^\dagger (P')^\dagger(CQ)\otimes\mcI)\\
&=& \sum_{c\in \mfC_m} \frac 1 {|\mbP_m|}\sum_{Q\in \mbP_m}(Q^\dagger(C^\dagger PC)Q\otimes\mcI)\rho'(Q^\dagger (C^\dagger (P')^\dagger C)Q\otimes\mcI)
\end{eqnarray}
By \Le{pauliTw} (with $q = 2$), we know that this expression is 0 if $C^\dagger PC \neq C^\dagger P'C$. 
\end{proofof}

\section{Logical Gates on Signed Polynomial Codes}\label{app:poly}
In this section we prove that the logical operators given in Section \ref{sec:logicalgates} behave as claimed. We first prove that the logical $X$ operator is correct.

\begin{proofof}{\Cl{claim:logicalx}}
We can easily verify that applying $X^{k_1x}\otimes\cdots\otimes X^{k_mx}$ is the logical $\wt{X}_k^x$
operation:
\begin{equation}\begin{split}
\wt{X}_k^x \ket{S_a^k} =& (X^{k_1x}\otimes\cdots\otimes X^{k_mx})  \frac{1}{\sqrt{q^d}}\sum_{f:def(f)\le d ,
f(0)=a}\ket{ k_1f(\alpha_1),{\ldots} , k_mf(\alpha_m)} \\
=&  \frac{1}{\sqrt{q^d}}\sum_{f:def(f)\le d ,
f(0)=a}\ket{ k_1(f(\alpha_1)+x),{\ldots} , k_m(f(\alpha_m)+x)}
\end{split}
\end{equation}
Setting $f'(\alpha) = f(\alpha)+x$:
\begin{equation}\begin{split}
{\ldots} =& \frac{1}{\sqrt{q^d}}\sum_{f':deg(f')\le d ,
f'(0)=a+1}\ket{ k_1f'(\alpha_1),{\ldots} , k_mf'(\alpha_m)}\\
=&\ket{S_{a+x}^k}
\end{split}\end{equation}
\end{proofof}

We now prove that the logical \textit{SUM} operator is correct.

\begin{proofof}{\Cl{claim:logicalsum}}
\begin{equation}\begin{split}
\wt{\textit{SUM}} \ket{S_a} \ket{S_b} =\ & (\textit{SUM})^{\otimes
m}\frac{1}{{q^d}}\sum_{ f(0)=a}\ket{ k_1f(\alpha_1),{\ldots} ,
k_mf(\alpha_m)} \sum_{ h(0)=b}\ket{ k_1h(\alpha_1),{\ldots} ,k_mh(\alpha_m)}\\
=& \frac{1}{{q^d}}\sum_{f(0)=a,h(0)=b}\ket{ k_1f(\alpha_1),{\ldots} ,
k_mf(\alpha_m)}\ket{ k_1(h(\alpha_1)+f(\alpha_1)),{\ldots} ,
k_m(h(\alpha_m)+f(\alpha_m))}
\end{split}\end{equation}
We set $g(\alpha)= f(\alpha)+h(\alpha)$
\begin{equation}\begin{split}
{\ldots} =& \frac{1}{{q^d}}\sum_{f(0)=a,g(0)=a+b}\ket{ k_1f(\alpha_1),{\ldots} ,
k_mf(\alpha_m)} \ket{ k_1g(\alpha_1),{\ldots} ,
k_mg(\alpha_m)} \\
=& \ket{S_a^k}\ket{S_{a+b}^k}
\end{split}\end{equation}
\end{proofof}

We proceed to the proof of the lemma needed for the logical Fourier transform.

\begin{proofof}{\Le{inter}}
A polynomial $p$ of degree $\le m-1$ is completely determined by it's
values in the points $\alpha_i$. We write $p$ as in the form of the
Lagrange interpolation polynomial:
$f(x) = \sum_i \prod_{j\ne i} \frac{x-\alpha_j}{\alpha_i
-\alpha_j}f(\alpha_j)
$.
Therefore, we set $c_i = \prod_{j\ne i}\frac
{-\alpha_j}{\alpha_i-\alpha_j}$ and notice that it is independent of $p$,
and the claim follows.
\end{proofof}

We continue to the proof of the logical Fourier transform.

\begin{proofof}{\Cl{claim:fourier}}
We denote $\ket{kf} = \ket{k_1f(\alpha_1),{\ldots} ,k_mf(\alpha_m)}$
\begin{eqnarray}
F_{c_1}\otimes F_{c_2}{\ldots} \otimes F_{c_m}\ket{S_a^k} &= &
q^{-d/2} F_{c_1}\otimes F_{c_2}{\odots} F_{c_m}\sum_{f:def(f)\le
d,f(0)=a}\ket{kf} \\
& = & q^{-d/2} q^{-m/2}\sum_{f:def(f)\le d,f(0)=a} \sum_{b_1,{\ldots} ,b_m}
\omega_q^{\sum_i c_ik_if(\alpha_i)b_i}\ket{b_1,{\ldots} ,b_m}
\end{eqnarray}
We think of the $b_i$'s as defining a signed polynomial $g$ of degree $\le
m-1$ that is $k_ig(\alpha_i)=b_i$ and split the sum according to $g(0)$:
\begin{eqnarray}\label{ignore}
{\ldots} & = & q^{-(m+d)/2} \sum_{\substack{f:def(f)\le d\\f(0)=a}} \sum_b
\sum_{\substack{g:deg(g)\le m-1\\g(0)=b }}
\omega_q^{\sum_i c_ik_if(\alpha_i)k_ig(\alpha_i)}\ket{kg}\\
&=& q^{-(m+d)/2} \sum_{\substack{f:def(f)\le d\\f(0)=a}} \sum_b
\sum_{\substack{g:deg(g)\le m-1\\g(0)=b }}
\omega_q^{\sum_i c_if(\alpha_i)g(\alpha_i)}\ket{kg} \label{pre}
\end{eqnarray}
We temporarily restrict our view to polynomials $g$ with degree at most
$m-d-1$ and therefore  the polynomial $fg$ has degree at most $m-1$.
We use \Le{inter} on $fg$:
\begin{eqnarray}
\sum_{i=1}^m c_i (fg)(\alpha_i) &= fg(0) &=ab
\end{eqnarray}
Going back to \Eq{pre}:
\begin{eqnarray}
q^{-(m+d)/2} \sum_{f,g} \sum_{b\in F_q} \omega_q^{\sum_i c_i
(fg)(\alpha_i)}\ket{kg}  &=& q^{-(m+d)/2}
\sum_{b\in F_q}\sum_{f,g} \omega_q^{ab}\ket{kg}
\end{eqnarray}
Where the summation is over all $f,g$ such that $f(0)=a$ and
$g(0)=b$ while the degrees of $f$ and $g$ are at most $d$ and
$m-d-1$ respectively.

The sum does not depend on $f$ and there are exactly $q^d$
polynomials $f$ in the sum, therefore, we can write the expression as :
\begin{equation}\begin{split}
{\ldots} =&\quad  q^{-(m+d)/2} \sum_{b\in F_q}q^d\sum_{g}
\omega_q^{ab}\ket{kg} \\
\ =&\quad  \frac{ 1}{ \sqrt{q}} \sum_{b\in F_q} \omega_q^{ab}\frac {1}
{ \sqrt{q^{m-d-1}}} \sum_{g:deg(g)\le
m-d-1, g(0)=b}\ket{kg} \\
\ =&\quad  \frac 1 {\sqrt{q}} \sum_{b\in F_q} \omega_q^{ab}\ket{\wt{S_b^k}}
\end{split}\end{equation}
Since the above expression has norm 1, if follows that the
coefficients that we temporally ignored at \Eq{ignore} all vanish.
\end{proofof}

Finally, we prove that the logical $Z$ operator is correct.

\begin{proofof}{\Cl{claim:logicalz}}
\begin{eqnarray}
\wt{Z}_k^z\ket{S_a^k} &=& (Z^{k_1c_1z}\otimes\cdots\otimes Z^{k_mc_mz})\frac{1}{\sqrt{q^d}}\sum_{f:def(f)\le d ,
f(0)=a}\ket{ k_1f(\alpha_1),{\ldots} , k_mf(\alpha_m)} \\
&=& \frac{1}{\sqrt{q^d}}\sum_{f:def(f)\le d ,
f(0)=a}\omega_q^{\sum_i k_ic_izk_if(\alpha_i)}\ket{ k_1f(\alpha_1),{\ldots} , k_mf(\alpha_m)}\\
&=& \frac{1}{\sqrt{q^d}}\sum_{f:def(f)\le d ,
f(0)=a}\omega_q^{z\sum_i c_if(\alpha_i)}\ket{ k_1f(\alpha_1),{\ldots} , k_mf(\alpha_m)}\\
&=& \frac{1}{\sqrt{q^d}}\sum_{f:def(f)\le d ,
f(0)=a}\omega_q^{zf(0)}\ket{ k_1f(\alpha_1),{\ldots} , k_mf(\alpha_m)}\\
&=& \omega_q^{za}\ket{S_a^k}
\end{eqnarray}
\end{proofof}


\section{Notation Tables}\label{app:tables}
We begin with notation used for both the Clifford and polynomial protocols, and then proceed to notation used only in the polynomial protocol (beginning with the sign key $k$, Definition \ref{def:SignedPolynomial}). 
\begin{center}
    \begin{tabular}{ | l | l | p{12cm} |}
    \hline
    Notation & Reference & Explanation  \\ \hline
    $\gamma$ & Protocol \ref{prot:CliffordIP}, \ref{prot:PolynomialIP}  & Error of circuit which is being applied; used in both protocols \\\hline
    
    $\epsilon$ & Protocol \ref{protocol:cliffordqas}, \ref{protocol:polynomialqas} & Security parameter \\\hline
    
    $\Pi_0,\Pi_1$ & Definition \ref{def:qas} & Projections used to define security of a \QAS \\\hline
    
    $L$ & Protocol \ref{prot:PolynomialIP} & Number of Toffoli gates in the circuit \\\hline
    
    $n$ & Protocol \ref{prot:CliffordIP},\ref{prot:PolynomialIP} & Number of qubits in the circuit which is being applied\\\hline
    
            $N$ & Definition \ref{def:qcircuit} & Number of gates in the circuit which is being applied\\\hline
            
    $E$ & \Th{thm:CliffordAuth},\ref{thm:PolynomialAuth},\ref{thm:CliffordIP} & Eve's environment \\\hline
    
     $\mathcal{E}$ & Section \ref{sec:notation} & Prover's environment register in \QPIP\ protocol \\\hline

        $k$ & Definition \ref{def:SignedPolynomial} & Sign key for signed polynomial code; used as superscript for encoded states \\\hline

        $\tilde{U}$ & Section \ref{sec:logicalgates} & Logical version of a gate $U$ for the signed polynomial code \\\hline
        
        $m,d$ & Definition \ref{def:SignedPolynomial} & Length and degree of polynomial code \\\hline
    $E_k, D_k$ & Definition \ref{def:encodingcircuit}, \ref{def:interpolationcircuit} & Encoding circuit for signed polynomial code ($E_k = D_k (F^{\otimes d}\otimes\mcI)$) \\ \hline
    
    $\mbP_m$ & Definition \ref{defgeneralized} & Group of generalized Pauli operators \\ \hline
    
    $g$ & Equation \ref{deffunctiong} & $g$ takes as input strings in $F_q^m$ and returns the first coordinate of each string \\\hline

     $\rho^k$ & Definition \ref{def:SignedPolynomial} & Initial state in polynomial \QPIP containing $n$ authenticated 0 states and $L$ authenticated magic states \\\hline

    $\mathcal{P}_i$ & Section \ref{sec:notation}  & Register containing prover's $3m(L-i + 1) + mn$ qubits at the end of round $i-1$ \\\hline
    
    $\mathcal{V}_i$ & Section \ref{sec:notation} & Register containing the $3m(i-1)$ qubits which have been sent to the verifier in rounds $1,\ldots,i-1$ \\\hline

    $\mathcal{F}$ & After equation \ref{startoffinalround} & Register of $m$ qudits containing the final authenticated qudit given to the verifier in the final round  \\\hline
    
        $\mathcal{P}_{final}$ & After equation \ref{startoffinalround} & Register of authenticated qudits remaining with the prover at the end of the protocol ($\mathcal{P}_{L+1} = \mathcal{P}_{final}\cup \mathcal{F}$)  \\\hline
        
            $\mathcal{V}_{final}$ & After equation \ref{startoffinalround} & Register containing all $m(3L + 1)$ qudits sent to the verifier during the protocol ($\mathcal{V}_{final} = \mathcal{F}\cup \mathcal{V}_{L+1}$)  \\\hline
    
    $\tau_i(z,x,k)$ & Section \ref{sec:notation} & Keys held by verifier at the start of round $i$ ($z,x\in F_q^{|\mathcal{P}_i|}$) \\\hline
    
    $\delta_i$ & Section \ref{sec:notation} & Measurement result in $F_q^{3m}$ of prover in round $i$ \\\hline

    $\Delta_i$ & Claim \ref{claim:polystateformat} & Measurement results (composing a string in $F_q^{3mi}$) from rounds $1$ to $i$ \\\hline
    
    $\tilde{Q}_i$ & Section \ref{sec:conversiontological} & Logical Clifford operators applied in round $i$ \\\hline
    
    $\tilde{C}_{\beta_i}$ & Section \ref{sec:conversiontological} & Clifford correction operators for Toffoli gate $i$ if measurement result is $\beta_i\in F_q^{3}$ \\\hline
    
    $\rho_{g(\Delta_{i-1})}^k$& Claim \ref{claim:polystateformat} & The state on $mn + 3mL$ qudits resulting from applying operations requested in rounds $1,\ldots,i-1$ \\\hline
    
    $l\in F_q^{3L}$& Fact \ref{fact:teleportationstate} & Used in the sum over all possible teleportation measurement results \\\hline
    
    $\beta\in F_q^{3L}$& Fact \ref{fact:teleportationstate} & Used to denote one fixed measurement result \\\hline

    $\hat{\Pi}_0$& Equation \ref{acceptingprojection} & Used to denote the accepting subspace on $3mL + m$ qudits in the polynomial \QPIP \\\hline
    
    $\Pi_{G_a}$& Equation \ref{simplifiedprojection} & Projection (on $3mL$ qudits) onto a valid, decoded measurement result $a\in F_q^{3L}$ \\\hline

    \end{tabular}
\end{center}

\end{document}